\newcommand{\blind}{1}
\newtheorem{theorem}{Theorem}
\newtheorem{remark}{Remark}
\newtheorem{lemma}{Lemma}
\newcommand{\argmax}{\ensuremath{\operatornamewithlimits{arg\,max}}}
\renewcommand{\figurename}{Fig.}
\newcommand{\labeltext}[2]{%
	\@bsphack
	\csname phantomsection\endcsname 
	\def\@currentlabel{#1}{\label{#2}}%
	\@esphack
}
\begin{document}

\def\spacingset#1{\renewcommand{\baselinestretch}%
{#1}\small\normalsize} \spacingset{1}


\if1\blind
{
  \title{\bf LinDA: linear models for differential abundance analysis of microbiome compositional data}
  		\date{}
  \author[1, 2, 3]{Huijuan Zhou (zhouhuijuan@mail.shufe.edu.cn)}
  \author[3]{Kejun He (kejunhe@ruc.edu.cn)}
  \author[4,*]{Jun Chen (chen.jun2@mayo.edu)}
  \author[2,*]{Xianyang Zhang (zhangxiany@stat.tamu.edu)}
  \affil[1]{Shanghai University of Finance and Economics}
  \affil[2]{Texas A\&M University}
  \affil[3]{Renmin University of China}
  \affil[4]{Mayo Clinic}

  \affil[*]{Corresponding author}
  \maketitle
} \fi

\if0\blind
{
  \bigskip
  \bigskip
  \bigskip
  \begin{center}
    {\LARGE\bf LinDA: linear models for differential abundance analysis of microbiome compositional data}
\end{center}
  \medskip
} \fi

\bigskip
\begin{abstract}
Differential abundance analysis is at the core of statistical analysis of microbiome data. The compositional nature of microbiome sequencing data makes false positive control challenging. Here, we show that the compositional effects can be addressed by a simple, yet highly flexible and scalable, approach. The proposed method, LinDA, only requires fitting linear regression models on the centered log-ratio transformed data, and correcting the bias due to compositional effects. We show that LinDA enjoys asymptotic FDR control and can be extended to mixed-effect models for correlated microbiome data. Using simulations and real examples, we demonstrate the effectiveness of LinDA.
\end{abstract}

\noindent%
{\it Keywords:}  Compositional effect, Differential abundance analysis, False discovery rate, Multiple testing.		
\vfill

\newpage
\spacingset{1.5} 
\section{Background}\label{sec:intro}
The role of the human microbiome in health and disease has been intensively studied over the past few years, see, e.g., \cite{Fan:2021cf, Valdes:2018fx}, for several reviews. 
Potentially pathogenic or probiotic microorganisms can be identified by analyzing  their abundances in a microbial ecosystem (e.g., the human gut) with respect to some covariate of interest such as disease status. 
Current prevailing technologies for studying the human microbiome use metagenomic sequencing, where either the DNA of a taxonomically informative gene (e.g. 16S rRNA) or all the genomic DNA in the microbial genome is sequenced. 
After obtaining the raw sequencing reads, the reads can be clustered into  operational taxonomic units (OTUs), denoised into amplicon sequence variants (ASVs), or mapped to a microbial reference database (taxa) using existing bioinformatics pipelines such as UPARSE, DADA2, and MetaPhlAn \cite{Edgar:2013, Callahan:2016, Segata:2012}. For simplicity, we use the term taxon (pl. taxa) to represent any taxonomic unit (OTU/ASV/taxon) from a bioinformatics pipeline. 
Therefore, after bioinformatics processing, we have an abundance table recording the frequencies of detected taxa in the samples, together with a meta data table capturing the sample-level information. Differential abundance analysis is then carried out based on the abundance and meta data table.


Ideally, we want to measure the absolute abundance of the microorganisms, i.e., the number of microorganisms per unit area/volume at the microbial ecosystem, and differential abundance analysis is  performed on the absolute abundance data. 
However, the data  from a sequencing experiment only captures  the relative abundance (compositional) information since the total sequence read count, also known as sequencing depth or library size,  does not reflect the  total microbial load in the specimen due to the complex chemistry involved in sequencing \cite{Gloor:2017,Tsilimigras:2016}.  Although there are several experimental techniques such as qPCR, spike-in and flow cytometry that can be used to achieve absolute abundance measurement, the severe limitations of these techniques prevent their wide adoption \cite{Morton:2019}.  Therefore, the prevailing sequencing protocol is still only able to measure the relative abundances. Drawing inferences about the changes on the unknown absolute abundance based on the measured relative abundance data is challenging due to missing the total microbial load information. The increase or decrease in the abundance of some taxa with respect to a covariate of interest  automatically results in changes in the relative abundances of all other taxa, a statistical phenomenon known as compositional effects. Therefore, using the standard statistical techniques such as two-sample t-test, Wilcoxon rank sum test, and linear regression analysis ignoring the compositional nature of the data could lead to a large number of false discoveries. 

 For the differential abundance problem to be well defined, one has to make assumptions. One major assumption is that the differential signal is sparse, i.e., only a small proportion of taxa are associated with the covariate of interest. Although many studies have supported the sparse signal assumption, there are also studies support dense signal hypotheses, where a large number of taxa are differential with small effect sizes \cite{Xiao:20181,Xiao:20182}.  Therefore, the validity of a method and the definition of true or false positive depends on the specific assumption one is willing to accept. Here 
 our goal is to provide a statistical tool that could be potentially useful for pinpointing top candidate taxa for further biological validation.


To address compositional effects in differential analysis, one popular approach is robust normalization.  It involves calculating a normalizing factor (scale factor), which is robust to a small number of differential taxa and could well capture the sequencing effort for the non-differential part.  Therefore, dividing by such a normalizing factor will bring the abundance of the non-differential taxa to the same scale while retaining the differences  for those differential ones.
Assuming the number of differential taxa is small, different strategies have been used to calculate a robust normalizing factor including TMM, RLE, CSS, and GMPR \cite{Robinson:2010,Anders:2010,Paulson:2013,Chen:2018}.  We list these methods in Additional file \ref{supp1}: Table S1. In contrast,  the naive total sum scaling (TSS) normalization, which divides the counts by the library size, is not a robust normalization method \cite{Chen:2018}.


These normalization techniques can be combined with different statistical procedures in differential abundance analysis. For example, we can divide the counts by the normalizing factor from the normalization techniques in Additional file \ref{supp1}: Table S1 and then apply standard statistical tools based on the normalized data. 
The normalizing factor could also be included as an offset in regression models such as edgeR \cite{Robinson:2010-1}, DESeq2 \cite{Love:2014}, MicrobiomeDDA \cite{Chenj:2018}, and metagenomeSeq \cite{Paulson:2013}, where the TMM, RLE, GMPR, and CSS normalization are the accompanying normalization methods.  The recently developed MaAsLin2 \cite{Mallick:2020} uses log linear models on the normalized abundance data. Different normalization approaches including TSS, TMM, CSS, and CLR are options in MaAsLin2. 
 A variant to the robust normalization approach is to find a reference taxon or a set of reference taxa, which are assumed to be non-differential with respect to the covariate of interest. The data are then normalized by the count of the reference taxon (or the sum of the counts of the reference taxa).  This strategy was used in RAIDA \cite{Sohn:2015} and DACOMP \cite{Brill:2020}.   

Another line of methods to tackle the compositional effect uses (log) ratio approach since only ratios are well defined for compositional data \cite{Aitchison:1986}. The ALDEx2 method by \cite{Fernandes:2014} uses the centered log-ratio (CLR) transformation, where the counts of a sample are divided by their geometric mean before taking logarithms. Differential abundance analysis is then performed using Wilcoxon rank sum test or t-test based on the CLR transformed data.  In the CLR approach, the geometric mean can also be regarded as a robust normalizing factor. The ANCOM proposed by \cite{Mandal:2015} computes the pairwise ratios of the relative abundances and identifies the taxa with the most differential ratios. This is based on the observation that the abundance ratios for those differential taxa to other taxa are all differential assuming distinct effect sizes while the ratios for those non-differential taxa are mostly non-differential.  Therefore, by analyzing the pattern of the pairwise ratios, one could distinguish the differential taxa from a background of non-differential taxa with high accuracy. Recently,  a bias-corrected version of ANCOM (called ANCOM-BC) has been proposed \cite{Lin:2020}, which uses a linear regression framework based on log-transformed taxa count and estimates the unknown bias term due to the compositional effect through an EM algorithm.

In the work of \cite{Weiss:2017, Hawinkel:2019}, the authors evaluated several popular methods in differential abundance analysis (ANCOM-BC/MaAsLin2 not included)  and showed that the inflation of the false discovery rate (FDR) is still a ubiquitous problem, and no method is satisfactory in all aspects.   A method that is computationally efficient, relatively robust and powerful, and flexible enough to allow covariate adjustment and application to correlated microbiome data is still lacking in the field.  In this paper, we propose a linear regression framework for differential abundance analysis (LinDA) to fill the methodological gap. LinDA involves three simple steps that can be carried out efficiently. First, it runs linear regressions using the CLR transformed abundance data as the response. Then it identifies a bias term due to the compositional effect and corrects for the bias using the mode of the regression coefficients across different taxa. Finally, it computes the p-values based on the bias-corrected regression coefficients and applies the Benjamini-Hochberg (BH) procedure to control the FDR. 
We rigorously prove the asymptotic FDR control of the proposed method, making it the first procedure that enjoys a theoretical FDR control guarantee. Our approach is related to ANCOM-BC but differs in several aspects. (i) Our derivation provides a clear interpretation of the bias term and suggests a simple way to correct it. (ii) Our procedure does not involve the EM-algorithm and can be 100--1000 times faster than ANCOM-BC in our numerical studies. (iii) Our method can be directly extended to the mixed-effect models. Longitudinal and repeated measurement-based microbiome studies have been increasingly common \cite{Faust:2015, Lewis:2015} but statistical tools for correlated microbiome data analysis remain scarce. LinDA can analyze the correlated microbiome data using the classic linear mixed effects models. Through extensive simulation studies and real data analyses, we show that the new method outperforms the state-of-the-art approaches in terms of FDR control and power.

\section{Results}
\subsection{Numerical studies}

{\bf Setups}
We conducted comprehensive simulations to evaluate the performance of the proposed method under different setups. We set $m=500$ as the baseline for the number of taxa, which is similar to the number of tests at the species level for a typical microbiome study. We investigated the sample size $n=50,\;200$ representing small and large sample sizes, respectively. More combinations of $m$ and $n$ were  studied in additional settings. We generated the absolute abundances from the log normal distribution and considered three cases for the covariate of interest and confounders: the covariate of interest follows the Bernoulli distribution and no confounder (denoted as C0),  the covariate of interest follows the standard normal distribution and no confounder (C1), and the covariate of interest follows the Bernoulli distribution and two confounders (C2). In addition to the basic setting (log normal abundance distribution, denoted as S0), we investigated other settings to study the robustness of the proposed method including zero-inflated absolute abundances (S1),  correlated absolute abundances (S2),  gamma abundance distribution (S3), smaller $m$ (S4), smaller $n$ (S5), 10-fold difference in library size (S6),  negative binomial abundance distribution (S7) and correlated microbiome data generated by mixed-effect model (S8). See section \ref{sec-detailed-setups} for more details.

\vskip0.5cm
{\noindent\bf Competing methods}
We compared our method with ANCOM-BC, ALDEx2, DESeq2, edgeR, metagenomeSeq and MaAsLin2. For DESeq2 and edgeR, we replaced  their native normalization methods  with GMPR normalization, which was shown to improve the power and false positive control in differential abundance analysis \cite{Chen:2018}. For metagenomeSeq, there are two implementations, \texttt{fitZig} and \texttt{fitFeatureModel},  in the R Bioconductor package \texttt{metagenomeSeq}.  Currently,  \texttt{fitFeatureModel} is only applicable to binary covariate case (C0).  We use metagenomeSeq2 and metagenomeSeq to denote  the \texttt{fitFeatureModel} and \texttt{fitZig} procedures, respectively. We also compared with the standard non-parametric methods: Wilcoxon rank sum test for case C0 (a binary covariate) and Spearman correlation test for case C1 (a continuous covariate), both with the GMPR normalized data.

For the proposed method, we considered two zero-handling approaches. The first approach adds a pseudo count of 0.5 to all the counts, which is widely used in microbiome data analysis on the log scale.  However, it has been shown to be problematic under certain situations \cite{Brill:2020}. We thus designed a new imputation-based approach, where the zeros were imputed by $N_s/(\max_{k:Y_{ik}=0}N_k)$ for $i$th taxon in $s$th sample, where $N_s$ denotes the library size (sequencing depth) of $s$th sample and $Y_{ik}$ denotes  the read count of  $i$th taxon in  $k$th sample.  In other words, zeros were imputed differently according to the library size of the sample, and zeros in the sample with a larger library size were replaced with larger fractions. In this imputation approach, we treat zeros as left-censored missing data. Suppose we only know the library sizes, then a natural strategy is to impute zeros in proportion to the library size with the  sample of the largest library size receiving a fractional count close to 1 (in our approach, we simply set it as 1). The purpose of the imputation strategy is to reduce false positives when the library size is correlated with the covariate of interest. As shown in the simulation studies, the pseudo-count approach worked sufficiently well  in most settings except the setting S6, where the library size between the groups differed by 10 folds. In contrast, the imputation approach reduced the false positive rate extensively for the setting S6  (Additional file \ref{supp2}:  Fig. S1).  However, it was slightly less powerful than the pseudo-count approach when the library size was a not confounder (Additional file \ref{supp2}:  Fig. S2). Thus, in the implementation, we used an adaptive approach: we first tested the association between the covariate of interest and the library size. If the p-value was smaller than 0.1, we used the imputation approach conservatively; otherwise, we used the pseudo-count approach.  Additional file \ref{supp2}: Fig. S1 and S2 show that the adaptive method controls the false positives when the library sizes are very different among groups while retaining the power when the library sizes are similar.

The proposed LinDA method can be viewed as a three-step procedure: CLR-OLS-BC (OLS stands for ordinary least squares and BC stands for bias correction), which can be easily extended to the linear mixed-effects model using CLR-LMM-BC (LMM stands for linear mixed-effect model). In the setting S8 (correlated microbiome data), we compared  CLR-LMM-BC to CLR-OLS-BC, CLR-OLS, and CLR-LMM to demonstrate the utility of LinDA for correlated microbiome data analysis.

\vskip0.5cm
{\noindent \bf Results}
We use S0C0 (log normal abundance distribution, a binary covariate) to denote the setting S0 (log normal abundance distribution) with the covariate design C0 (a binary covariate) and likewise for other setups. For S0, we studied all the three covariate designs (C0--C2), and for S1--S8, we only performed C0 for demonstration.
We found that DESeq2, edgeR and metagenomeSeq had severe FDR inflation under most settings.  To increase the readability of the results (presented in figures), we did not include them in the main comparison and focused on the comparison between LinDA, ANCOM-BC, ALDEx2, metagenomeSeq2, MaAsLin2 and Wilcoxon (Fig. \ref{fig-S0C0}--\ref{fig-S3C0} and Additional file \ref{supp2}: Fig. S1--S15). Full results of all methods are presented in Additional file \ref{supp2}: Fig.  S20--S30.

We first point out that MaAsLin2  with CLR normalization is essentially the same as the CLR-OLS procedure we described earlier. Additional file \ref{supp2}:  Fig. S3 compares LinDA, CLR-OLS, MaAsLin2-TSS, MaAsLin2-TMM, MaAsLin2-CSS and MaAsLin2-CLR under the setting S0C0 (log normal abundance distribution, a binary covariate). We can see that MaAsLin2-CLR is close to CLR-OLS,  both of which suffer from FDR inflation. We included MaAsLin2 with its default configuration (i.e., TSS normalization) in our comparisons below.

Fig. \ref{fig-S0C0} and Additional file \ref{supp2}: Fig. S4 and S5 show the results of the competing methods under the log-normal abundance distribution with three covariate designs: a binary covariate (S0C0), a continuous covariate (S0C1), and a binary variable of  interest with confounders (S0C2), respectively.  Generally speaking, LinDA and ANCOM-BC have the best FDR and power trade-off. Under C0 and C2 (Fig. \ref{fig-S0C0} and Additional file \ref{supp2}: Fig. S5), both methods control the FDR around the target level, and ANCOM-BC is slightly more powerful than LinDA when the sample size is small. However, under C1 (Additional file \ref{supp2}: Fig. S4), LinDA controls FDR at the target level at both sample sizes while ANCOM-BC has slight FDR inflation when the sample size is small.  LinDA is also slightly more powerful than ANCOM-BC at a small sample size. The Wilcoxon rank sum test based on GMPR normalized data and MaAsLin2 perform well under C0 (a binary covariate, Fig. \ref{fig-S0C0}) with  slightly inflated FDR at larger effect sizes and reasonable power across settings.  In contrast, for a continuous covariate (C1, Additional file \ref{supp2}: Fig. S4),   the Spearman rank correlation test and MaAsLin2 have large FDR inflation when the signal is dense. When there are confounders (C2, Additional file \ref{supp2}: Fig. S5), Wilcoxon has severe FDR inflation when the sample size is large due to its inability to adjust for confounders, while MaAsLin2 provides acceptable results as under C0. ALDEx2 is a conservative method, which offers the strongest FDR control but is much less powerful. metagenomeSeq2 performs well when the signal is sparse but fails to control the FDR when the signal is dense. We also studied the effect of zero inflation and the correlations among taxa (S1C0 and S2C0, Additional file \ref{supp2}: Fig. S6 and S7),  where we observed similar patterns such that LinDA and ANCOM-BC had overall the best performance among the compared methods.

Since LinDA assumes a log normal distribution of the absolute abundance, it is interesting to evaluate its performance when the log normal assumption is violated. We thus simulated the absolute abundance data using a gamma distribution (S3C0), and the results are depicted in Fig. \ref{fig-S3C0}. 
It shows that LinDA controls the FDR close to the target level and has the highest power. When the signal is dense (20\%), ANCOM-BC has a noticeable FDR inflation while ALDEx2, metagenomeSeq2, MaAsLin2 and Wilcoxon have severe FDR inflation when the signal is dense.  

With a smaller number of taxa  ($m = 50$, S4C0, Additional file \ref{supp2}: Fig. S8), ANCOM-BC controls the FDR and the power is also high. LinDA is the most powerful but it has slight FDR inflation. metagenomeSeq2, MaAsLin2 and Wilcoxon control the FDR but are less powerful in the case of sparse signal. However, when the signal is dense, they could not control the FDR properly. When the sample size is very small ($n=20$ or 30, S5C0), LinDA controls the FDR around the target level and maintains high power (Additional file \ref{supp2}: Fig. S9).  ANCOM-BC and metagenomeSeq2 have large FDR inflation and the inflation seems to increase as the sample size gets smaller. 
MaAsLin2 and Wilcoxon are much less powerful and ALDEx2 has virtually no power. Under the setting S6C0, where the sequencing depth differs by 10 folds, only ALDEx2 and our proposed method with adaptive zero-handling approach are able to control the FDR (Additional file \ref{supp2}: Fig. S10). LinDA achieves a better performance in both the FDR control and power than ALDEx2. It is interesting that ALDEx2 performs better under S6C0 than under other settings. We point out here that when we implemented ANCOM-BC, we disabled its zero treatment. To further investigate whether its zero treatment option improves its performance, we also run the procedure enabling its zero treatment (zero\_cut = 0.9, lib\_cut = 1000, struc\_zero = TRUE), and found the results were very similar (S6C0, Additional file \ref{supp2}: Fig. S11).

Under the previous simulation settings, we found that DESeq2 and edgeR had the worst false positive control (Additional file \ref{supp2}: Fig.  S20--S28). As the two methods assume negative binomial distribution for the counts, it is interesting to see their performance when the data are generated by their assumed model (S7C0). Additional file \ref{supp2}: Fig. S29 shows that DESeq2 and edgeR (and metagenomeSeq) remain to have serious FDR inflation, indicating that the normalization approach to address the compositional effect is not sufficient. In contrast, LinDA and ANCOM-BC perform the best among competitors as in other settings, and ANCOM-BC achieves higher power than LinDA (Additional file \ref{supp2}: Fig. S12).

Finally, we applied LinDA to correlated microbiome data (S8C0), where the other competing methods except MaAsLin2 are not applicable to correlated samples. Additional file \ref{supp2}: Fig. S13 and S14 compare the methods CLR-LMM-BC (LinDA-LMM), CLR-OLS-BC (LinDA-OLS), CLR-LMM, CLR-OLS and MaAsLin2 for correlated data. In the scenario of comparing the pre-treatment and post-treatment samples (S8.1, Additional file \ref{supp2}: Fig. S13), we could clearly see that ignoring the bias  tremendously increases the FDR level especially under dense signals (LinDA-LMM vs CLR-LMM). In addition, LinDA-LMM is more powerful than LinDA-OLS due to its ability to exploit the correlation between pre- and post-treatment samples. 
Under the replicate sampling setting (S8.2, Additional file \ref{supp2}: Fig. S14), we see that the LinDA-OLS has significant FDR inflation by treating the replicate samples as independent ones. In contrast, LinDA-LMM controls the FDR at the target level. MaAsLin2 control the FDR under both settings but is less powerful than LinDA-LMM.

Based on the presented simulation settings, we summarize that LinDA and ANCOM-BC have overall the most robust performance among the methods evaluated. However, ANCOM-BC is computationally  intensive. As shown in Table \ref{tab-speed}, LinDA could be 100--1000 times faster than ANCOM-BC, making LinDA a highly scalable method in practice. In addition, the extension of LinDA to the mixed-effect models is easily carried out and performs well.

\subsection{Real data applications}\label{sec-real}
{\bf Datasets}
We applied LinDA and the competing methods to three real datasets with independent samples from the studies of C. difficile infection \cite[CDI,][]{Schubert:2014}, inflammatory bowel disease \cite[IBD,][]{Morgan:2012}, and rheumatoid arthritis \cite[RA,][]{Scher:2013}. 
To demonstrate the use of LinDA on correlated microbiome samples, we applied LinDA to a dataset from the study of the smoking effect on the human upper respiratory tract \cite[SMOKE,][]{Charlson:2010}. We used the microbiome samples from the throat for illustration, where each subject has two samples from the left and right sides of the throat. The CDI and RA datasets were provided by the authors while the IBD and the SMOKE datasets were downloaded from the Qiita database \cite{Gonzalez:2018} with the study ID 1460 and 524. All the datasets have binary phenotypes. Antibiotics use is the confounder for the IBD dataset  ($p=0.03$ and OR = 0) while sex is the confounder for the SMOKE dataset ($p = 0.02$ and OR= 2.26). They will be adjusted in methods that are capable of covariate adjustment. We excluded samples with less than 1000 read counts and taxa which appear in less than 10\% of the samples. The basic characteristics for the four filtered datasets are summarized in Table \ref{tab-real}. 
We compared the detection power as well as their overlap patterns for LinDA, ANCOM-BC, ALDEx2, metagenomeSeq2, MaAsLin2 and Wilcoxon. Specifically, we compared the number of discoveries at different FDR levels (0.01--0.25) and  used UpSet plot \cite{Lex:2014} to show the overlap at the target FDR of 0.1. We used winsorization at quantile 0.97 to reduce the impact of potential outliers as recommended in \cite{Chenj:2018}. 

\vskip0.5cm
{\noindent\bf Results}
For the CDI dataset, LinDA or MaAsLin2 made the most discoveries at different FDR levels (Fig. \ref{fig-real-curve}). At 10\% FDR, LinDA discovered eight and MaAsLin2 discovered six taxa associated with CDI. In contrast, ANCOM-BC, ALDEx2, and Wilcoxon discovered three while metagenomeSeq2 discovered two. 
As discussed in \cite{Schubert:2014}, subjects with CDI were more likely to  have the bacterial family Lachnospiraceae and Erysipelotrichaceae. LinDA found one more taxon belonging to Lachnospiraceae than other methods (blue bars in Fig. \ref{fig-real-venn}). Besides, LinDA, MaAsLin2 and Wilcoxon found one differential taxon belonging to Erysipelotrichaceae while the other three methods did not identify any (orange bar in Fig. \ref{fig-real-venn}). For the IBD dataset, LinDA detected a similar number of taxa as ANCOM-BC and MaAsLin2. Wilcoxon rank sum test detected a large number of taxa associated with the disease status, but this could be due to the confounding effects of antibiotics use since it could not adjust for covariates.  From Fig. \ref{fig-real-venn}, we observe that most discoveries by LinDA are shared by ANCOM-BC, MaAsLin2 or Wilcoxon. For the RA dataset, LinDA detected a similar number of taxa as ANCOM-BC and more taxa than other methods. The differential taxa detected by LinDA and ANCOM-BC largely overlapped. Overall, the results are consistent with the simulation studies.

Finally, we applied LinDA-LMM and MaAsLin2 to the SMOKE dataset, where each subject has two replicate samples from the throat.  The aim is to identify smoking-associated taxa adjusting for the sex.  To account for the correlation between the replicate samples, we included a subject-level random intercept in LinDA-LMM.  As a comparison, we also applied  LinDA-OLS and MaAsLin2 to the right and left throat samples separately. LinDA-OLS based on the left or right throat samples alone discovered 12 and 15 differential taxa at 10\% FDR. When both left and right samples were used in LinDA-LMM, 21 differential taxa were identified, covering the majority of the taxa identified based on the left or right throat samples alone (Fig. \ref{fig-real-venn}). In addition, LinDA-LMM detected five taxa that were missed by analyzing the left or right samples separately.  Compared to MaAsLin2, LinDA-LMM discovered seven more differential taxa.
Therefore, LinDA-LMM provides a convenient way to analyze correlated microbiome datasets and enjoys the power improvement by analyzing all samples together. 

To visualize the results, \texttt{LinDA} provides a function to generate the effect size plot  and volcano plot for differential taxa. Additional file \ref{supp2}: Fig. S16--S19 display the effect size plots of differential taxa at FDR level of 0.1 and volcano plots for the four datasets, respectively.  
In the effect size plots (Additional file \ref{supp2}: Fig. S16A--S19A), the taxa in black are detected by LinDA and taxa in red are detected solely by LinDA.
In Additional file \ref{supp2}: Fig. S16A (CDI), the taxa in blue are missed by LinDA but detected by one or more other methods. 
In Additional file \ref{supp2}: Fig. S17A (IBD) and S18A (RA), the taxa in blue are missed by LinDA but detected by two or more  other methods.
No taxa are detected by MaAsLin2 but missed by LinDA-LMM for the SMOKE dataset. Based on the effect size plots for the CDI, IBD and RA datasets, we can see  that, for the taxa solely detected by LinDA, the effect sizes tend to be underestimated without bias correction and bias correction improves the power in these cases.  On the contrary, for the taxa missed by LinDA, the effect sizes tend to be overestimated without bias correction. In addition, we observe that the differential taxa for the IBD, CDI and RA datasets are more unbalanced (i.e., more negative or positive associations)  while the differential taxa for the SMOKE dataset are relatively balanced (i.e., similar numbers of negative and positive associations). 
Indeed, the effect size plots, where we plot both the debiased and un-debiased coefficients, revealed larger biases for the IBD, CDI, and RA datasets.

\section{Discussion}\label{sec-dis}

Differential abundance analysis is at the core of the statistical analysis of microbiome data. Microbiome data are compositional in nature and all we know are the relative abundances, making the identification of  differentially abundant taxa at the ecological site  particularly challenging \cite{Gloor:2017,Tsilimigras:2016}. Numerous differential abundance analysis methods have been proposed focusing on addressing the compositional effects \cite{Robinson:2010-1,Love:2014,Chenj:2018,Paulson:2013,Sohn:2015,Brill:2020,Fernandes:2014,Mandal:2015,Lin:2020}.  Among all the competing methods, ANCOM-BC is the state-of-the-art method, it has been demonstrated to be more robust and powerful than the competing methods.  However, there are two drawbacks of ANCOM-BC. First, it is computationally intensive for large-scale microbiome datasets such as the AmericanGut dataset.  Due to the huge inter-subject variation, large-scale microbiome studies have been increasingly popular, resulting in larger sample sizes.  On the other hand, metagenomic sequencing has become increasingly deeper to have a high-resolution view of the microbiome, leading to an unprecedented number of new microbial  features.  To meet the analysis need for such large-scale datasets,  a computationally efficient tool is much needed. Secondly, ANCOM-BC is not applicable to correlated/clustered microbiome datasets such as those from family/longitudinal microbiome studies or studies with paired and repeated measurements \cite{Faust:2015,Lewis:2015}. Longitudinal microbiome analysis, which enables the study of the trajectory of the microbiome as well as controls for potential confounders, has been increasingly employed in human microbiome studies.  Unfortunately, statistical tools for longitudinal microbiome studies are scarce.  In contrast, LinDA is computationally efficient since it only involves fitting regular linear regression models and could be easily scaled to hundreds of thousands of taxa.  Moreover, the extension of LinDA to linear mixed effects models (LMM) is straightforward and we have highly efficient tools such as the R \texttt{lme4} package \cite{Bates:2015} for fitting LMM.  Therefore, differential abundance analysis of correlated/clustered microbiome datasets could be easily performed using LinDA. Our framework also gives more insights into the CLR-based approach, which has been widely used in compositional data analysis \cite{Aitchison:1986}.  However, the bias of CLR regression models has not been formally recognized to our best knowledge.  Our framework justifies the use of CLR regression and provides a solution to correct the bias associated with CLR regression.

In the simulation, we found that Wilcoxon rank sum test and MaAsLin2 showed similar FDR/power curves and performed fairly well in most settings. As we mentioned earlier, MaAsLin2 is based on log linear models on the normalized count data, and it is essentially a two-sample t-test when no confounders are included, which explains why MaAsLin2 is close to Wilcoxon rank sum test.  However, 
when we simulated an even stronger compositional effect by drawing the differential taxa from the top 25\% most abundant taxa, we found that Wilcoxon rank sum test and MaAsLin2 began to break down (Additional file \ref{supp2}: Fig. S15). ANCOM-BC was overall robust and powerful, but it had inflated type I error at small sample sizes.  metagenomeSeq2 did not perform well when the signal was dense and was generally less powerful than ANCOM-BC and LinDA.  ALDEx2 was the most conservative method: its strong FDR control  was at the expense of statistical power.  LinDA was as competitive as ANCOM-BC in most settings. It had better FDR control than ANCOM-BC when the sample size was small or the covariate of interest was continuous. However, LinDA had some FDR inflation when the number of taxa was small. Under a very strong compositional effect (Additional file \ref{supp2}: Fig. S15), LinDA  also showed some FDR inflation but overall it had the best  FDR and power trade-off.

When the library size was associated with the covariate of interest, all existing methods had severe type I error inflation.  Fortunately, such  association is detectable and if we see a significant association,  rarefaction should be used for those methods. Although rarefaction controls the effect of uneven library sizes, it discards a significant portion of the reads and thus loses much information in the data.   When there are many samples with  small library sizes, the users have to decide whether to retain more reads or more samples. In LinDA, we implemented a heuristic imputation method, where the imputed values are  proportional to the library sizes. This procedure makes the imputed data after CLR transformation independent of the library size and substantially reduces the inflated type I error due to library size confounding.

Although the presented simulation settings could give basic insights into the performance of various methods, such model-based simulations might not be able to capture the full characteristics of the real microbiome data. It is very likely that the performance of the compared methods will change using a different simulation framework.  Moreover, our simulation strategy purposely creates strong compositional effects, where  all differential taxa show the same direction of change.  Such setting is used to test the limit of the various methods in addressing the compositional effects. However,  in real data, the compositional effects may not be always strong, and the FDR inflation of many methods could be very moderate.  Therefore,  a future benchmarking study, which uses real data-based simulation strategy and investigates all biologically plausible differential settings,  is much needed to have a comprehensive and objective evaluation of existing differential abundance analysis methods.


As for all model-based approaches, LinDA has several assumptions and limitations.  First, LinDA relies on the assumption that there is a mode at 0 for the regression coefficients (Condition (vi) in Theorem \ref{the-fdr}).  This assumption is easy to be met if the signal is sparse. In the simulation, we show that when the signal density is around 20\%, LinDA is still very robust. However, when the signal is extremely dense, LinDA could fail. Second, LinDA assumes a log linear model on the absolute abundance.  Although this is a reasonable assumption, which has been widely adopted in the analysis of abundance data, the interaction between the host and the microbiome could be more complex than the simple log linear relationship.  Analysis of the residuals from the CLR regression could provide evidence about whether the assumption is reasonable.  If the model assumption is violated, a permutation test or transformation of the variables may be performed. Finally, although LinDA provides asymptotic FDR control, its finite-sample FDR control is not guaranteed.   Based on numerical simulations, we found that LinDA  performed well under small sample sizes. However, we did observe some FDR inflation under a small feature size due to inefficiency in mode estimation with few features.   Therefore, we do not recommend applying LinDA to datasets with small feature sizes (e.g., m $<$ 50) such as phylum-level abundance data. 

LinDA uses the relative abundance data and does not model the sampling variability of the read counts. This could reduce the statistical power  for those less abundant taxa, whose sampling variability is larger than those abundant taxa.  To remedy the power loss, another multinomial sampling layer could be imposed on top of LinDA.  However, the computational complexity will be increased significantly and breaking the simplicity of LinDA. Another approach is to perform posterior inference of the underlying proportions based on a Bayes approach. Once we obtain the posterior samples, LinDA can be applied to the posterior samples and results are then aggregated, similar in the spirit to the multiple imputation method \cite{Carpenter:2012}.

Besides microbiome data, LinDA could be applied to other sequencing data such as RNA-Seq data since all sequencing data are compositional in nature \cite{Quinn:2018}. Thus, LinDA could be an alternative for differential expression analysis if there are strong compositional effects, for example, when the highly abundant genes are differential with the same direction of change.

Finally, we comment that  addressing compostionality is more relevant when analyzing individual microbial features such as differential abundance analysis, since the major interest to biomedical investigators is to find those truly differential features (``driver") instead of those driven by the compositional effect (``passenger").  However, for community-wide analysis such as distance-based analysis \cite{Chenj:2012,Chenj:2021}, addressing the compositionality may not be necessary in order to control the type I error. This is because that compositional effect is only relevant under the alternative hypotheses. Considering compositionality in the community-wide analysis has also been found to have small effects on the statistical power \cite{Thorsen:2016, Weiss:2017}.
Additionally, in microbiome-based predictive models \cite{Zhou:2019}, the relative abundances and/or their ratios could already be informative features for prediction  and addressing compositionality may not necessarily increase the prediction accuracy significantly.  Therefore, whether to address compositionality depends on the specific problems.

\section{Conclusions}
In summary, we  proposed LinDA for differential abundance analysis of microbiome compositional data.  LinDA identified a bias associated with traditional linear regression models based on CLR transformed abundance data and proposed a strategy to estimate and correct the bias. LinDA can  be extended to linear mixed effects model for analysis of correlated microbiome data. As a general methodology, LinDA can be applied to differential abundance analysis of other high-dimensional compositional data.

\section{Methods}\label{sec-met}
\subsection{Setup}
We use $C$, $C_1$, and $C_2$ to denote positive constants, which can be different from line to line. As summarized in the background, there are two ways to tackle the compositional effects in differential abundance analysis, namely normalization and log-ratio transformation. In this paper, we adopt the CLR transformation and develop a bias-correction procedure to address the compositional effects. Denote the absolute abundance and the observed read count of the $i$th taxon in the $s$th sample  by $X_{is}$ and $Y_{is}$, respectively. For the $s$th sample, the total read count of all taxa, $N_s=\sum^{m}_{i=1}Y_{is}$, is determined by the sequencing depth and DNA materials. Given $N_s$, it is natural to model the stratified count data over $m$ taxa through a multinomial distribution as
\begin{align}\label{m1}
	P(Y_{1s}=y_{1s},\dots,Y_{ms}=y_{ms})=\frac{N_s!}{\prod^{m}_{i=1}y_{is}!}
	\prod^{m}_{j=1}\left(\frac{X_{js}}{\sum_{i=1}^m X_{is}}\right)^{y_{js}}
\end{align}
Under (\ref{m1}), we have 
\begin{align}\label{m2}
	\log\left(\frac{Y_{is}}{\sum^{m}_{j=1}Y_{js}}\right)=\log\left(\frac{X_{is}}{\sum^{m}_{j=1}X_{js}}\right)+e_{is}, 
\end{align}
where $e_{is}$ denotes the estimation error,  which is expected to diminish as $N_s$ gets large.

\subsection{OLS estimation}
We consider the log linear model on the absolute abundance
\begin{align}\label{m3}
	\log\left(X_{is}\right)=u_s\alpha_i+(1,\mathbf{c}_s^\top)\boldsymbol{\beta}_i+\epsilon_{is},
\end{align}
where $\mathbf{c}_s=(c_{s1},...,c_{sd})^\top$ is the $d$-dimensional covariates to be adjusted, $u_s$ is the covariate of interest, and $\epsilon_{is}$ is the error term. Our goal is to discover taxa that are differentially abundant with respect to $u_s$. Statistically, we want to simultaneously test the following $m$ hypotheses $$H_{0,i}: \alpha_i=0 \text{ versus } H_{a,i}: \alpha_i\neq 0.$$
Set $\varepsilon_{is}=\epsilon_{is}+e_{is}$. Under (\ref{m2}) and (\ref{m3}), the CLR-transformed data satisfies the following linear model
\begin{align}
	W_{is}:&=\log\left\{\frac{Y_{is}}{(\prod^{m}_{j=1} Y_{js})^{1/m}}\right\}=\log\left(\frac{Y_{is}}{\sum^{m}_{k=1}Y_{ks}}\right)-\frac{1}{m}\sum_{j=1}^m\log\left(\frac{Y_{js}}{\sum^{m}_{k=1}Y_{ks}}\right)\nonumber\\
	&=\log(X_{is})-\frac{1}{m}\sum_{j=1}^m\log(X_{js})+e_{is}-\frac{1}{m}\sum_{j=1}^me_{js}\nonumber\\
	&=u_s\left(\alpha_i-\bar{\alpha}\right)+(1,\mathbf{c}^\top_s)\left(\boldsymbol{\beta}_i-\bar{\boldsymbol{\beta}}\right)+\varepsilon_{is}-\bar{\varepsilon}_s,\label{CLR}
\end{align}
where $\bar{\alpha}=m^{-1}\sum_{i=1}^m\alpha_i$, $\bar{\boldsymbol{\beta}}=m^{-1}\sum_{i=1}^m\boldsymbol{\beta}_i$, and $\bar{\varepsilon}_{s}=m^{-1}\sum_{i=1}^m\varepsilon_{is}$. From (\ref{CLR}), we can see that the OLS estimator for $\alpha$ based on the CLR transformed data is biased with the bias term being $\bar{\alpha}$.
Let $\bar{\alpha}_i=\alpha_i-\bar{\alpha}$, $\bar{\boldsymbol{\beta}}_i=\boldsymbol{\beta}_i-\bar{\boldsymbol{\beta}}$, $\bar{\varepsilon}_{is}=\varepsilon_{is}-\bar{\varepsilon}_s$, and $\bar{\sigma}_i^2=\text{var}(\bar{\varepsilon}_{is})$. Denote by $\tilde{\alpha}_i$, $\tilde{\boldsymbol{\beta}}_i$, and $\hat{\sigma}_i^2$ the OLS estimators of $\bar{\alpha}_i$, $\bar{\boldsymbol{\beta}}_i$, and $\bar{\sigma}_i^2$, respectively. We then have
\begin{align}\label{s1}
	&(\tilde{\alpha}_i,\tilde{\boldsymbol{\beta}}_i^\top)^\top=\left(\sum_{s=1}^n\mathbf{z}_s\mathbf{z}_s^\top\right)^{-1}\left(\sum_{s=1}^n\mathbf{z}_s W_{is}\right),\quad	\hat{\sigma}_i^2=\frac{1}{n-d-2}\sum_{s=1}^{n}\left\{W_{is}-\left(\tilde{\alpha}_i,\tilde{\boldsymbol{\beta}}_i^\top\right)\mathbf{z}_s\right\}^2,
\end{align}
where $\mathbf{z}_s=(u_s,1,\mathbf{c}_s^\top)^\top$. 
We respectively let $\text{var}_{\mathbf{z}}(\cdot)$ and  $\text{cov}_{\mathbf{z}}(\cdot,\cdot)$ denote the variance and covariance computed conditional on $\mathbf{z}_1,...,\mathbf{z}_n$. It can be shown that
\begin{align*}
	&\text{var}_{\mathbf{z}}(\tilde{\alpha}_i)=\hat{\rho} n^{-1}\bar{\sigma}_i^2=\hat{\rho} n^{-1}m^{-1}\left\{(m-2)\sigma_i^2+m^{-1}\sum_{i=1}^m\sigma_i^2\right\},\\
	&\text{cov}_{\mathbf{z}}(\tilde{\alpha}_i,\tilde{\alpha}_j)=\hat{\rho} n^{-1}m^{-1}\left\{-(\sigma_i^2+\sigma_j^2)+m^{-1}\sum_{i=1}^m\sigma_i^2\right\},\quad\text{for }i\neq j,
\end{align*}
where $\hat{\rho}$ is the $(1,1)$th element of $(n^{-1}\sum_{s=1}^n\mathbf{z}_s\mathbf{z}_s^\top)^{-1}$.

\subsection{Bias correction}
In many applications, it is reasonable to assume that there is only a small portion of differential taxa, i.e., most $\alpha_i$'s are equal to 0.
Under this assumption, as $\tilde{\alpha}_i$ is an unbiased estimator for $\bar{\alpha}_i=\alpha_i-\bar{\alpha}$, the mode of $\tilde{\alpha}_i$ is expected to be close to $-\bar{\alpha}$. This observation motivates us to estimate $-\bar{\alpha}$ by
\begin{align}\label{s2}
	\hspace*{-5pt}-\tilde{\alpha}=\frac{\widehat{\text{mode}}(\{\sqrt{n}\tilde{\alpha}_i\}^{m}_{i=1})}{\sqrt{n}},\quad  \text{where}\quad 
	\widehat{\text{mode}}(\{\sqrt{n}\tilde{\alpha}_i\}^{m}_{i=1})=	\argmax_{x\in\mathbb{R}}\frac{1}{mh}\sum_{i=1}^mK\left(\frac{x-\sqrt{n}\tilde{\alpha}_i}{h}\right).
\end{align}
In \eqref{s2}, $K$ is a non-negative even function with $\int_{-\infty}^{\infty}K(y)dy=1$, and $h$ is the bandwidth parameter. Under some regular conditions, we have $$\sqrt{n}(\tilde{\alpha}-\bar{\alpha})=o_{\mathbb{P}}(1)$$ 
as $m,n\to\infty$ (see the supplementary material for the proof). Therefore, one can estimate $\alpha_i$ by the bias-corrected estimator $\hat{\alpha}_i=\tilde{\alpha}_i+\tilde{\alpha}$. 

\subsection{Testing procedure}
To construct a statistic for testing $H_{0,i}$, we need to find a proper estimator for the variance of $\hat{\alpha}_i$. To this end, we note that
\begin{align*}
	&\text{var}_{\mathbf{z}}(\hat{\alpha}_i)=\text{var}_{\mathbf{z}}(\tilde{\alpha}_i)+\text{var}_{\mathbf{z}}(\tilde{\alpha})+2\text{cov}_{\mathbf{z}}(\tilde{\alpha}_i,\tilde{\alpha}).
\end{align*}
Since $\text{var}_{\mathbf{z}}(\tilde{\alpha}_i)$ is $\hat{\rho} \bar{\sigma}_i^2/n$, it dominates $\text{var}_{\mathbf{z}}(\tilde{\alpha})$ and $\text{cov}_{\mathbf{z}}(\tilde{\alpha}_i,\tilde{\alpha})$ as $n,m\to\infty$ under mild conditions. Thus, we estimate the variance of $\hat{\alpha}_i$ by $\hat{\rho}\hat{\sigma}_i^2/n$. As shown in the next section, the studentized statistic $T_i:=\sqrt{n}\hat{\alpha}_i/\sqrt{\hat{\rho}\hat{\sigma}_i^2}$ is asymptotically normal. However, for small sample, we found that t distribution provides a better approximation to the sampling distribution of $T_i$. We define the p-value for testing $H_{0,i}$ as 
\begin{align}\label{s3}
	p_i=2F_{n-d-2}\left(-|T_i|\right),
\end{align}
where $F_{n-d-2}(\cdot)$ denotes the cumulative distribution function of t distribution with $n-d-2$ degrees of freedom. Based on the p-values in \eqref{s3}, we can use the BH procedure to control the FDR. The above discussion leads to the following Algorithm \ref{algorithm}. 

\begin{algorithm}[ht]
	\caption{Linear models for differential abundance analysis (LinDA)}\label{algorithm}
	\vspace{1mm}
	\begin{enumerate}
		\item Step 1: Run OLS based on the CLR transformed observations and calculate $\tilde{\alpha}_i$ and $\hat{\sigma}_i^2$ as in (\ref{s1}).
		\item Step 2: Compute the bias-corrected estimates $\hat{\alpha}_i=\tilde{\alpha}_i+\tilde{\alpha}$ with $\tilde{\alpha}$ defined in (\ref{s2}).
		\item Step 3: Calculate the p-values as in (\ref{s3}) and run the BH procedure.
	\end{enumerate}
\end{algorithm}

\begin{remark}
	{\rm
		Built upon the linear regression framework, our method could be easily extended to the mixed-effect model:
		\begin{align*}
			\log(X_{is})=u_s\alpha_i+(1,\mathbf{c}_s^\top)\boldsymbol{\beta}_i+\mathbf{r}_s^\top\boldsymbol{\gamma}_i+\varepsilon_{is},
		\end{align*}
		where $\boldsymbol{\gamma}_i$ is the random effect and $\mathbf{r}_s$ is the corresponding design. Mixed effects can be used to analyze correlated microbiome data from studies involving replicates or spatial sampling as well as family-based and longitudinal microbiome studies. We suggest using the R function \texttt{lmer} to estimate the parameters for the CLR-transformed data. Denote by $\tilde{\alpha}_{i,\text{lmer}}$, $\hat{\sigma}_{i,\text{lmer}}^2$, and $\text{df}_{i,\text{lmer}}$ the estimations for $\bar{\alpha}_i$, the variance of $\tilde{\alpha}_{i,\text{lmer}}$, and the degrees of freedom of $\tilde{\alpha}_{i,\text{lmer}}$ from the \texttt{lmer} function. We compute the bias-corrected estimates $\hat{\alpha}_{i,\text{lmer}}=\tilde{\alpha}_{i,\text{lmer}}+\tilde{\alpha}_{\text{lmer}}$, where $\tilde{\alpha}_{\text{lmer}}$ is obtained as the same procedure used in \eqref{s2}. Similarly, we let  $T_{i,\text{lmer}}=\hat{\alpha}_{i,\text{lmer}}/\hat{\sigma}_{i,\text{lmer}}$ and $p_{i,\text{lmer}}=2F_{\text{df}_{i,\text{lmer}}}(-|T_{i,\text{lmer}}|)$. The BH procedure on $p_{i,\text{lmer}}$ is finally used to control the FDR.
	}
\end{remark}

\begin{remark}
	{\rm
		Compared to the existing methods based on either normalization or CLR transformation, our method is computationally much more efficient and can be easily scaled to problems with tens of thousands of taxa. Table \ref{tab-speed} compares the computation time of LinDA and ANCOM-BC based on simulated datasets. We observe that our method is 100--1000 times faster than ANCOM-BC. We also tested on a massive dataset of the similar scale of the AmericanGut project \cite{McDonald:2018} ($m=5000$ and $n = 10000$). ANCOM-BC completed the analysis in 85 minutes compared to 28 seconds for our method (see the column of S0C0 in Table \ref{tab-speed}).  Large-scale microbiome studies have been increasingly common to overcome the large inter-subject variability, making our method practically useful for the analysis of big microbiome datasets. 
	}
\end{remark}

\subsection{Asymptotic FDR control}\label{sec-fdr}
Suppose the target FDR controlling level is $q$. The BH procedure is equivalent to finding the smallest $t^*$ such that $\widehat{\text{FDP}}(t^*)\le q$, where
\begin{align*}
	\widehat{\text{FDP}}(t)=\frac{2mF_{n-d-2}(-t)}{\sum_{i=1}^m\mathbb{I}\left(\sqrt{n}|\hat{\alpha}_i|/\sqrt{\hat{\rho}\hat{\sigma}_i^2}>t\right)}.
\end{align*}
Here $\mathbb{I}$ denotes the indicator function.
To show the asymptotic FDR control as $m,n\to\infty$, we take a Bayesian perspective by assuming that the parameters $\alpha_i$'s are independently generated from a common distribution. The key result is summarized in the following theorem and technical details can be found in the supplementary note.
\begin{theorem}\label{the-fdr}
	Let $\rho$ be the $(1,1)$th element of $\{\mathbb{E}(\mathbf{z}_s\mathbf{z}_s^\top)\}^{-1}$. Suppose the following conditions are satisfied:\\
	(i) $\mathbf{z}_s$'s are i.i.d.; $u_s$ and $c_{sa},a=1,...,d$, are sub-Gaussian; $\sigma_{\text{min}}\{\mathbb{E}(\mathbf{z}_s\mathbf{z}_s^\top)\}>C$, where $\sigma_{\text{min}}(\mathbf{A})$ represents the minimum eigenvalue of a matrix $\mathbf{A}$.\\
	(ii) $\sigma_i$'s are i.i.d. and $\mathbb{P}(C_1<\sigma_i<C_2)=1$.\\
	(iii) $\varepsilon_{is}/\sigma_i\sim^{\text{i.i.d.}}\mathcal{E}=^{d} N(0,1)$ for $i=1,...,m$ and $s=1,...,n$.\\
	(iv) $\alpha_i$'s are i.i.d.\\
	(v) $\mathbf{z}_s,\sigma_i,\varepsilon_{is}/\sigma_i$, and $\alpha_i$ for $i=1,...,m$ and $s=1,...,n$ are mutually independent.\\
	(vi) Denote by $f_n(\cdot;a)$ the density function of $\sqrt{n}\alpha_i+\sqrt{a}\varepsilon_{is}$ for any $a>0$.  For large enough $n$, the density $f_n(\cdot;\rho)$ has a unique mode at 0, i.e., $\arg\max_{x\in\mathbb{R}} f_n(x;\rho)=0$; for any $\epsilon>0$, there exists a $\delta>0$ such that $\min_n\inf_{|x|>\epsilon}|f_n(x;\rho)-f_n(0;\rho)|>\delta$.\\
	(vii) The Fourier transform $k(u)=\int_{-\infty}^{\infty}e^{-\imath uy}K(y)dy$ is absolutely integrable, where $\imath=\sqrt{-1}$ is the imaginary unit. \\
	(viii) $h=o(1)$ and $1/(mh^2)=o(1)$. \\
	(ix) $m=o(e^{Cn})$. \\
	(x) Let $S_{\infty,n}(t) =\mathbb{P}(|\mathcal{E}+\sqrt{n}\alpha_i/\sqrt{\rho\sigma_i^2}|>t)$. There exists $t_0$ such that for large enough $n$,
	$2F_{n-d-2}(-t_0)/S_{\infty,n}(t_0)\le q$.\\
	Let 
	\begin{align*}
		\text{FDR}_{m,n}(t)=\mathbb{E}\left\{\frac{\sum_{i:\alpha_i=0}\mathbb{I}\left(|\sqrt{n}\hat{\alpha}_i|/\sqrt{\hat{\rho}\hat{\sigma}_i^2}>t\right)}{1\vee\sum_{i=1}^m\mathbb{I}\left(|\sqrt{n}\hat{\alpha}_i|/\sqrt{\hat{\rho}\hat{\sigma}_i^2}>t\right)}\right\}.
	\end{align*}
	Under the above conditions, we have
	\begin{align*}
		\limsup_{m\to\infty,n\to\infty}\text{FDR}_{m,n}(t^*)\le q.
	\end{align*}
\end{theorem}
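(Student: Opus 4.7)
The plan is to decompose the proof of Theorem \ref{the-fdr} into four stages: consistency of the bias correction $\tilde\alpha$, asymptotic Gaussianity of the null test statistics, conservativeness of $\widehat{\text{FDP}}$, and a stopping-time-to-expectation step in the Storey--Taylor--Siegmund/Liu--Shao style. The novel piece is the first stage; the remaining three adapt a standard BH template to the present normal model.

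Stage 1: establish $\sqrt{n}(\tilde\alpha - \bar\alpha) = o_{\mathbb{P}}(1)$. Writing $\sqrt{n}\tilde\alpha_i + \sqrt{n}\bar\alpha = \sqrt{n}\alpha_i + \sqrt{n}(\tilde\alpha_i - \bar\alpha_i)$, formula (\ref{s1}) together with condition (iii) says the second summand is, conditionally on $(\mathbf{z}_s,\alpha_i,\sigma_i)$, Gaussian with variance $\hat\rho\bar\sigma_i^2$. Condition (i) and a standard LLN give $\hat\rho\to\rho$, and by construction $\bar\sigma_i^2\to\sigma_i^2$ as $m\to\infty$. Hence the unconditional density of $\sqrt n\tilde\alpha_i + \sqrt n\bar\alpha$ approaches the density of $\sqrt n\alpha_i + \sqrt{\rho\sigma_i^2}\mathcal{E}$, namely $f_n(\cdot;\rho)$, which by (vi) has a unique mode at 0. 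Conditions (vii) and (viii) then yield uniform-in-$x$ consistency of the kernel estimator via Parzen's Fourier-inversion bound, and the quantitative mode-separation in (vi) (the constant $\delta$) converts uniform density convergence into consistency of the argmax, giving $\widehat{\text{mode}} + \sqrt{n}\bar\alpha = o_{\mathbb{P}}(1)$, i.e., the claim.

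Stage 2: under $H_{0,i}$, $T_i = [\sqrt n(\tilde\alpha_i - \bar\alpha_i) + \sqrt n(\tilde\alpha - \bar\alpha)]/\sqrt{\hat\rho\hat\sigma_i^2}$ converges to the standardized Gaussian by Stage 1, with the $t_{n-d-2}$ correction in the denominator exact in finite samples under (iii), so the null $p_i$ are asymptotically uniform. Stage 3: by (iv)--(v) the null $T_i$'s are (conditionally on $\alpha$) approximately i.i.d., and a sub-Gaussian/Bernstein union bound permitted by (i)--(ii) together with $m = o(e^{Cn})$ in (ix) yields, uniformly over $t$ in a bounded range, $m^{-1}\sum_{i:\alpha_i=0}\mathbb{I}(|T_i|>t) = \pi_0\cdot 2F_{n-d-2}(-t) + o_{\mathbb{P}}(1)$. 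Since $m_0\le m$, the numerator $2mF_{n-d-2}(-t)$ of $\widehat{\text{FDP}}(t)$ over-estimates the expected false-rejection count, giving $\widehat{\text{FDP}}(t)\ge \text{FDP}(t) + o_{\mathbb{P}}(1)$ on this range. Stage 4: condition (x) implies $t^*\le t_0$ with probability tending to one, confining the analysis to the bounded regime; combining $\widehat{\text{FDP}}(t^*)\le q$ with Stage 3 gives $\text{FDP}(t^*)\le q+o_{\mathbb{P}}(1)$, and a bounded-convergence step delivers $\limsup\text{FDR}_{m,n}(t^*)\le q$.

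The main obstacle is Stage 1. Three difficulties surface together: (a) the target $f_n(\cdot;\rho)$ is genuinely $n$-dependent, so this is a triangular-array KDE problem rather than a fixed-limit one; (b) the inputs $\{\sqrt{n}\tilde\alpha_i\}_{i=1}^m$ are not independent---they share the centering terms $\bar\alpha$ and $\bar\varepsilon_s$ imposed by the CLR transform, and the residual cross-taxon dependence must be shown negligible on the bandwidth scale; (c) argmax consistency requires the quantitative separation bound $\delta$ in (vi), not merely uniqueness of the mode. The bandwidth balance in (viii) is tight---$mh^2\to\infty$ controls the stochastic error of the kernel estimator while $h\to 0$ controls the bias---and the Fourier integrability in (vii) is what makes a uniform $(mh)^{-1/2}$ rate attainable. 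Everything in Stages 2--4 is a routine adaptation of existing BH/FDR arguments.
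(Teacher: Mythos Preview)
Your four-stage plan mirrors the paper's architecture (its Lemmas S1--S4 plus the final assembly), and you correctly isolate Stage~1 as the crux. Two points need sharpening, however.

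In Stage~1, the dependence you flag in (b) is not controlled ``on the bandwidth scale'' but \emph{removed algebraically}: write $\sqrt n\,\tilde\alpha_i = \sqrt n\,\alpha_i - \sqrt n\,\bar\alpha + U_i - U$ with $U_i=\hat{\boldsymbol\eta}^\top n^{-1/2}\sum_s\mathbf z_s\varepsilon_{is}$ and $U$ its average over $i$. Since $U=O_{\mathbb P}(m^{-1/2})$ and the empirical mode is translation-invariant, one reduces to the KDE of $\{\sqrt n\,\alpha_i+U_i\}_{i=1}^m$, which given $\hat\rho$ are \emph{exactly} i.i.d.\ with $U_i\mid\hat\rho\sim N(0,\hat\rho\sigma_i^2)$; Parzen's Fourier argument then applies with no residual dependence to track. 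In Stages~2--3, condition (ix) is misplaced: its role is not a Bernstein union bound on the null indicators but to establish $\max_i|\hat\sigma_i^2-\sigma_i^2|=o_{\mathbb P}(1)$ (the paper's Lemma~S2), via a union over $m$ sub-exponential-in-$n$ tails. Only after this uniform variance consistency can $\hat\sigma_i$ be swapped for $\sigma_i$ in $T_i$; then, after the same $U_i/U$ decomposition, the indicators become conditionally i.i.d.\ and ordinary Glivenko--Cantelli gives the uniform LLN, with no further appeal to (ix). You also need the companion convergence $m^{-1}S_{m,n}(t)\to S_{\infty,n}(t)$ for the \emph{total} rejection count, not only the null part, since this is the denominator of $\text{FDP}$; both halves are the content of the paper's Lemma~S4 and are what make the ratio argument in Stage~4 go through.
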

Conditions (i)--(v) help prove the consistency of the variance estimators and the mode of the regression coefficients. By assuming that the errors follow the normal distributions (Condition (iii)), we can integrate all the relevant covariate information in a single parameter $\hat\rho$, which facilitates the establishment of the consistency of the kernel density estimation and hence the estimator of mode. In the simulation studies, we also investigated the scenario of non-normal distribution. 
We use an example to illustrate Condition (vi). In particular, we assume that $\sqrt{n}\alpha_i$ follows a discrete distribution with $\mathbb{P}(\sqrt{n}\alpha_i=a_{n,l})=\pi_l$ for $l=0,1$, where $a_{n,0}=0$, $a_{n,1}\neq 0$, $\pi_l>0$, and $\pi_0+\pi_1=1$. To reflect the sparsity, $\pi_0$ is set to be $0.8$. We choose $a_{n,1}=2$ and $5$ representing weak and strong signals, respectively. We consider two cases for the error variance: (i) $\sigma_i=1$; (ii) $\sigma_i\sim\text{IG}(a,b)$, i.e., $\sigma_i$ follows the inverse-gamma distribution with the shape parameter $a$ and scale parameter $b$. As seen from Fig. \ref{fig-ills_cond}, when the signal strength is weak, the mode of $\sqrt{n}\alpha_i+\sqrt{\rho}\varepsilon_{is}$ slightly deviates from 0 as the blue curve in the left panel indicates. For strong signals, the mode is exactly equal to zero. As shown in \cite{Parzen:1962}, Condition (vii) is fulfilled by many commonly used kernels such as the Gaussian kernel and the uniform kernel on $[-1, 1]$. Condition (ix) allows the number of taxa to be exponentially larger than the sample size. Condition (x) ensures the existence of a cut-off value to control the FDR at level $q$. A similar assumption was imposed in Theorem 4 of \cite{Storey:2004}. 

\subsection{Detailed setups for numerical studies}\label{sec-detailed-setups}
The differential taxa were randomly drawn from the entire set. 
In particular, let $H_i=0$ if the $i$th taxon is differentially abundant and $H_i=1$ otherwise. The underlying truth $H_i$ was generated from
$$H_i\sim^{\text{i.i.d.}}\text{Bernoulli}(\gamma).$$ 
We simulated two levels of signal density (i.e., percentage of differential taxa) $\gamma$ = 5\%, 20\%, roughly corresponding to sparse and dense signals.
We assumed that the baseline absolute abundance $X_{is}^{(0)}$ follows
\begin{align*}
\log\left(X_{is}^{(0)}\right)\sim^{\text{i.i.d.}}N\left(\beta_i^{(0)},\sigma_i^2\right),
\end{align*}
and correspondingly the absolute abundance $X_{is}$ were draw based on
\begin{align*}
\log(X_{is})\sim^{\text{i.i.d.}}  N\left(\beta_i^{(0)}+u_s\alpha_i+\mathbf{c}_s^\top\boldsymbol{\beta}_i^{-(0)},\sigma_i^2\right),
\end{align*}
where $\boldsymbol{\beta}_i^{-(0)}$ represents the coefficients of the confounders, $i=1,\dots, m$. Let
\begin{align*}
\pi_{is}^{(0)} =\frac{X_{is}^{(0)}}{\sum_{j=1}^mX_{js}^{(0)}} \quad \mbox{and} \quad \pi_{is}=\frac{X_{is}}{\sum_{j=1}^mX_{js}}.
\end{align*}
The observed OTUs data were simulated by
\begin{align*}
(Y_{1s},\dots,Y_{ms})\sim^{\text{i.i.d.}}  \text{Multinomial}(N_s,\pi_{1s},\dots,\pi_{ms}).
\end{align*}
To create a power curve, we included six effect sizes labeled as  $\{1,2,...,6\}$ in the figures.  We made the effect sizes have the same signs for differential taxa (i.e., the differential taxa have the same direction of change), creating a relatively strong compositional effect. 
Since low-abundance taxa have much less statistical power, we up-weighted their effects so that the power will not be dominated by those abundant ones. Specifically, for a randomly drawn differential taxon $i$, we set
\begin{align*}
\alpha_i=
\begin{cases}
&\log(2\mu)\mathbb{I}\left(\bar\pi^{(0)}_{i}>0.005\right)+\log\left\{2\mu\left(0.005/\bar\pi_i^{(0)}\right)^{{1}/{3}}\right\}\mathbb{I}\left(\bar\pi^{(0)}_{i}\le 0.005\right)\text{ for }n=50,\\
&\log(\mu)\mathbb{I}\left(\bar\pi^{(0)}_{i}>0.005\right)+\log\left\{\mu\left(0.005/\bar\pi_i^{(0)}\right)^{{1}/{3}}\right\}\mathbb{I}\left(\bar\pi^{(0)}_{i}\le 0.005\right)\text{ for }n=200,
\end{cases}
\end{align*}
where $\mu$ is equally spaced on $[1.05, 2]$ and $\bar\pi_i^{(0)}=\sum_{s=1}^n\pi_{is}^{(0)} / n$.
We considered three cases for the covariate and confounders:
\begin{itemize}
	\item[C0.] \textit{A binary covariate.} $u_s\sim^{\text{i.i.d.}}\text{Bernoulli}(1/2)$ and no confounder.
	\item[C1.] \textit{A continuous covariate.} $u_s\sim^{\text{i.i.d.}}N(0,1)$ and no confounder.
	\item[C2.] \textit{A binary covariate of interest and two confounders.} $u_s\sim\text{Bernoulli}(\{1+\exp(-0.5c_{s1}-0.5c_{s2})\}^{-1})$ independently, where $c_{s1}$ and $c_{s2}$ are confounders (i.e., $\mathbf{c}_s=(c_{s1},c_{s2})^\top$). 
	In the above, $c_{s1}$ is specified to independently follow the Rademacher distribution and $c_{s2}\sim^{\text{i.i.d.}}N(0,1)$. 
	The corresponding coefficients of the confounders $\boldsymbol{\beta}_{i}^{-(0)}=(\beta^{(1)}_i, \beta^{(2)}_i)^\top$, $i=1, \dots, m$, were independently generated from a 2-dimensional normal distribution with mean $(1,2)^\top$ and variance $\mathbf{I}_2$, where $\mathbf{I}_2$ denotes the 2 by 2 identity matrix.
\end{itemize}
The parameters $\beta_i^{(0)}$, $\sigma_i^2$, and $N_s$ were generated based on the estimation for a real dataset (COMBO) from the study of the gut microbiota in a general population \cite{Wu:2011}, which consists of 98 samples and 6674 taxa. 
We only used its 500 most abundant taxa. Since $\beta_i^{(0)}$ and $\sigma_i^2$ were not directly estimable using the relative abundance data, we estimated $\beta_i^{(0)}-\beta_j^{(0)}$ and $\sigma_i^{2}+\sigma_j^{2}$ based on the pairwise log ratios, forced some $\beta_i^{(0)}$'s to be zeros to obtain the estimators of $\beta_1^{(0)},\dots, \beta_m^{(0)}$, and derived $\sigma_i^2$ from the values of $\{\sigma_i^{2}+\sigma_j^{2}\}_{i,j}$. We assume that the library size for each sample follows the negative binomial distribution 
\begin{equation*}
N_s\sim^{\text{i.i.d.}}\text{NB}(7645, 5.3), 
\end{equation*}
where the mean and dispersion parameters were estimated based on the combo data. The resulting sparsity (percent of zeros) of the count matrix is around 65\%--75\%.

In addition to the basic setting (S0, log normal abundance distribution), we designed seven other settings to study the
robustness of the proposed method. Specifically, on top of S0 and C0 (a binary covariate), we studied
\begin{itemize}
	\itemsep0em 
	\item[S1.] \textit{zero-inflated absolute abundances.}  The microbiome data contains excessive zeros and many zeros in the microbiome data can be explained by insufficient sampling \cite{Silverman:2020} since majority of the taxa are of low-abundance. However, it is also possible that zeros are due to physical absence of the taxa \cite{Kaul:2017}. To study the effect of zero inflation on differential abundance analysis, we randomly forced 30\% of the absolute abundance data to be 0.  
	\item[S2.] \textit{Correlated absolute abundances.}  Existing differential abundance analysis methods assume independence among taxa. However, in practice, taxa are interconnected forming networks \cite{Kurtz:2015}.  It is interesting to see if  the methods compared are robust to the correlations among the taxa.  
	In this setting, we simulated block-correlation structure by dividing the 500 taxa into 25 equal-sized blocks. Within each block, we further divided the block into 2 by 2 sub-blocks and simulated equal positive correlations (0.5) within each sub-block and equal negative correlations ($-0.5$) between the two sub-blocks. This mimics the scenario that there are  mutualistic relationships within the group and competitive relationships between groups. 
	\item[S3.] \textit{Gamma abundance distribution.} Although the log normal distribution has been widely used for modeling species abundance data, other models such as gamma distribution are also possible \cite{Connolly:2014}. We thus did additional simulation studies using the gamma distribution. Let $X_{is}^{(0)}\sim^{\text{i.i.d.}}\text{Gamma}(\eta_i^{(0)},1)$ and $X_{is}\sim^{\text{i.i.d.}}\text{Gamma}(\eta_i^{(0)} \exp(u_s\alpha_i+\mathbf{c}_s^\top\boldsymbol{\beta}_i^{-(0)}),1)$. Similarly, we estimated $\eta_i^{(0)}$ from the COMBO data, where we first estimated the baseline proportion $\pi_i^{(0)}$ based on the Dirichlet-multinomial distribution using the R function \texttt{dirmult} and set the over-dispersion parameter $\theta^{(0)}$ to be 0.003, then let $\eta_i^{(0)}=\pi^{(0)}_i(1/\theta^{(0)} - 1)$.
	\item[S4.] \textit{Smaller $m$.} In microbiome data, each taxon can be assigned a taxonomic lineage and taxa abundances can be aggregated at different taxonomic ranks. Differential abundance analysis at higher ranks such as family and genus is also routinely performed. At the higher ranks, the number of taxa is much smaller. We thus studied a small number of taxa ($m=50$) to see if the proposed method is robust to a small $m$. In this setting, we randomly chose 50 elements from $\boldsymbol{\beta}^{(0)}=(\beta_1^{(0)},...,\beta_{500}^{(0)})^\top$ and $\boldsymbol{\sigma}^2=(\sigma_1^2,...,\sigma_{500}^2)^\top$ in each simulation run. We set $N_s\sim\text{NB}(1500, 5.3)$.
	\item[S5.] \textit{Smaller $n$.}  In pilot microbiome studies, the sample sizes are usually small.  It is interesting to study the performance of the methods at a much smaller sample size.  We studied $n=20,\;30$, and used the same effect size as $n=50$.
	\item[S6.] \textit{10-fold difference in library size.}  When the microbiome samples are not fully randomized in sequencing, it is likely that samples of the two groups end up in two separate sequencing runs leading to very different library sizes for the two groups. Since the presence/absence of a taxon strongly depends on the library size, the differential library size will confound the two-sample comparison, especially for those rare taxa \cite{Weiss:2017}. To create differential library sizes, we generated the library size from $N_s\sim\text{NB}(5000, 5.3)$ and $N_s\sim\text{NB}(50000, 5.3)$ for the two groups, respectively.
	\item[S7.] \textit{Negative bionomial abundance distribution.} DESeq2 and edgeR  assume
	negative binomial distribution for the counts, thus we included one more simulation setting, where we generated the counts from the negative binomial distribution. Let $X_{is}^{(0)}\sim^{\text{i.i.d.}}\text{NB}(\text{exp}(7645\kappa_i^{(0)} ), \theta_i^{(0)})$ and $Y_{is}\sim^{\text{i.i.d.}}\text{NB}(\text{exp}(N_s\kappa_i^{(0)} +u_s\alpha_i+\mathbf{c}_s^\top\boldsymbol{\beta}_i^{-(0)}), \theta_i^{(0)})$.  Similarly, we estimated the $\kappa_i^{(0)}$ (regression coefficient for the library size with respect to the log of the count of $i$th taxon) and $\theta_i^{(0)}$ (dispersion parameter) from the COMBO data using the R function \texttt{glm.nb}.
	\item[S8.] \textit{Mixed-effect model.}  We considered two scenarios: \textit{Pre-treatment and post-treatment comparison} (S8.1) and \textit{Replicate sampling} (S8.2). Under S8.1, for $n=50$ (or 200), we simulated $25$ (or 100) subjects and each has paired pre-treatment and post-treatment samples. The aim is to detect taxa affected by treatment.  Under S8.2, each subject has multiple measurements. For $n=50$ (or 200), we generated 25 (or 50) subjects with each having 2 (or 4) replicates. 
	Specifically, we let
	$$\log(X_{is})\sim \mathbf{r}_s^\top\boldsymbol{\gamma}_i+N(\beta_i^{(0)}+u_{s}\alpha_i+\mathbf{c}_{s}^\top\boldsymbol{\beta}_i^{-(0)},\sigma_i^2),$$ 
	where $\mathbf{r}_s$ has one element equal to 1 and all the others equal to 0 indicating the subject ID of sample $s$. 
	Each element of $\boldsymbol{\gamma}_i$ follows $N(0,\tau_i^2)$ independently, where we let $\tau_i^2=a_i\sigma_i^2$ with $a_i\sim \text{Unif}([0,1])$.
\end{itemize}

\section*{Declarations}
\subsection*{Ethics approval and consent to participate}
Not applicable.
\subsection*{Consent for publication}
Not applicable.
\subsection*{Availability of data and materials}
LinDA is implemented as the \texttt{linda} function in the CRAN R package \texttt{MicrobiomeStat} (\url{https://CRAN.R-project.org/package=MicrobiomeStat}). The \texttt{LinDA} package is also available at GitHub (\url{https://github.com/zhouhj1994/LinDA}).
The entire codes and data for generating the presented results are available  at the repository Zenodo (\url{https://doi.org/10.5281/zenodo.6326019}) and at GitHub (\url{https://github.com/zhouhj1994/LinDA-manuscript-result}) under MIT License.  The CDI, IBD, RA, SMOKE and COMBO datasets are from \cite{Schubert:2014,Morgan:2012,Scher:2013,Charlson:2010,Wu:2011}, respectively.
\subsection*{Competing interests}
The authors declare no competing interests.
\subsection*{Funding}
This work was supported by National Institute of Health R01GM144351 (Chen \& Zhang),   National Science Foundation DMS-1830392,  DMS2113359,  DMS1811747 (Zhang \& Zhou) and  National Science Foundation DMS2113360 and Mayo Clinic Center for Individualized Medicine (Chen).
\subsection*{Authors' contributions}
J.C. and X.Z. conceived, designed and supervised the work together. X.Z. developed the method. X.Z., K.H and H.Z. performed the theoretical analysis. H.Z. and J.C. performed the evaluation and developed the software. J.C., H.Z.,  X.Z. and K.H wrote and revised the manuscript together.
\subsection*{Acknowledgements}
Not applicable.

\section*{Figures and Tables}
{\bf Fig. \ref{fig-S0C0}} Performance comparison (S0C0: log normal abundance distribution, a binary covariate). Empirical false discovery rate (A) and true positive rates (B) were averaged over 100 simulation runs. Error bars (A) represent the 95\% confidence intervals (CIs) of the method LinDA and the dashed horizontal line indicates the target FDR level of 0.05.

{\noindent\bf Fig. \ref{fig-S3C0}} Performance comparison (S3C0: gamma abundance distribution, a binary covariate). Empirical false discovery rate (A) and true positive rates (B) were averaged over 100 simulation runs. Error bars (A) represent the 95\% CIs of the method LinDA and the dashed horizontal line indicates the target FDR level of 0.05.

{\noindent\bf Fig. \ref{fig-real-curve}} Number of discoveries v.s. target FDR level (0.01--0.25) for the three real datasets.

{\noindent\bf Fig. \ref{fig-real-venn}} Overlaps of differential taxa with target FDR level of 0.1 for the four real datasets.
	
{\noindent\bf Fig. \ref{fig-ills_cond}} Density of $\sqrt{n}\alpha_i+\varepsilon_{is}$. The panels on the left and right correspond to $\sigma_i=1$ and $\sigma_i\sim\text{IG}(2, 1)$ respectively, where IG denotes the inverse-gamma distribution. The red curve is the density of the standard normal distribution. The blue and green curves are the densities of  $\sqrt{n}\alpha_i+\varepsilon_{is}$ with $\mathbb{P}(\sqrt{n}\alpha_i=0)=0.8$ and  $\mathbb{P}(\sqrt{n}\alpha_i=2)=0.2$, and $\mathbb{P}(\sqrt{n}\alpha_i=0)=0.8$ and  $\mathbb{P}(\sqrt{n}\alpha_i=5)=0.2$, respectively.


{\noindent\bf Table \ref{tab-speed}} Runtime (in second) comparison under different settings (R version 4.0.3 (2020-10-10); Platform: x86\_64-pc-linux-gnu (64-bit); CPU: E5-2670 v2 @ 2.50GHz; Memory: 67.7 GB). The result is based on one simulation run.  The``elapsed" from the R command \texttt{system.time()} was used.

{\noindent\bf{Table} \ref{tab-real}} Characteristics of four real microbiome datasets. NORA represents new-onset untreated rheumatoid arthritis. The second and the third columns respectively list the number of taxa and sample size of each filtered dataset (prevalence $\geq$ 10\%, library size $\geq$ 1000).

\section*{Supplementary material}
{\noindent \bf Additional file 1\labeltext{1}{supp1}}: supplementary notes. {\bf Table S1} lists some robust normalization methods \cite{Robinson:2010,Anders:2010,Paulson:2013,Chen:2018}. {\bf Lemmas S1 -- S4} present intermediate results for proving Theorem \ref{the-fdr}. The proof depends on some useful results from \cite{Zhou:2021,Vershynin:2018,Wainwright:2019,Cao:2021}.

{\noindent \bf Additional file 2\labeltext{2}{supp2}}: supplementary figures. 	{\bf Fig. S1} and {\bf S2} compare the proposed method LinDA with different zero-handling approaches under settings S6C0 and S0C0. {\bf Fig. S3}
depicts the results of LinDA, CLR-OLS and MaAsLin2 with different normalization approaches under setting S0C0. 
{\bf Fig. S4}--{\bf S10} and {\bf S12}--{\bf S14} show the results of settings S0C1, S0C2, S1C0, S2C0, S4C0, S5C0, S6C0, S7C0, S8.1C0, and S8.2C0, respectively. The comparison between disabling and enabling zero treatment of the ANCOM-BC method is depicted in {\bf Fig. S11} under setting S6C0. 
{\bf Fig. S15} shows the results of setting S0C0 with stronger compositional effects. 
{\bf Fig. S16}--{\bf S19} show the effect size plots and volcano plots for the four datasets (CDI, IBD, RA, and SMOKE) respectively.
{\bf Fig. S20}--{\bf S30} present the full result of all methods under different simulation settings.

\newpage
\begin{figure}
	\begin{subfigure}[b]{1\textwidth}
		\centering
		\includegraphics[scale=0.5]{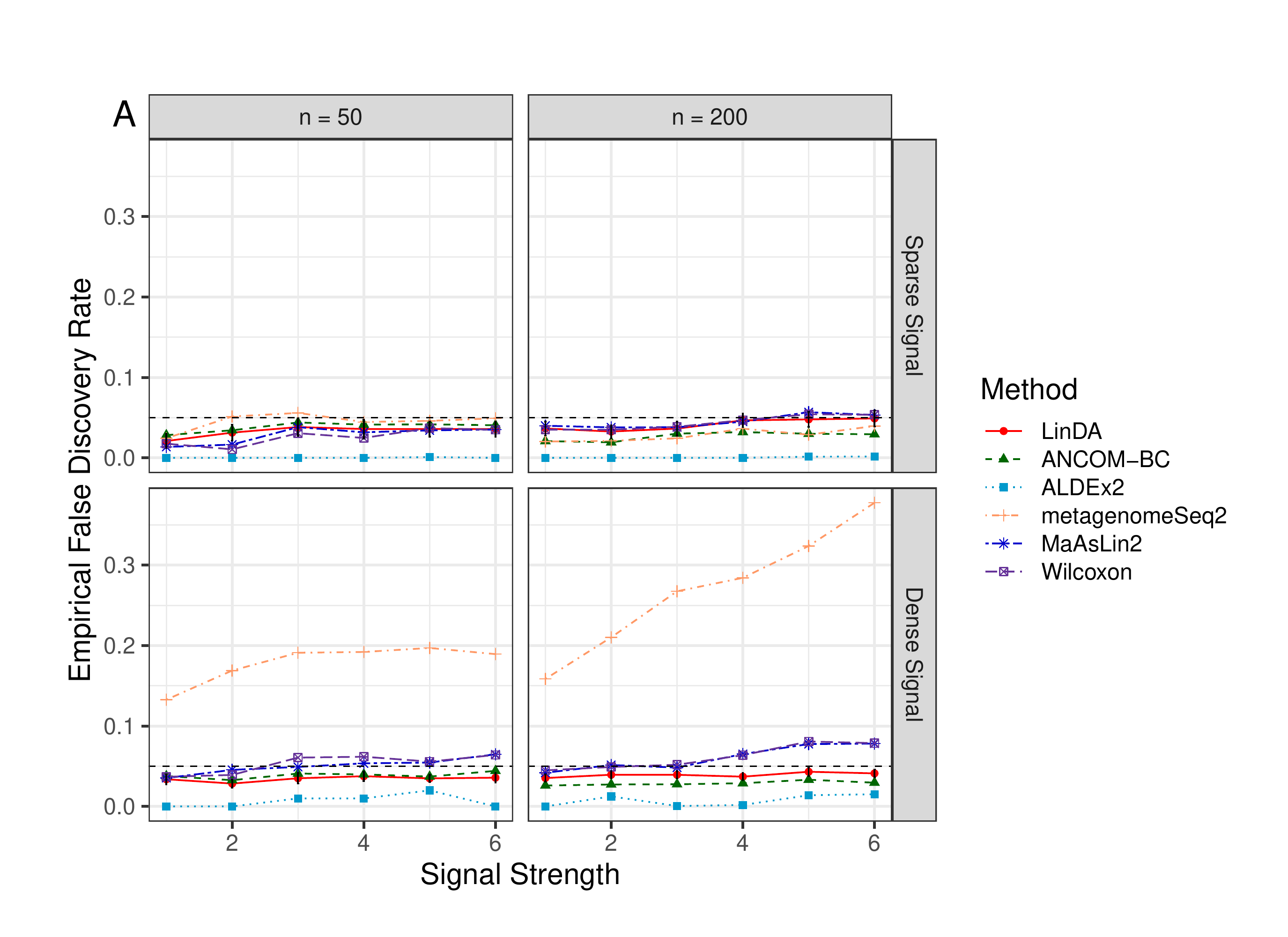}
	\end{subfigure}
	\begin{subfigure}[b]{1\textwidth}
		\centering
		\includegraphics[scale=0.5]{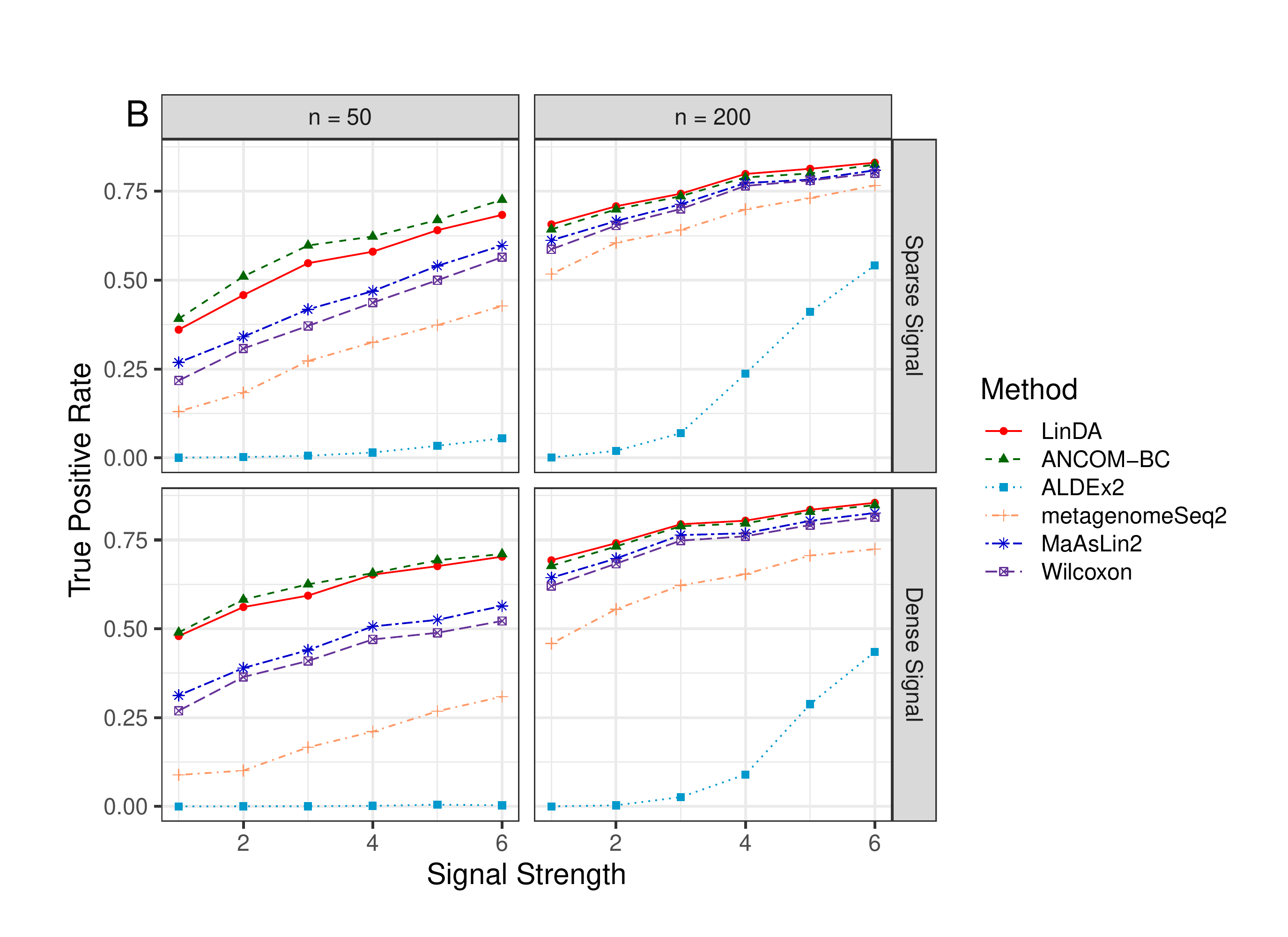}
	\end{subfigure}
	\caption{}
	\label{fig-S0C0}
\end{figure}
\begin{figure}
	\begin{subfigure}[b]{1\textwidth}
		\centering
		\includegraphics[scale=0.5]{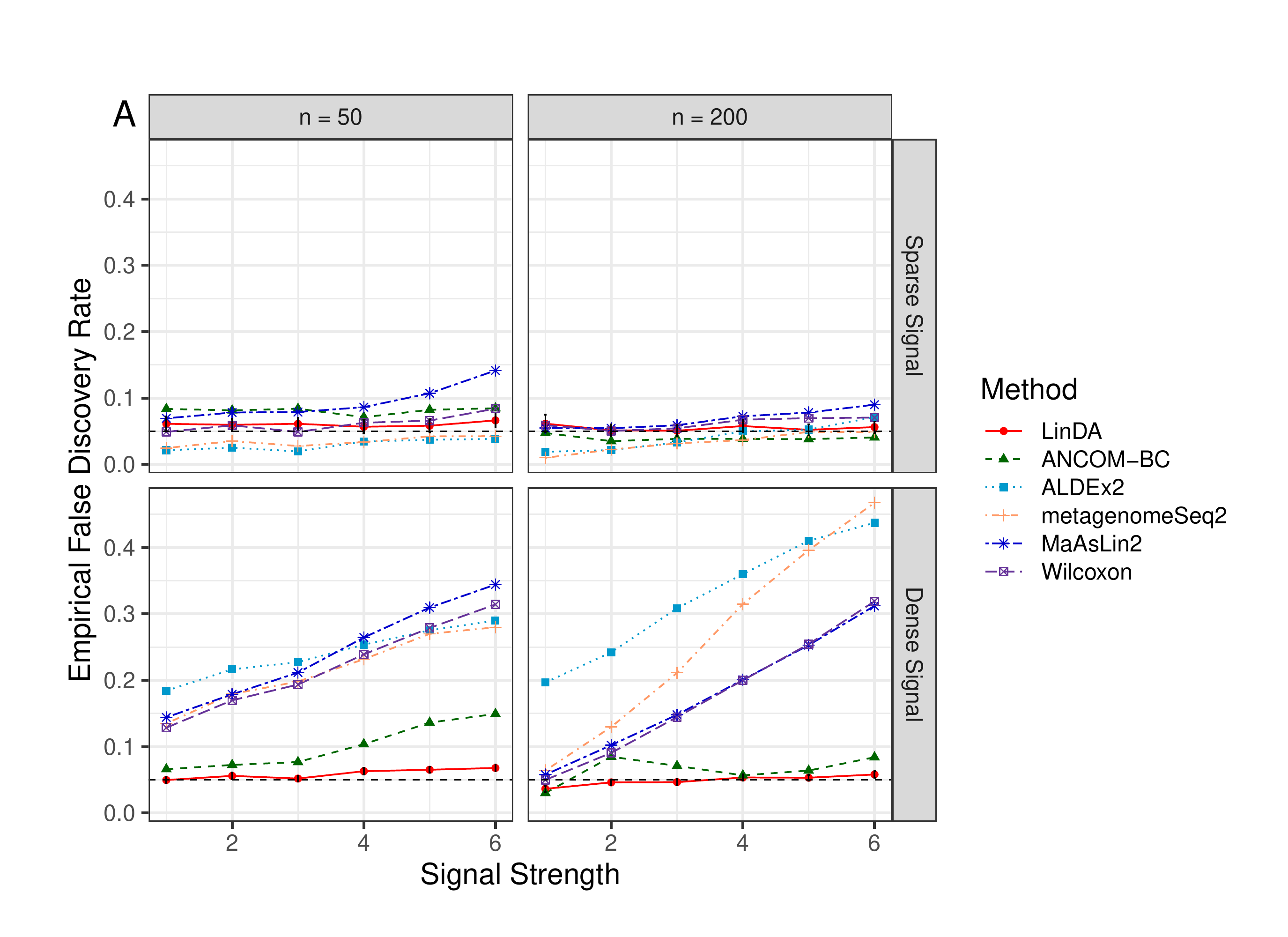}
	\end{subfigure}
	\begin{subfigure}[b]{1\textwidth}
		\centering
		\includegraphics[scale=0.5]{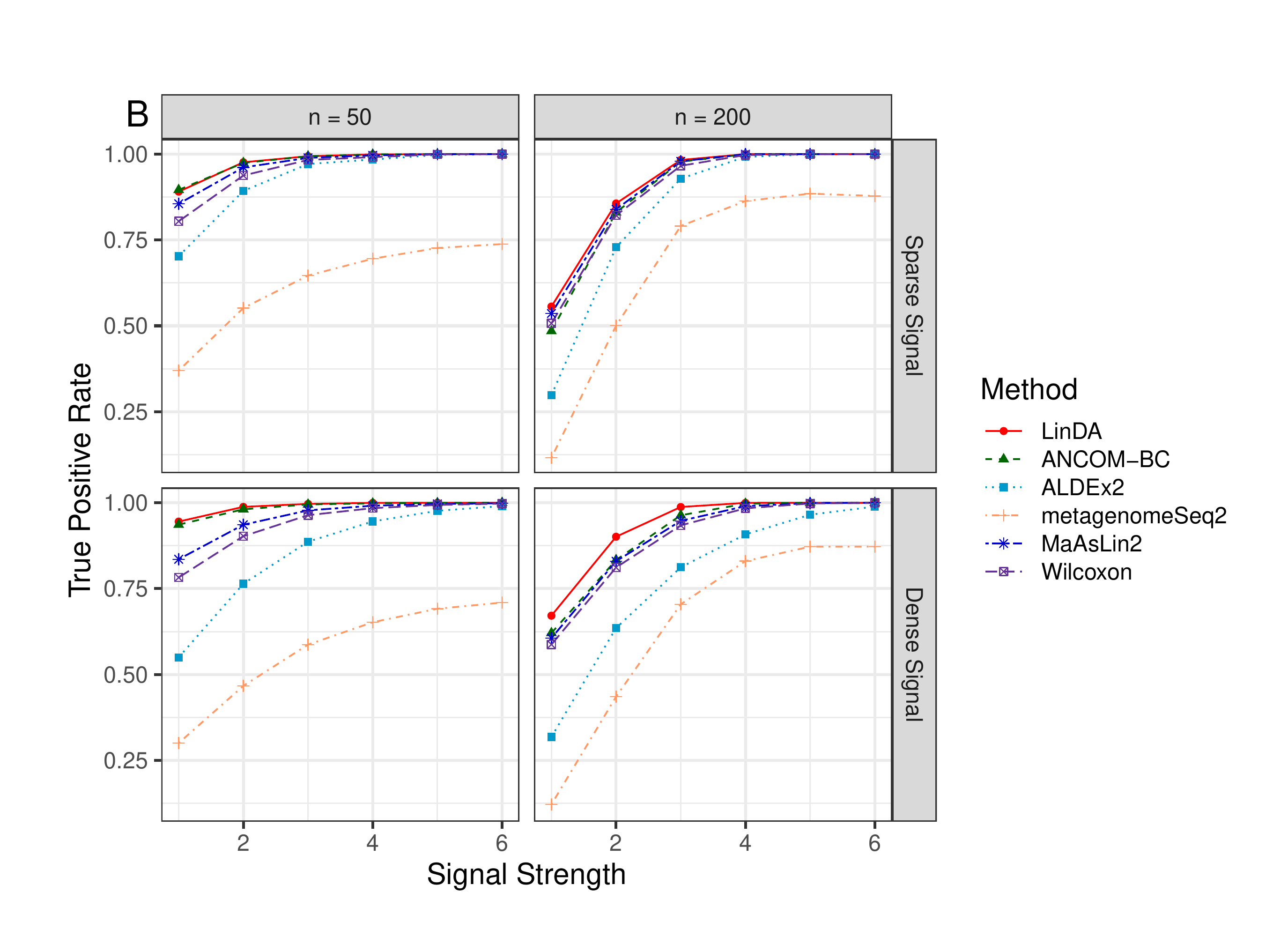}
	\end{subfigure}
	\caption{}
	\label{fig-S3C0}
\end{figure}

\begin{figure}
	\centering
	\includegraphics[scale=0.3]{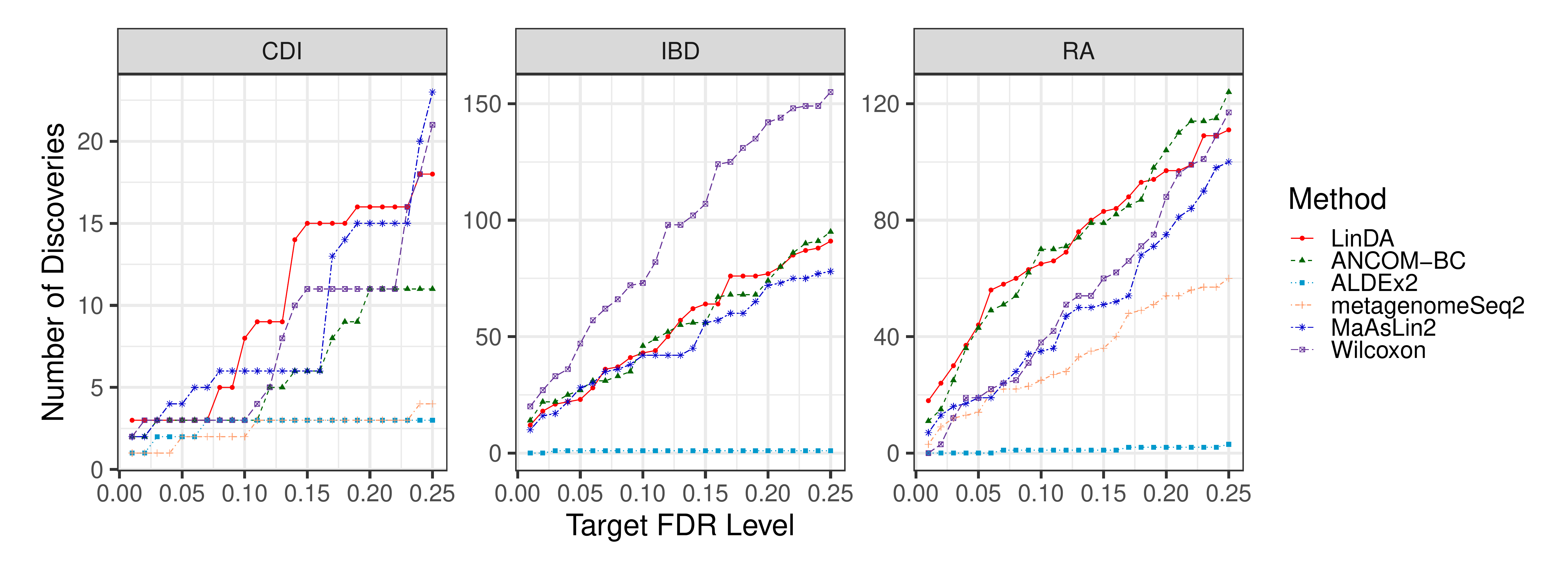}
	\caption{}
	\label{fig-real-curve}
\end{figure}

\begin{figure}
	\begin{subfigure}[b]{0.45\textwidth}
		\centering
		\includegraphics[scale=0.25]{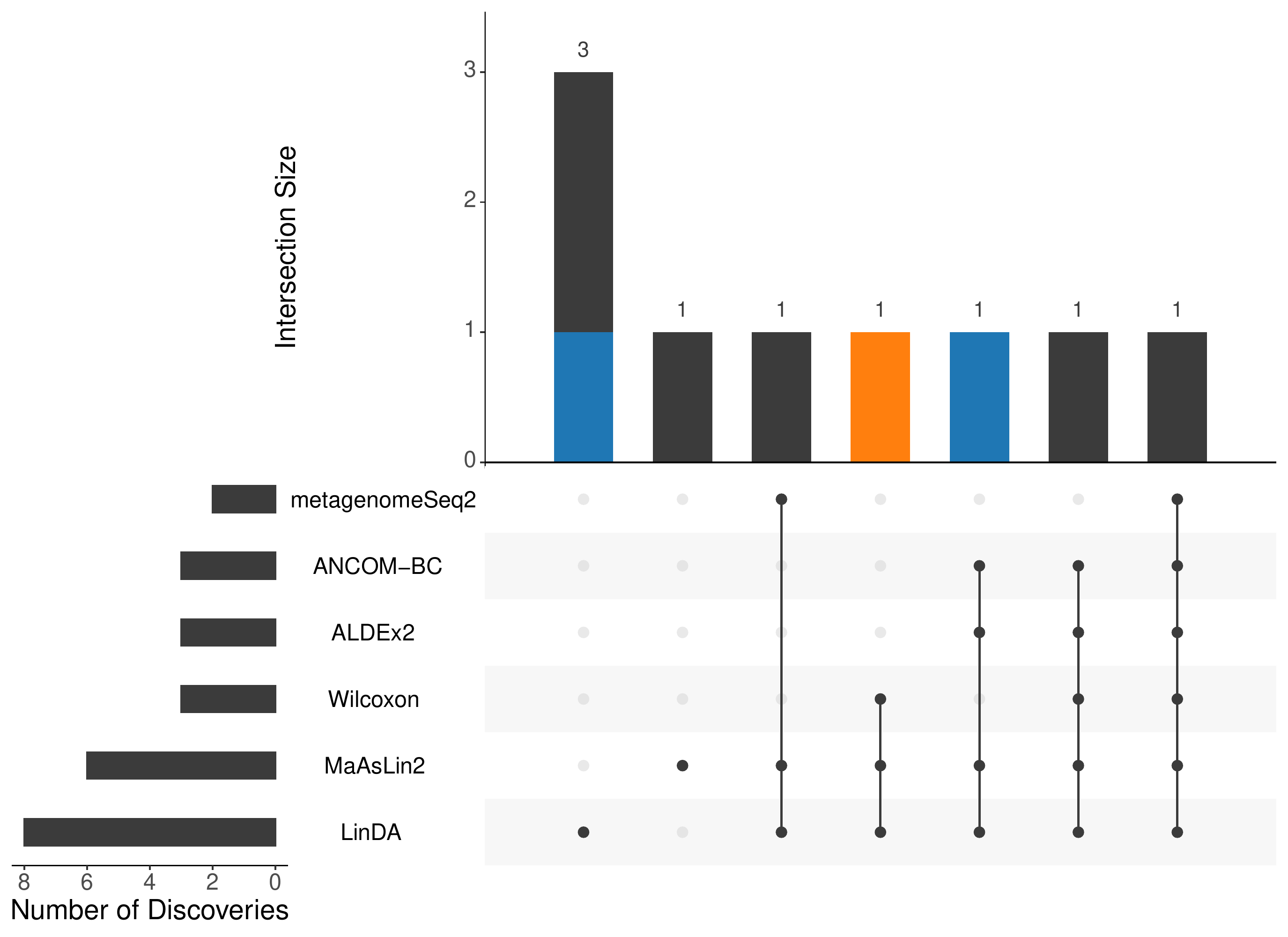}
		\subcaption*{CDI}
	\end{subfigure}
	\hfill
	\begin{subfigure}[b]{0.45\textwidth}
		\centering
		\includegraphics[scale=0.25]{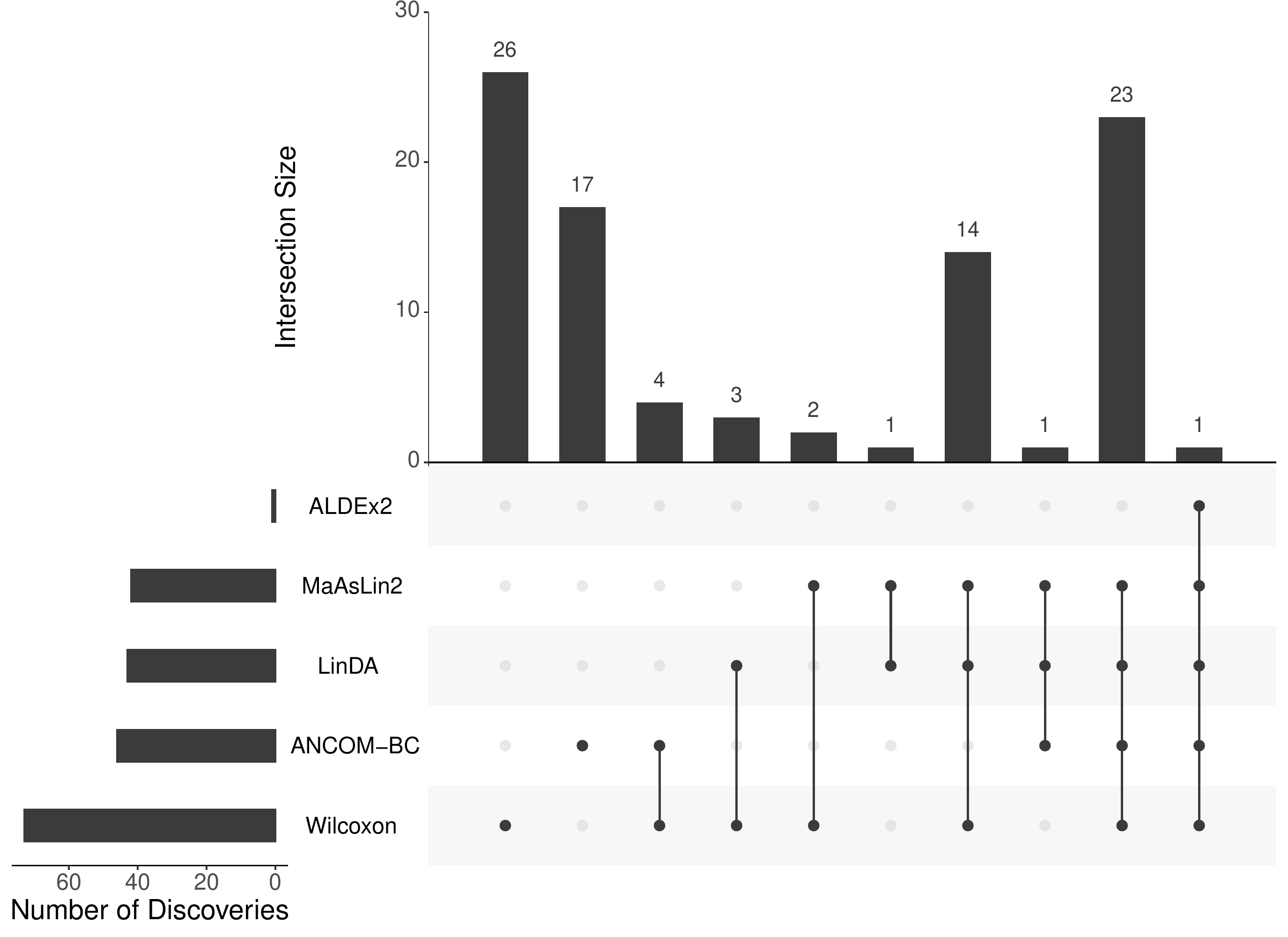}
		\subcaption*{IBD}
	\end{subfigure}
	\hfill
	\begin{subfigure}[b]{0.45\textwidth}
		\centering
		\includegraphics[scale=0.25]{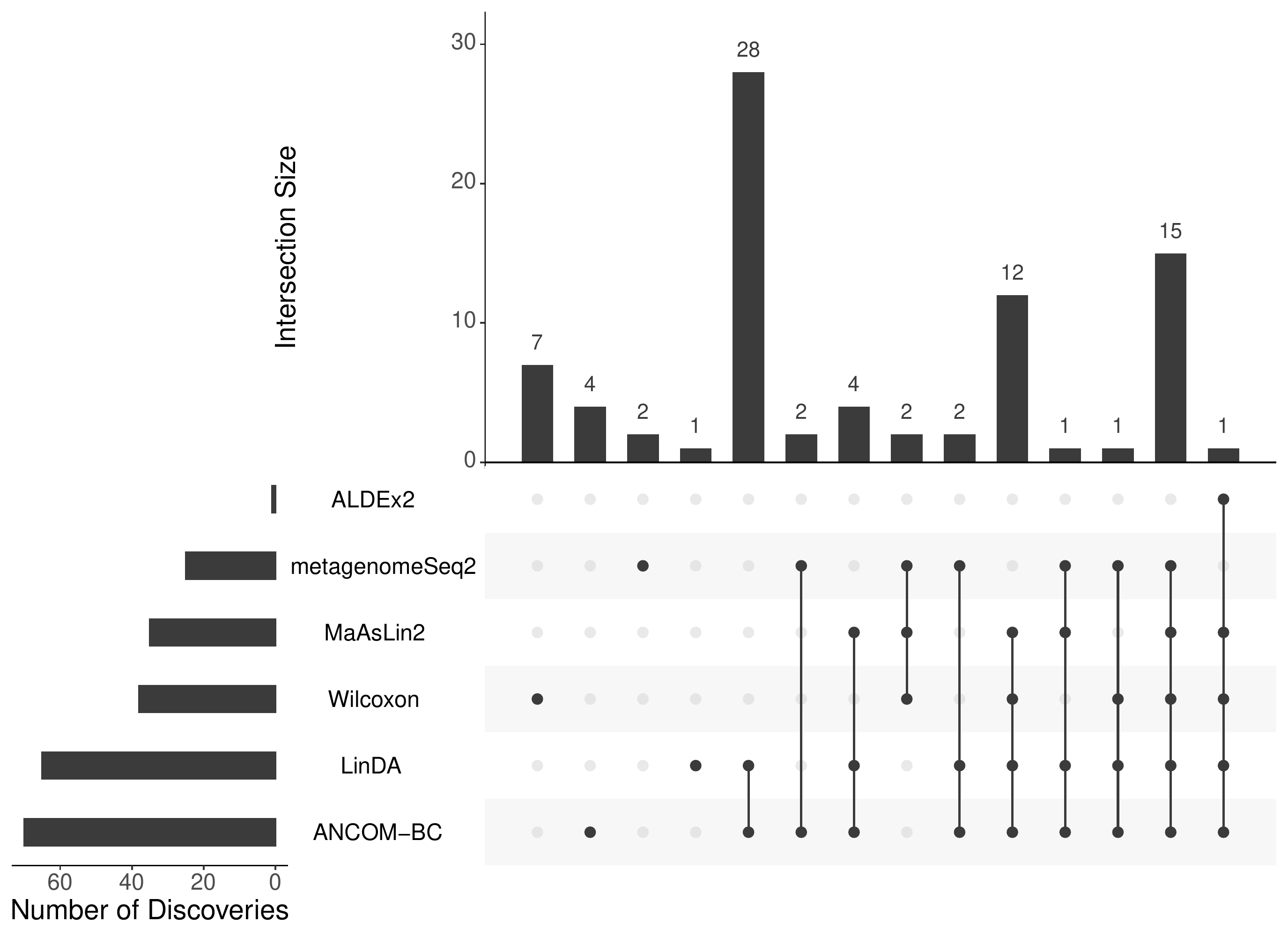}
		\subcaption*{RA}
	\end{subfigure}
	\hfill
	\begin{subfigure}[b]{0.45\textwidth}
		\centering
		\includegraphics[scale=0.25]{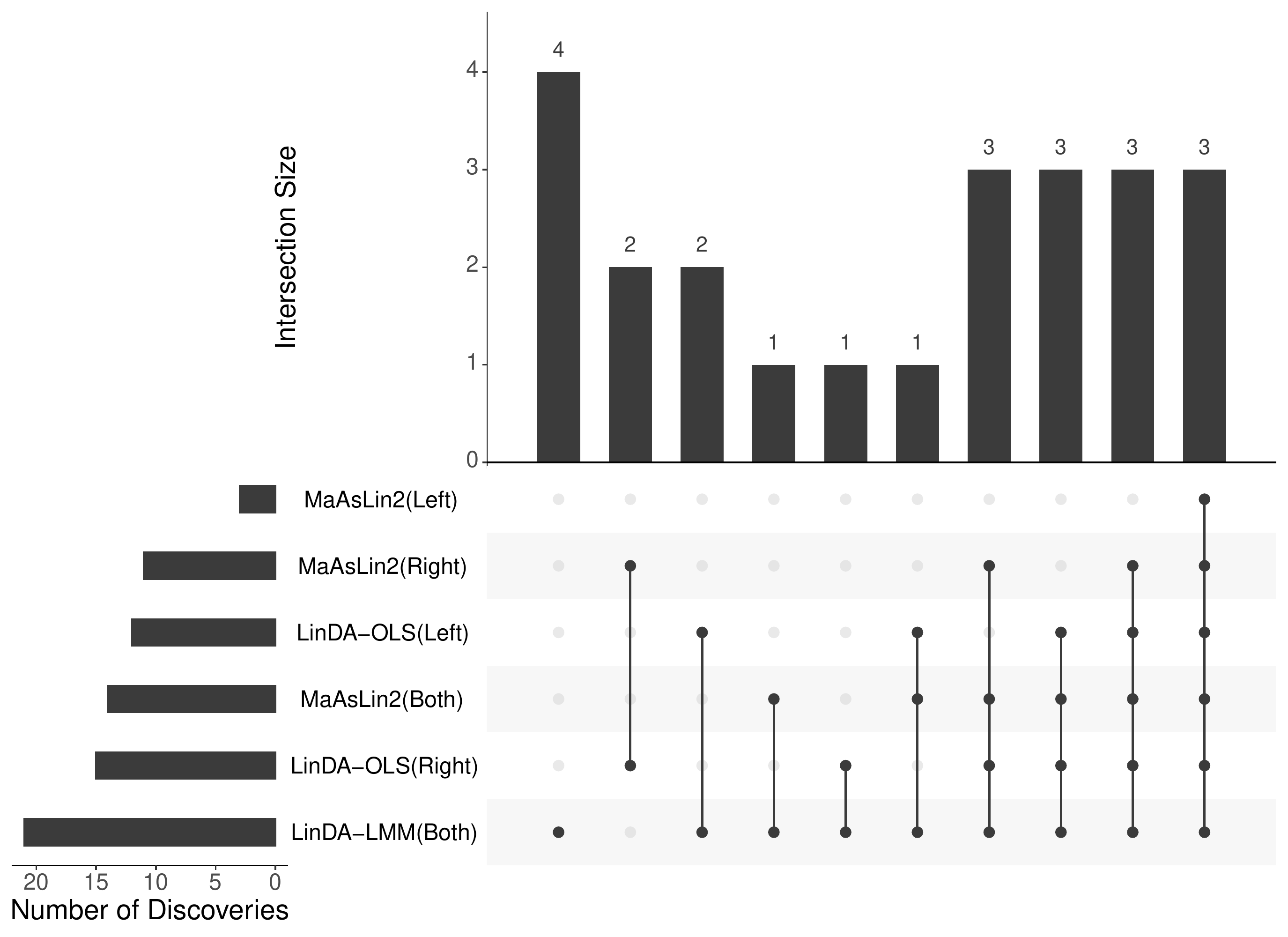}
		\subcaption*{SMOKE}
	\end{subfigure}
	\hfill
	\caption{}
	\label{fig-real-venn}
\end{figure}

\begin{figure}
	\centering
	\includegraphics[scale=0.55]{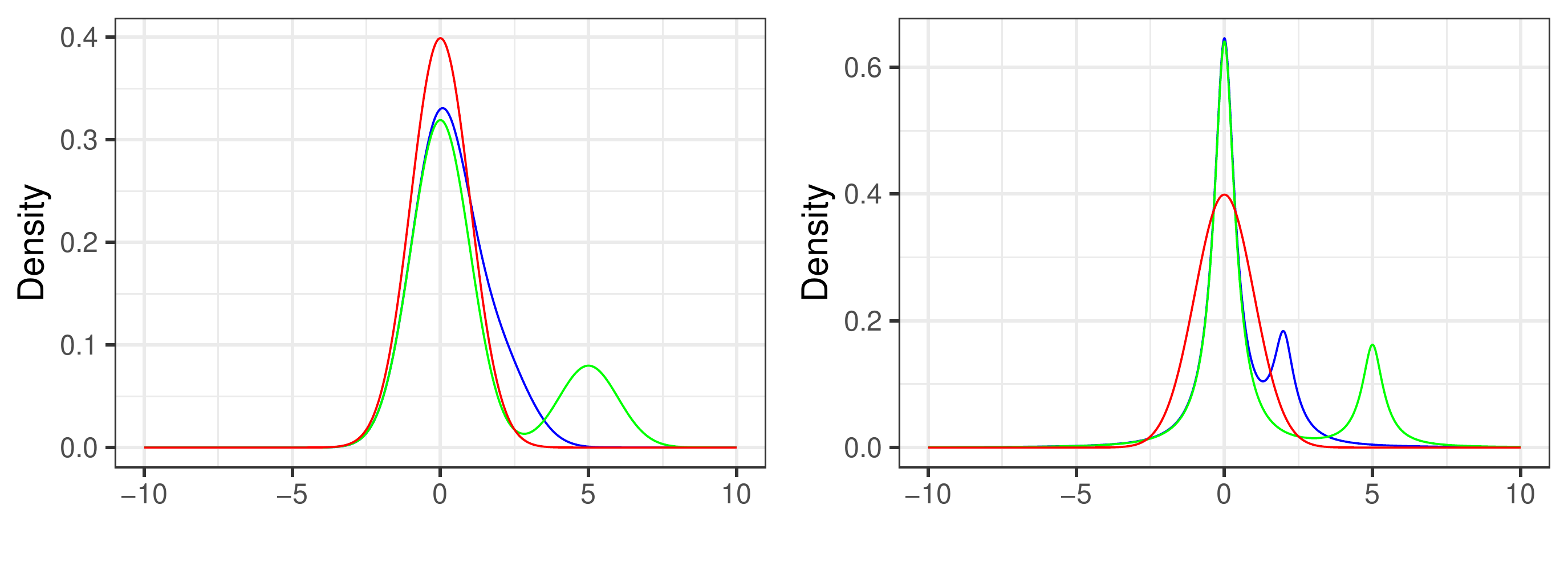}
	\caption{}
	\label{fig-ills_cond}
\end{figure}

\begin{table}\centering 
	\caption{} 
	\scriptsize
	\begin{tabular}{llllllll} 
		\hline 
		\hline 
		&&\multicolumn{2}{c}{S0C0}&\multicolumn{2}{c}{S0C1}&\multicolumn{2}{c}{S0C2}\\
		\hline
		&&LinDA &ANCOM-BC&LinDA &ANCOM-BC&LinDA &ANCOM-BC\\
		\hline
		$m=500$&$n=200$ & $\bf{0.454}$ & $21.835$ & $\bf{0.218}$ & $22.057$ & $\bf{0.206}$ & $64.519$ \\ 
		&$n=10000$ & $\bf{6.844}$ & $162.218$ & $\bf{4.043}$ & $163.552$ & $\bf{5.073}$ & $216.564$ \\ 
		$m=5000$&$n=200$ & $\bf{1.598}$ & $184.972$ & $\bf{1.607}$ & $162.611$ & $\bf{1.615}$ & $599.985$ \\ 
		&$n=10000$ & $\bf{28.253}$ & $5,135.393$ & $\bf{15.314}$ & $5,157.148$ & $\bf{15.494}$ & $5,506.353$ \\ 
		\hline 
	\end{tabular} 
	\label{tab-speed} 
\end{table}

\begin{table}\centering 
	\caption{} 
	\scriptsize
	\begin{tabular}{lcccc} 
		\hline 
		\hline 
		&$m $&$n$&$u$&$\mathbf{c}$\\
		\hline
		CDI &123&183 &CDI/Diarrhea control (94 v.s. 89) & \\
		IBD&579 &81 &Crohn's disease/Healthy (62 v.s. 19)&Antibiotic use (n/y, 48 + 19 v.s. 14 + 0)\\
		RA&438 &72 &NORA/Healthy (44 v.s. 28) &\\
		SMOKE&209&132&Smoke (n/y, 67 v.s. 65)& Female/Male (31 + 16 v.s.  36 + 49)\\
		\hline 
	\end{tabular} 
	\label{tab-real} 
\end{table}

\setcounter{section}{0}
\renewcommand{\thesection}{S\arabic{section}}
\setcounter{subsection}{0}
\renewcommand{\thesubsection}{S\arabic{subsection}}
\setcounter{equation}{0}
\renewcommand{\theequation}{S\arabic{equation}}
\setcounter{figure}{0}
\renewcommand{\thefigure}{S\arabic{figure}}
\setcounter{table}{0}
\renewcommand{\thetable}{S\arabic{table}}
\setcounter{lemma}{0}
\renewcommand{\thelemma}{S\arabic{lemma}}

	\newpage
	~
		\newpage
		~
		\newpage
	\spacingset{1.5} 
	{\huge Supplementary notes for ``LinDA: linear models for differential abundance analysis of microbiome compositional data"}
	\section{Normalization approaches}\label{sec-norm-tab}
	\begin{table}[H]\centering
		\caption{Some robust normalization methods \label{table-norms}}
		\begin{tabular}{p{0.3\textwidth}p{0.65\textwidth}}
			\hline
			Method&Description\\
			\hline
			Trimmed mean of M-values (TMM) [11, Robinson \& Oshlack, 2010]& TMM (in log scale) is the weighted mean of the log-ratio between the relative abundances and a referenced relative abundance after excluding the most abundant taxa and the taxa with the largest log-fold changes. \\
			\hline
			DESeq normalization (RLE) [12, Anders \& Huber, 2010]& In RLE, the normalizing factor is the median of the ratios between the counts and the geometric mean of the counts of all samples.\\
			\hline
			Cumulative-sum scaling (CSS) [13, Paulson et al., 2013]& In CSS, counts are divided by the cumulative sum of counts, up to a quantile determined by a data-driven approach. \\
			\hline
			Geometric mean of pairwise ratios (GMPR) [14, Chen et al., 2018]& GMPR is the geometric mean of the medians of the ratios between the pairs of counts of two samples, which reverses the order of the two steps in the RLE.\\
			\hline
		\end{tabular}
	\end{table}
	Note that the first number in the square brackets represents the reference number in the manuscript.
	\section{Technical details}\label{sec-proof}
	In the following, we use $F_{X}(\cdot)$ to denote the cumulative distribution function of a random variable $X$. Denote by $o_{\mathbb{P}_m}$ ($O_{\mathbb{P}_m}$), $o_{\mathbb{P}_n}$ ($O_{\mathbb{P}_n}$), and $o_{\mathbb{P}}$ ($O_{\mathbb{P}}$) the corresponding rates of convergence as $m\to\infty$, $n\to\infty$, and $m,n\to\infty$ simultaneously, respectively. We first introduce some useful lemmas before proving Theorem 1. 

	\begin{lemma}\label{lem-rho}
		Under Condition (i) in Theorem 1, we have 
		\begin{align*}
			|\hat{\rho}-\rho|=o_{\mathbb{P}_n}(1).
		\end{align*}		
	\end{lemma}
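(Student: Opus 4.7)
The plan is to reduce the lemma to the combination of a standard concentration (or law of large numbers) argument for the sample second-moment matrix, and a continuity argument for matrix inversion on the cone of positive definite matrices. Let $\widehat{\Sigma}_n := n^{-1}\sum_{s=1}^n \mathbf{z}_s\mathbf{z}_s^\top$ and $\Sigma := \mathbb{E}(\mathbf{z}_s\mathbf{z}_s^\top)$. Since $\hat{\rho}$ and $\rho$ are the $(1,1)$-entries of $\widehat{\Sigma}_n^{-1}$ and $\Sigma^{-1}$, respectively, and the $(1,1)$-entry map is $1$-Lipschitz in operator norm, it is enough to show $\|\widehat{\Sigma}_n^{-1}-\Sigma^{-1}\|_{\text{op}} = o_{\mathbb{P}_n}(1)$. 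Crucially, the dimension $d+2$ of $\mathbf{z}_s$ is fixed while $n\to\infty$, so entrywise convergence is equivalent to operator-norm convergence.

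First I would establish $\|\widehat{\Sigma}_n - \Sigma\|_{\text{op}} = o_{\mathbb{P}_n}(1)$. By Condition (i), each coordinate of $\mathbf{z}_s$ is sub-Gaussian, so each product $z_{si}z_{sj}$ is sub-exponential (a standard fact from Vershynin, cited in the paper as [51]). Applying Bernstein's inequality entrywise to the i.i.d.\ sequence $\{z_{si}z_{sj}\}_{s=1}^n$ yields $|(\widehat{\Sigma}_n)_{ij}-\Sigma_{ij}| = o_{\mathbb{P}_n}(1)$ for each fixed pair $(i,j)$, and a union bound over the finitely many entries upgrades this to operator-norm convergence. (In fact, since only a qualitative rate is required, the ordinary weak law would suffice; the sub-Gaussian hypothesis is likely imposed for use elsewhere in the proof of Theorem~1.)

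Second, I would invert. By Condition (i), $\sigma_{\min}(\Sigma) > C > 0$, so $\Sigma^{-1}$ exists with $\|\Sigma^{-1}\|_{\text{op}} \le 1/C$. By Weyl's inequality, on the event $\mathcal{A}_n := \{\|\widehat{\Sigma}_n - \Sigma\|_{\text{op}} < C/2\}$, which has probability tending to $1$, we have $\sigma_{\min}(\widehat{\Sigma}_n) > C/2$ and hence $\|\widehat{\Sigma}_n^{-1}\|_{\text{op}} < 2/C$. On $\mathcal{A}_n$, the resolvent identity
$$\widehat{\Sigma}_n^{-1} - \Sigma^{-1} = -\widehat{\Sigma}_n^{-1}(\widehat{\Sigma}_n - \Sigma)\Sigma^{-1}$$
gives $\|\widehat{\Sigma}_n^{-1} - \Sigma^{-1}\|_{\text{op}} \le (2/C^2)\|\widehat{\Sigma}_n - \Sigma\|_{\text{op}} = o_{\mathbb{P}_n}(1)$. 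Extracting the $(1,1)$-entry finishes the proof.

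There is no substantive obstacle here: the lemma is essentially a direct application of the law of large numbers together with the continuous mapping theorem for matrix inversion on the open set of positive definite matrices. The only care needed is to justify invertibility of $\widehat{\Sigma}_n$ with probability tending to $1$, which is handled by Weyl's inequality given the spectral lower bound in Condition (i).
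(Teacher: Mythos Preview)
Your proof is correct. Both you and the paper reduce the lemma to (a) entrywise convergence of $\widehat{\Sigma}_n$ to $\Sigma$ via the law of large numbers, followed by (b) a continuity argument for the $(1,1)$-entry of the inverse; the difference lies only in how step (b) is carried out. The paper uses Cramer's rule, writing $\hat{\rho}=\det(\hat{\mathbf{B}})/\det(\hat{\mathbf{A}})$ and $\rho=\det(\mathbf{B})/\det(\mathbf{A})$ where $\hat{\mathbf{B}},\mathbf{B}$ are the principal submatrices with the first row and column deleted, and then applies Slutsky's theorem to the ratio of determinants (each determinant being a polynomial, hence continuous, function of the entries). You instead bound $\|\widehat{\Sigma}_n^{-1}-\Sigma^{-1}\|_{\text{op}}$ directly via the resolvent identity after securing $\sigma_{\min}(\widehat{\Sigma}_n)>C/2$ by Weyl's inequality. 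Your route is slightly more general-purpose (it yields convergence of the entire inverse in operator norm, not just one entry) and avoids any appeal to determinants; the paper's route is a touch more elementary in that it needs only Slutsky and polynomials, with no spectral perturbation tools. Either way the argument is routine, and your observation that the sub-Gaussian assumption is stronger than needed here is also correct.
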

	\begin{lemma}\label{lem-sigma}
		Under Conditions (i), (ii), (iii), (v), and (ix) in Theorem 1, we have 
		\begin{align*}
			\max_i|\hat{\sigma}_i^2-\sigma_i^2|=o_{\mathbb{P}}(1).
		\end{align*}
	\end{lemma}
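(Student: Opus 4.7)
The plan is to introduce the intermediate quantity $\bar{\sigma}_i^2 = \mathrm{Var}(\bar{\varepsilon}_{is})$ and use the triangle inequality
\[
\max_i|\hat{\sigma}_i^2-\sigma_i^2|\le \max_i|\hat{\sigma}_i^2-\bar{\sigma}_i^2|+\max_i|\bar{\sigma}_i^2-\sigma_i^2|.
\]
The second term will be handled deterministically from the algebra of the CLR transformation; the first term needs a concentration argument based on the Gaussianity of the residual sum of squares under OLS.

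First I would dispose of the deterministic piece using the closed-form expression $\bar{\sigma}_i^2 = m^{-1}\{(m-2)\sigma_i^2 + m^{-1}\sum_j \sigma_j^2\}$ already displayed in the paper. Subtracting $\sigma_i^2$ gives $|\bar{\sigma}_i^2 - \sigma_i^2| \le 2\sigma_i^2/m + m^{-2}\sum_j \sigma_j^2$, which by the uniform boundedness $\sigma_i\in(C_1,C_2)$ in Condition (ii) is $O(1/m)$ uniformly in $i$, hence $o(1)$. This bound also shows $\bar{\sigma}_i^2$ is itself uniformly bounded away from $0$ and from $\infty$ for large $m$, which will be useful when converting a multiplicative bound into an additive one.

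Next I would analyze $\hat{\sigma}_i^2-\bar{\sigma}_i^2$ by exploiting Gaussianity. Under Conditions (iii) and (v), conditional on $\{\sigma_j\}_{j=1}^m$ and the design $\mathbf{Z}_n=(\mathbf{z}_1,\ldots,\mathbf{z}_n)^\top$, the vector $\bar{\boldsymbol{\varepsilon}}_i=(\bar{\varepsilon}_{i1},\ldots,\bar{\varepsilon}_{in})^\top$ is distributed as $N(0,\bar{\sigma}_i^2\mathbf{I}_n)$: although the CLR centering couples coordinates across $i$ at a fixed sample $s$, for a fixed $i$ the entries over $s$ remain independent because each $\varepsilon_{js}$ is independent across $s$. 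On the event that $\mathbf{Z}_n^\top\mathbf{Z}_n$ is invertible, which occurs with probability tending to $1$ by Condition (i) and Lemma S1, standard OLS residual theory gives
\[
\frac{(n-d-2)\hat{\sigma}_i^2}{\bar{\sigma}_i^2}\ \Big|\ \{\sigma_j\},\mathbf{Z}_n\ \sim\ \chi^2_{n-d-2}.
\]
A chi-square tail inequality of the form $\mathbb{P}(|\chi^2_k/k-1|>t)\le 2\exp(-ckt^2)$ for $t\in(0,1)$ combined with a union bound over $i=1,\ldots,m$ yields
\[
\mathbb{P}\Bigl(\max_i|\hat{\sigma}_i^2/\bar{\sigma}_i^2-1|>t\Bigr)\le 2m\exp\bigl(-c(n-d-2)t^2\bigr).
\]
Condition (ix), $m=o(e^{Cn})$, implies $\log m/n\to 0$, so one may choose $t_n\to 0$ with $nt_n^2/\log m\to\infty$ to make the right-hand side vanish; together with the uniform upper bound on $\bar{\sigma}_i^2$ this gives $\max_i|\hat{\sigma}_i^2-\bar{\sigma}_i^2|=o_{\mathbb{P}}(1)$, which combined with the deterministic bound finishes the proof.

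The main obstacle is the bookkeeping needed to justify the per-$i$ chi-square distribution: the CLR centering creates nonzero cross-taxon correlation at each fixed sample, and one must carefully argue that this does not destroy independence across $s$ within a single taxon. Once that independence is in hand, the rest is a standard sub-exponential concentration plus union bound, with Condition (ix) providing exactly the room to absorb the $\log m$ factor and Lemma S1 handling the invertibility of $\mathbf{Z}_n^\top\mathbf{Z}_n$ with probability tending to one.
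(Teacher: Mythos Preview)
Your proof is correct and follows the same high-level decomposition as the paper: triangle inequality via $\bar{\sigma}_i^2$, deterministic control of $|\bar{\sigma}_i^2-\sigma_i^2|$ from the closed form, and exponential concentration plus union bound under Condition (ix) for $\max_i|\hat{\sigma}_i^2-\bar{\sigma}_i^2|$. The difference is in the concentration step. The paper decomposes $\hat{\sigma}_i^2$ algebraically as $(n-d-2)^{-1}\sum_s\bar{\varepsilon}_{is}^2$ minus the quadratic form $(n-d-2)^{-1}(\sum_s\mathbf{z}_s\bar{\varepsilon}_{is})^\top(\sum_s\mathbf{z}_s\mathbf{z}_s^\top)^{-1}(\sum_s\mathbf{z}_s\bar{\varepsilon}_{is})$ and bounds each piece separately via sub-Gaussian/sub-exponential Chernoff bounds together with an operator-norm perturbation argument for $n^{-1}\sum_s\mathbf{z}_s\mathbf{z}_s^\top$, explicitly remarking that only sub-Gaussianity of $\varepsilon_{is}/\sigma_i$ is used. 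You instead exploit the full Gaussianity of Condition (iii) to get the exact conditional law $(n-d-2)\hat{\sigma}_i^2/\bar{\sigma}_i^2\sim\chi^2_{n-d-2}$ given $\{\sigma_j\}$ and $\mathbf{Z}_n$, then apply a chi-square tail bound directly. Your route is shorter and sidesteps the two-term split and the matrix-concentration bookkeeping; the paper's route buys robustness to non-Gaussian (sub-Gaussian) errors, which is not required by the theorem as stated but is noted there as a mild generalization.
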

	\begin{lemma}\label{lem-mode}
		Under Conditions (i)--(viii) in Theorem 1, we have 
		\begin{align*}
			\sqrt{n}(\tilde{\alpha}-\bar{\alpha})=o_{\mathbb{P}}(1).
		\end{align*}
	\end{lemma}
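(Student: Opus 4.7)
The plan is to work conditionally on the design $\{\mathbf{z}_s\}_{s=1}^n$ and on $\{\sigma_i\}_{i=1}^m$, and to exploit the decomposition $\sqrt{n}\,\tilde{\alpha}_i = \sqrt{n}\,\bar{\alpha}_i + R_i$, where $R_i$ is the rescaled OLS error. Since $\tilde{\alpha}_i = \bar{\alpha}_i + e_1^\top(\sum_s \mathbf{z}_s\mathbf{z}_s^\top)^{-1}\sum_s \mathbf{z}_s(\varepsilon_{is}-\bar{\varepsilon}_s)$ with $\bar{\varepsilon}_s=m^{-1}\sum_{i=1}^m\varepsilon_{is}=O_{\mathbb{P}}(m^{-1/2})$, Lemma S1 plus Condition (iii) give that, up to a uniformly negligible term, $R_i$ is distributionally equivalent to $\sqrt{\rho}\,\sigma_i\xi_i$ with $\xi_i\sim N(0,1)$ (mutually independent of $\alpha_i$ by Condition (v)). Consequently, for any fixed $x$ the target of the kernel estimator
\begin{equation*}
\hat f(x)=\frac{1}{mh}\sum_{i=1}^m K\!\left(\frac{x-\sqrt{n}\,\tilde{\alpha}_i}{h}\right)
\end{equation*}
is the density $g_n(x):=f_n(x+\sqrt{n}\,\bar{\alpha};\rho)$, whose unique mode by Condition (vi) lies at $x^\star:=-\sqrt{n}\,\bar{\alpha}$. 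Since $\sqrt{n}\,\tilde{\alpha}=-\widehat{\text{mode}}$, proving $\widehat{\text{mode}}\to x^\star$ in probability is exactly the claim.

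The proof then proceeds in three steps. First, control the gap between $\hat f(x)$ and its design‑conditional mean $\bar f_n(x):=\mathbb{E}[\hat f(x)\mid\mathbf{z},\alpha]$. Using Parzen's Fourier representation (Condition (vii)), one has $\hat f(x)-\bar f_n(x)=\frac{1}{2\pi}\int e^{\imath u x}k(uh)\{\widehat{\phi}_m(u)-\bar\phi_n(u)\}\,du$, where $\widehat{\phi}_m,\bar\phi_n$ are the empirical and expected characteristic functions of $\sqrt{n}\,\tilde{\alpha}_i$. Pointwise concentration of $\widehat\phi_m$ at rate $O_{\mathbb{P}}(m^{-1/2})$, combined with absolute integrability of $k$ and Condition (viii) ($1/(mh^2)=o(1)$), yields uniform convergence $\sup_x|\hat f(x)-\bar f_n(x)|=o_{\mathbb{P}}(1)$. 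Second, show $\sup_x|\bar f_n(x)-g_n(x)|=o_{\mathbb{P}}(1)$ by a standard bias expansion together with the asymptotic distributional equivalence established above; here Lemmas S1 and S2 are used to replace $(\hat\rho,\hat\sigma_i^2)$ by $(\rho,\sigma_i^2)$, and Condition (ix) together with a union-bound argument over $i$ ensures the $o_{\mathbb{P}}(m^{-1/2})$ remainder in the kernel slot remains negligible.

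Third, convert uniform convergence of $\hat f$ to $g_n$ into convergence of the argmax. The separation clause in Condition (vi) says that for every $\epsilon>0$ there exists $\delta>0$ such that $\inf_{|x|>\epsilon}\{f_n(0;\rho)-f_n(x;\rho)\}\geq\delta$ for all large $n$; shifting by $\sqrt{n}\,\bar{\alpha}$, the same gap holds for $g_n$ at $x^\star$. On the event $\sup_x|\hat f(x)-g_n(x)|<\delta/2$, any maximizer $\hat x$ of $\hat f$ must satisfy $|\hat x-x^\star|\leq\epsilon$, giving $\hat x-x^\star=o_{\mathbb{P}}(1)$ and hence the lemma.

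The main obstacle is the combined step of (a) handling the non‑i.i.d.\ structure of the $\tilde{\alpha}_i$'s (they share $\mathbf{z}$ and are coupled through $\bar{\varepsilon}_s$), and (b) doing so while the density $g_n$ itself drifts with $n$ because of the $\sqrt{n}\,\alpha_i$ scaling, so standard kernel density consistency theorems do not apply off the shelf. The Fourier‑based uniform bound avoids differentiability assumptions on $g_n$ and lets the rate depend only on $m$ and $h$, which is precisely why Conditions (vii)–(ix) are imposed; this is where most of the technical care will be needed.
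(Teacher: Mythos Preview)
Your proposal is essentially the same argument as the paper's: decompose $\sqrt{n}\tilde\alpha_i$ into $\sqrt{n}\alpha_i$ plus a Gaussian OLS error of variance $\hat\rho\sigma_i^2$ plus a common $o_{\mathbb P}(1)$ shift, establish uniform consistency of the kernel density estimator via Parzen's Fourier representation (Conditions~(vii)--(viii)), and convert this into consistency of the argmax using the separation clause in Condition~(vi). The paper organizes the first step slightly differently: it uses the shift-equivariance $\widehat{\text{mode}}(\{X_i+a\})=\widehat{\text{mode}}(\{X_i\})+a$ to peel off both $-\sqrt{n}\bar\alpha$ and the common term $U$ at the outset, so that the KDE is applied to genuinely i.i.d.\ variables $\sqrt{n}\alpha_i+U_i$ conditional on $\hat\rho$, with target $f_n(\cdot;\hat\rho)$; this avoids your random target $g_n(x)=f_n(x+\sqrt{n}\bar\alpha;\rho)$ and the extra LLN needed to pass from the $\alpha$-conditional mean to $g_n$.

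Two small corrections to your Step~2: neither Lemma~S2 nor Condition~(ix) plays any role here---$\hat\sigma_i^2$ does not enter the mode estimator at all, and the lemma only assumes Conditions~(i)--(viii). The ``$o_{\mathbb P}(m^{-1/2})$ remainder'' you worry about is the common shift $U$, which is the \emph{same} for every $i$; no union bound over $i$ is needed, and the cleanest way to dispose of it is the shift-equivariance trick above. Only Lemma~S1 (for $\hat\rho\to\rho$) is used, together with the observation that $f_n(\cdot;a)$ is uniformly equicontinuous in $x$ and $a>C$.
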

	\begin{lemma}\label{lem-empirical}
		Suppose Conditions (i)--(ix) in Theorem 1 are satisfied.
		Let $m_0$ be the number of true null hypotheses and 
		\begin{align*}
			&V_{m,n}(t)=\sum_{i:\alpha_i=0}\mathbb{I}\left(|\sqrt{n}\hat{\alpha}_i|/\sqrt{\hat{\rho}\hat{\sigma}_i^2}>t\right),\\
			&S_{m,n}(t)=\sum_{i=1}^m\mathbb{I}\left(|\sqrt{n}\hat{\alpha}_i|/\sqrt{\hat{\rho}\hat{\sigma}_i^2}>t\right),\\
			&S_{\infty,n}(t)=\mathbb{P}\left(\Big|\mathcal{E}+\sqrt{n}\alpha_i/\sqrt{\rho\sigma_i^2}\Big|>t\right).
		\end{align*}
		Then for any $0<t_0<\infty$, 
		\begin{equation*}
			\sup_{0<t<t_0}\left|m^{-1}S_{m,n}(t)-S_{\infty,n}(t)\right|=o_{\mathbb{P}}(1) \, \mbox{ and } \, \sup_{0<t<t_0}\left|m_0^{-1}V_{m,n}(t)-2F_{n-d-2}(-t)\right|=o_{\mathbb{P}}(1).
		\end{equation*}
		
	\end{lemma}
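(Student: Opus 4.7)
The plan is to reduce each studentized statistic $T_i^*:=\sqrt{n}\hat{\alpha}_i/\sqrt{\hat{\rho}\hat{\sigma}_i^2}$ to an oracle form whose marginal distribution is tractable, then establish empirical distribution convergence via concentration and a discretization in $t$. Decompose $\sqrt{n}\hat{\alpha}_i=\sqrt{n}\alpha_i+\sqrt{n}(\tilde{\alpha}-\bar{\alpha})+\sqrt{n}(\tilde{\alpha}_i-\bar{\alpha}_i)$. Under Condition (iii), classical OLS theory gives that $U_i:=\sqrt{n}(\tilde{\alpha}_i-\bar{\alpha}_i)/\sqrt{\hat{\rho}\hat{\sigma}_i^2}$ follows $t_{n-d-2}$ exactly conditional on $\mathbf{z}$. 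Combining Lemma S1 ($\hat{\rho}\to\rho$), Lemma S2 ($\max_i|\hat{\sigma}_i^2-\sigma_i^2|=o_{\mathbb{P}}(1)$), Lemma S3 ($\sqrt{n}(\tilde{\alpha}-\bar{\alpha})=o_{\mathbb{P}}(1)$), and the elementary identity $\bar{\sigma}_i^2=\sigma_i^2+O(1/m)$ uniformly in $i$ (read off the covariance formula in the paper), one arrives at
\[
T_i^*=U_i+\sqrt{n}\alpha_i/\sqrt{\rho\sigma_i^2}+r_i,\qquad \max_i|r_i|=o_{\mathbb{P}}(1).
\]
On null indices this reduces to $T_i^*=U_i+r_i$ with $U_i\sim t_{n-d-2}$.

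Next I would establish the two convergences in three substeps. (i) \emph{Pointwise convergence in $t$}. The covariance formula $\text{Cov}_{\mathbf{z}}(\tilde{\alpha}_i,\tilde{\alpha}_j)=O(1/m)$ given in the manuscript implies, via a standard Gaussian covariance bound (Mehler-type), that $|\text{Cov}(\mathbb{I}(|U_i|>t),\mathbb{I}(|U_j|>t))|=O(1/m)$ after factoring the shared $\chi^2$ scaling. Summing yields variance $O(m_0^{-1})$ for $m_0^{-1}\sum_{i:\alpha_i=0}\mathbb{I}(|U_i|>t)$ around its mean $2F_{n-d-2}(-t)$; Chebyshev then delivers pointwise convergence. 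The same computation handles $m^{-1}S_{m,n}(t)$ with limit $S_{\infty,n}(t)$, once the deterministic shift $\sqrt{n}\alpha_i/\sqrt{\rho\sigma_i^2}$ is absorbed into the per-index tail probability and the $t_{n-d-2}$ noise is replaced by its Gaussian limit. (ii) \emph{Absorption of $r_i$}. On the event $\{\|r\|_{\infty}\le\eta\}$ we sandwich $\mathbb{I}(|U_i+r_i|>t)$ between $\mathbb{I}(|U_i|>t\mp\eta)$, and continuity of $F_{n-d-2}$ lets us push the resulting error to zero as $\eta\to 0$. (iii) \emph{Uniformity in $t$}. Both the empirical counts and the target functions are monotone in $t$, so a deterministic grid $0<t_1<\cdots<t_K=t_0$ with step size $o(1)$ reduces the supremum to a union bound over $K$ pointwise statements, giving the claimed uniform $o_{\mathbb{P}}(1)$ rates.

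The most delicate point, I expect, is the uniform-in-$i$ validity of the oracle approximation on non-null indices when $\sqrt{n}\alpha_i$ is allowed to be large, because the multiplicative remainder $\sqrt{n}\alpha_i\bigl(1/\sqrt{\hat{\rho}\hat{\sigma}_i^2}-1/\sqrt{\rho\sigma_i^2}\bigr)$ is not automatically $o_{\mathbb{P}}(1)$ uniformly. I would handle this by truncating to $|\alpha_i|\le M$ and arguing that indices with $|\alpha_i|>M$ are rejected with probability tending to one under both the true and the oracle statistic (so they contribute identically to $S_{m,n}(t)$ and $mS_{\infty,n}(t)$), while within the truncation band Lemmas S1 and S2 supply the required uniform $o_{\mathbb{P}}(1)$ bound. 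Condition (ix) provides the room to upgrade in-probability rates to uniform-in-$i$ maxima, and Condition (iv) supplies the i.i.d.\ LLN structure needed to control the truncation tail. The discretization and continuity arguments in the second paragraph then finish the proof.
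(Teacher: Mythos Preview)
Your high-level plan (oracle reduction, sandwiching the remainder, monotonicity grid) is sound and parallels the paper, but the step you flag as ``most delicate'' does not close as written. On the set $\{|\alpha_i|\le M\}$ with $M$ fixed, the multiplicative remainder is of order $\sqrt{n}\,M\cdot\max_i\bigl|1/\sqrt{\hat\rho\hat\sigma_i^2}-1/\sqrt{\rho\sigma_i^2}\bigr|$; Lemmas~S1 and~S2 only supply $o_{\mathbb P}(1)$ for the second factor, so the product is not controlled. Even with the exponential tail inside the proof of Lemma~S2 one gets at best $\sqrt{n}\,M\cdot O_{\mathbb P}(\sqrt{\log m/n})=M\sqrt{\log m}\to\infty$. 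A rescue would require an $n$-dependent truncation $M_n\to 0$ with $\sqrt{n}\,M_n\to\infty$, a quantitative rate extracted from the \emph{proof} (not the statement) of Lemma~S2, and a separate argument that the slice $\{0<|\alpha_i|\le M_n\}$ is asymptotically negligible on both sides. The paper avoids all of this by an algebraic device: it divides the event $\{\sqrt{n}\hat\alpha_i/\sqrt{\hat\rho\hat\sigma_i^2}<-t\}$ through by $\sqrt{\hat\rho\sigma_i^2}$ (true $\sigma_i$), so that the ratio $\hat\sigma_i/\sigma_i$ lands on the \emph{threshold} side and multiplies the bounded $t$ rather than the possibly unbounded $\sqrt{n}\alpha_i$. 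The residual $\hat\rho$ versus $\rho$ in front of $\alpha_i$ is then handled at the CDF level, using that $\Phi(x-\sqrt{\rho/\hat\rho}\,u)-\Phi(x-u)$ is uniformly small in $u$ once $|\hat\rho-\rho|$ is small: for large $|u|$ both arguments are far in the same tail, and for bounded $|u|$ the perturbation is additive and small.

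A secondary concern: your Chebyshev/Mehler route for the empirical-to-mean step is shakier than it looks. Your $U_i=\sqrt{n}(\tilde\alpha_i-\bar\alpha_i)/\sqrt{\hat\rho\hat\sigma_i^2}$ has dependence across $i$ running through both the numerator (shared $\bar\varepsilon_s$) and the denominator $\hat\sigma_i^2$ (which also involves $\bar\varepsilon_s$); a Mehler bound for correlated Gaussians does not apply directly to such ratios, and ``factoring the shared $\chi^2$ scaling'' is not available because there is no single shared $\chi^2$. The paper instead writes $\sqrt{n}(\tilde\alpha_i-\bar\alpha_i)=U_i-U$ with $U_i=\hat{\boldsymbol\eta}^\top n^{-1/2}\sum_s\mathbf z_s\varepsilon_{is}$ and a common $U=o_{\mathbb P}(1)$, so that conditionally on $\hat\rho$ the variables $U_i/\sqrt{\hat\rho\sigma_i^2}+\sqrt{n}\alpha_i/\sqrt{\hat\rho\sigma_i^2}$ are genuinely i.i.d.; Glivenko--Cantelli then gives the uniform-in-$t$ convergence in one stroke, without the grid.
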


	\begin{proof}[Proof of Lemma \ref{lem-rho}]
		From Condition (i), we know that each element of $\mathbb{E}(\mathbf{z}_s\mathbf{z}_s^\top)$ is finite and $\text{det}\{\mathbb{E}(\mathbf{z}_s\mathbf{z}_s^\top)\}> C$. We have $\hat{\rho}=\text{det}(\hat{\mathbf{B}})/\text{det}(\hat{\mathbf{A}})$ and $\rho=\text{det}(\mathbf{B})/\text{det}(\mathbf{A})$, where $\mathbf{A}=\mathbb{E}(\mathbf{z}_s\mathbf{z}_s^\top)$, $\hat{\mathbf{A}}=n^{-1}\sum_{s=1}^n\mathbf{z}_s\mathbf{z}_s^\top$, and $\mathbf{B}$ and $\hat{\mathbf{B}}$ are the principal submatrices obtained by deleting the first row and first column of $\mathbf{A}$ and $\hat{\mathbf{A}}$ respectively. Thus we have that $|\text{det}(\hat{\mathbf{B}})-\text{det}(\mathbf{B})|=o_{\mathbb{P}}(1)$ and $|\text{det}(\hat{\mathbf{A}})-\text{det}(\mathbf{A})|=o_{\mathbb{P}}(1)$ using the law of large numbers. The Slutsky's theorem thus implies that $|\hat{\rho}-\rho|=o_{\mathbb{P}_n}(1)$.
	\end{proof}

	\begin{proof}[Proof of Lemma \ref{lem-sigma}]
		Throughout the proof, we shall assume that $\varepsilon_{is}/\sigma_i$ is $C$-sub-Gaussian, which is indeed slightly weaker than Condition (iii). For any $\lambda>0$, we have
		\begin{align*}
			&\mathbb{E}[e^{\lambda\varepsilon_{is}}\mid\sigma_i]=\mathbb{E}\left[e^{\lambda\sigma_i(\varepsilon_{is}/\sigma_i)}\mid\sigma_i\right]\le e^{\lambda^2\sigma_i^2C^2/2},\\
			&\mathbb{E}[e^{\lambda\bar{\varepsilon}_{is}}\mid\{\sigma_i\}]
			=\mathbb{E}\left[e^{\lambda\left\{(m-1)m^{-1}\varepsilon_{is}-m^{-1}\sum_{j\neq i}\varepsilon_{js}\right\}}\mid\{\sigma_i\}\right]\le e^{\lambda^2(\max_i\sigma_i^2)C^2/2}.
		\end{align*}
		Thus $\bar{\varepsilon}_{is}$ conditional on $\{\sigma_i\}$ is sub-Gaussian by Condition (ii). Let $\bar{\boldsymbol{\theta}}_i=(\bar{\alpha}_i,\bar{\boldsymbol{\beta}}_i^\top)^\top$ and $\tilde{\boldsymbol{\theta}}_i=(\tilde{\alpha}_i,\tilde{\boldsymbol{\beta}}_i^\top)^\top$. 
		Note that
		\begin{align*}
			\tilde{\boldsymbol{\theta}}_i&=\bar{\boldsymbol{\theta}}_i+\left(\sum_{s=1}^n\mathbf{z}_s\mathbf{z}_s^\top\right)^{-1}\left(\sum_{s=1}^n\mathbf{z}_s\bar{\varepsilon}_{is}\right),\\
			\hat{\sigma}_{i}^2&=\frac{1}{n-d-2}\sum_{s=1}^{n}\left(W_{is}-\mathbf{z}_s^\top\tilde{\boldsymbol{\theta}}_i\right)^2=\frac{1}{n-d-2}\sum_{s=1}^{n}\left(W_{is}-\mathbf{z}_s^\top\bar{\boldsymbol{\theta}}_i+\mathbf{z}_s^\top\bar{\boldsymbol{\theta}}_i-\mathbf{z}_s^\top\tilde{\boldsymbol{\theta}}_i\right)^2\\
			&=\frac{1}{n-d-2}\sum_{s=1}^n\bar{\varepsilon}_{is}^2+\frac{2}{n-d-2}(\bar{\boldsymbol{\theta}}_i-\tilde{\boldsymbol{\theta}}_i)^\top\sum_{s=1}^n\mathbf{z}_s\bar{\varepsilon}_{is}\\
			&\qquad \qquad +\frac{1}{n-d-2}(\bar{\boldsymbol{\theta}}_i-\tilde{\boldsymbol{\theta}}_i)^\top\left(\sum_{s=1}^n\mathbf{z}_s\mathbf{z}_s^\top\right)(\bar{\boldsymbol{\theta}}_i-\tilde{\boldsymbol{\theta}}_i)\\
			&=\frac{1}{n-d-2}\sum_{s=1}^n\bar{\varepsilon}_{is}^2-\frac{1}{n-d-2}\left(\sum_{s=1}^n\mathbf{z}_s\bar{\varepsilon}_{is}\right)^\top\left(\sum_{s=1}^n\mathbf{z}_s\mathbf{z}_s^\top\right)^{-1}\left(\sum_{s=1}^n\mathbf{z}_s\bar{\varepsilon}_{is}\right),
		\end{align*}
		and for any $\delta>0$,
		\begin{align*}
			\mathbb{P}\left(\left|\hat{\sigma}_i^2-\bar{\sigma}_{i}^2\right|>\delta\right)
			&\le \mathbb{P}\left(\left|\frac{1}{n-d-2}\sum_{s=1}^n\bar{\varepsilon}_{is}^2-\bar{\sigma}_i^2\right|>\frac{\delta}{2}\right)\\
			&\qquad +\mathbb{P}\left\{\left(\sum_{s=1}^n\mathbf{z}_s\bar{\varepsilon}_{is}\right)^\top\left(\sum_{s=1}^n\mathbf{z}_s\mathbf{z}_s^\top\right)^{-1}\left(\sum_{s=1}^n\mathbf{z}_s\bar{\varepsilon}_{is}\right)>\frac{(n-d-2)\delta}{2}\right\}.
		\end{align*}
		For the first term, we have
		\begin{align*}
			\mathbb{P}\left(\left|\frac{1}{n-d-2}\sum_{s=1}^n\bar\varepsilon_{is}^2-\bar\sigma_i^2\right|>\frac{\delta}{2}\right) 
			&\le\mathbb{P}\left\{\left|\frac{1}{n}\sum_{s=1}^n\bar\varepsilon_{is}^2-\bar\sigma_i^2\right|>\frac{(n-d-2)\delta}{4n}\right\}\\
			& \qquad  +\mathbb{P}\left(\frac{d+2}{n-d-2}\bar\sigma_i^2>\frac{\delta}{4}\right).
		\end{align*}
		For the second term, it can be shown that
		\begin{align*}
			&\mathbb{P}\left\{\left(\sum_{s=1}^n\mathbf{z}_s\bar{\varepsilon}_{is}\right)^\top\left(\sum_{s=1}^n\mathbf{z}_s\mathbf{z}_s^\top\right)^{-1}\left(\sum_{s=1}^n\mathbf{z}_s\bar{\varepsilon}_{is}\right)>\frac{(n-d-2)\delta}{2}\right\}\\
			&\qquad \le \mathbb{P}\Bigg\{\left(\frac{1}{n}\sum_{s=1}^n\mathbf{z}_s\bar{\varepsilon}_{is}\right)^\top\left(\frac{1}{n}\sum_{s=1}^n\mathbf{z}_s\mathbf{z}_s^\top\right)^{-1}\left(\frac{1}{n}\sum_{s=1}^n\mathbf{z}_s\bar{\varepsilon}_{is}\right)>\frac{(n-d-2)\delta}{2n},\\
			&\qquad\qquad\qquad\qquad \left\|\frac{1}{n}\sum_{s=1}^n\mathbf{z}_s\mathbf{z}_s^\top-\mathbb{E}(\mathbf{z}_s\mathbf{z}_s^\top)\right\|\le \delta_1\Bigg\}+\mathbb{P}\left\{\left\|\frac{1}{n}\sum_{s=1}^n\mathbf{z}_s\mathbf{z}_s^\top-\mathbb{E}(\mathbf{z}_s\mathbf{z}_s^\top)\right\|>\delta_1\right\}\\
			&\qquad \le \mathbb{P}\Bigg\{\left\|\frac{1}{n}\sum_{s=1}^n\mathbf{z}_s\bar{\varepsilon}_{is}\right\|>\sqrt{\frac{C(n-d-2)\delta}{n}}\Bigg\}+\mathbb{P}\left\{\left\|\frac{1}{n}\sum_{s=1}^n\mathbf{z}_s\mathbf{z}_s^\top-\mathbb{E}(\mathbf{z}_s\mathbf{z}_s^\top)\right\|>\delta_1\right\},
		\end{align*}
		with $\delta_1>0$ being a small enough constant. 
		In the above, the last inequality is due to the condition $\sigma_{\text{min}}\{\mathbb{E}(\mathbf{z}_s\mathbf{z}_s^\top)\}>C$ and Lemma S8 of [50, Zhou et al., 2021]. We conclude that $|\hat{\sigma}_i^2-\bar{\sigma}_{i}^2|$ has an exponential tail of the order $O(e^{-C_1n})$ by using the Chernoff bound and the fact that the product of two sub-Gaussian variables is sub-exponential [51, Vershynin, 2018]. Thus by the union bound and Condition (ix), we have $\max_i|\hat{\sigma}_i^2-\bar{\sigma}_i^2|=o_{\mathbb{P}}(1)$. Observing that
		\begin{align*}
			|\bar{\sigma}_i^2-\sigma_i^2|=&\left|\frac{1}{m}\left\{(m-2)\sigma_{i}^2+m^{-1}\sum_{i=1}^m\sigma_{i}^2\right\}-\sigma_i^2\right|=\left|\frac{-2}{m}\sigma_i^2-\frac{1}{m^2}\sum_{i=1}^m\sigma_i^2\right|=o_{\mathbb{P}_m}(1),
		\end{align*}
		we obtain the desired result that  $\max_i|\hat{\sigma}_i^2-\sigma_i^2|=o_{\mathbb{P}}(1)$.
	\end{proof}

	\begin{proof}[Proof of Lemma \ref{lem-mode}]
		We have
		\begin{align*}
			\sqrt{n}\tilde{\alpha}_i=\sqrt{n}\bar{\alpha}_i+\sqrt{n}\hat{\boldsymbol{\eta}}^\top n^{-1}\sum_{s=1}^n\mathbf{z}_s\bar{\varepsilon}_{is}=\sqrt{n}\alpha_i-\sqrt{n}\bar{\alpha}+U_i-U,
		\end{align*}
		where
		\begin{align*}
			U_i=\hat{\boldsymbol{\eta}}^\top\frac{1}{\sqrt{n}}\sum_{s=1}^n\mathbf{z}_s\varepsilon_{is},\quad U=\hat{\boldsymbol{\eta}}^\top\frac{1}{\sqrt{n}}\sum_{s=1}^n\mathbf{z}_s\left(\frac{1}{m}\sum_{i=1}^m\varepsilon_{is}\right),
		\end{align*}
		and $\hat{\boldsymbol{\eta}}$ is the first row of $(n^{-1}\sum_{s=1}^n\mathbf{z}_s\mathbf{z}_s^\top)^{-1}$. We first prove that $U=o_{\mathbb{P}}(1)$. Using similar arguments as in the proof of Lemma \ref{lem-rho}, we have $|\hat{\boldsymbol{\eta}}-\boldsymbol{\eta}|=o_{\mathbb{P}_n}(1)$, where $\boldsymbol{\eta}$ is the first row of $\{\mathbb{E}(\mathbf{z}_s\mathbf{z}_s^\top)\}^{-1}$. Under Conditions (i), (iii), and (v),  $\mathbf{z}_s(\sum_{i=1}^m\varepsilon_{is})/\sqrt{m}$ are conditionally i.i.d. given $\sigma_1,\dots,\sigma_m$. Thus,
		\begin{align*}
			&\mathbb{E}\left\{\mathbf{z}_s\left(\frac{1}{\sqrt{m}}\sum_{i=1}^m\varepsilon_{is}\right)\;\bigg|\;\sigma_1,...,\sigma_m\right\}=0,\\
			&\mathbb{E}\left\{(\mathbf{z}_s \odot \mathbf{z}_s) \left(\frac{1}{\sqrt{m}}\sum_{i=1}^m\varepsilon_{is}\right)^2\;\bigg|\;\sigma_1,\dots,\sigma_m\right\}=\frac{\mathbb{E}(\mathbf{z}_s \odot \mathbf{z}_s)}{m}\sum_{i=1}^m\sigma_i^2,
		\end{align*}
		where $\odot$ denotes the Hadamard product (element-wise product).
		The above implies that 
		\begin{align*}
			\frac{1}{\sqrt{n}}\sum_{s=1}^n\mathbf{z}_s\left(\frac{1}{\sqrt{m}}\sum_{i=1}^m\varepsilon_{is}\right)=O_{\mathbb{P}_n}(1)
		\end{align*}
		whenever $\sum_{i=1}^m\sigma_i^2/m<\infty$. Using Condition (ii), we have $\mathbb{P}(\sum_{i=1}^m\sigma_i^2/m<\infty)=1$. Thus $U=O_{\mathbb{P}}(m^{-1/2})$. Recall that 
		$$\widehat{\text{mode}}(\{X_i\}^{m}_{i=1})=	\argmax_{x\in\mathbb{R}}\frac{1}{mh}\sum_{i=1}^mK\left(\frac{x-X_i}{h}\right).$$
		It is not hard to see that
		$\widehat{\text{mode}}(\{X_i+a\}_{i=1}^m)=\widehat{\text{mode}}(\{X_i\}_{i=1}^m)+a,$
		for any $a$, which may be related to $m$ but is independent of $i$. We then have
		\begin{align*}
			&\widehat{\text{mode}}(\{\sqrt{n}\tilde{\alpha}_i\}_{i=1}^m)=\widehat{\text{mode}}(\{\sqrt{n}\alpha_i-\sqrt{n}\bar{\alpha}+U_i-U\}_{i=1}^m)=\widehat{\text{mode}}(\{\sqrt{n}\alpha_i+U_i\}_{i=1}^m)-\sqrt{n}\bar{\alpha}-U.
		\end{align*}
		Therefore, we only need to show that $\tilde{M}:=\widehat{\text{mode}}(\{\sqrt{n}\alpha_i+U_i\}_{i=1}^m)=o_{\mathbb{P}}(1)$. To this end, let 
		\begin{align*}
			f_{m,h}(x)=\frac{1}{mh}\sum_{i=1}^mK\left(\frac{x-(\sqrt{n}\alpha_i+U_i)}{h}\right).
		\end{align*}
		Given Condition (vi), we have that for large enough $n$,
		\begin{align*}
			|f_n(\tilde{M};\rho)-f_n(0;\rho)|&\le|f_n(\tilde{M};\rho)-f_{m,h}(\tilde{M})|+|f_{m,h}(\tilde{M})-f_n(0;\rho)|\\
			&=|f_n(\tilde{M};\rho)-f_{m,h}(\tilde{M})|+\left|\sup_{x\in\mathbb{R}}f_{m,h}(x)-\sup_{x\in\mathbb{R}}f_n(x;\rho)\right| \\
			&\le 2\sup_{x\in\mathbb{R}}|f_{m,h}(x)-f_n(x;\rho)|,
		\end{align*}
		and then it boils down to show that 
		\begin{align*}
			\sup_{x\in\mathbb{R}}|f_{m,h}(x)-f_n(x;\rho)|=o_{\mathbb{P}}(1).
		\end{align*}
		Note that
		\begin{align*}
			f_n(x;a)=\int\int\frac{1}{\sqrt{a}u}\phi\left(\frac{x-v}{\sqrt{a}u}\right)dF_{\sigma_i}(u)dF_{\sqrt{n}\alpha_i}(v)
		\end{align*}
		for any $a>0$, where $\phi(\cdot)$ denotes the density function of the standard normal distribution. It implies that $f_n(x;a)$ is uniformly continuous and bounded uniformly over $n$ and $a>C$. In other words, for any $\epsilon>0$, there exists a $\delta>0$ such that $\sup_{n,a>C,|x_1-x_2|<\delta}|f_n(x_1;a)-f_n(x_2;a)|<\epsilon$ and $\sup_{n,a>C,x\in\mathbb{R}}f_n(x;a)<\infty$. Besides, $\sup_{n,x\in\mathbb{R}}|f_n(x;\hat\rho)-f_n(x;\rho)|$ can be made arbitrarily small as long as $|\hat{\rho}-\rho|$ is small enough and $\rho>C>0$. Thus we have
		\begin{align*}
			&\mathbb{P}\left\{\sup_{x\in\mathbb{R}}\left|f_{m,h}(x)-f_n(x;\rho)\right|>\delta\right\}\\
			&\qquad \le \mathbb{P}\left\{\sup_{x\in\mathbb{R}}\left|f_{m,h}(x)-f_n(x;\rho)\right|>\delta,|\hat{\rho}-\rho|\le\delta_1\right\}+\mathbb{P}\left(|\hat{\rho}-\rho|>\delta_1\right)\\
			&\qquad \le \mathbb{P}\left\{\sup_{x\in\mathbb{R}}\left|f_{m,h}(x)-f_n(x;\hat{\rho})\right|>\delta/2,|\hat{\rho}-\rho|\le\delta_1\right\}+\mathbb{P}\left(|\hat{\rho}-\rho|>\delta_1\right)\\	
			&\qquad =  \int_{|u-\rho|\le\delta_1}\mathbb{P}\left\{\sup_{x\in\mathbb{R}}|f_{m,h}(x)-f_n(x;\hat{\rho})|>\delta/2\mid \hat{\rho}=u\right\}dF_{\hat{\rho}}(u)+\mathbb{P}\left(|\hat{\rho}-\rho|>\delta_1\right)
		\end{align*}
		for any $\delta>0$ and small enough $\delta_1>0$. Because $|\hat\rho-\rho|=o_{\mathbb{P}_n}(1)$ as shown in Lemma \ref{lem-rho}, our goal narrows down to proving that for any $\delta>0$ and $\epsilon>0$, there exists a $\xi>0$ such that when $m$ is large enough,
		\begin{align*} 
			\sup_{n,\,|\hat{\rho}-\rho|\le\xi}\mathbb{P}\left\{\sup_{x\in\mathbb{R}}|f_{m,h}(x)-f_n(x;\hat{\rho})|>\delta\mid \hat{\rho}\right\}<\epsilon.
		\end{align*}
		To show the above displayed inequality holds for some $\xi>0$, it is sufficient to show 
		\begin{equation}\label{eqn:suffRhoSurro}
			\sup_{\substack{n,\,x\in\mathbb{R},\\ |\hat{\rho}-\rho|\le\xi}}\left|\mathbb{E}\{f_{m,h}(x)\mid\hat{\rho}\}-f_n(x;\hat{\rho})\right|<\epsilon
		\end{equation}
		and
		\begin{equation}\label{eqn:suffRhoCheby}
			\sup_{n, \, |\hat{\rho}-\rho|\le\xi}\mathbb{E}\left[\sup_{x\in\mathbb{R}}\left|f_{m,h}(x)-\mathbb{E}\{f_{m,h}(x)\mid\hat{\rho}\}\right|^2\mid\hat{\rho}\right]<\epsilon
		\end{equation}
		are fulfilled for some small enough $\xi >0$. 
		
		For \eqref{eqn:suffRhoSurro}, using $\int_{-\infty}^{\infty}K(y)dy=1$ with $K(y) \geq 0$, we observe that 
		\allowdisplaybreaks
		\begin{align}\label{eqn:suffRhoSurroPar}
			&\sup_{\substack{n,\,x\in\mathbb{R},\\ |\hat{\rho}-\rho|\le\xi}}\left|\mathbb{E}\{f_{m,h}(x)\mid\hat{\rho}\}-f_n(x;\hat{\rho})\right| \nonumber\\
			&\qquad\qquad =\sup_{\substack{n,\,x\in\mathbb{R},\\ |\hat{\rho}-\rho|\le\xi}}\left|\int_{-\infty}^{\infty}\frac{1}{h}K\left(\frac{x-y}{h}\right)f_n(y;\hat{\rho})dy-f_n(x;\hat{\rho})\right| \nonumber\\
			&\qquad\qquad =\sup_{\substack{n,\,x\in\mathbb{R},\\ |\hat{\rho}-\rho|\le\xi}}\left|\int_{-\infty}^{\infty}\frac{1}{h}K\left(\frac{y}{h}\right)\left\{f_n(x-y;\hat{\rho})-f_n(x;\hat{\rho})\right\}dy\right| \nonumber\\
			&\qquad\qquad =\sup_{\substack{n,\,x\in\mathbb{R},\\ |\hat{\rho}-\rho|\le\xi}}\left|\int_{|y|\le\nu}\frac{1}{h}K\left(\frac{y}{h}\right)\left\{f_n(x-y;\hat{\rho})-f_n(x;\hat{\rho})\right\}dy\right| \nonumber\\
			&\qquad\qquad\qquad\qquad + \sup_{\substack{n,\,x\in\mathbb{R},\\ |\hat{\rho}-\rho|\le\xi}}\left|\int_{|y|>\nu}\frac{1}{h}K\left(\frac{y}{h}\right)\left\{f_n(x-y;\hat{\rho})-f_n(x;\hat{\rho})\right\}dy\right| \nonumber\\
			&\qquad\qquad \le\sup_{\substack{n,\,|\hat{\rho}-\rho|\le\xi,\\ x\in\mathbb{R},\,|y|\le\nu}}|f_n(x-y;\hat{\rho})-f_n(x;\hat{\rho})|\int_{|u|\le\nu/h}K(u)du \nonumber\\
			&\qquad\qquad\qquad\qquad +\sup_{\substack{n,\,x\in\mathbb{R},\\ |\hat{\rho}-\rho|\le\xi}}f_n(x;\hat{\rho})\int_{|u|>\nu/h}K(u)du.
		\end{align}
		Due to the condition that $f_n(x;a)$ is uniformly continuous and upper bounded uniformly over $n$, \eqref{eqn:suffRhoSurroPar} is less than $\epsilon$ for some small enough $\xi>0$ and $\nu>0$ (depending on $\epsilon$). It completes \eqref{eqn:suffRhoSurro} for some $\xi$.
		
		For \eqref{eqn:suffRhoCheby}, note that $U_i$'s have the same distribution as $\sqrt{\hat{\rho}}\varepsilon_{is}$ and are independent given $\hat{\rho}$. Let $X_i=\sqrt{n}\alpha_i+U_i$. Define
		\begin{align*}
			\varphi_m(u)=m^{-1}\sum_{i=1}^me^{\imath uX_i}.
		\end{align*}
		The inverse Fourier transformation provides
		\begin{align*}
			K(y)=(2\pi)^{-1}\int_{-\infty}^{\infty}k(u)e^{\imath uy}du.
		\end{align*}
		After plugging this expression into the definition of $f_{m,h}$, it shows that
		\begin{align*}
			f_{m,h}(x)&=\frac{1}{mh}\sum_{i=1}^mK\left(\frac{x-X_i}{h}\right)\\
			&=(2\pi m h)^{-1}\sum_{i=1}^m\int_{-\infty}^{\infty}k(u)e^{\imath u\frac{x-X_i}{h}}du\\
			&=(2\pi m)^{-1}\sum_{i=1}^m\int_{-\infty}^{\infty}k(hu)e^{\imath u(x-X_i)}du\\
			&=(2\pi)^{-1}\int_{-\infty}^{\infty}e^{-\imath ux}k(hu)\varphi_m(u)du,
		\end{align*}
		where the last equality is because $k(u)$ is even. This result further implies
		\begin{align*}
			\sup_{x\in\mathbb{R}}\left|f_{m,h}(x)-\mathbb{E}\{f_{m,h}(x)\mid\hat{\rho}\}\right|\le(2\pi)^{-1}\int_{-\infty}^{\infty}|k(hu)|\,|\varphi_m(u)-\mathbb{E}\{\varphi_m(u)\mid\hat{\rho}\}|\, du.
		\end{align*}
		Using the above inequality, the Cauchy-Schwartz inequality, and Euler's identity (i.e., $|e^{\imath x}| = 1$), it shows that the left hand side of \eqref{eqn:suffRhoCheby} satisfies
		\begin{align*}
			&\sup_{n,\,|\hat\rho-\rho|\le\xi}\mathbb{E}\left[\sup_{x\in\mathbb{R}}\left|f_{m,h}(x)-\mathbb{E}\{f_{m,h}(x)\mid\hat{\rho}\}\right|^2\mid\hat{\rho}\right]\\
			&\qquad\qquad \le\sup_{n,\,|\hat\rho-\rho|\le\xi}\mathbb{E}\left(\left[(2\pi)^{-1}\int_{-\infty}^{\infty}|k(hu)|\,|\varphi_m(u)-\mathbb{E}\{\varphi_m(u)\mid\hat{\rho}\}|\, du\right]^2 \mid\hat{\rho}\right)\\
			&\qquad\qquad \le\sup_{n,\,|\hat\rho-\rho|\le\xi}(2\pi)^{-2}\int_{-\infty}^{\infty}|k(hu)|du\int_{-\infty}^{\infty}|k(hu)|\, \mathbb{E}\left[|\varphi_m(u)-\mathbb{E}\{\varphi_m(u)\mid\hat{\rho}\}|^2\mid\hat{\rho}\right]du\\
			&\qquad\qquad =\sup_{n,\,|\hat\rho-\rho|\le\xi}(2\pi)^{-2}m^{-1}\int_{-\infty}^{\infty}|k(hu)|du\int_{-\infty}^{\infty}|k(hu)|\, \mathbb{E}\left[|e^{\imath uX_i}-\mathbb{E}\{e^{\imath uX_i}\mid\hat{\rho}\}|^2\mid\hat{\rho}\right]du\\
			&\qquad\qquad \le \pi^{-2}m^{-1}h^{-2}\left\{\int_{-\infty}^{\infty}|k(u)|du\right\}^2\to 0,
		\end{align*}
		where the result of converging to $0$ is due to Conditions (vii) and (viii). Therefore, \eqref{eqn:suffRhoCheby} is satisfied, which completes the proof.
	\end{proof}

	\begin{proof}[Proof of Lemma \ref{lem-empirical}]
		In the following, we focus on showing  $\sup_{0<t<t_0}\left|m^{-1}S_{m,n}(t)-S_{\infty,n}(t)\right|=o_{\mathbb{P}}(1) $. The proof of the second statement can be obtained by similar arguments, and thus is omitted.
		
		Let $$S_{m,n}^{-}(t)=\sum_{i=1}^m\mathbb{I}\left(\sqrt{n}\hat{\alpha}_i/\sqrt{\hat{\rho}\hat{\sigma}_i^2}<-t\right).$$ 
		The goal is to show
		\begin{align*}
			\sup_{0<t<t_0}\left|\frac{1}{m}S_{m,n}^{-}(t)-\mathbb{P}\left(\mathcal{E}+\sqrt{n}\alpha_i/\sqrt{\rho\sigma_i^2}<-t\right)\right|=o_{\mathbb{P}}(1).
		\end{align*}
		Recall in the proof of Lemma \ref{lem-mode}, we have
		\begin{align*}
			&\sqrt{n}\hat{\alpha}_i=\sqrt{n}(\tilde{\alpha}_i+\tilde{\alpha})=\sqrt{n}\alpha_i+\sqrt{n}(\tilde{\alpha}-\bar{\alpha})+U_i-U,
		\end{align*}
		where $U_i/\sqrt{\hat{\rho}\sigma_i^2}\sim^{\text{i.i.d.}}N(0,1)$, $U=o_{\mathbb{P}}(1)$, and $\sqrt{n}(\tilde{\alpha}-\bar{\alpha})=o_{\mathbb{P}}(1)$. These results imply that 
		\begin{align*}
			&\frac{1}{m}S_{m,n}^{-}(t)
			=\frac{1}{m}\sum_{i=1}^m\mathbb{I}\left\{\frac{U_i}{\sqrt{\hat{\rho}\sigma_i^2}}+\frac{\alpha_i}{\sqrt{\hat{\rho}\sigma_i^2/n}}<-t\frac{\hat{\sigma}_i}{\sigma_i}+\frac{U-\sqrt{n}(\tilde{\alpha}-\bar{\alpha})}{\sqrt{\hat{\rho}\sigma_i^2}}\right\},
		\end{align*}
		and
		\begin{align*}
			&\mathbb{P}\left\{\sup_{0<t<t_0}\left|\frac{1}{m}S_{m,n}^{-}(t)-\mathbb{P}\left(\mathcal{E}+\alpha_i/\sqrt{\rho\sigma_i^2/n}<-t\right)\right|>\delta\right\}\\
			& \le \mathbb{P}\Bigg[\sup_{0<t<t_0}\Bigg|\frac{1}{m}\sum_{i=1}^m\mathbb{I}\Bigg\{\frac{U_i}{\sqrt{\hat{\rho}\sigma_i^2}}+\frac{\alpha_i}{\sqrt{\hat{\rho}\sigma_i^2/n}}<-t\frac{\hat{\sigma}_i}{\sigma_i}+\frac{U-\sqrt{n}(\tilde{\alpha}-\bar{\alpha})}{\sqrt{\hat{\rho}\sigma_i^2}}\Bigg\}\\
			&\qquad\qquad-\mathbb{P}\left(\mathcal{E}+\frac{\alpha_i}{\sqrt{\rho\sigma_i^2/n}}<-t\right)\Bigg|>\delta,\, \sup_i\left|\frac{\hat{\sigma}_i}{\sigma_i}-1\right|\le\delta_1,\, \sup_i\Bigg|\frac{U-\sqrt{n}(\tilde{\alpha}-\bar{\alpha})}{\sqrt{\hat{\rho}\sigma_i^2}}\Bigg|\le\delta_2\Bigg]\\
			&\qquad+\mathbb{P}\left(\sup_i\left|\frac{\hat{\sigma}_i}{\sigma_i}-1\right|>\delta_1\right)+\mathbb{P}\Bigg\{\sup_i\Bigg|\frac{U-\sqrt{n}(\tilde{\alpha}-\bar{\alpha})}{\sqrt{\hat{\rho}\sigma_i^2}}\Bigg|>\delta_2\Bigg\}\\
			& \le \mathbb{P}\Bigg\{\sup_{0<t<t_0}\Bigg|\frac{1}{m}\sum_{i=1}^m\mathbb{I}\Bigg(\frac{U_i}{\sqrt{\hat{\rho}\sigma_i^2}}+\frac{\alpha_i}{\sqrt{\hat{\rho}\sigma_i^2/n}}<-t-t\delta_1-\delta_2\Bigg)-\mathbb{P}\Bigg(\mathcal{E}+\frac{\alpha_i}{\sqrt{\rho\sigma_i^2/n}}<-t\Bigg)\Bigg|>\delta\Bigg\}\\
			&\qquad+\mathbb{P}\Bigg\{\sup_{0<t<t_0}\Bigg|\frac{1}{m}\sum_{i=1}^m\mathbb{I}\Bigg(\frac{U_i}{\sqrt{\hat{\rho}\sigma_i^2}}+\frac{\alpha_i}{\sqrt{\hat{\rho}\sigma_i^2/n}}<-t+t\delta_1+\delta_2\Bigg)-\mathbb{P}\Bigg(\mathcal{E}+\frac{\alpha_i}{\sqrt{\rho\sigma_i^2/n}}<-t\Bigg)\Bigg|>\delta\Bigg\}\\
			&\qquad+o(1)
		\end{align*}
		for any positive constants $\delta$, $\delta_1$, and $\delta_2$, where the last step is due to $\rho>C$, $\sigma_i>C$, and the results from Lemmas \ref{lem-rho}--\ref{lem-mode}. Thus we only need to show that for any $\delta>0$ and $\epsilon>0$, there exist $\xi>0$, $\delta_1\neq 0$ and $\delta_2\neq 0$ such that for large enough $m$,
		\begin{align*}
			\begin{split}
				&\sup_{n,\,|\hat{\rho}-\rho|<\xi}\mathbb{P}\Bigg\{\sup_{0<t<t_0}\Bigg|\frac{1}{m}\sum_{i=1}^m\mathbb{I}\Bigg(\frac{U_i}{\sqrt{\hat{\rho}\sigma_i^2}}+\frac{\alpha_i}{\sqrt{\hat{\rho}\sigma_i^2/n}}<-t+t\delta_1+\delta_2\Bigg)\\
				&\qquad\qquad\qquad\qquad\;\;\;-\mathbb{P}\Bigg(\mathcal{E}+\frac{\alpha_i}{\sqrt{\rho\sigma_i^2/n}}<-t\Bigg)\Bigg|>\delta\;\bigg|\;\hat{\rho}\Bigg\}<\epsilon,
			\end{split}
		\end{align*}
		or sufficiently,
		\begin{align}
			&\sup_{n,\,|\hat{\rho}-\rho|<\xi}\mathbb{P}\Bigg\{\sup_{0<t<t_0}\Bigg|\frac{1}{m}\sum_{i=1}^m\mathbb{I}\Bigg(\frac{U_i}{\sqrt{\hat{\rho}\sigma_i^2}}+\frac{\alpha_i}{\sqrt{\hat{\rho}\sigma_i^2/n}}<-t+t\delta_1+\delta_2\Bigg) \nonumber\\
			&\qquad\qquad\qquad\qquad\;\;\;-\mathbb{P}\Bigg(\mathcal{E}+\frac{\alpha_i}{\sqrt{\hat{\rho}\sigma_i^2/n}}<-t+t\delta_1+\delta_2\;\bigg|\;\hat\rho\Bigg)\Bigg|>\delta\;\bigg|\;\hat{\rho}\Bigg\}<\epsilon, \label{eqn:SmnSuffPar1}\\
			&\sup_{\substack{n,\,|\hat{\rho}-\rho|<\xi,\\0<t<t_0}}\Bigg|\mathbb{P}\Bigg(\mathcal{E}+\frac{\alpha_i}{\sqrt{\hat{\rho}\sigma_i^2/n}}<-t+t\delta_1+\delta_2\;\bigg|\;\hat{\rho}\Bigg)-\mathbb{P}\Bigg(\mathcal{E}+\frac{\alpha_i}{\sqrt{\hat{\rho}\sigma_i^2/n}}<-t\;\bigg|\;\hat{\rho}\Bigg)\Bigg|<\epsilon, \label{eqn:SmnSuffPar2}
		\end{align}
		and
		\begin{align}\label{eqn:SmnSuffPar3}
			&\sup_{\substack{n,\,|\hat{\rho}-\rho|<\xi,\\0<t<t_0}}\Bigg|\mathbb{P}\Bigg(\mathcal{E}+\frac{\alpha_i}{\sqrt{\hat{\rho}\sigma_i^2/n}}<-t\;\bigg|\;\hat{\rho}\Bigg)-\mathbb{P}\Bigg(\mathcal{E}+\frac{\alpha_i}{\sqrt{\rho\sigma_i^2/n}}<-t\Bigg)\Bigg|<\epsilon.
		\end{align}
		First, \eqref{eqn:SmnSuffPar1} is a direct result of applying the Glivenko-Cantelli theorem [52, Wainwright, 2019]. 
		For \eqref{eqn:SmnSuffPar2}, we note that the cumulative distribution function of $\mathcal{E}+\alpha_i/\sqrt{a\sigma_i^2/n}$ for any $a>0$, denoted by $G_n(\cdot;a)$, can be expressed as
		\begin{align*}
			G_n(x;a)=\int_{-\infty}^{\infty}\Phi\left(x-u\right)dF_{\alpha_i/\sqrt{a\sigma_i^2/n}}(u)=\int_{-\infty}^{\infty}\Phi\left(x-\sqrt{\frac{\rho}{a}}u\right)dF_{\alpha_i/\sqrt{\rho\sigma_i^2/n}}(u),
		\end{align*}
		where $\Phi(\cdot)$ represents the cumulative distribution function of the standard normal distribution. Thus $G_n(x;a)$ is equicontinuous uniformly  over $n$ and $a>0$. 
		In other words, for any $\epsilon>0$, there exists a $\delta>0$ such that $$\sup_{\substack{n,\,a>0,\\|x_1-x_2|<\delta}}|G_n(x_1;a)-G_n(x_2;a)|<\epsilon,$$ which verifies the \eqref{eqn:SmnSuffPar2}. 
		Further, $$\sup_{\substack{n,\,|\hat{\rho}-\rho|<\xi,\\|x|<t_0}}|G_n(x;\hat\rho)-G_n(x;\rho)|$$ can be arbitrarily small as long as $\xi$ is small enough, which confirms \eqref{eqn:SmnSuffPar3}.
	\end{proof}

	\begin{proof}[Proof of Theorem 1]
		Observe that 
		\begin{align*}
			\left|\widehat{\text{FDP}}(t)-\frac{2F_{n-d-2}(-t)}{S_{\infty,n}(t)}\right|=\left|2F_{n-d-2}(-t)\left\{\frac{1}{S_{m,n}(t)/m}-\frac{1}{S_{\infty,n}(t)}\right\}\right|,
		\end{align*}
		where $S_{m,n}(t)$ and $S_{\infty,n}(t)$ are defined in Lemma \ref{lem-empirical}. Together with Lemma \ref{lem-empirical} and Condition (x), we deduce that there exists some $t_0$ such that $t^*<t_0$ for large enough $n$, 
		\begin{align*}
			&\sup_{0<t<t_0}\left|\frac{V_{m,n}(t)}{1\vee S_{m,n}(t)}-\frac{m_0}{m}\frac{2F_{n-d-2}(-t)}{S_{\infty,n}(t)}\right|\\
			=&\sup_{0<t<t_0}\left|\frac{V_{m,n}(t)}{1\vee S_{m,n}(t)}-\frac{2F_{n-d-2}(-t)}{\{1\vee S_{m,n}(t)\}/m_0}+\frac{2F_{n-d-2}(-t)}{\{1\vee S_{m,n}(t)\}/m_0}-\frac{m_0}{m}\frac{2F_{n-d-2}(-t)}{S_{\infty,n}(t)}\right|\\
			\le&\sup_{0<t<t_0}\left|\frac{m_0^{-1}V_{m,n}(t)-2F_{n-d-2}(-t)}{\{1\vee S_{m,n}(t)\}/m_0}\right|+\sup_{0<t<t_0}\left|\frac{2m_0F_{n-d-2}(-t)}{m}\left[\frac{1}{\{1\vee S_{m,n}(t)\}/m}-\frac{1}{S_{\infty,n}(t)}\right]\right|\\
			=&o_{\mathbb{P}}(1),
		\end{align*}
		and
		\begin{align*}
			\sup_{0<t<t_0}	\left|\widehat{\text{FDP}}(t)-\frac{2F_{n-d-2}(-t)}{S_{\infty,n}(t)}\right|=\sup_{0<t<t_0}\left|2F_{n-d-2}(-t)\left\{\frac{1}{S_{m,n}(t)/m}-\frac{1}{S_{\infty,n}(t)}\right\}\right|=o_{\mathbb{P}}(1).
		\end{align*}
		Therefore, we have
		\begin{align*}
			\frac{V_{m,n}(t^*)}{1\vee S_{m,n}(t^*)}&\le \frac{V_{m,n}(t^*)}{1\vee S_{m,n}(t^*)}-\frac{m_0}{m}\frac{2F_{n-d-2}(-t^*)}{S_{\infty,n}(t^*)}+\frac{2F_{n-d-2}(-t^*)}{S_{\infty,n}(t^*)}-\widehat{\text{FDP}}(t^*)+\widehat{\text{FDP}}(t^*) \\
			&\le q+o_{\mathbb{P}}(1).
		\end{align*}
		The conclusion follows by using Lemma 8.3 of [53, Cao et al., 2021].
	\end{proof}

\setcounter{section}{0}
\renewcommand{\thesection}{S\arabic{section}}
\setcounter{subsection}{0}
\renewcommand{\thesubsection}{S\arabic{subsection}}
\setcounter{equation}{0}
\renewcommand{\theequation}{S\arabic{equation}}
\setcounter{figure}{0}
\renewcommand{\thefigure}{S\arabic{figure}}
\setcounter{table}{0}
\renewcommand{\thetable}{S\arabic{table}}
\setcounter{lemma}{0}
\renewcommand{\thelemma}{S\arabic{lemma}}

\addtolength{\textheight}{-0.7in}%
\renewcommand{\figurename}{Fig.}
	\newpage
\spacingset{1.5} 
{\huge Supplementary figures for ``LinDA: linear models for differential abundance analysis of microbiome compositional data"}

	\section{Additional main comparisons of numerical studies}\label{sec-supp-simu}
Fig. \ref{fig-S6C0-proposed} and Fig. \ref{fig-S0C0-proposed} compare the proposed method LinDA with different zero-handling approaches under settings S6C0 and S0C0. Fig. \ref{fig-S0C0-Maaslin2LinDA} 
depicts the results of LinDA, CLR-OLS and MaAsLin2 with different normalization approaches under setting S0C0. 
Fig. \ref{fig-S0C1}--\ref{fig-S6C0}, \ref{fig-S7C0} and \ref{fig-S81C0}--\ref{fig-S82C0} show the results of settings S0C1, S0C2, S1C0, S2C0, S4C0, S5C0, S6C0, S7C0, S8.1C0, and S8.2C0, respectively. The comparison between disabling and enabling zero treatment of the ANCOM-BC method is depicted in Fig. \ref{fig-S6C0-ancombc} under setting S6C0. 
Fig. \ref{fig-S0C0-strong} shows the results of setting S0C0 with stronger compositional effects.

\begin{figure}
	\begin{subfigure}[b]{1\textwidth}
		\centering
		\includegraphics[scale=0.5]{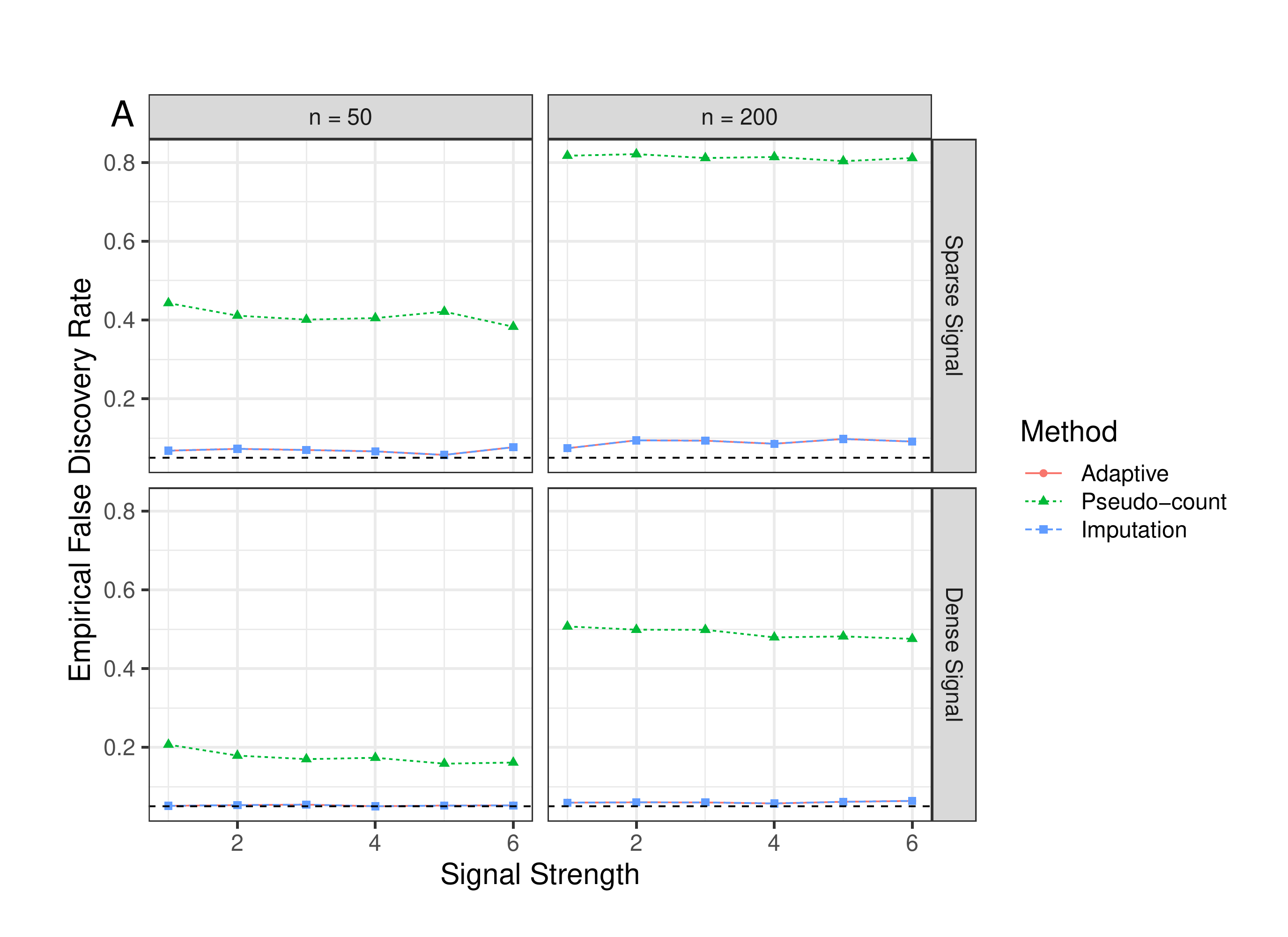}
	\end{subfigure}
	\begin{subfigure}[b]{1\textwidth}
		\centering
		\includegraphics[scale=0.5]{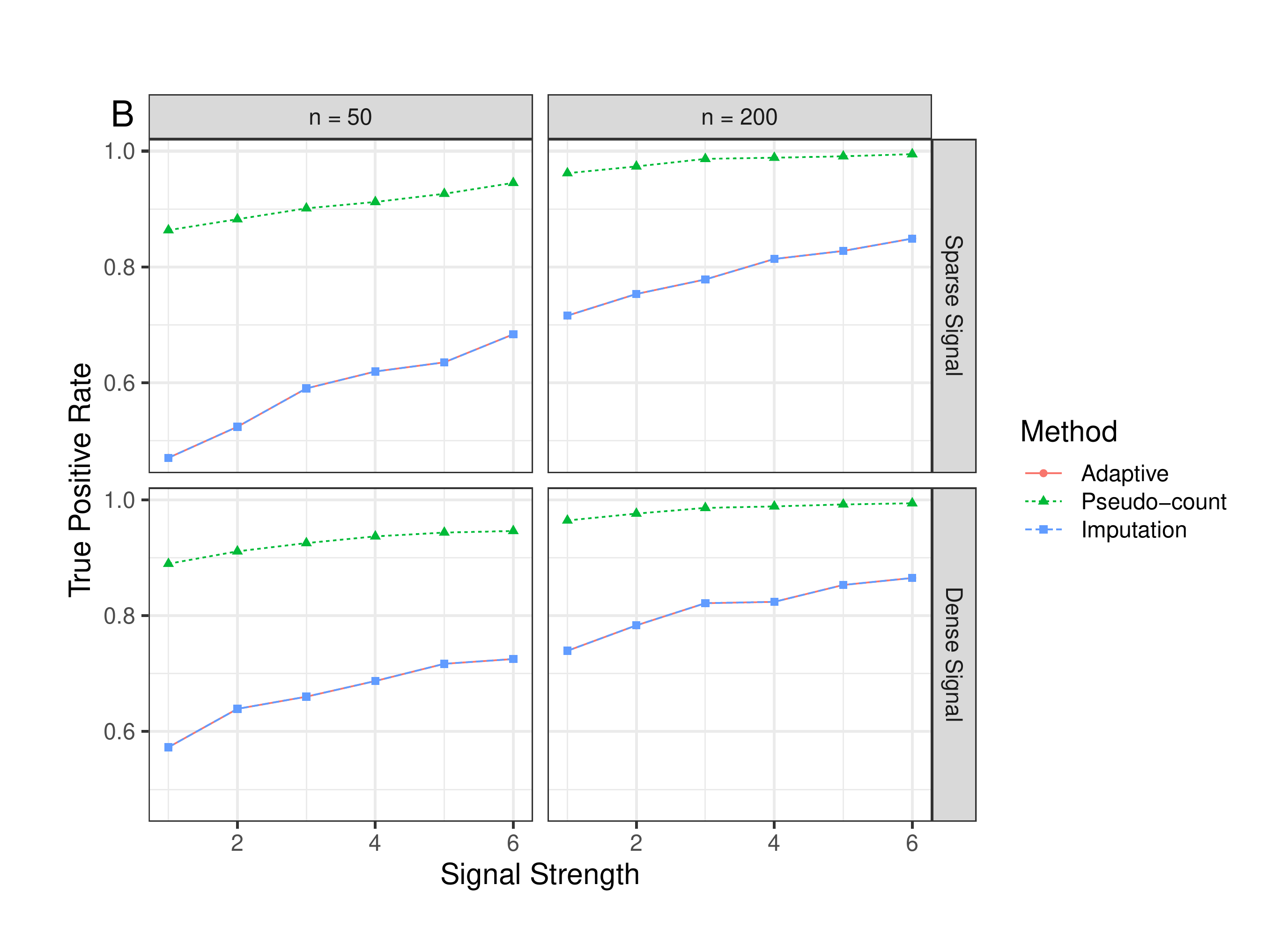}
	\end{subfigure}
	\caption{Performance of LinDA with different zero-handling approaches (S6C0: 10-fold difference in library size, a binary covariate). Empirical false discovery rate (A) and true positive rates (B) were averaged over 100 simulation runs. The dashed horizontal line (A) indicates the target FDR level of 0.05. Note that the red and blue lines are overlapped as the covariate and sequencing depth are significantly correlated.}
	\label{fig-S6C0-proposed}
\end{figure}

\begin{figure}
	\begin{subfigure}[b]{1\textwidth}
		\centering
		\includegraphics[scale=0.5]{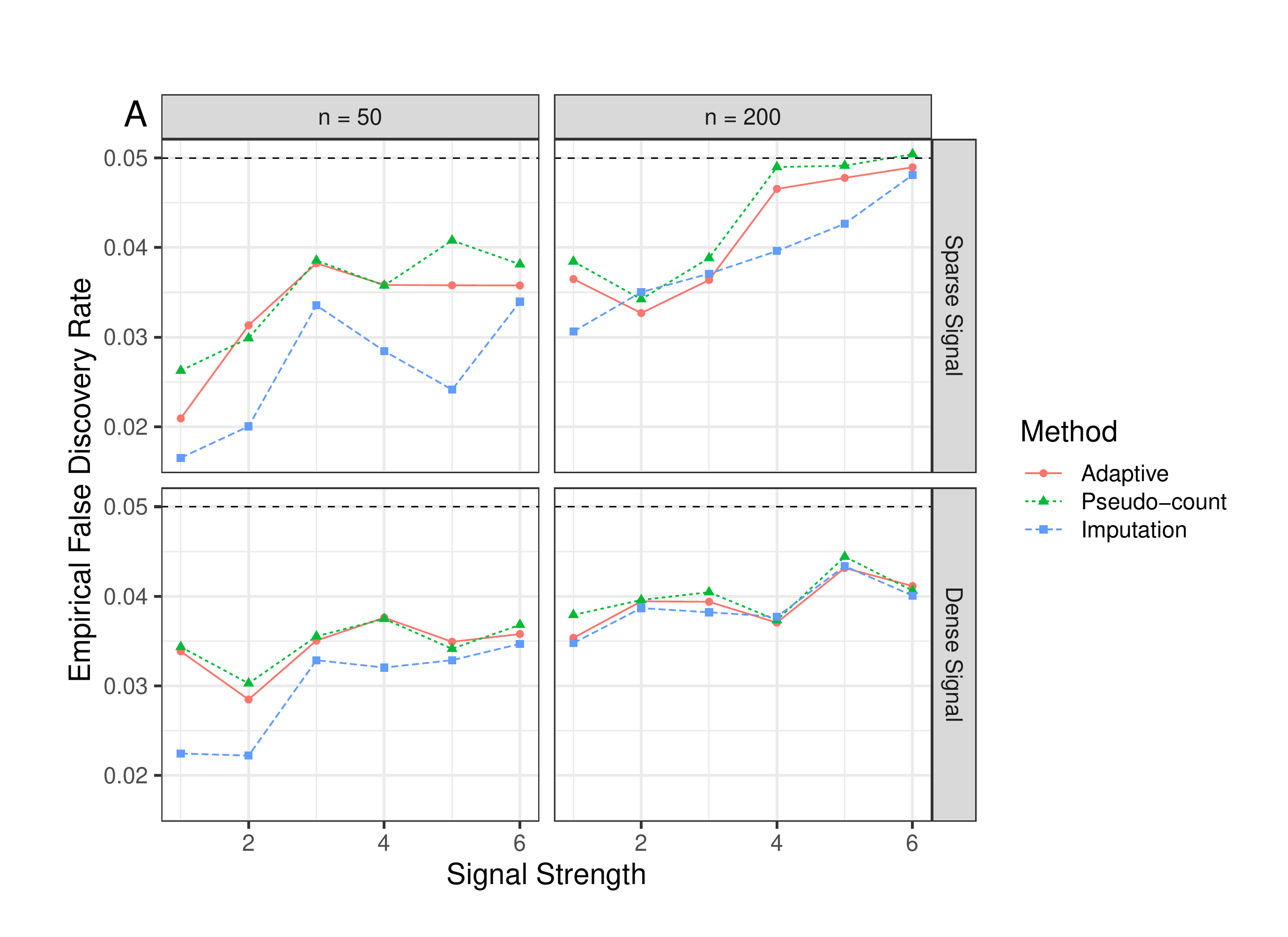}
	\end{subfigure}
	\begin{subfigure}[b]{1\textwidth}
		\centering
		\includegraphics[scale=0.5]{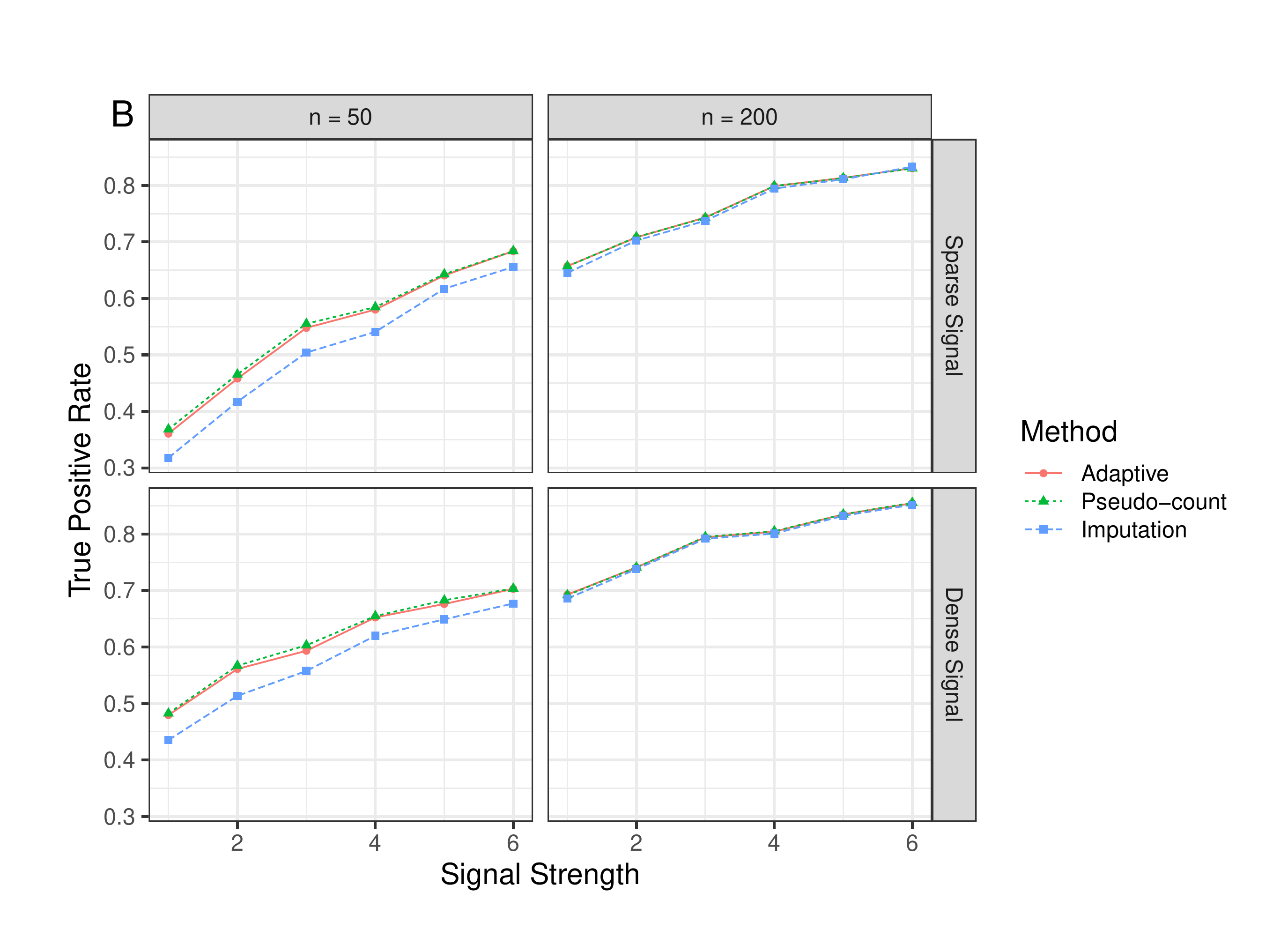}
	\end{subfigure}
	\caption{Performance of LinDA with different zero-handling approaches (S0C0: log normal abundance distribution, a binary covariate). Empirical false discovery rate (A) and true positive rates (B) were averaged over 100 simulation runs. The dashed horizontal line (A) indicates the target FDR level of 0.05.}
	\label{fig-S0C0-proposed}
\end{figure}

\begin{figure}
	\begin{subfigure}[b]{1\textwidth}
		\centering
		\includegraphics[scale=0.5]{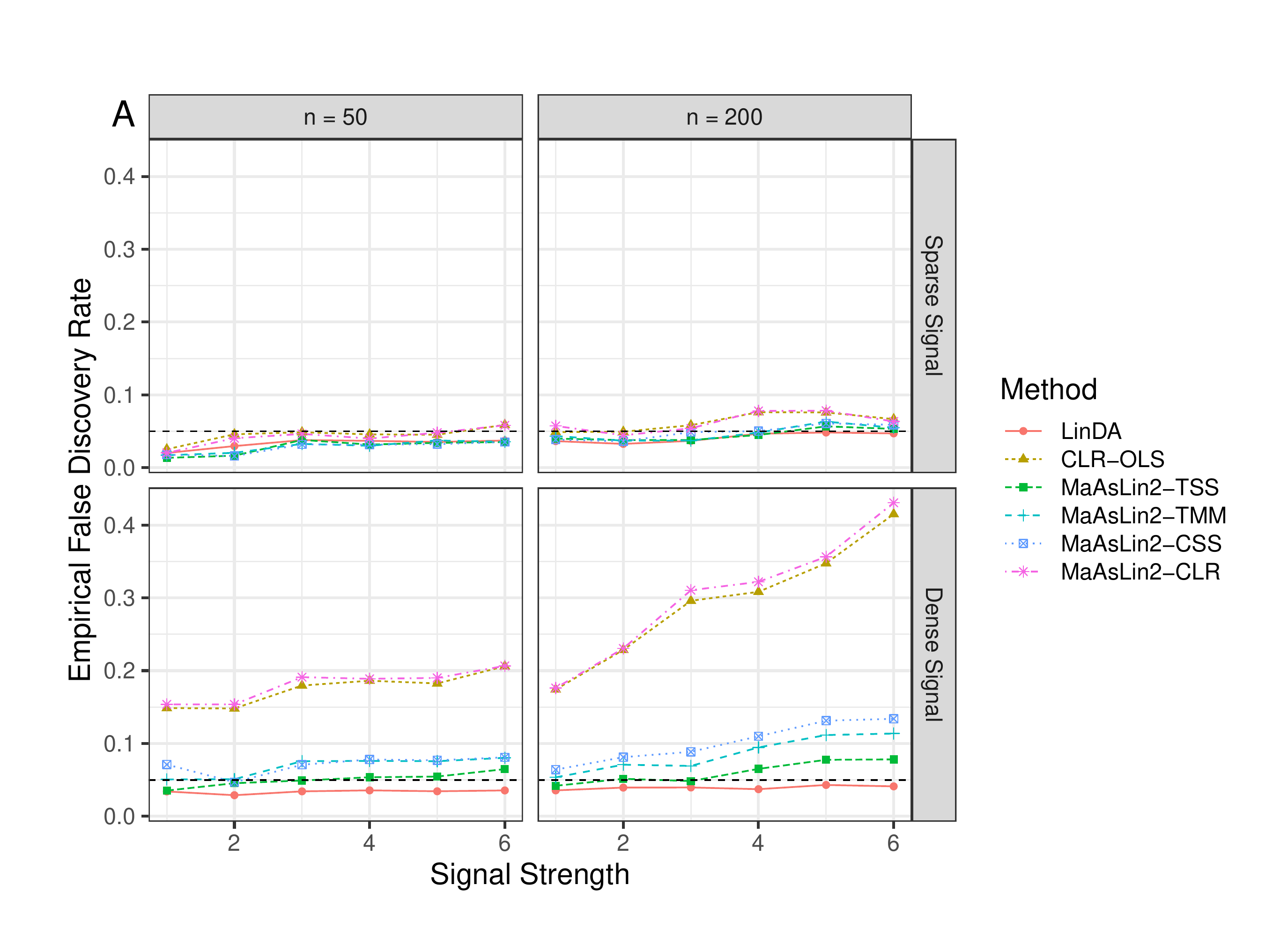}
	\end{subfigure}
	\begin{subfigure}[b]{1\textwidth}
		\centering
		\includegraphics[scale=0.5]{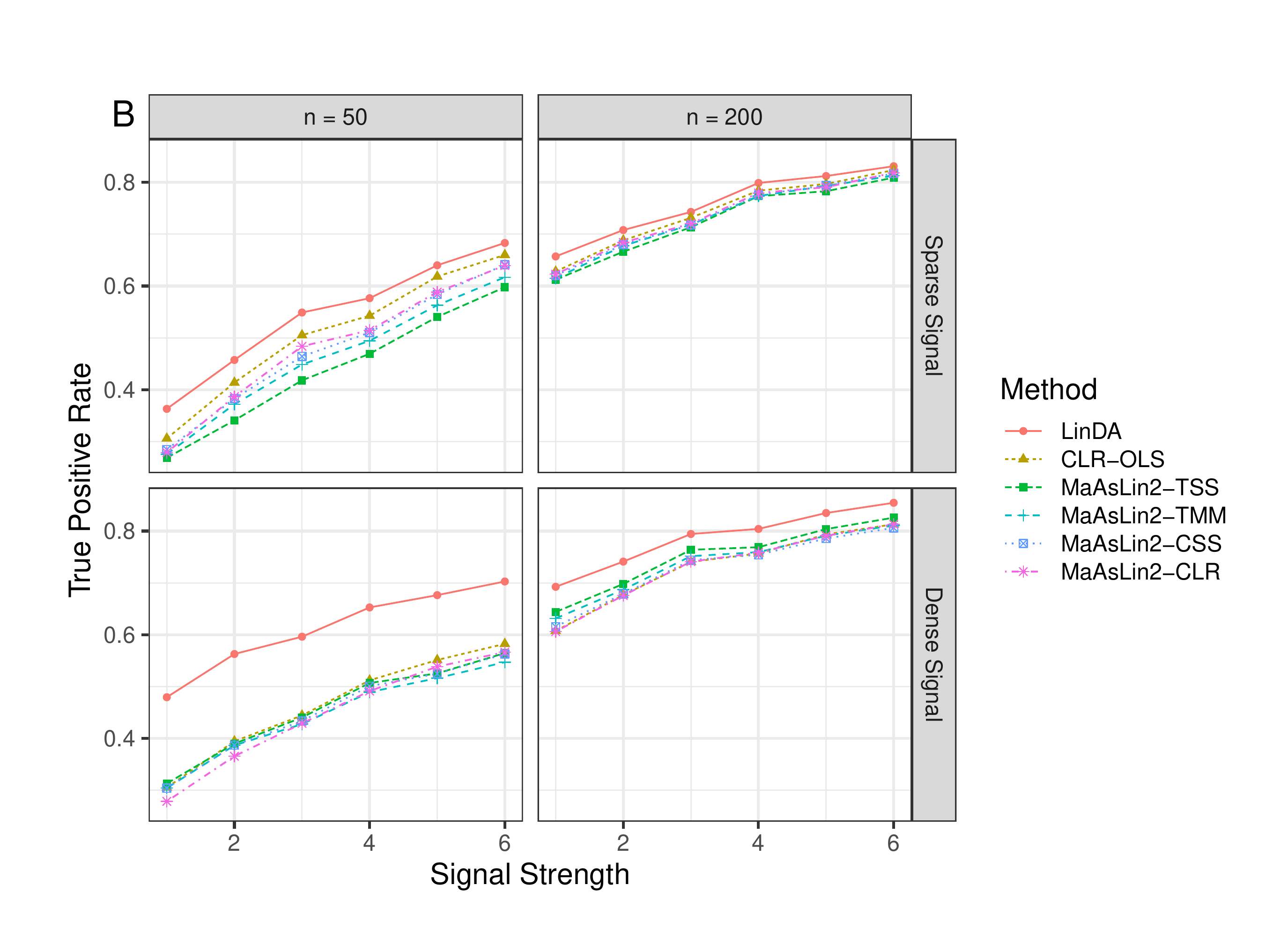}
	\end{subfigure}
	\caption{Performance comparison between LinDA and MaAsLin2 (S0C0: log normal abundance distribution, a binary covariate). Empirical false discovery rate (A) and true positive rates (B) were averaged over 100 simulation runs. The dashed horizontal line (A) indicates the target FDR level of 0.05.}
	\label{fig-S0C0-Maaslin2LinDA}
\end{figure}

\begin{figure}
	\begin{subfigure}[b]{1\textwidth}
		\centering
		\includegraphics[scale=0.5]{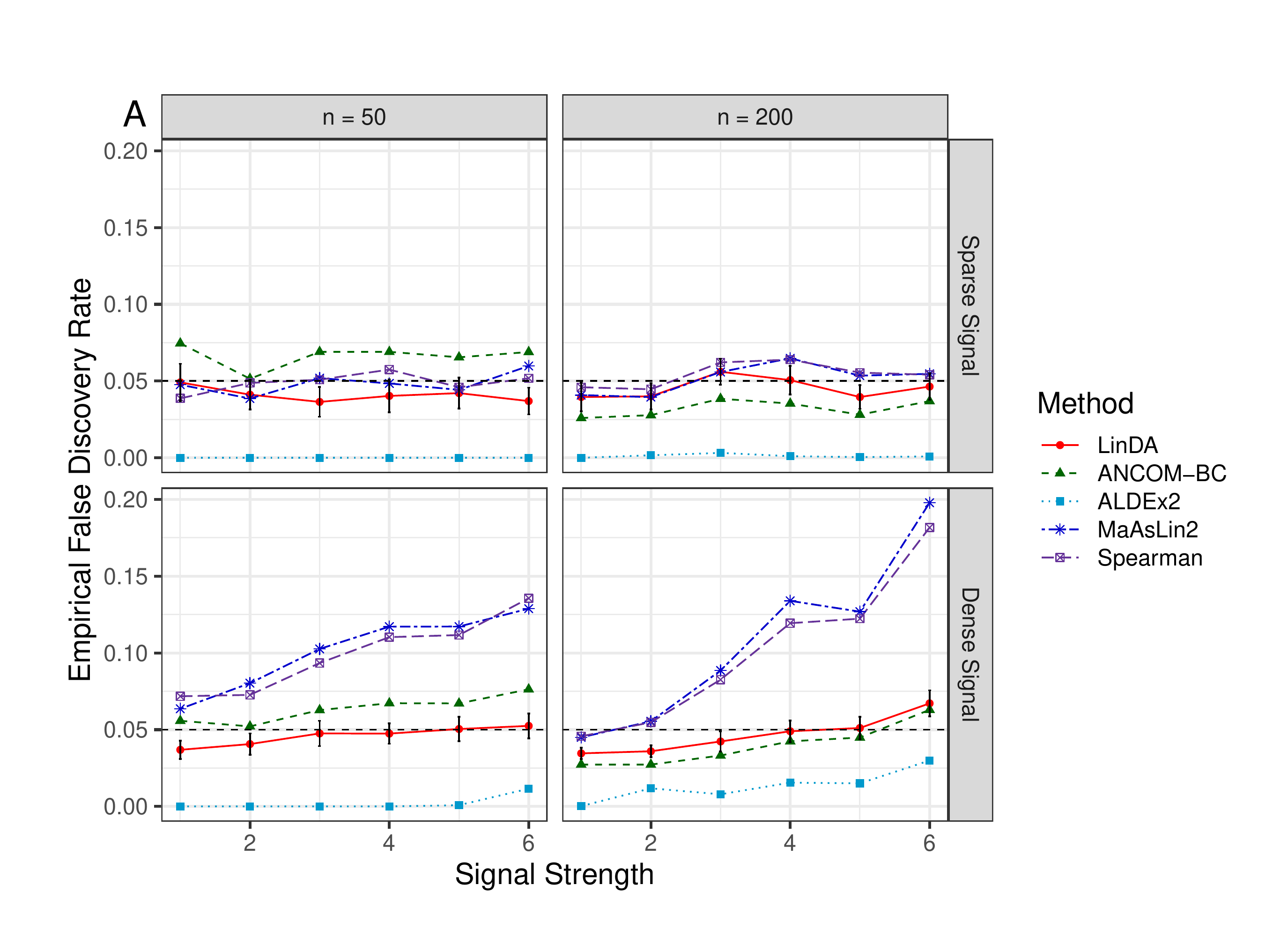}
	\end{subfigure}
	\begin{subfigure}[b]{1\textwidth}
		\centering
		\includegraphics[scale=0.5]{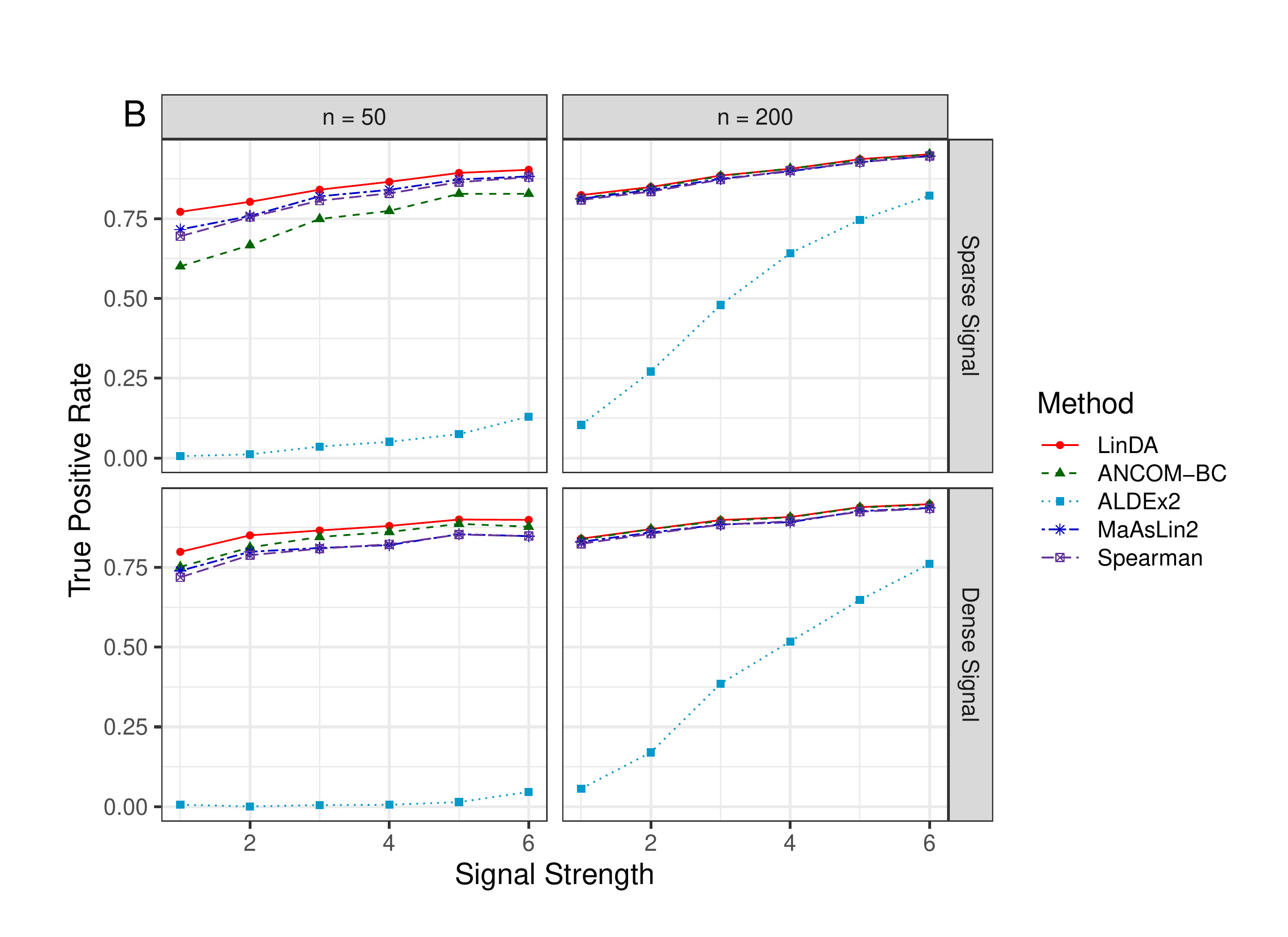}
	\end{subfigure}
	\caption{Performance comparison (S0C1: log normal abundance distribution, a continuous covariate). Empirical false discovery rate (A) and true positive rates (B) were averaged over 100 simulation runs. Error bars (A) represent the 95\% CIs of the method LinDA and the dashed horizontal line indicates the target FDR level of 0.05.}
	\label{fig-S0C1}
\end{figure}

\begin{figure}
	\begin{subfigure}[b]{1\textwidth}
		\centering
		\includegraphics[scale=0.5]{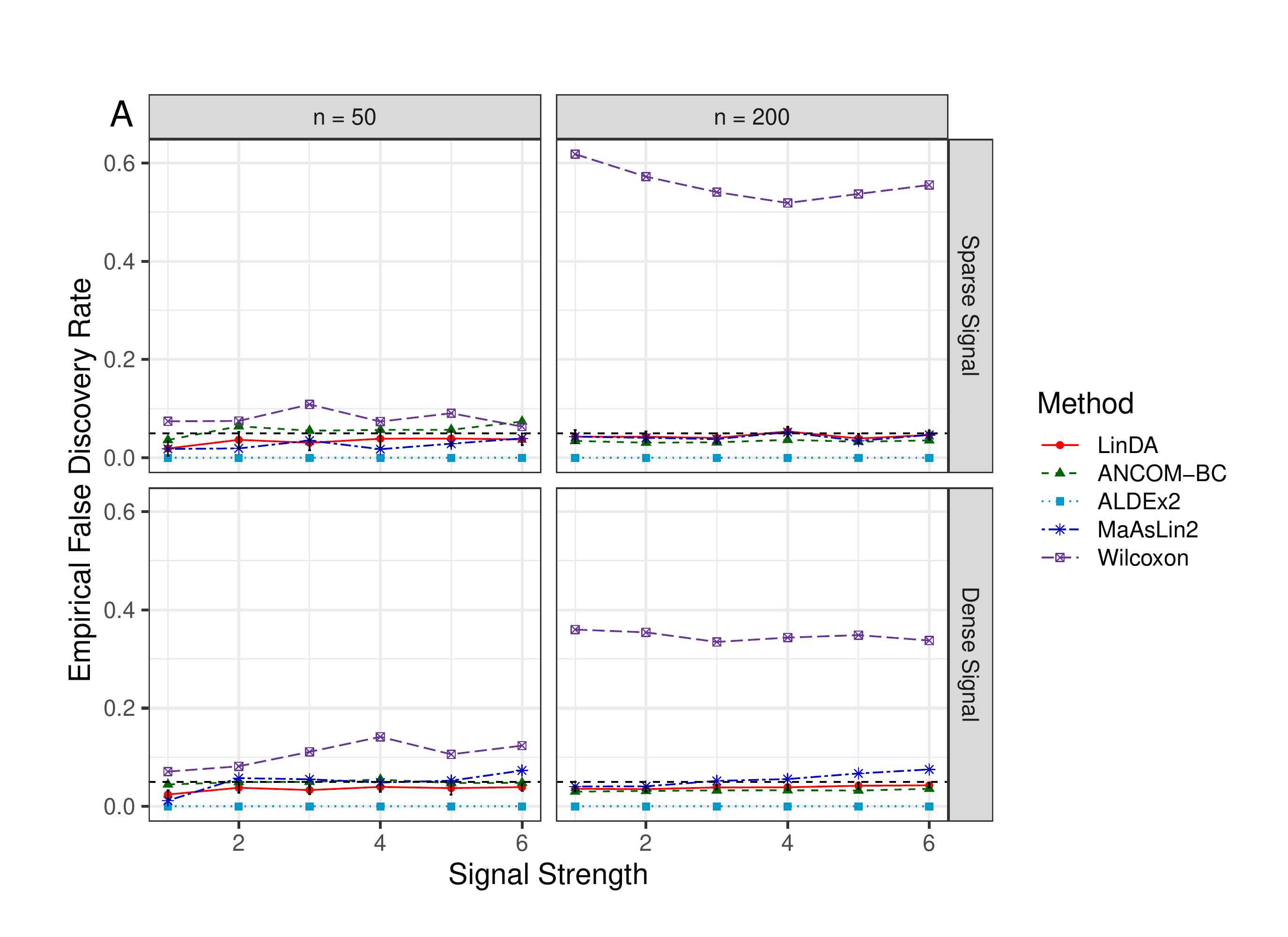}
	\end{subfigure}
	\begin{subfigure}[b]{1\textwidth}
		\centering
		\includegraphics[scale=0.5]{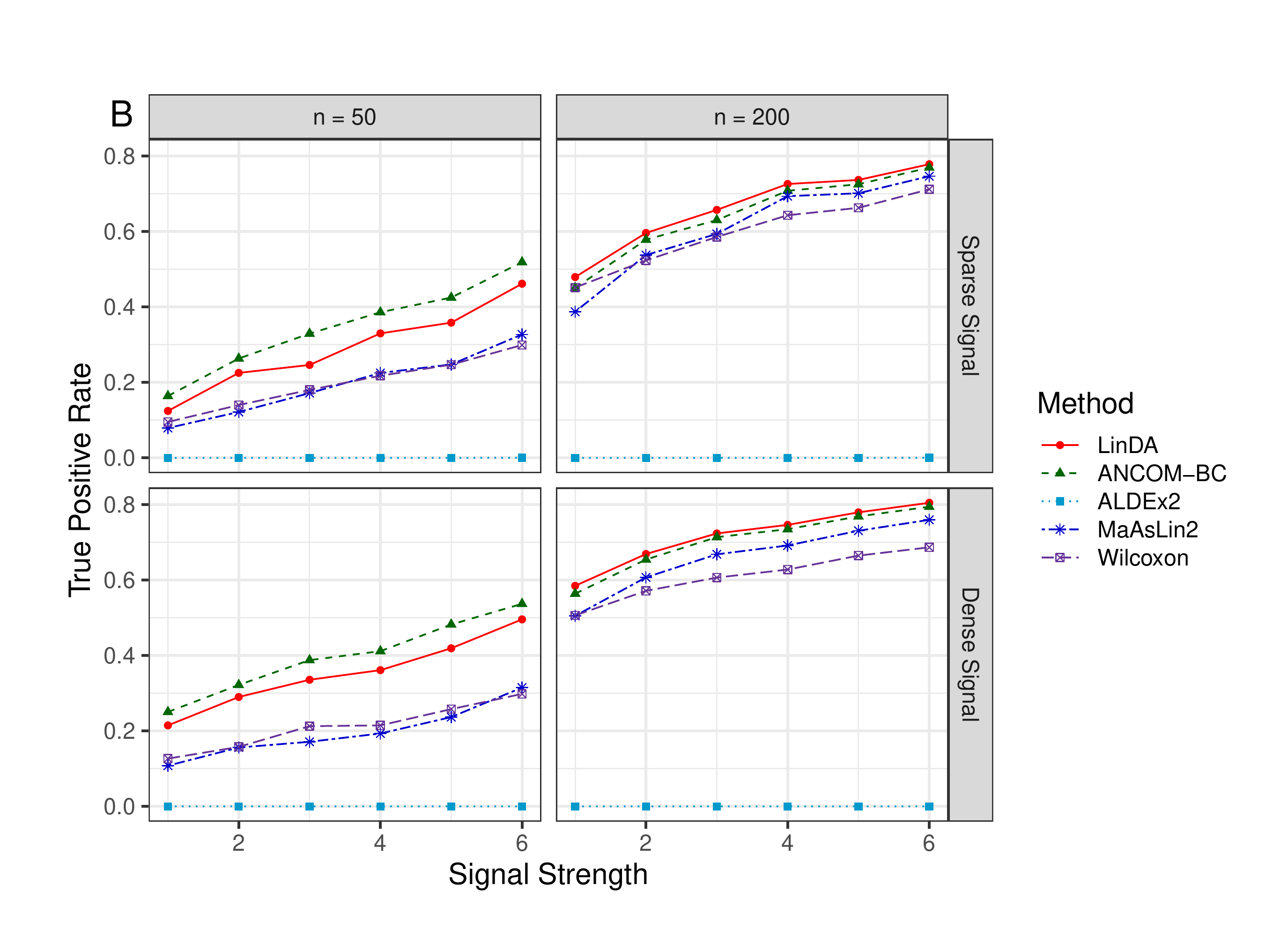}
	\end{subfigure}
	\caption{Performance comparison (S0C2: log normal abundance distribution, a binary variable of interest and two confounders). Empirical false discovery rate (A) and true positive rates (B) were averaged over 100 simulation runs. Error bars (A) represent the 95\% CIs of the method LinDA and the dashed horizontal line indicates the target FDR level of 0.05.}
	\label{fig-S0C2}
\end{figure}

\begin{figure}
	\begin{subfigure}[b]{1\textwidth}
		\centering
		\includegraphics[scale=0.5]{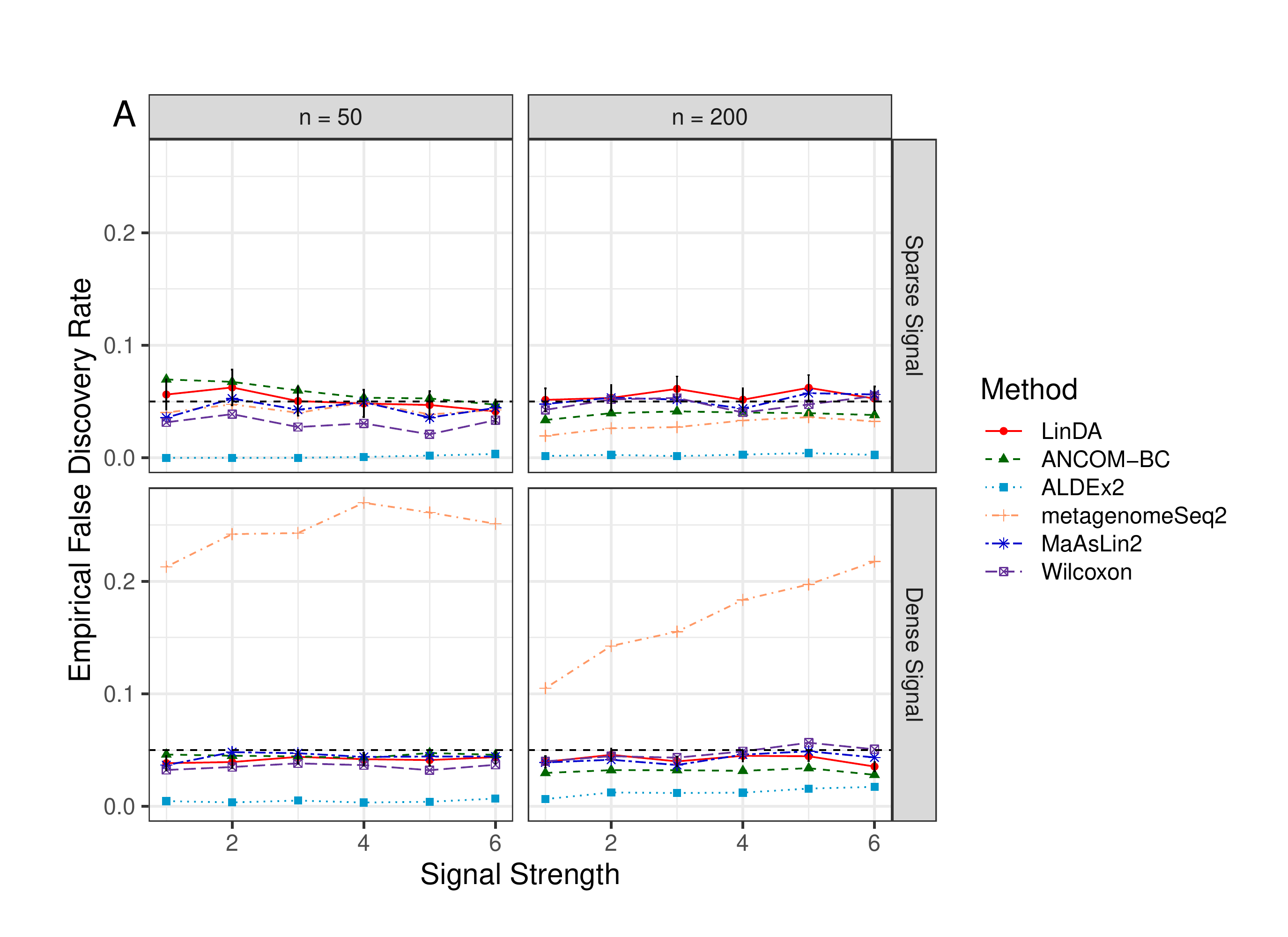}
	\end{subfigure}
	\begin{subfigure}[b]{1\textwidth}
		\centering
		\includegraphics[scale=0.5]{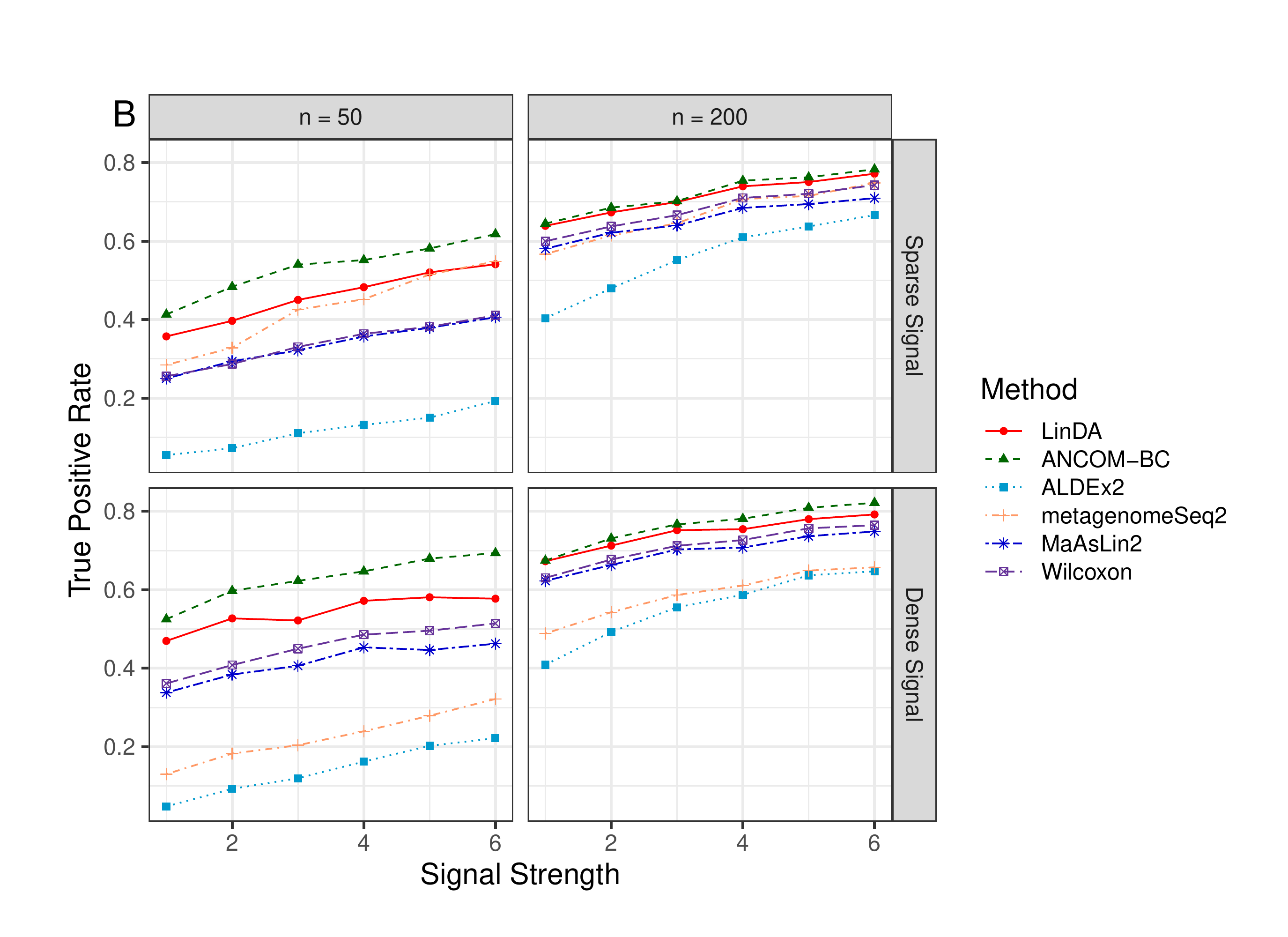}
	\end{subfigure}
	\caption{Performance comparison (S1C0: zero inflated absolute abundances, a binary covariate). Empirical false discovery rate (A) and true positive rates (B) were averaged over 100 simulation runs. Error bars (A) represent the 95\% CIs of the method LinDA and the dashed horizontal line indicates the target FDR level of 0.05.}
	\label{fig-S1C0}
\end{figure}

\begin{figure}
	\begin{subfigure}[b]{1\textwidth}
		\centering
		\includegraphics[scale=0.5]{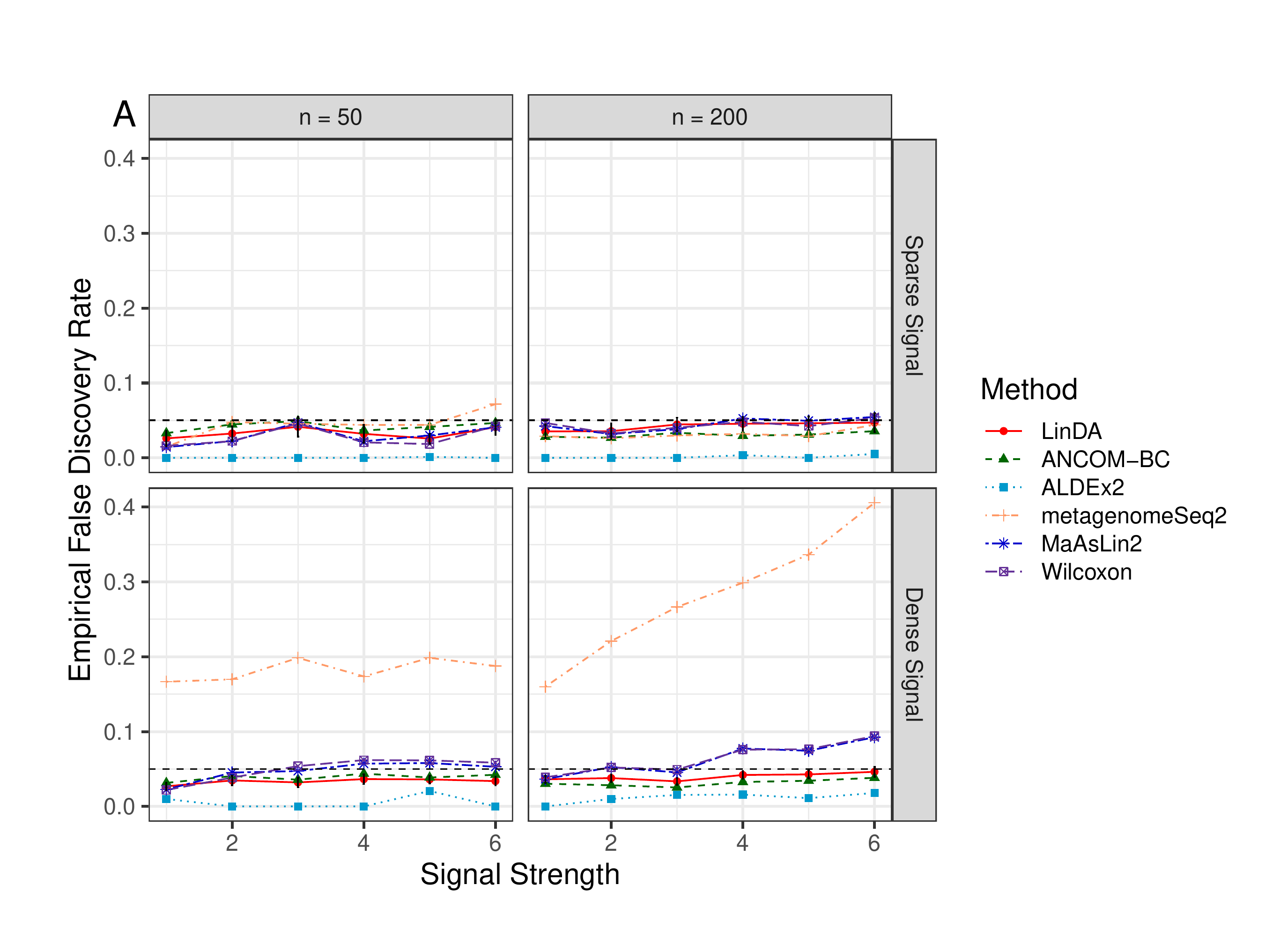}
	\end{subfigure}
	\begin{subfigure}[b]{1\textwidth}
		\centering
		\includegraphics[scale=0.5]{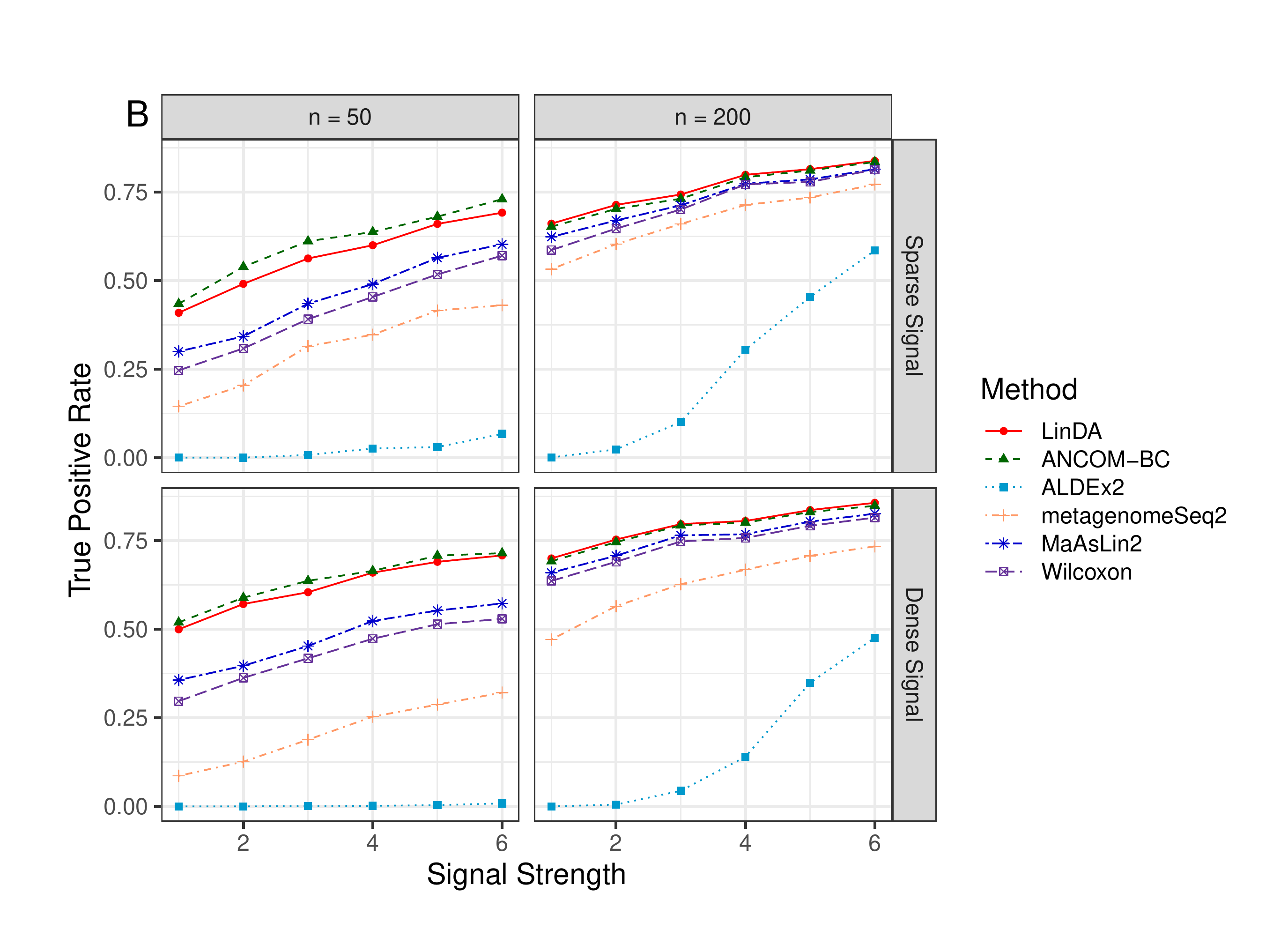}
	\end{subfigure}
	\caption{Performance comparison (S2C0: correlated absolute abundances, a binary covariate). Empirical false discovery rate (A) and true positive rates (B) were averaged over 100 simulation runs. Error bars (A) represent the 95\% CIs of the method LinDA and the dashed horizontal line indicates the target FDR level of 0.05.}
	\label{fig-S2C0}
\end{figure}

\begin{figure}
	\begin{subfigure}[b]{1\textwidth}
		\centering
		\includegraphics[scale=0.5]{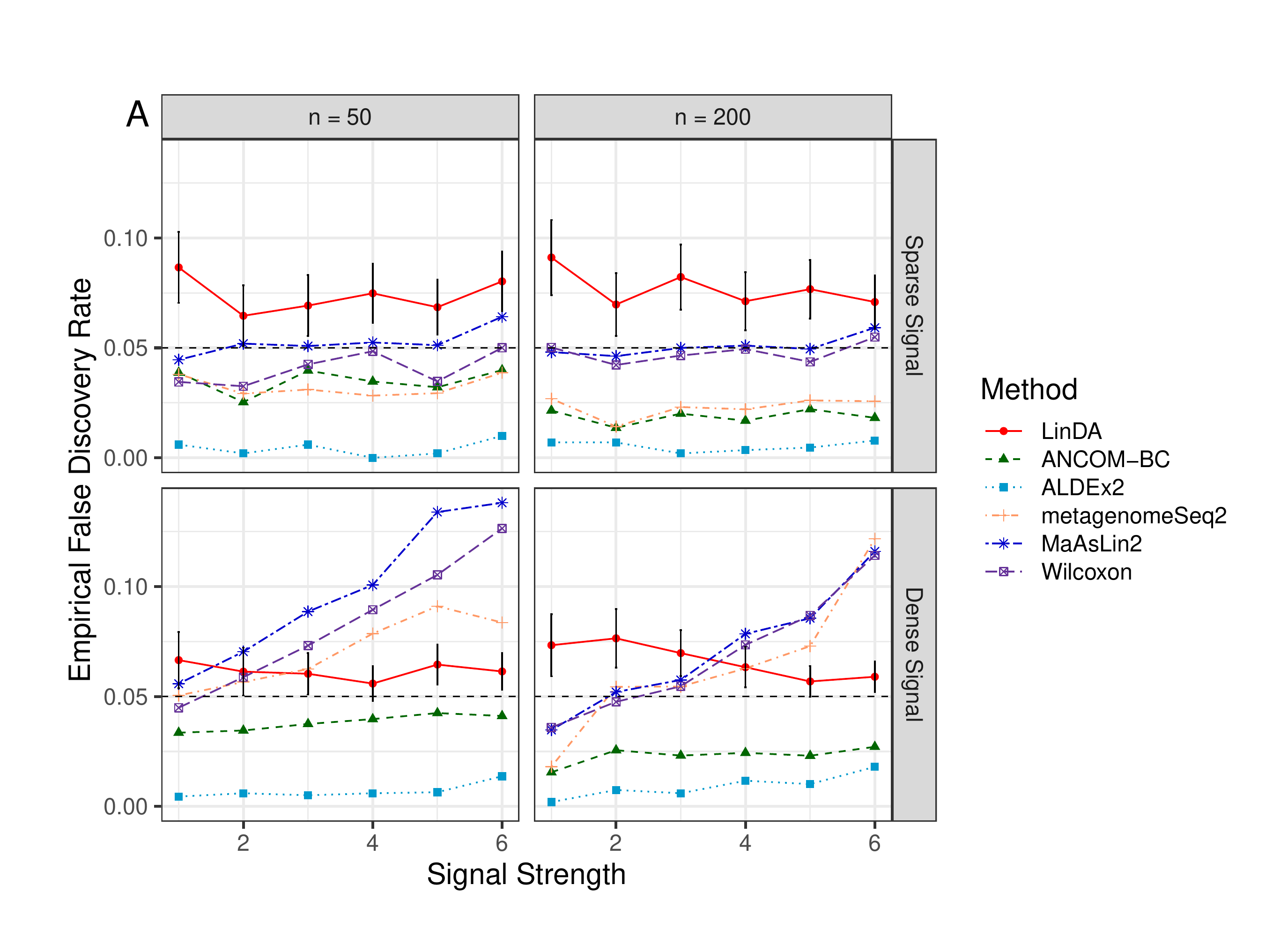}
	\end{subfigure}
	\begin{subfigure}[b]{1\textwidth}
		\centering
		\includegraphics[scale=0.5]{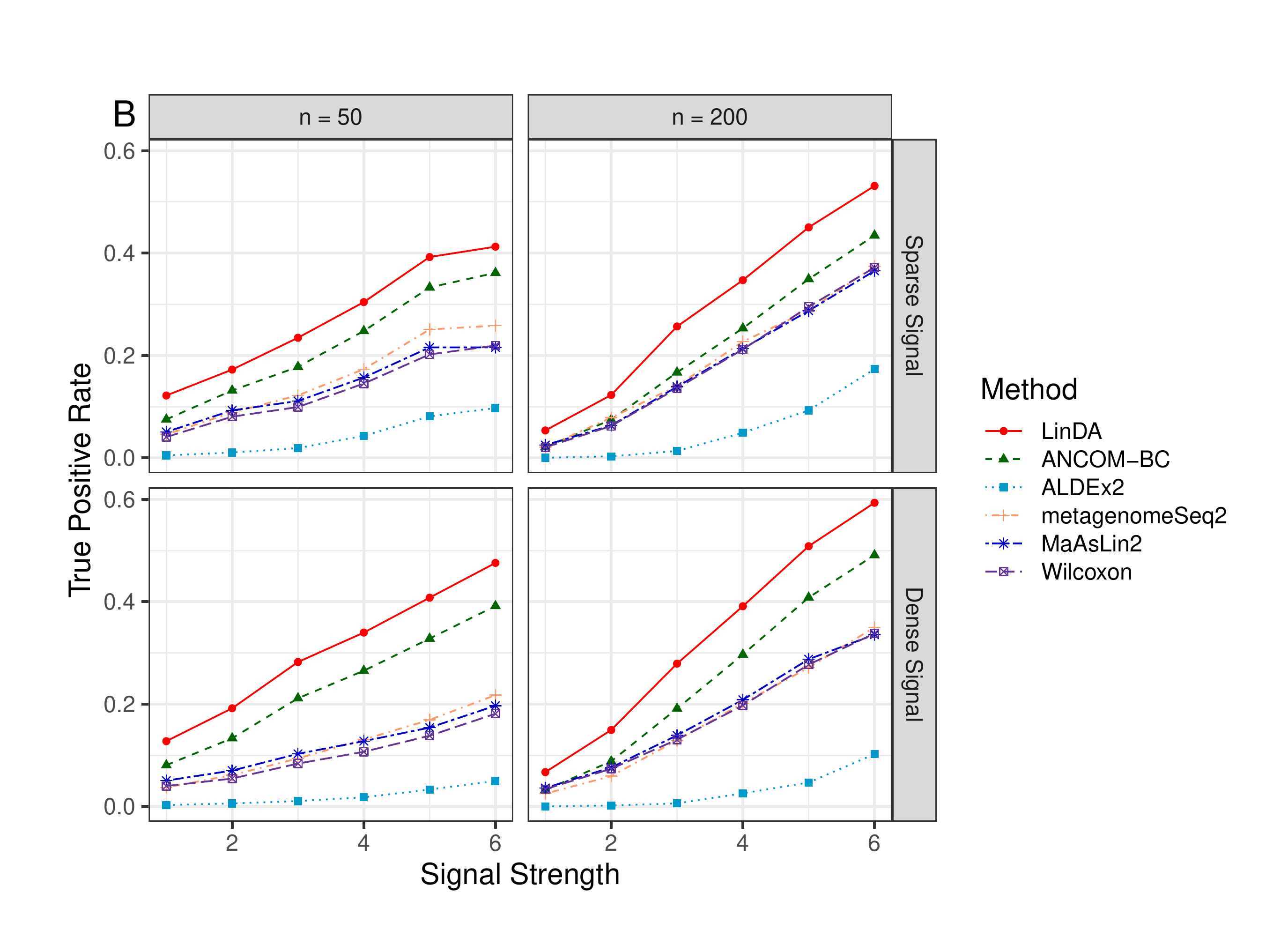}
	\end{subfigure}
	\caption{Performance comparison (S4C0: smaller $m$, a binary covariate). Empirical false discovery rate (A) and true positive rates (B) were averaged over 1000 simulation runs. Error bars (A) represent the 95\% CIs of the method LinDA and the dashed horizontal line indicates the target FDR level of 0.05.}
	\label{fig-S4C0}
\end{figure}

\begin{figure}
	\begin{subfigure}[b]{1\textwidth}
		\centering
		\includegraphics[scale=0.5]{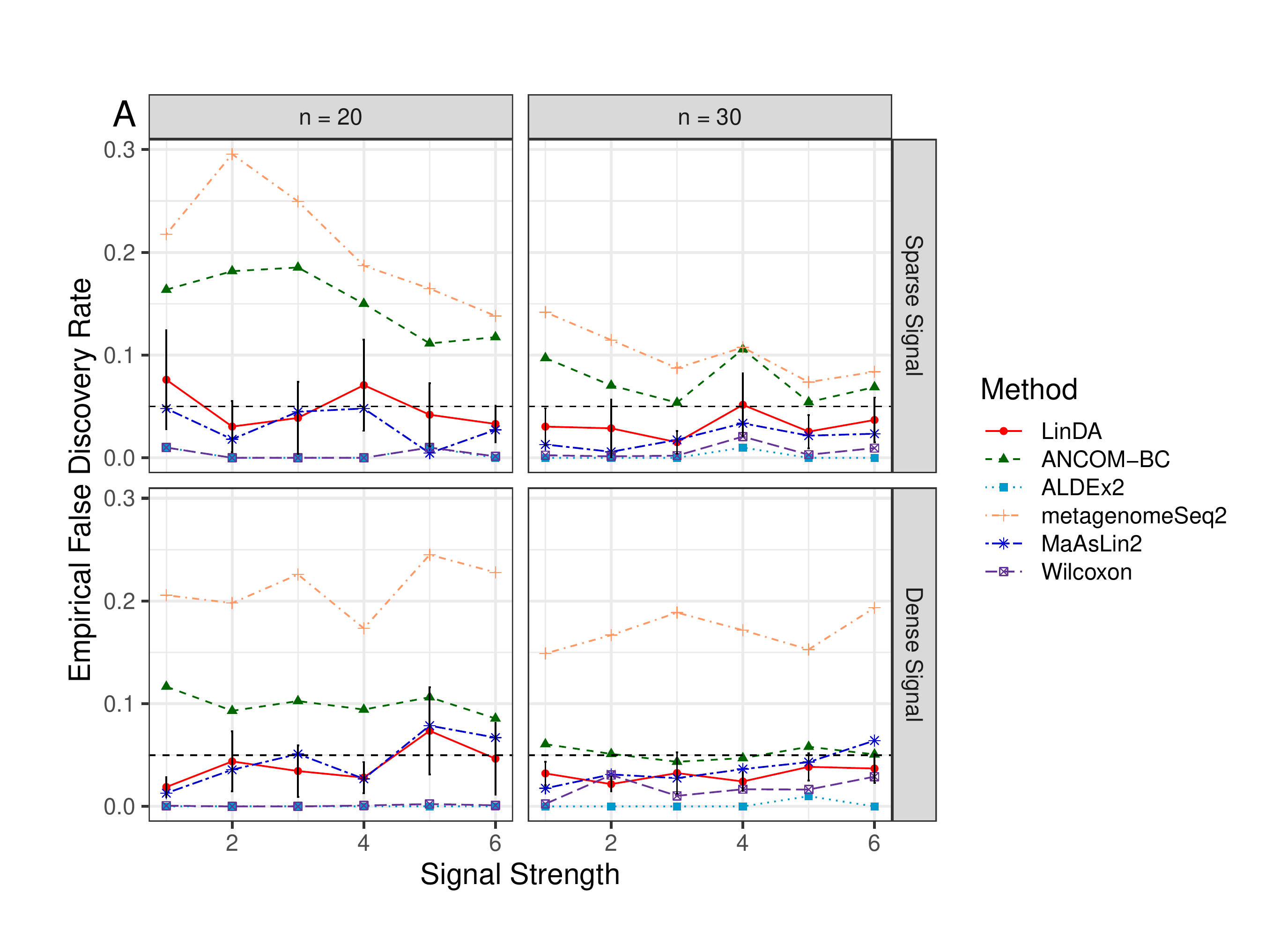}
	\end{subfigure}
	\begin{subfigure}[b]{1\textwidth}
		\centering
		\includegraphics[scale=0.5]{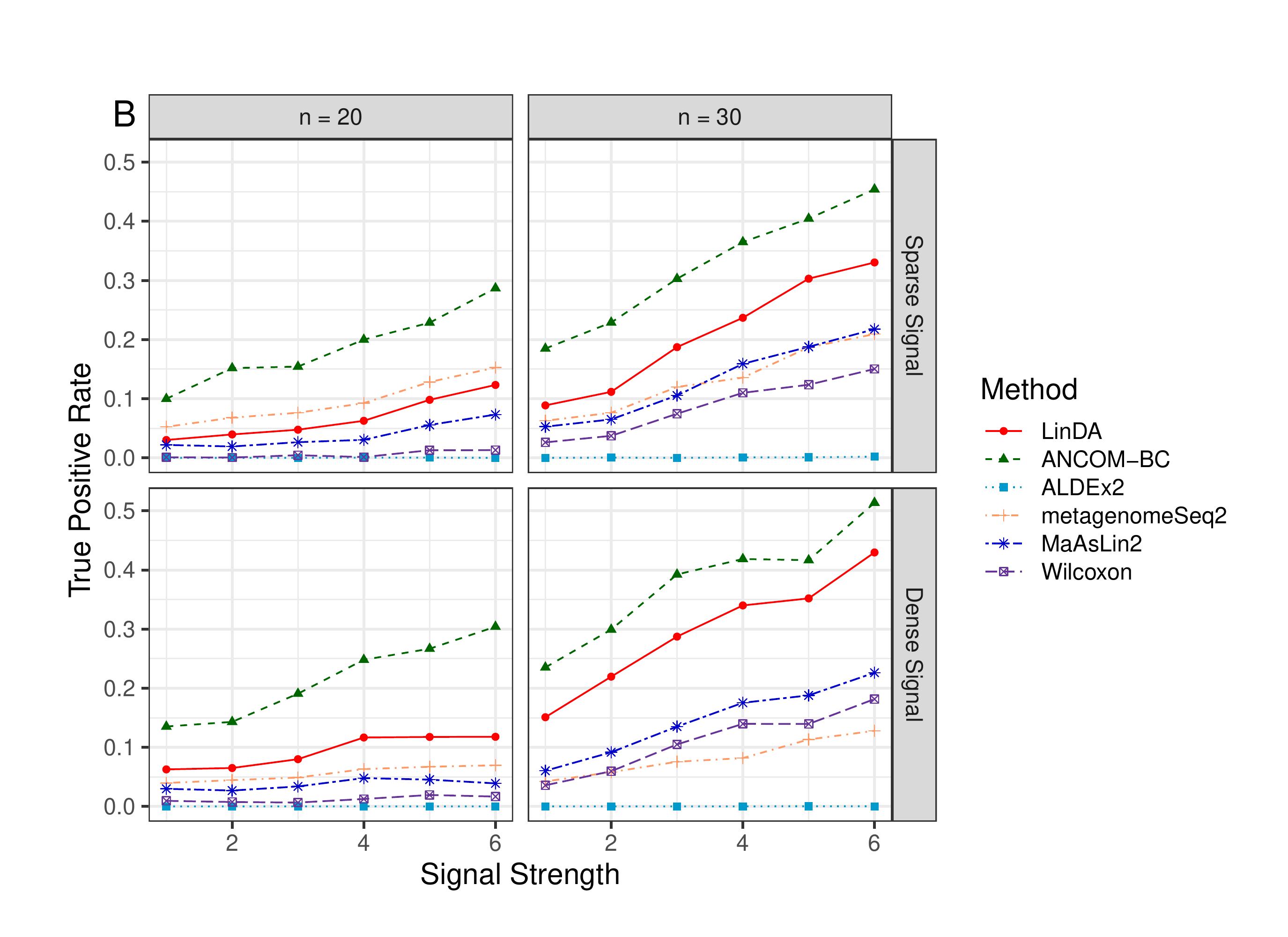}
	\end{subfigure}
	\caption{Performance comparison (S5C0: smaller $n$, a binary covariate). Empirical false discovery rate (A) and true positive rates (B) were averaged over 100 simulation runs. Error bars (A) represent the 95\% CIs of the method LinDA and the dashed horizontal line indicates the target FDR level of 0.05.}
	\label{fig-S5C0}
\end{figure}

\begin{figure}
	\begin{subfigure}[b]{1\textwidth}
		\centering
		\includegraphics[scale=0.5]{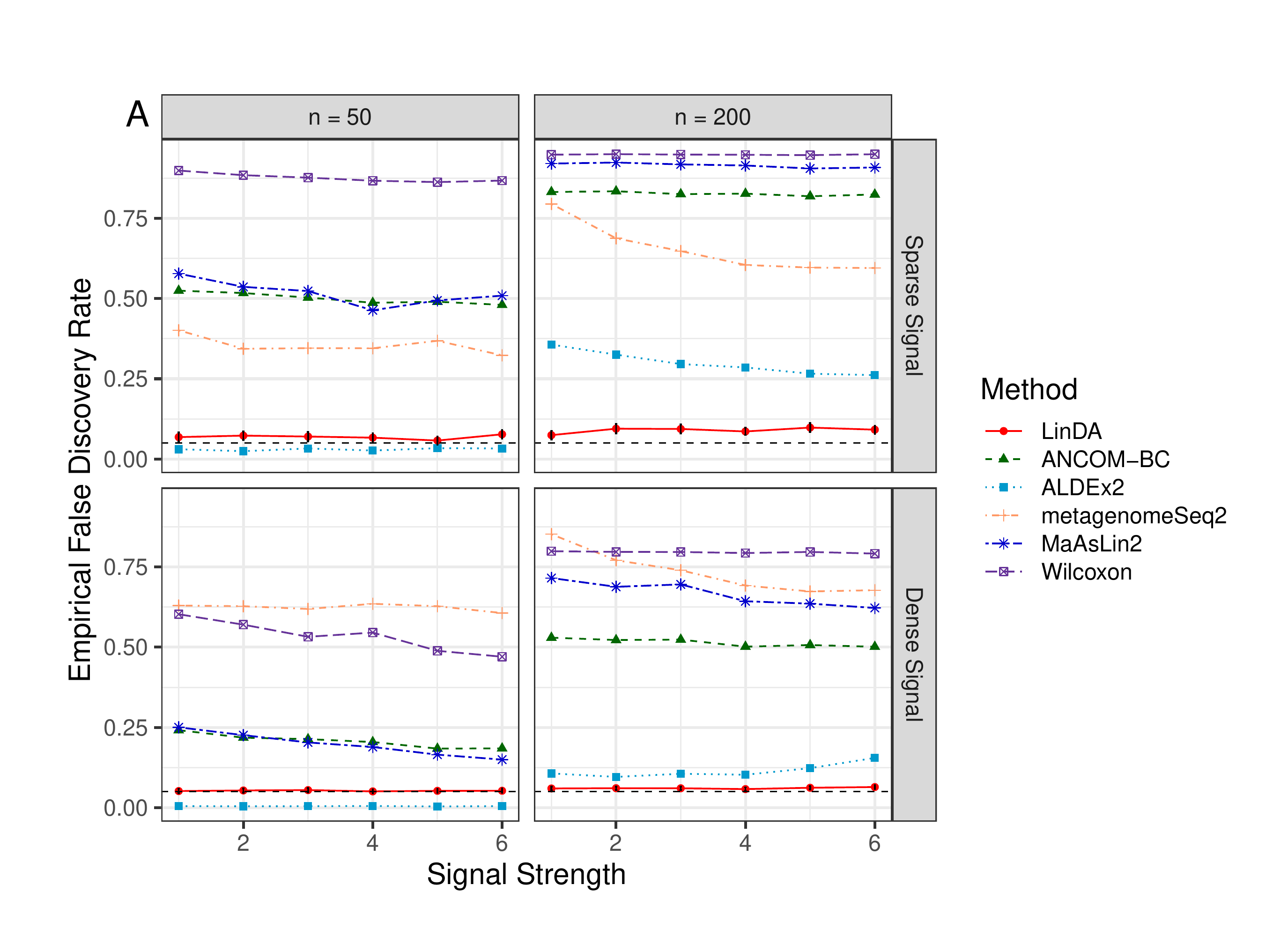}
	\end{subfigure}
	\begin{subfigure}[b]{1\textwidth}
		\centering
		\includegraphics[scale=0.5]{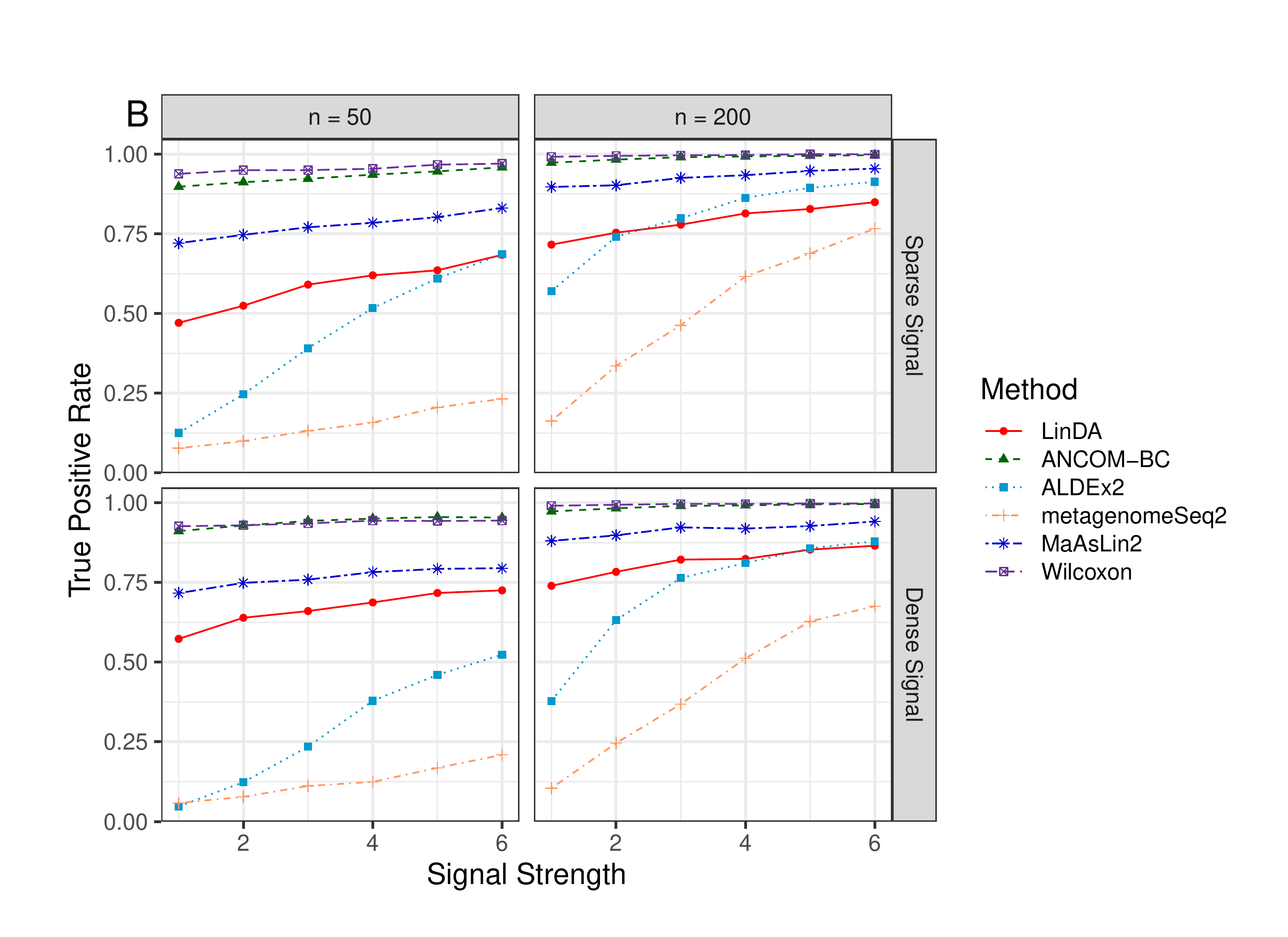}
	\end{subfigure}
	\caption{Performance comparison (S6C0: 10-fold difference in library size, a binary covariate). Empirical false discovery rate (A) and true positive rates (B) were averaged over 100 simulation runs. Error bars (A) represent the 95\% CIs of the method LinDA and the dashed horizontal line indicates the target FDR level of 0.05.}
	\label{fig-S6C0}
\end{figure}

\begin{figure}
	\begin{subfigure}[b]{1\textwidth}
		\centering
		\includegraphics[scale=0.5]{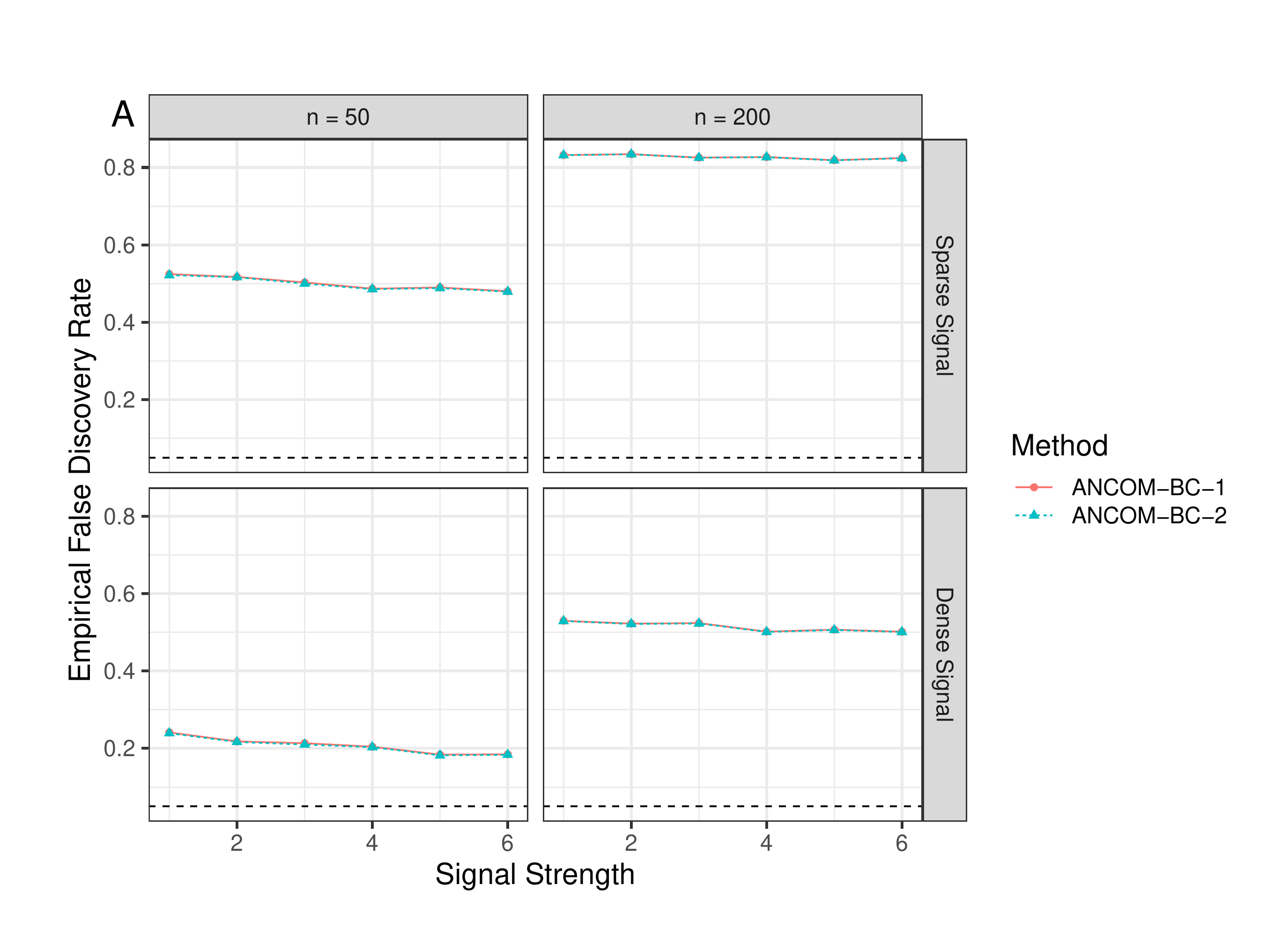}
	\end{subfigure}
	\begin{subfigure}[b]{1\textwidth}
		\centering
		\includegraphics[scale=0.5]{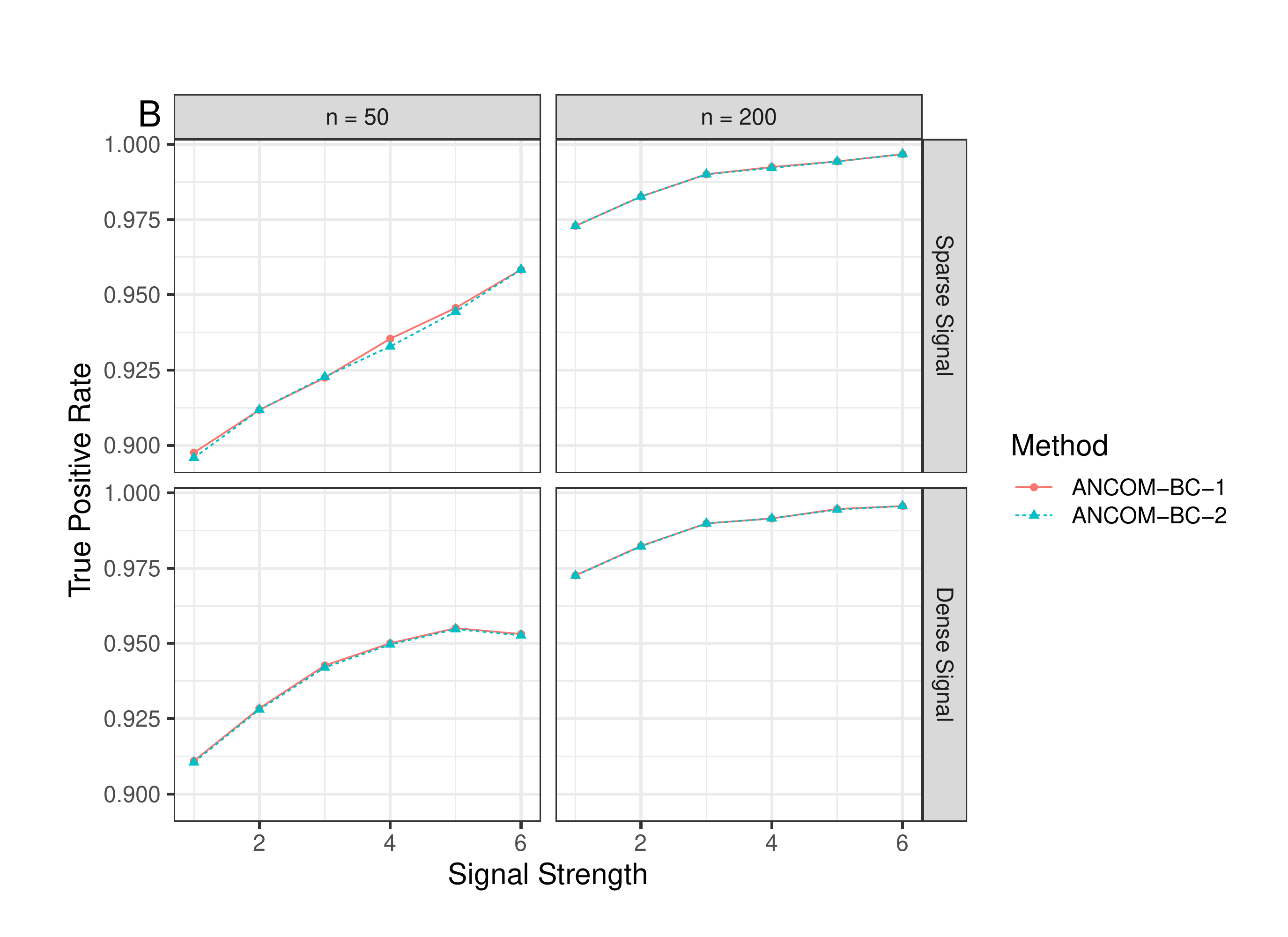}
	\end{subfigure}
	\caption{Performance of ANCOM-BC disabling (ANCOM-BC-1) and enabling (ANCOM-BC-2) zero treatment (S6C0: 10-fold difference in library size, a binary covariate). Empirical false discovery rate (A) and true positive rates (B) were averaged over 100 simulation runs. The dashed horizontal line (A) indicates the target FDR level of 0.05.}
	\label{fig-S6C0-ancombc}
\end{figure}

\begin{figure}
	\begin{subfigure}[b]{1\textwidth}
		\centering
		\includegraphics[scale=0.5]{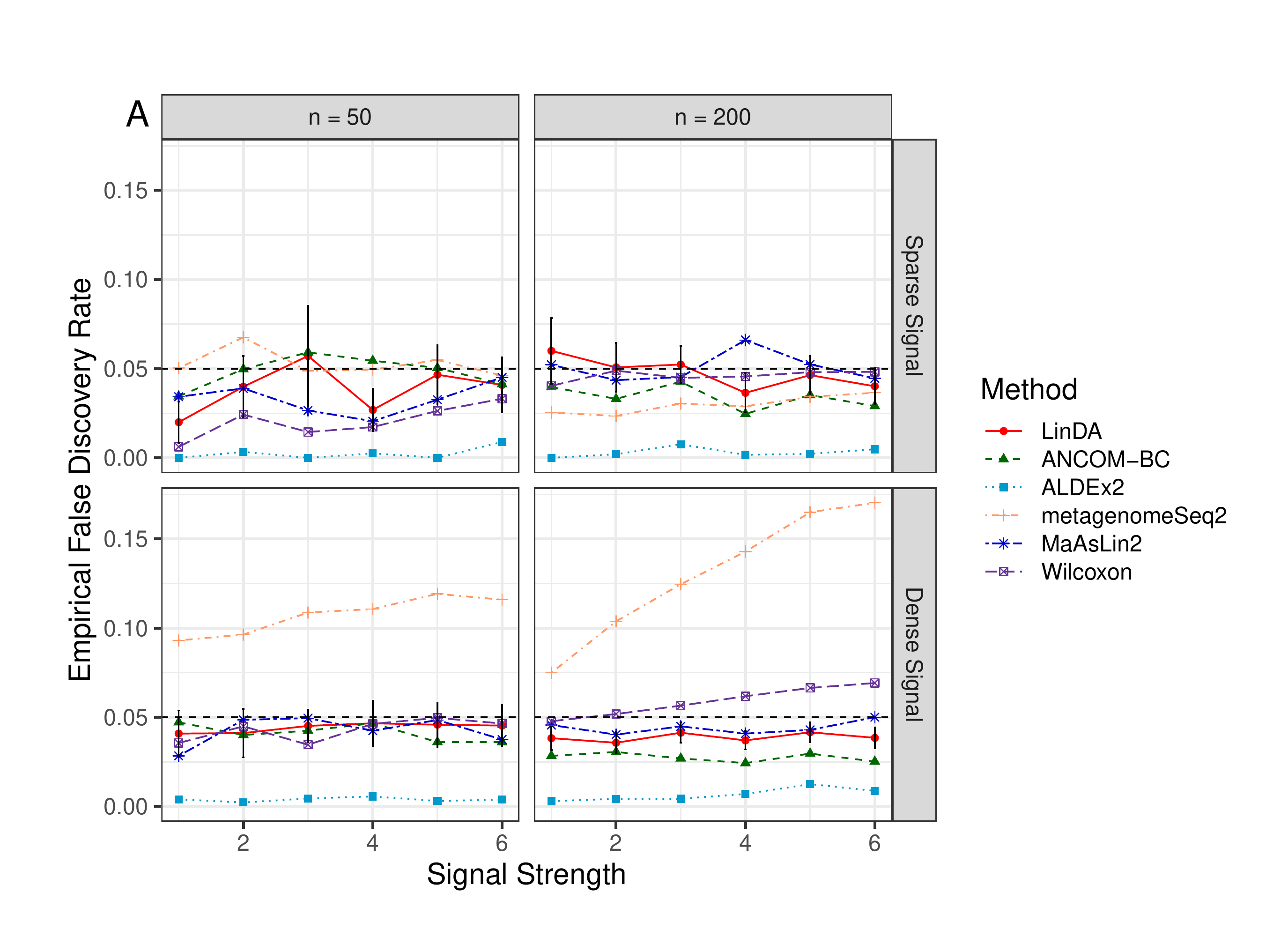}
	\end{subfigure}
	\begin{subfigure}[b]{1\textwidth}
		\centering
		\includegraphics[scale=0.5]{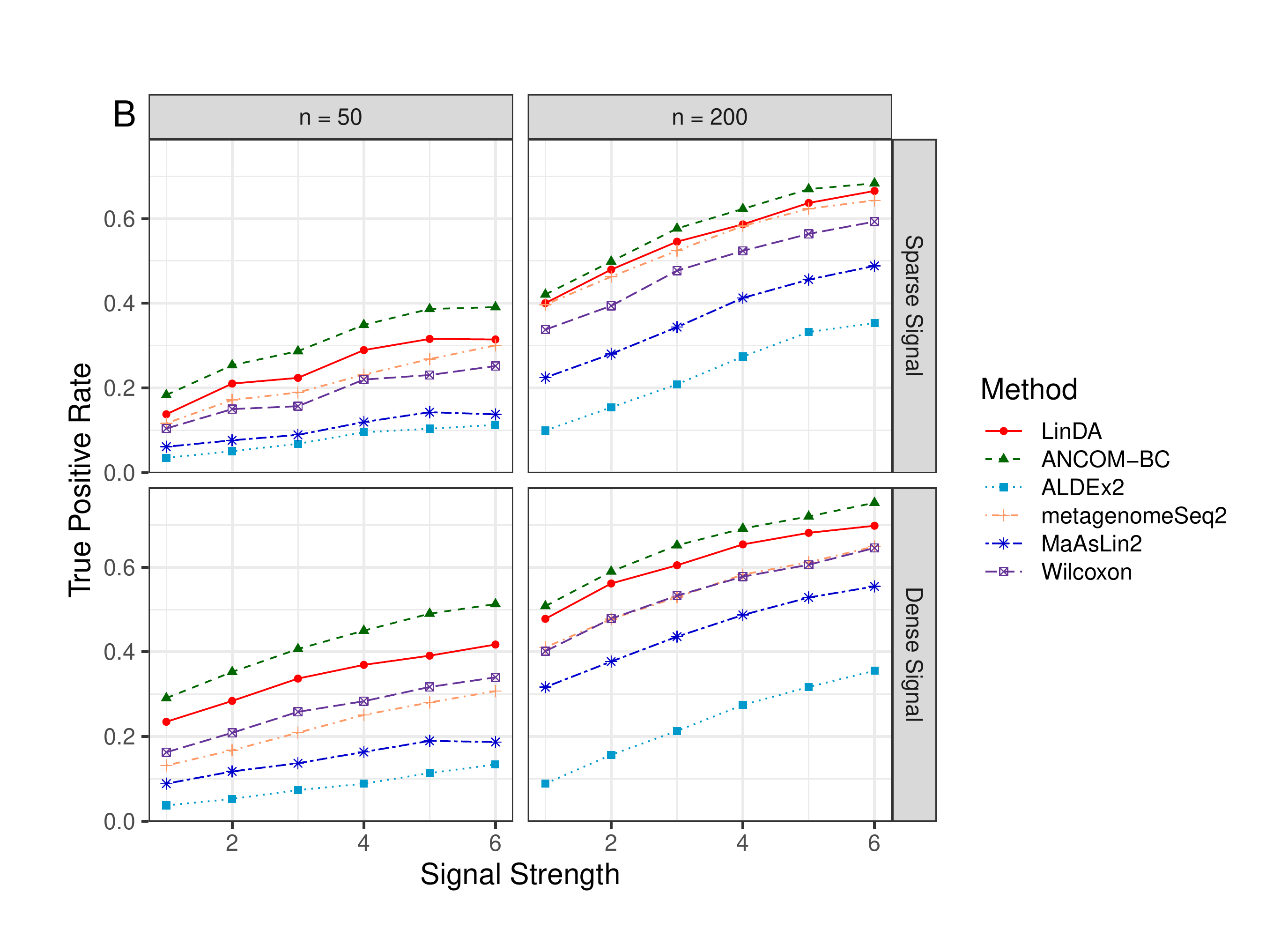}
	\end{subfigure}
	\caption{Performance comparison (S7C0: negative binomial abundance distribution, a binary covariate). Empirical false discovery rate (A) and true positive rates (B) were averaged over 100 simulation runs. Error bars (A) represent the 95\% CIs of the method LinDA and the dashed horizontal line indicates the target FDR level of 0.05.}
	\label{fig-S7C0}
\end{figure}

\begin{figure}
	\begin{subfigure}[b]{1\textwidth}
		\centering
		\includegraphics[scale=0.5]{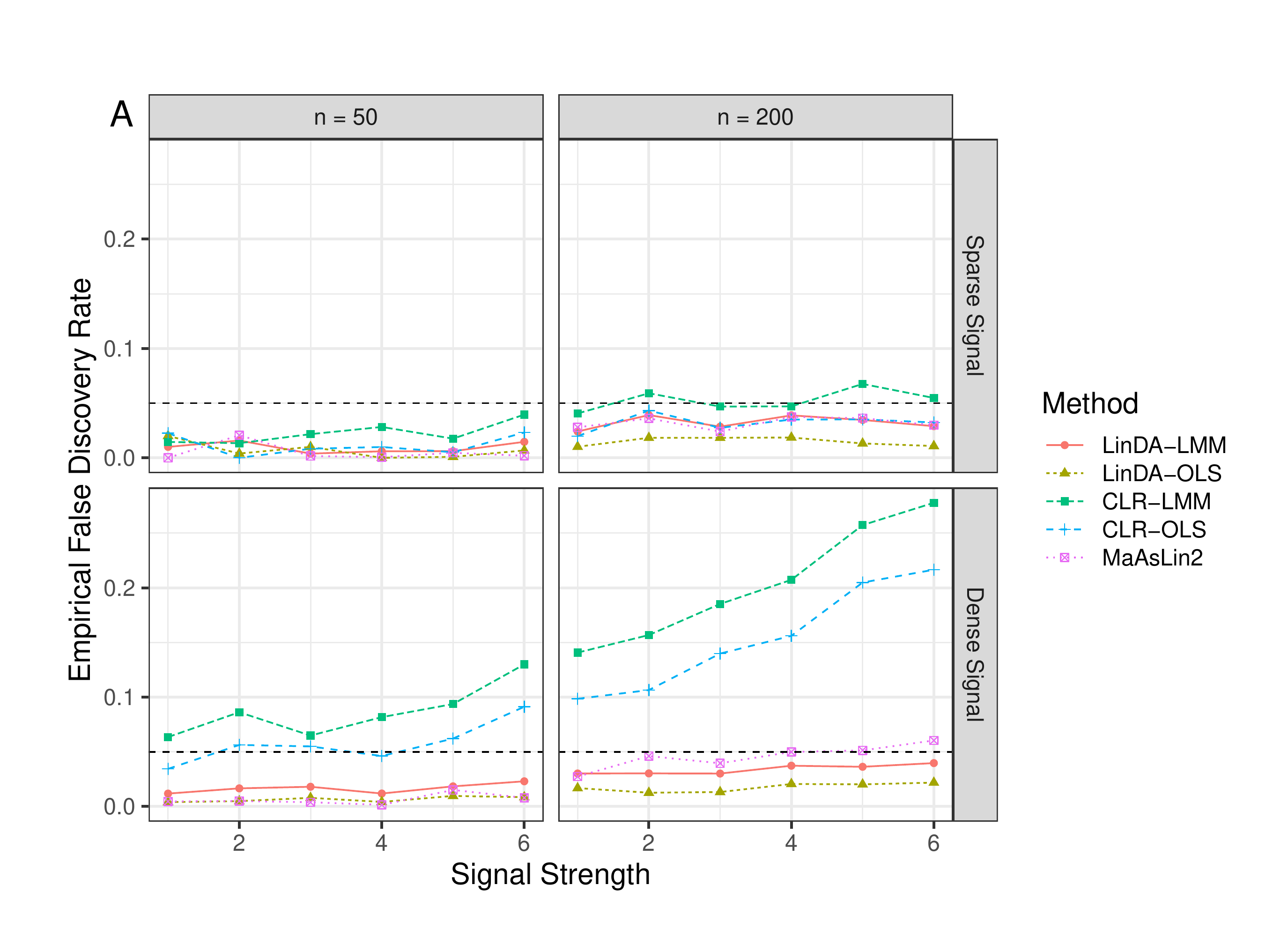}
	\end{subfigure}
	\begin{subfigure}[b]{1\textwidth}
		\centering
		\includegraphics[scale=0.5]{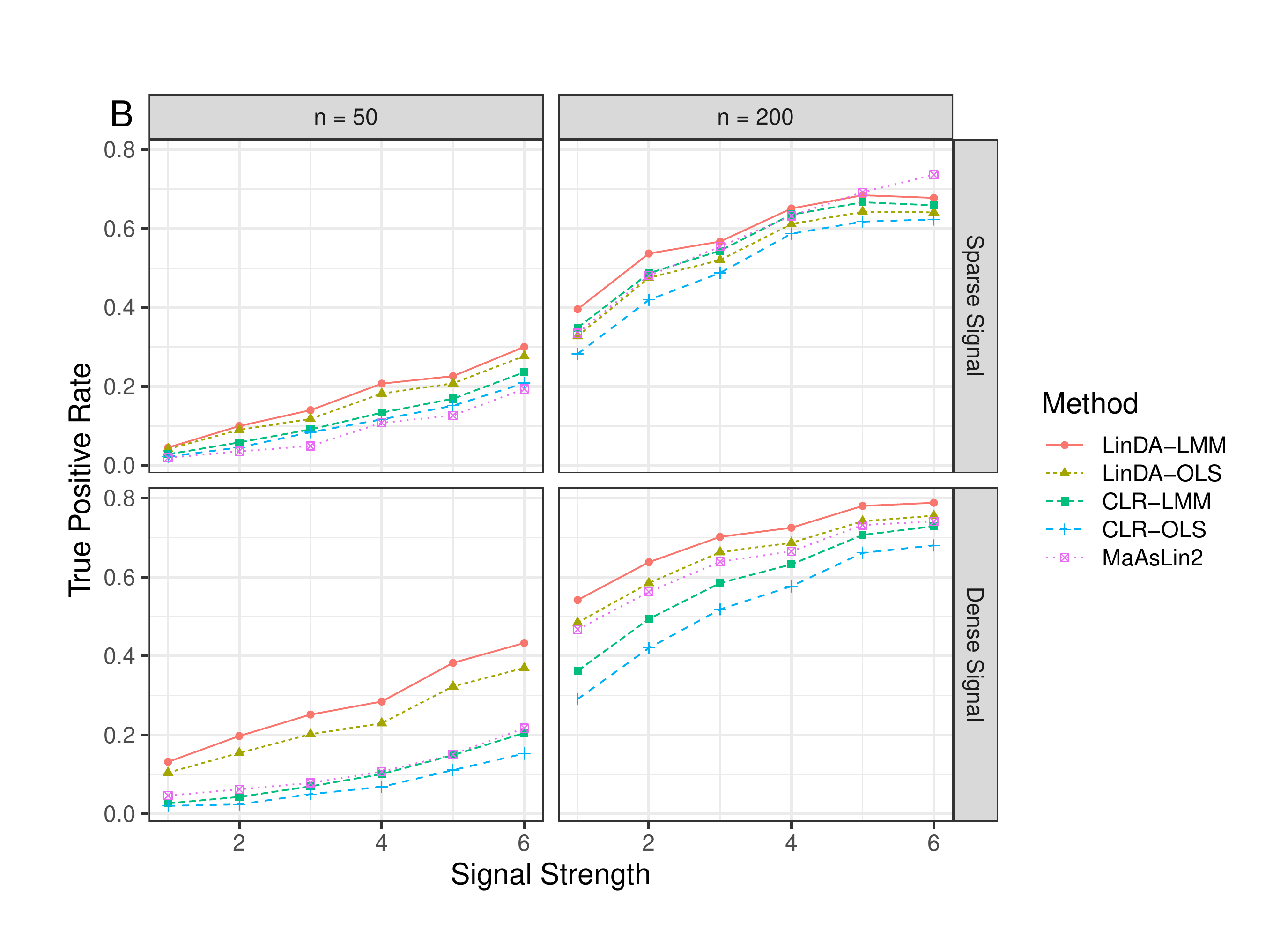}
	\end{subfigure}
	\caption{Performance comparison (S8.1C0: pre-treatment and post-treatment comparison, a binary covariate). Empirical false discovery rate (A) and true positive rates (B) were averaged over 100 simulation runs. The dashed horizontal line (A) indicates the target FDR level of 0.05.}
	\label{fig-S81C0}
\end{figure}

\begin{figure}
	\begin{subfigure}[b]{1\textwidth}
		\centering
		\includegraphics[scale=0.5]{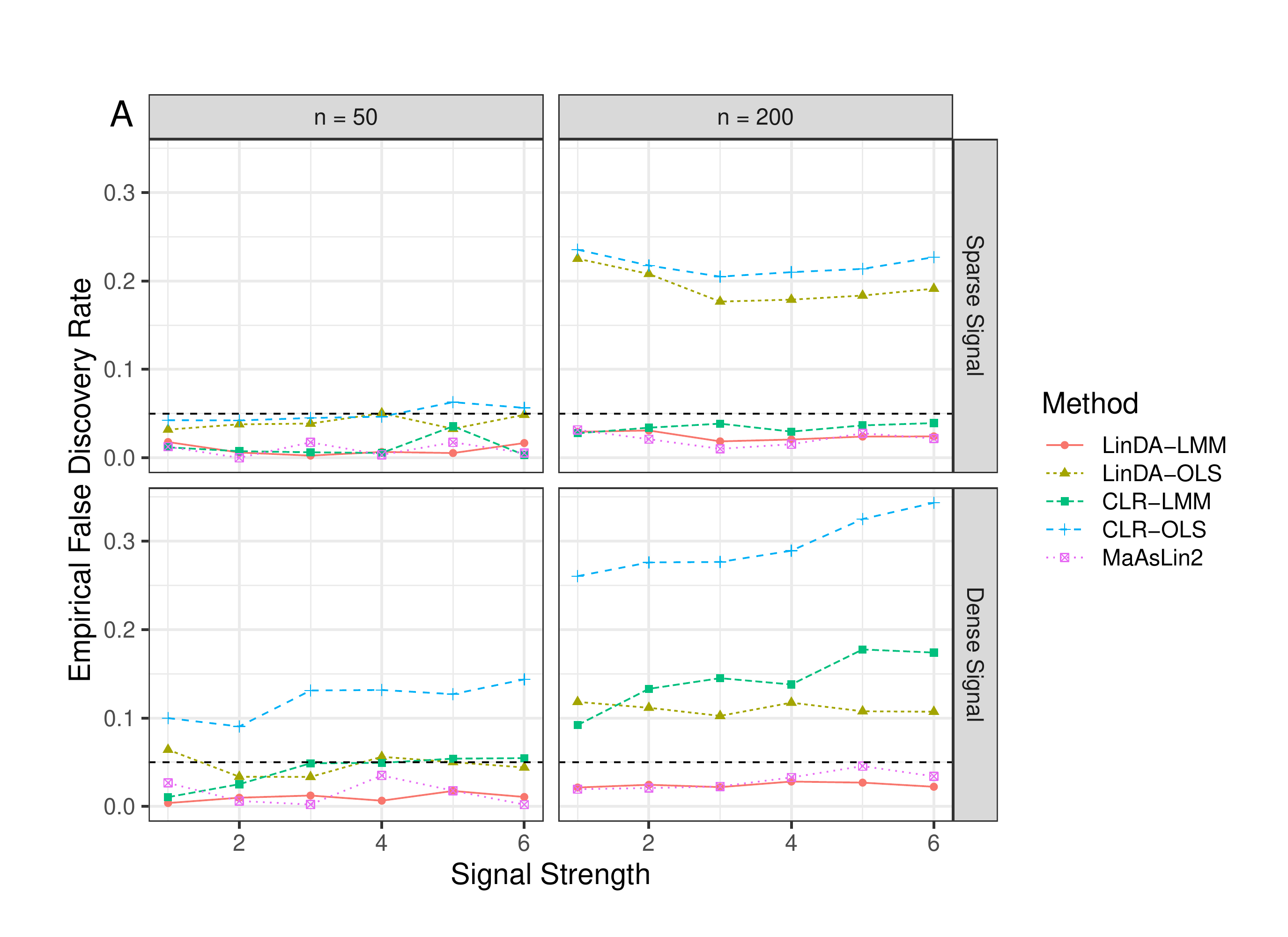}
	\end{subfigure}
	\begin{subfigure}[b]{1\textwidth}
		\centering
		\includegraphics[scale=0.5]{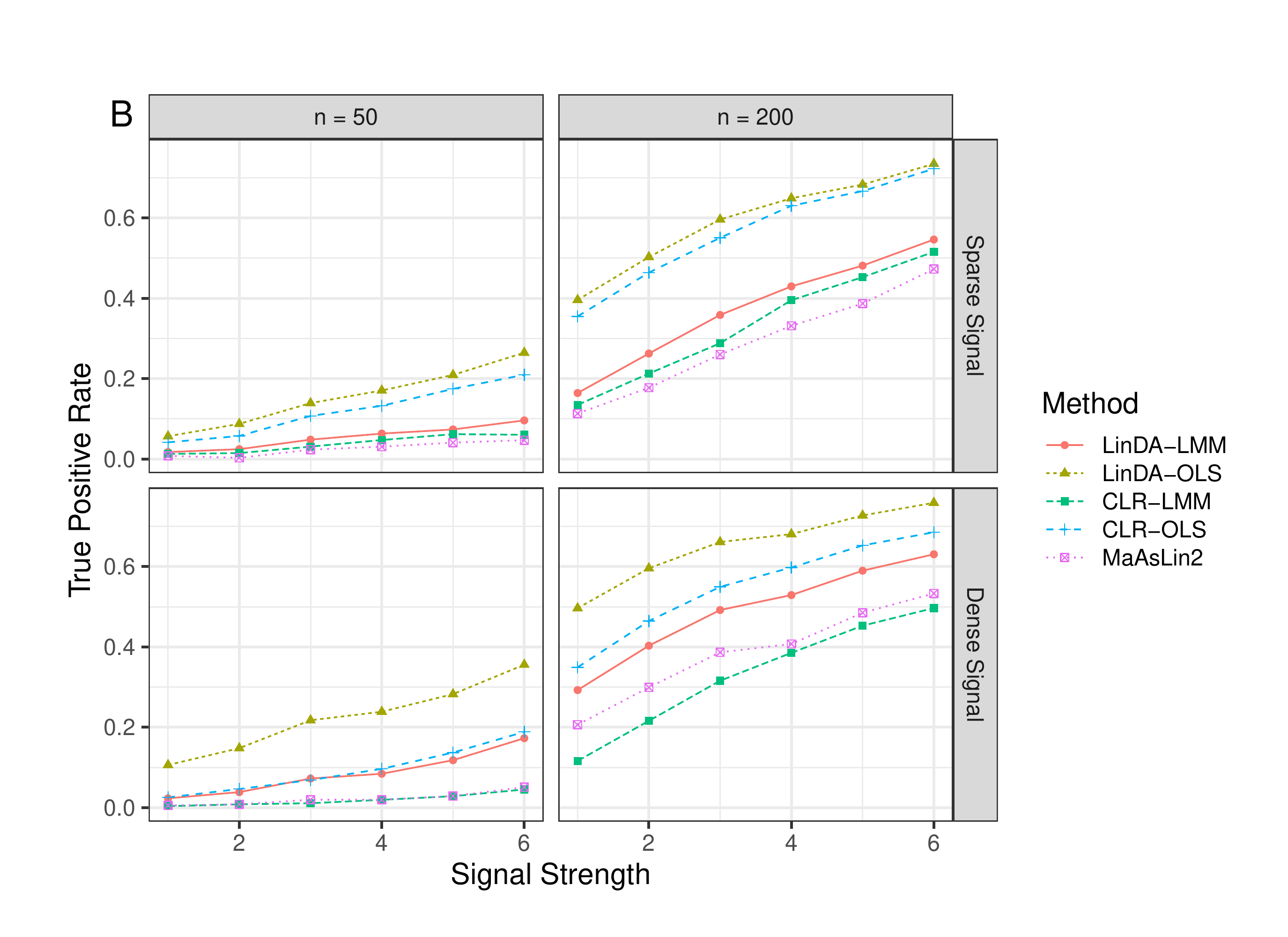}
	\end{subfigure}
	\caption{Performance comparison (S8.2C0: replicate sampling, a binary covariate). Empirical false discovery rate (A) and true positive rates (B) were averaged over 100 simulation runs. The dashed horizontal line (A) indicates the target FDR level of 0.05.}
	\label{fig-S82C0}
\end{figure}

\begin{figure}
	\begin{subfigure}[b]{1\textwidth}
		\centering
		\includegraphics[scale=0.5]{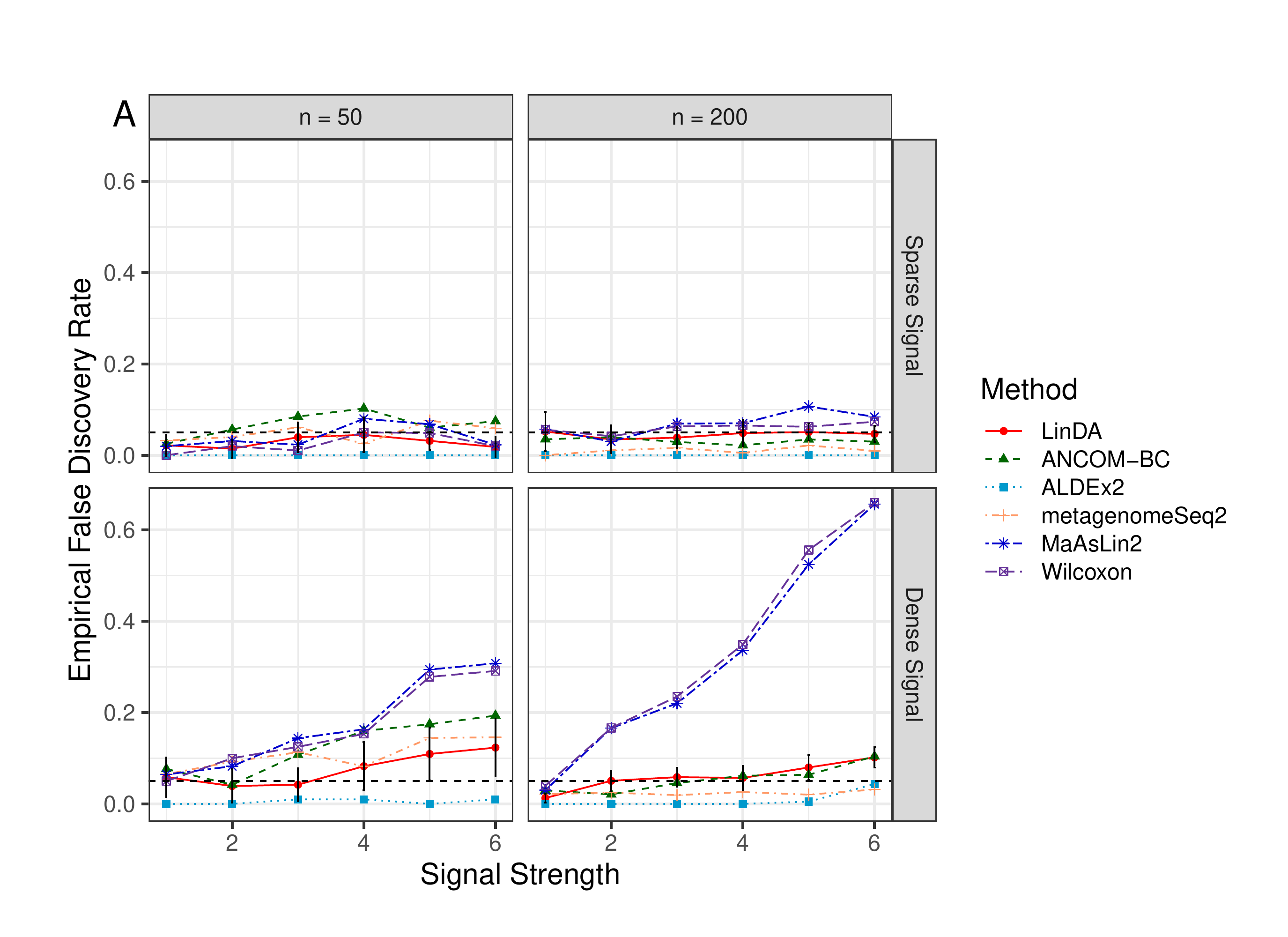}
	\end{subfigure}
	\begin{subfigure}[b]{1\textwidth}
		\centering
		\includegraphics[scale=0.5]{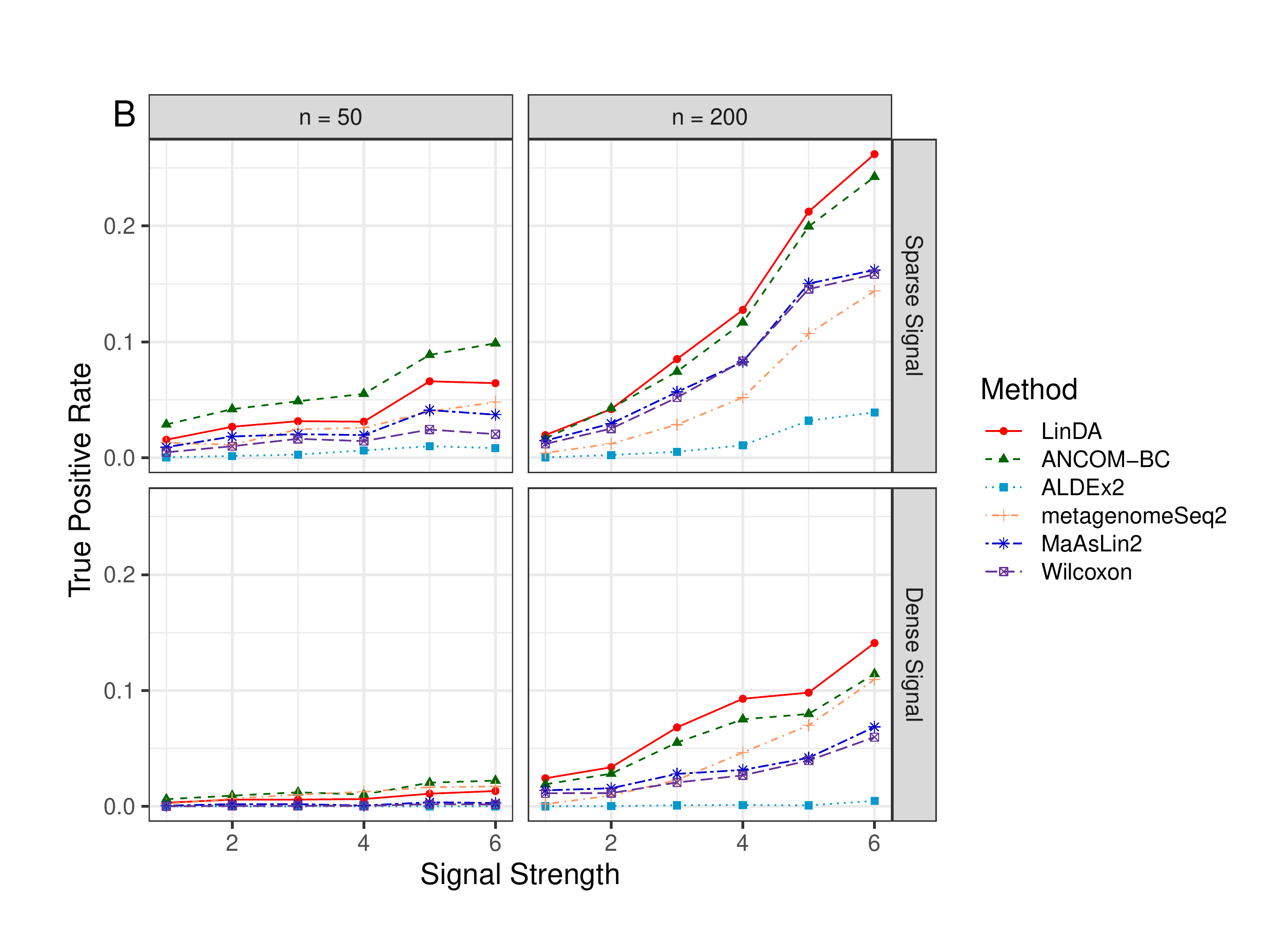}
	\end{subfigure}
	\caption{Performance comparison (S0C0 with strong compositional effects). Empirical false discovery rate (A) and true positive rates (B) were averaged over 100 simulation runs. Error bars (A) represent the 95\% CIs of the method LinDA and the dashed horizontal line indicates the target FDR level of 0.05.}
	\label{fig-S0C0-strong}
\end{figure}

\section{Additional results of real data applications}\label{sec-supp-real}
Fig. \ref{fig-real-CDI}--\ref{fig-real-SMOKE} show the effect size plots and volcano plots for the four datasets (CDI, IBD, RA, and SMOKE) respectively.

\begin{figure}
	\begin{subfigure}[b]{1\textwidth}
		\subcaption*{A}
		\centering
		\includegraphics[scale=0.45]{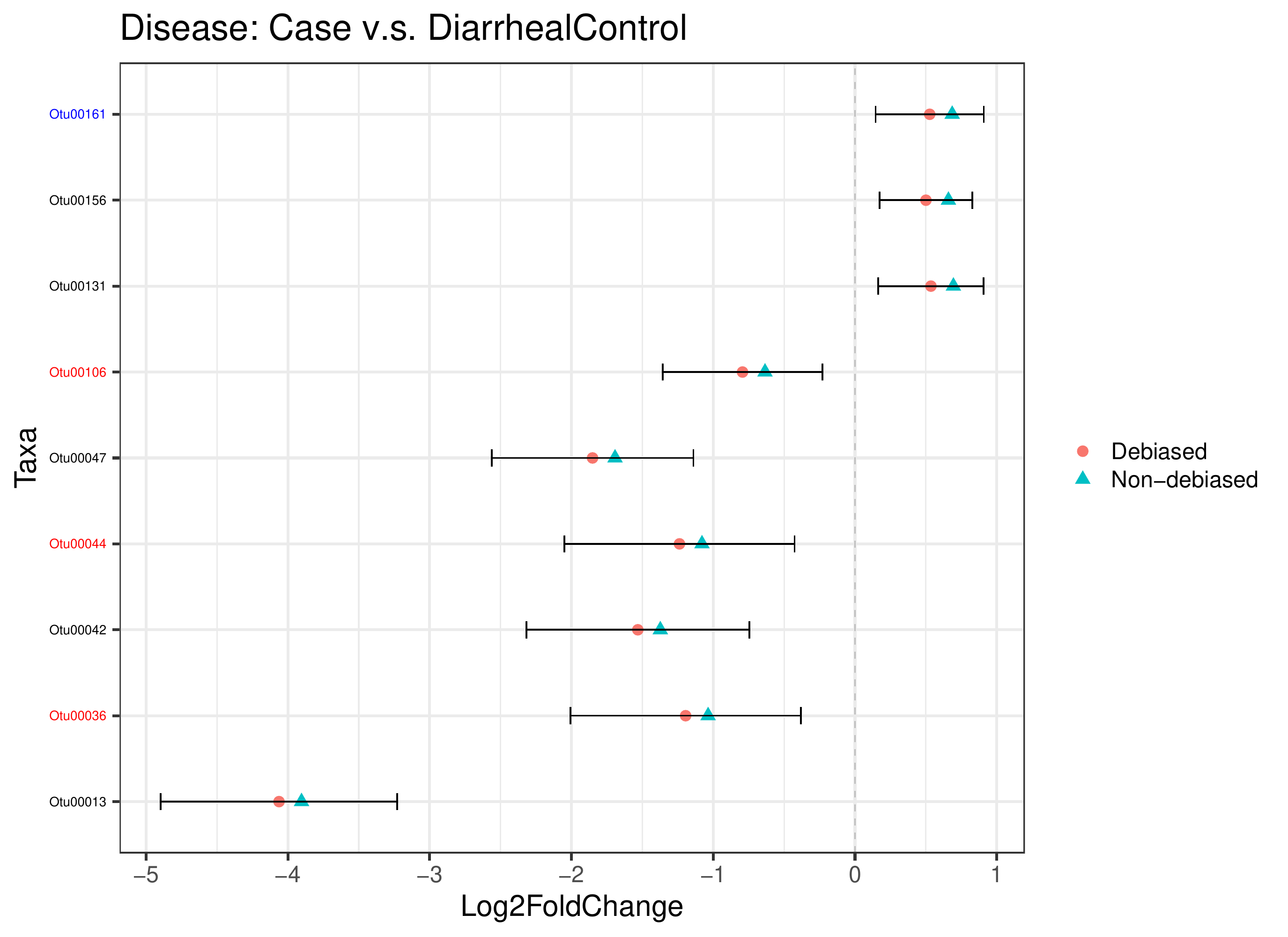}
	\end{subfigure}
	\begin{subfigure}[b]{1\textwidth}
		\subcaption*{B}
		\centering
		\includegraphics[scale=0.45]{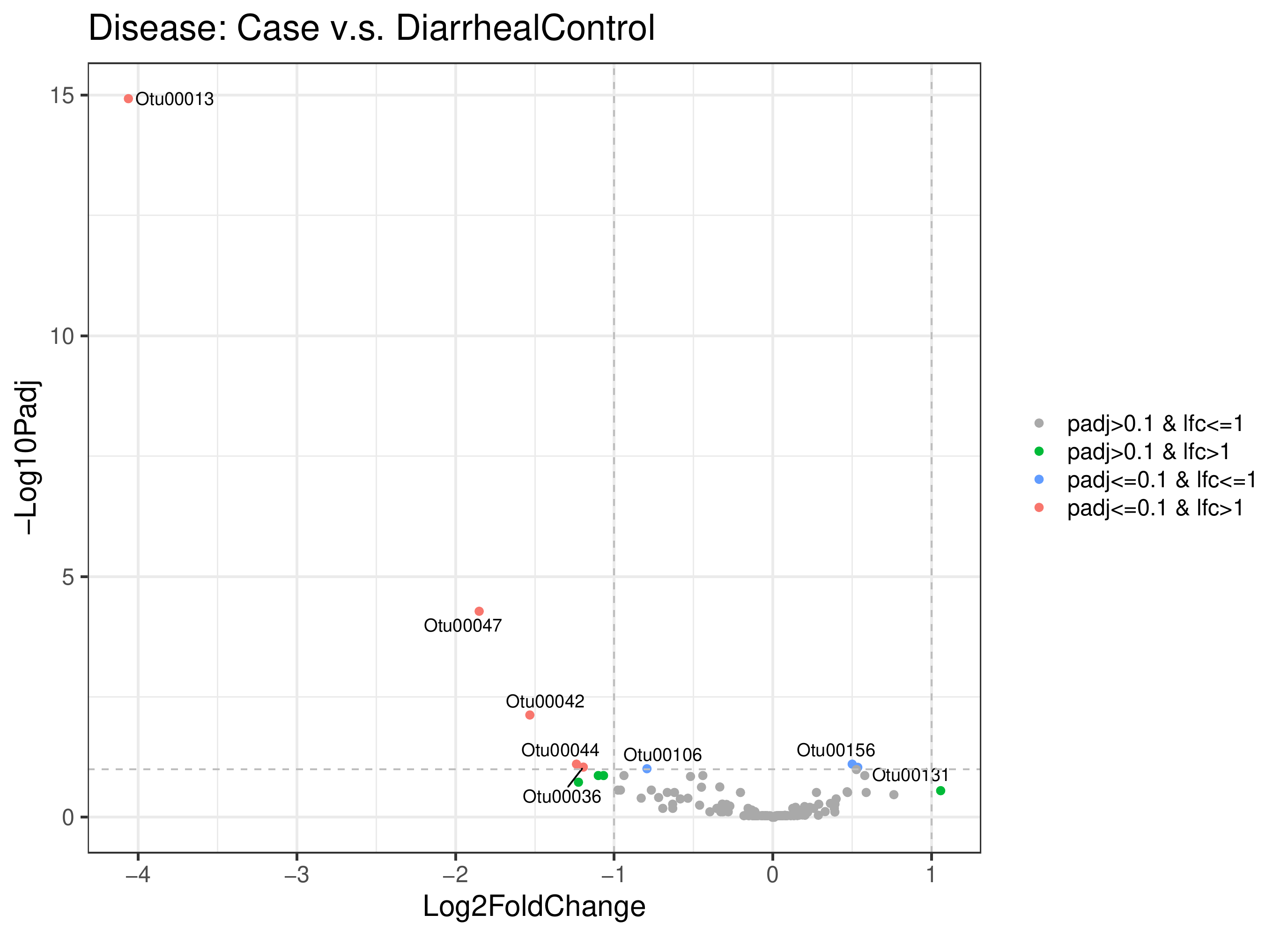}
	\end{subfigure}
	\caption{Effect size plot (A) of differential taxa at FDR level of 0.1 and volcano plot (B) for the CDI dataset. The ``Debiased" points represent the bias-corrected regression coefficients, and ``Non-debiased" points represent the original (biased) regression coefficients. The error bars represent the 95\% CIs of the ``Debiased" points. The taxa in black are detected by LinDA, taxa in red are detected solely by LinDA, and the taxa in blue are missed by LinDA but detected by one or more of the other methods (A).}
	\label{fig-real-CDI}
\end{figure}

\begin{figure}
	\begin{subfigure}[b]{1\textwidth}
		\subcaption*{A}
		\centering
		\includegraphics[scale=0.45]{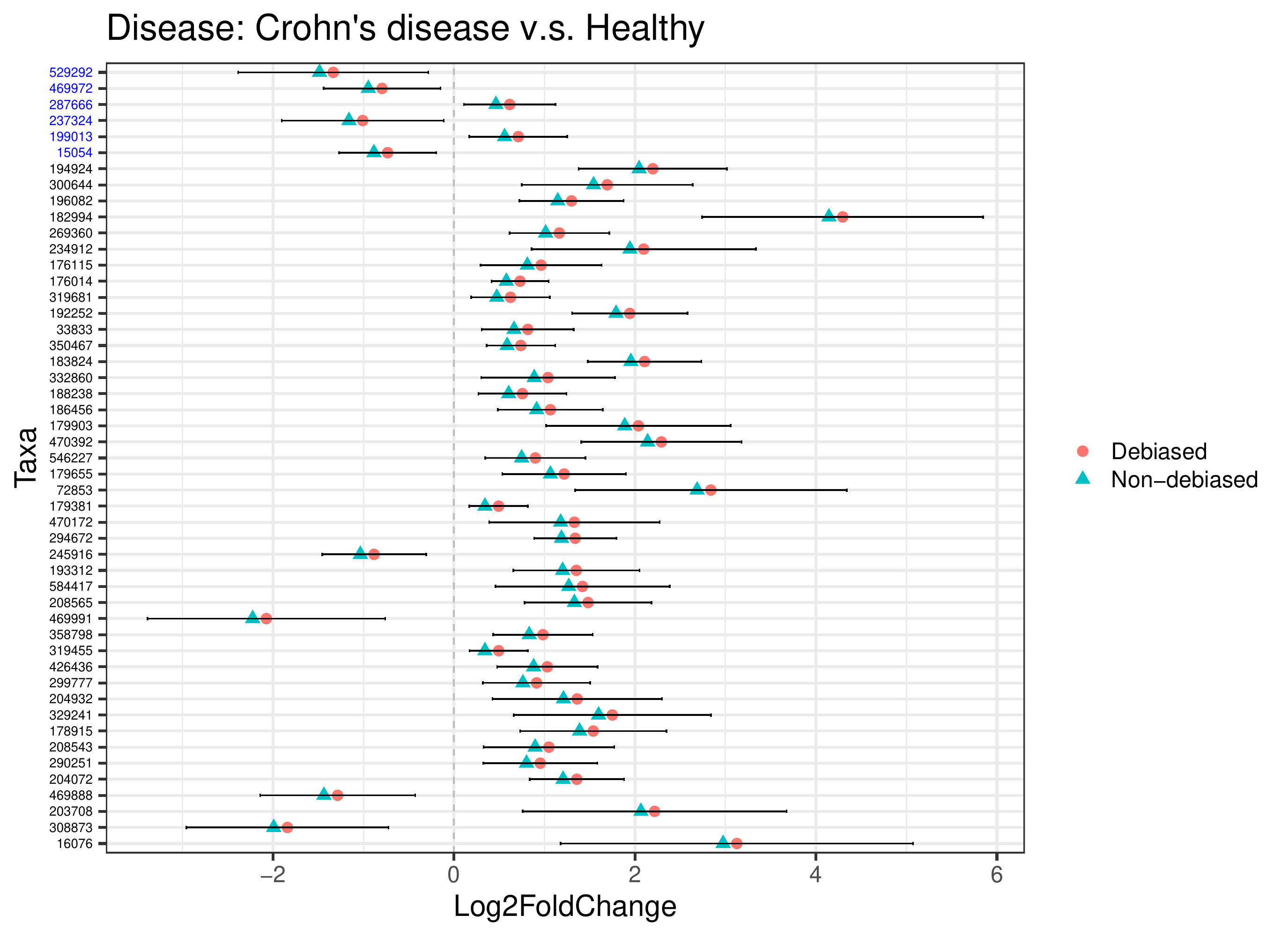}
	\end{subfigure}
	\begin{subfigure}[b]{1\textwidth}
		\subcaption*{B}
		\centering
		\includegraphics[scale=0.45]{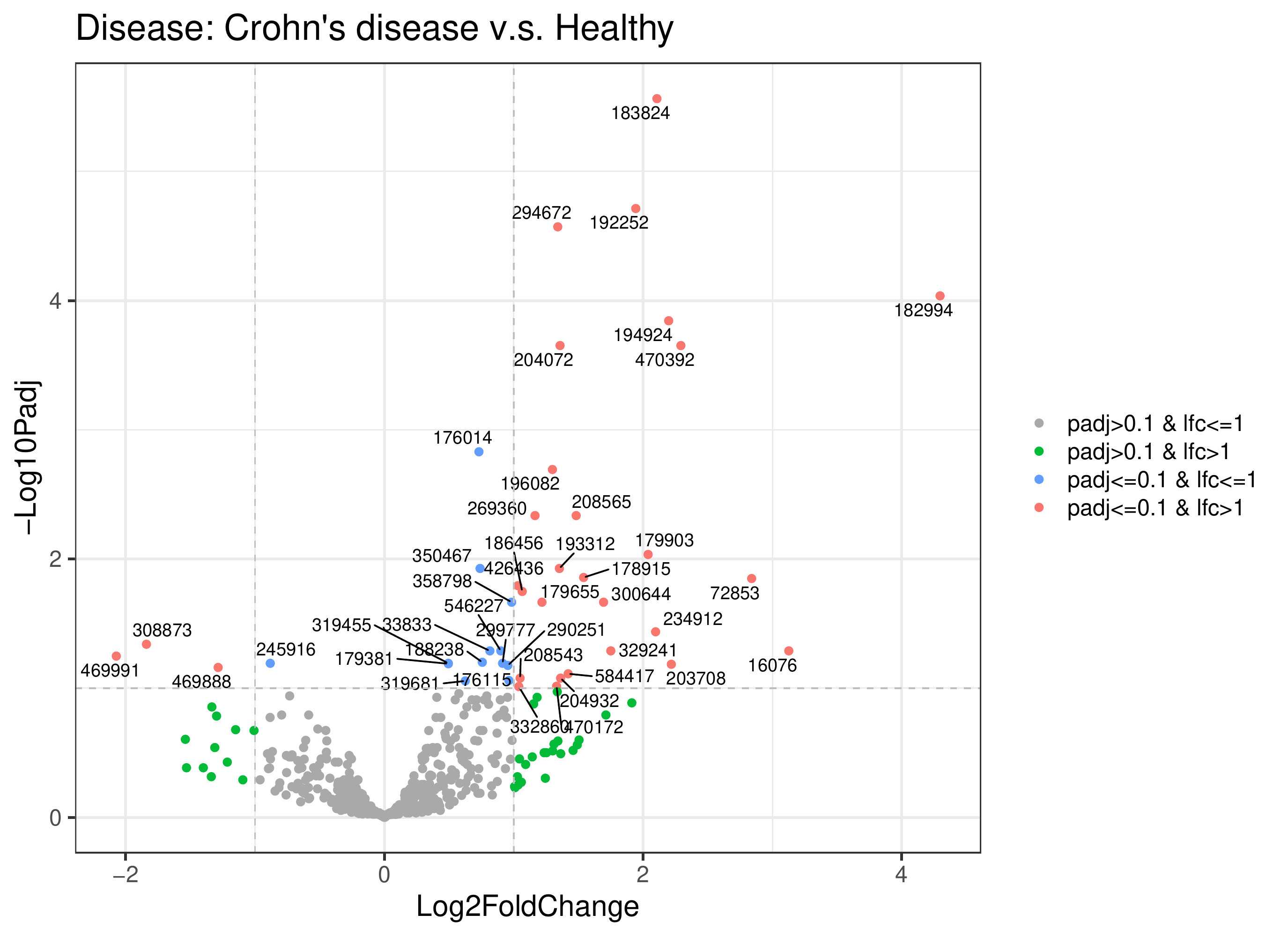}
	\end{subfigure}
	\caption{Effect size plot (A) of differential taxa at FDR level of 0.1 and volcano plot (B) for the IBD dataset. The ``Debiased" points represent the bias-corrected regression coefficients, and ``Non-debiased" points represent the original (biased) regression coefficients. The error bars represent the 95\% CIs of the ``Debiased" points. The taxa in black are detected by LinDA, taxa in red are detected solely by LinDA, and taxa in blue are missed by LinDA but detected by two or more of the other methods (A). }
	\label{fig-real-IBD}
\end{figure}

\begin{figure}
	\begin{subfigure}[b]{1\textwidth}
		\subcaption*{A}
		\centering
		\includegraphics[scale=0.45]{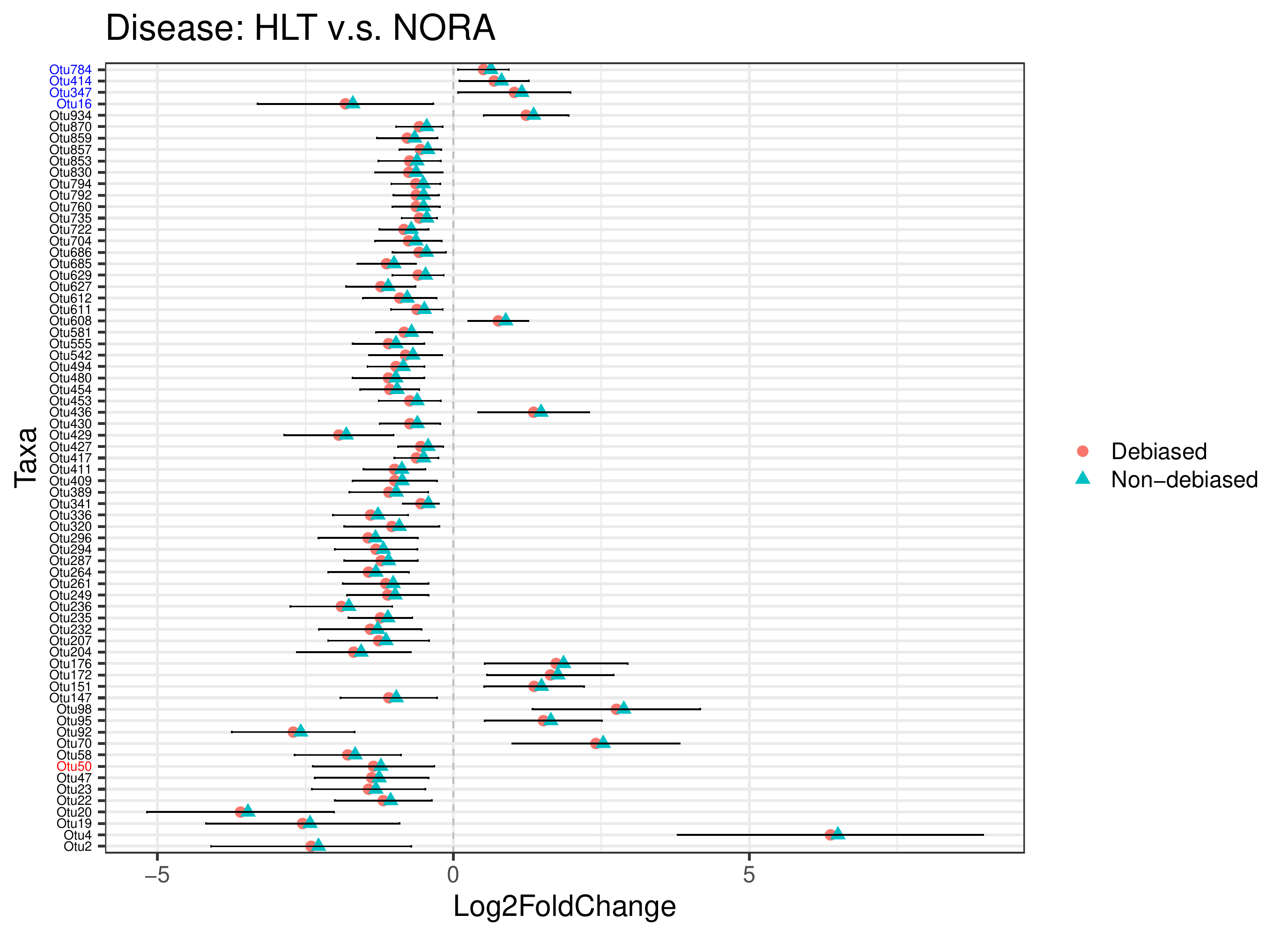}
	\end{subfigure}
	\begin{subfigure}[b]{1\textwidth}
		\subcaption*{B}
		\centering
		\includegraphics[scale=0.45]{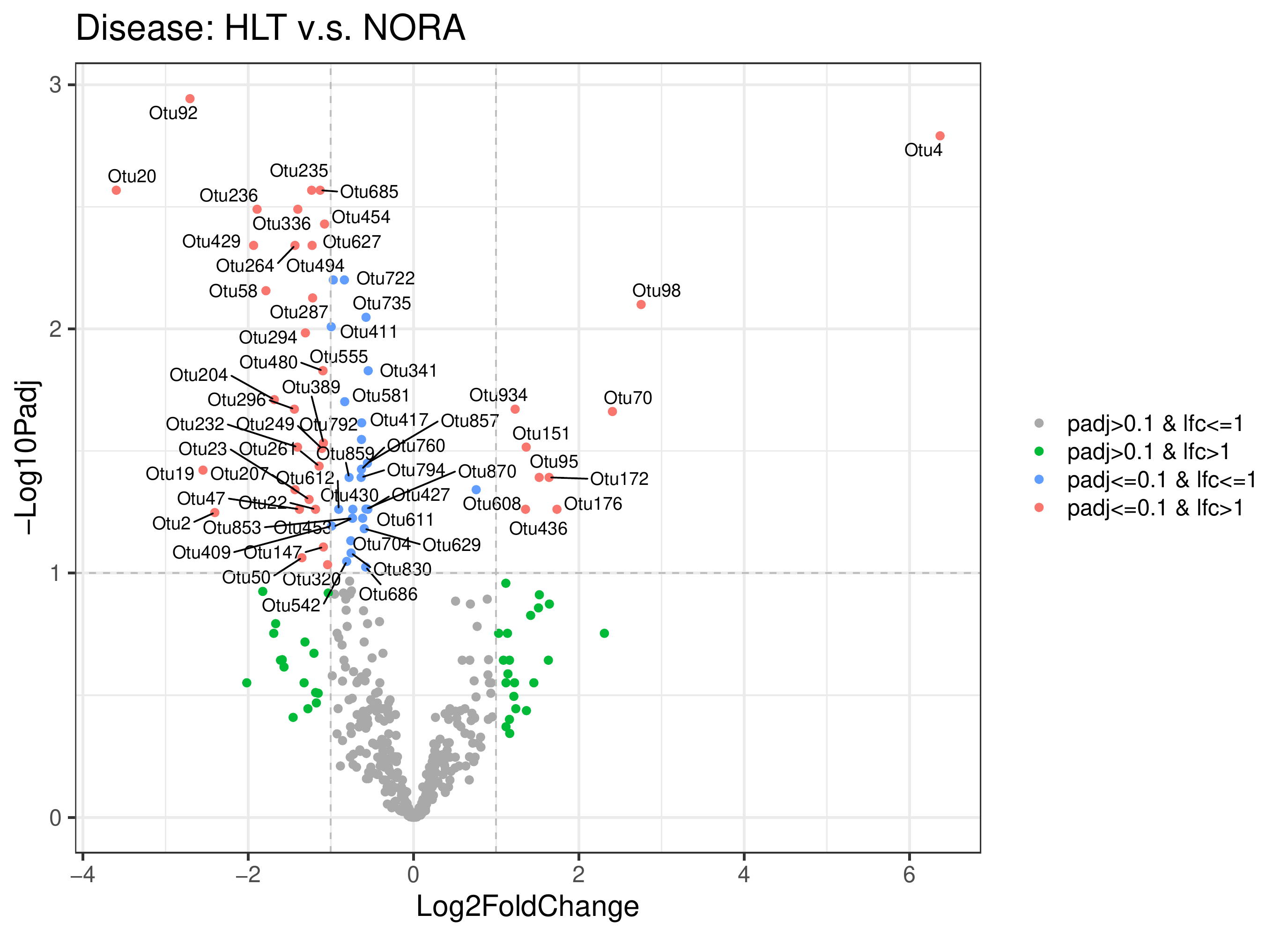}
	\end{subfigure}
	\caption{Effect size plot (A) of differential taxa at FDR level of 0.1 and volcano plot (B) for the RA dataset. The 
		``Debiased" points represent the bias-corrected regression coefficients, and ``Non-debiased" points represent the original (biased) regression coefficients. The error bars represent the 95\% CIs of the ``Debiased" points. The taxa in black are detected by LinDA, taxa in red are detected solely by LinDA, and taxa in blue are missed by LinDA but detected by two or more of the other methods (A).}
	\label{fig-real-RA}
\end{figure}

\begin{figure}
	\begin{subfigure}[b]{1\textwidth}
		\subcaption*{A}
		\centering
		\includegraphics[scale=0.45]{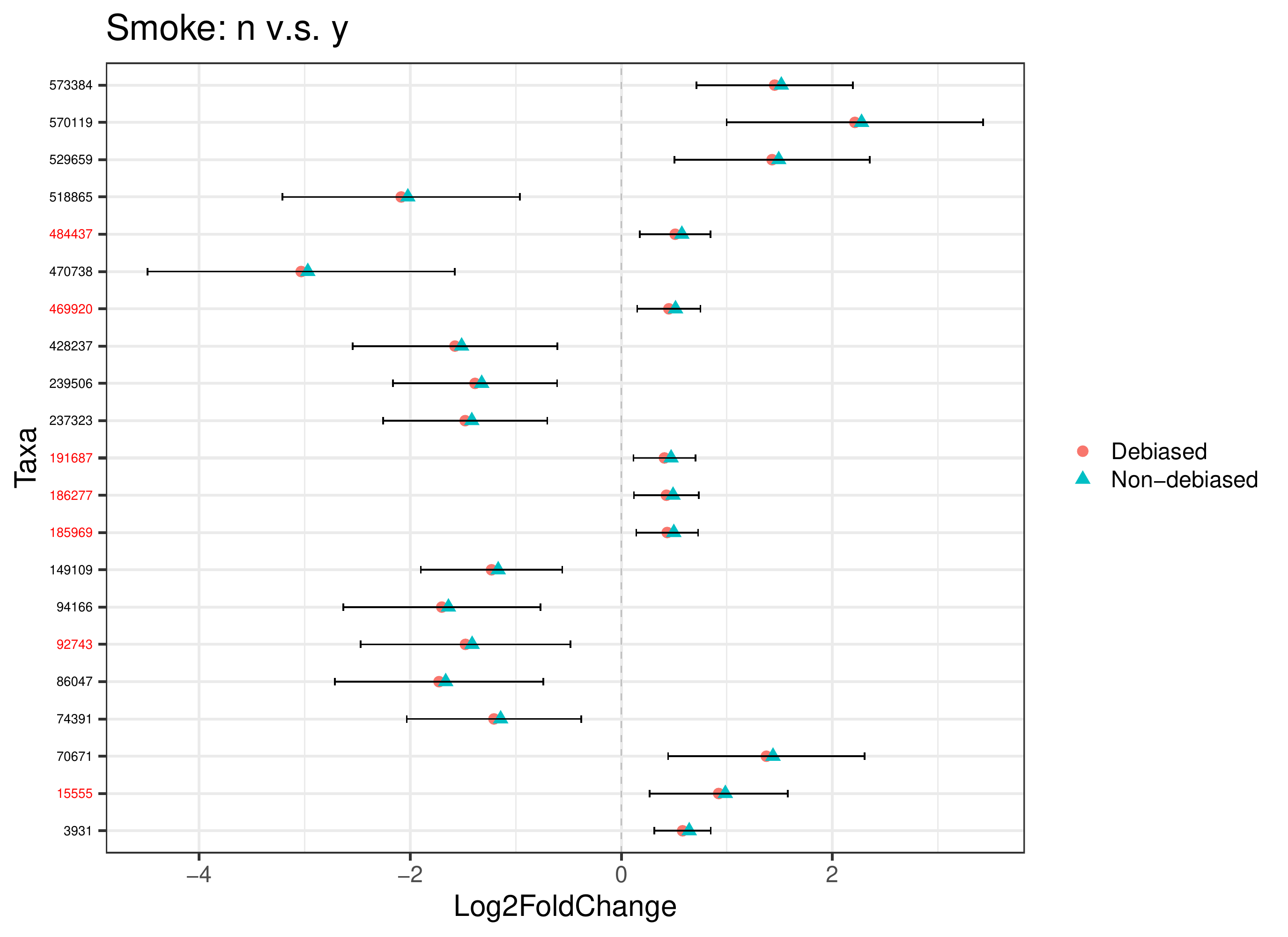}
	\end{subfigure}
	\begin{subfigure}[b]{1\textwidth}
		\subcaption*{B}
		\centering
		\includegraphics[scale=0.45]{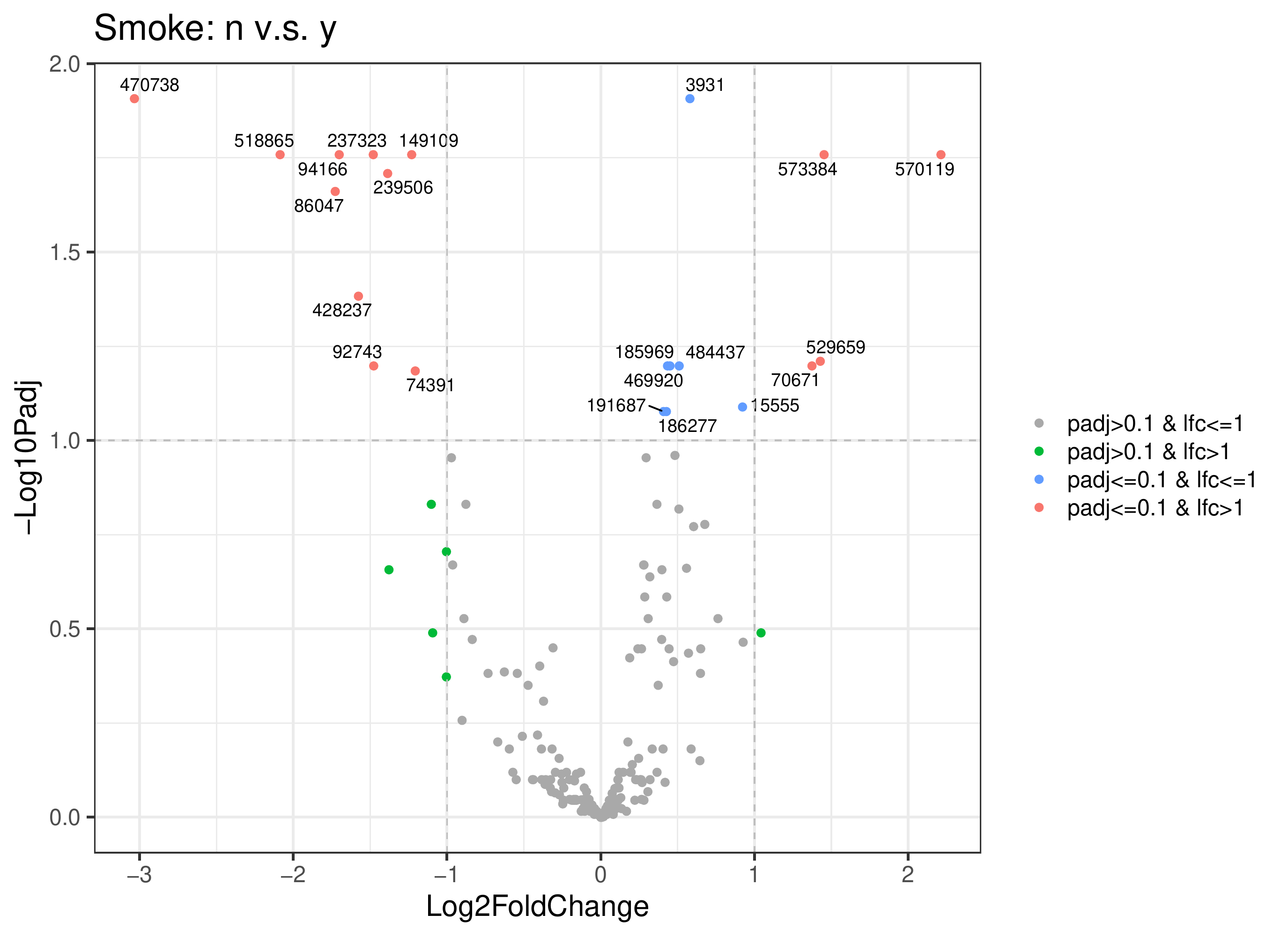}
	\end{subfigure}
	\caption{Effect size plot (A) of differential taxa detected by LinDA at FDR level of 0.1 and volcano plot (B) for the SMOKE dataset. The ``Debiased" points represent the bias-corrected regression coefficients, and ``Non-debiased" points represent the original (biased) regression coefficients. The error bars represent the 95\% CIs of the ``Debiased" points. The taxa in black are detected by LinDA, taxa in red are detected by LinDA but missed by MaAsLin2, and no taxa are detected by MaAsLin2 but missed by LinDA (A).}
	\label{fig-real-SMOKE}
\end{figure}

\section{Full comparisons of numerical studies}\label{sec-supp-simu-full}
Fig. \ref{fig-S0C0All}--\ref{fig-S0C0StrongAll} present the full result of all methods under different simulation settings.

\begin{figure}
	\begin{subfigure}[b]{1\textwidth}
		\centering
		\includegraphics[scale=0.5]{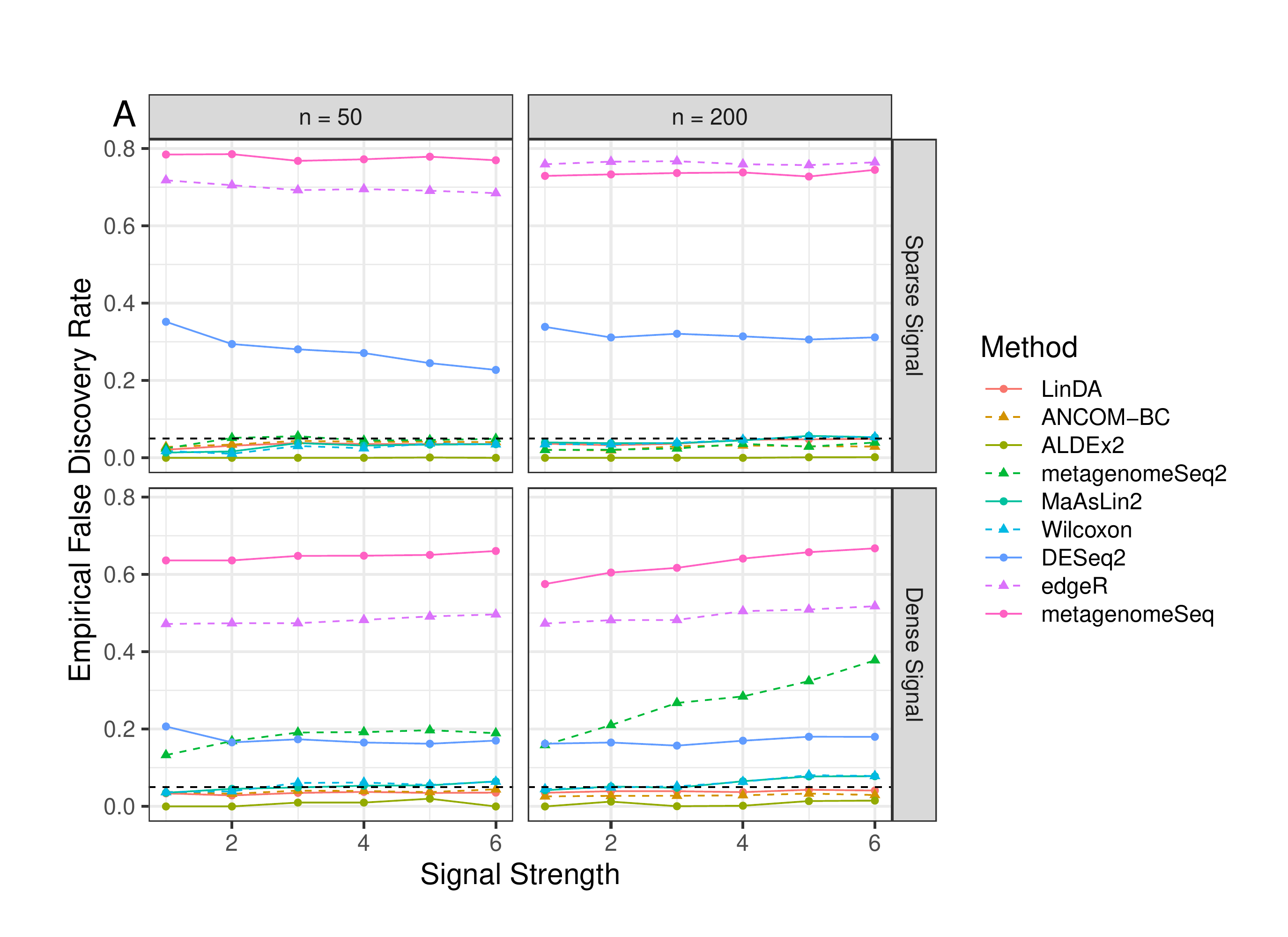}
	\end{subfigure}
	\begin{subfigure}[b]{1\textwidth}
		\centering
		\includegraphics[scale=0.5]{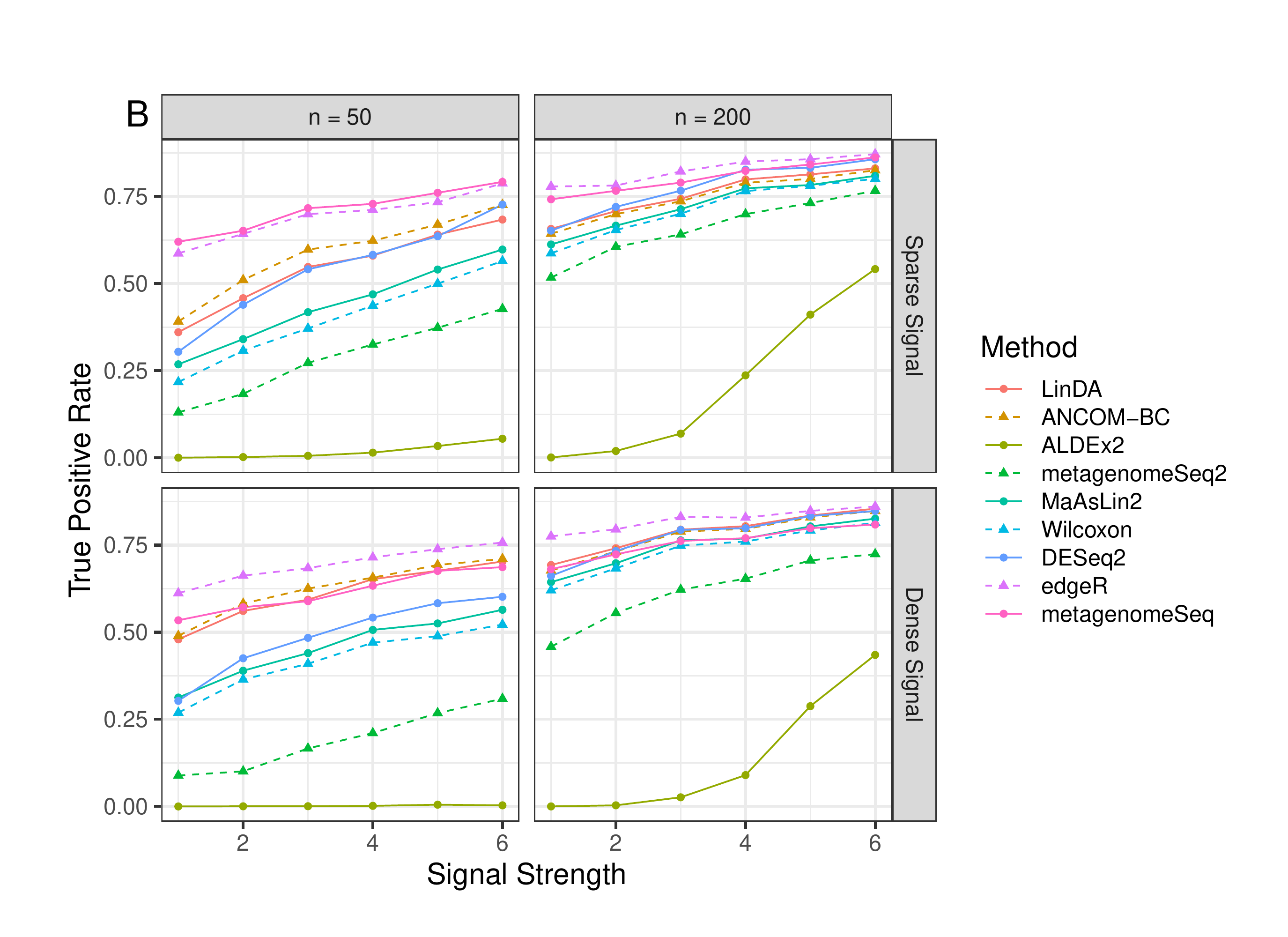}
	\end{subfigure}
	\caption{Full performance comparison (S0C0: log normal abundance distribution, a binary covariate). Empirical false discovery rate (A) and true positive rates (B) were averaged over 100 simulation runs. The dashed horizontal line (A) indicates the target FDR level of 0.05.}
	\label{fig-S0C0All}
\end{figure}

\begin{figure}
	\begin{subfigure}[b]{1\textwidth}
		\centering
		\includegraphics[scale=0.5]{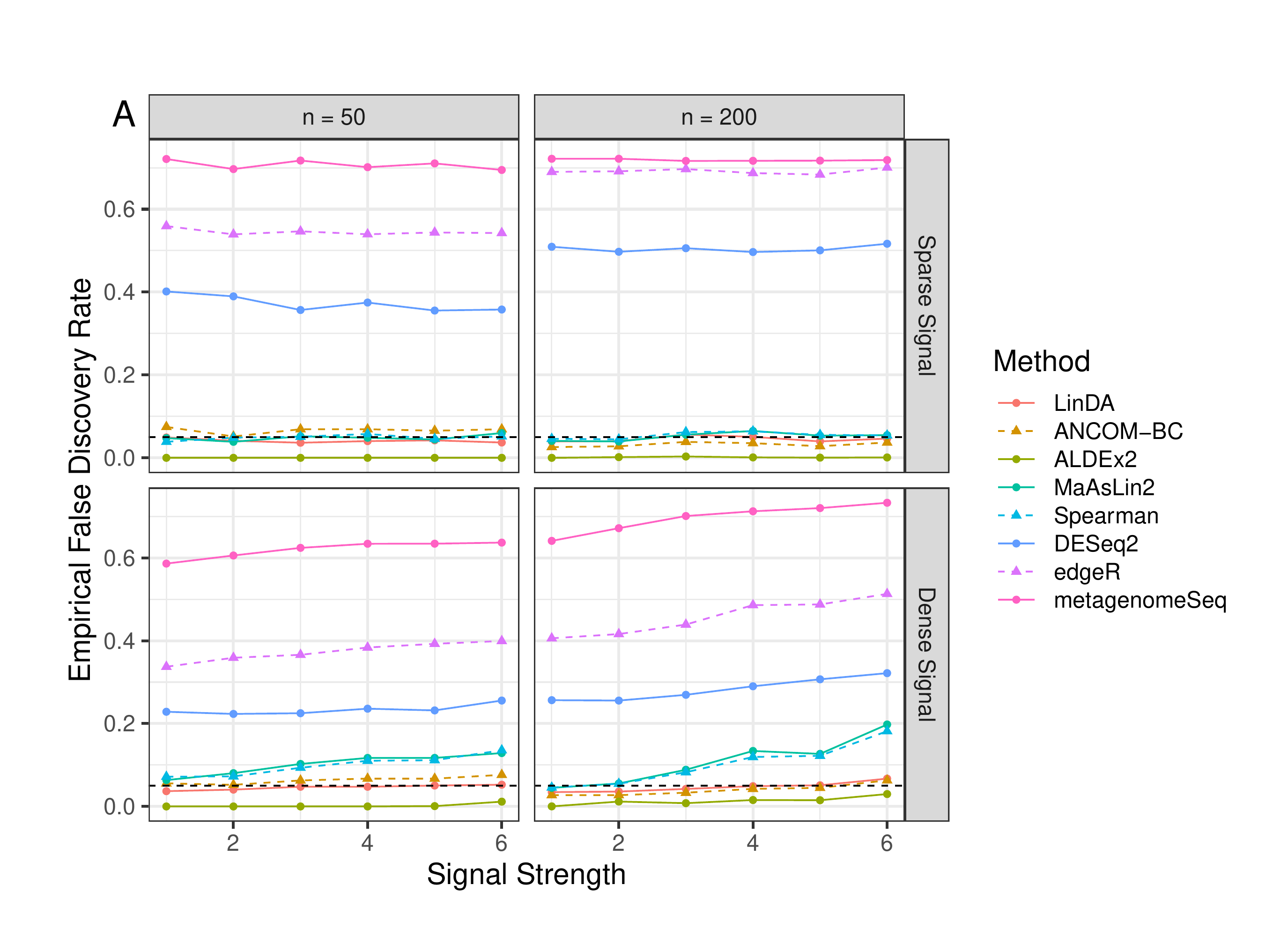}
	\end{subfigure}
	\begin{subfigure}[b]{1\textwidth}
		\centering
		\includegraphics[scale=0.5]{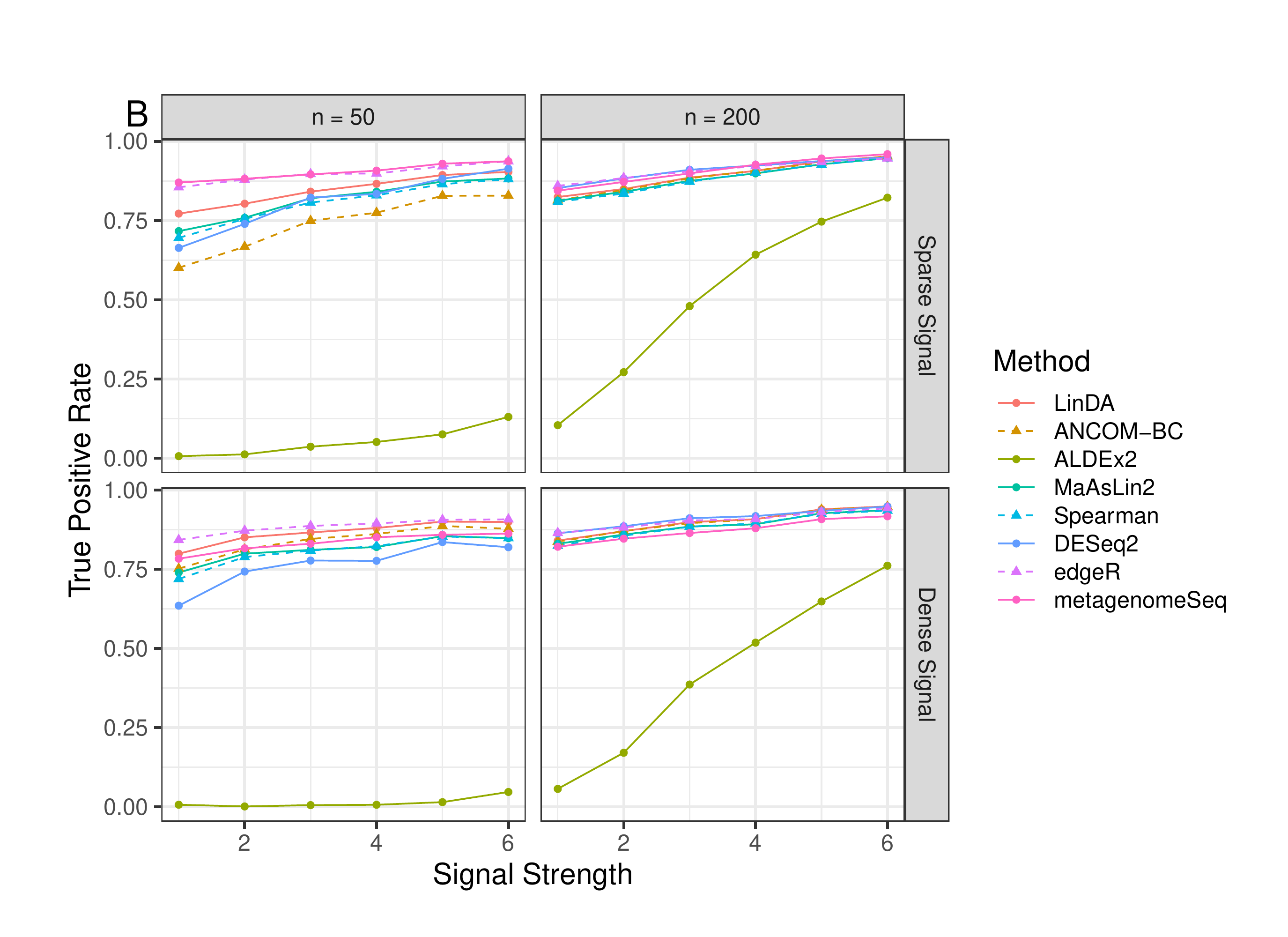}
	\end{subfigure}
	\caption{Full performance comparison (S0C1: log normal abundance distribution, a continuous covariate). Empirical false discovery rate (A) and true positive rates (B) were averaged over 100 simulation runs. The dashed horizontal line (A) indicates the target FDR level of 0.05.}
	\label{fig-S0C1All}
\end{figure}

\begin{figure}
	\begin{subfigure}[b]{1\textwidth}
		\centering
		\includegraphics[scale=0.5]{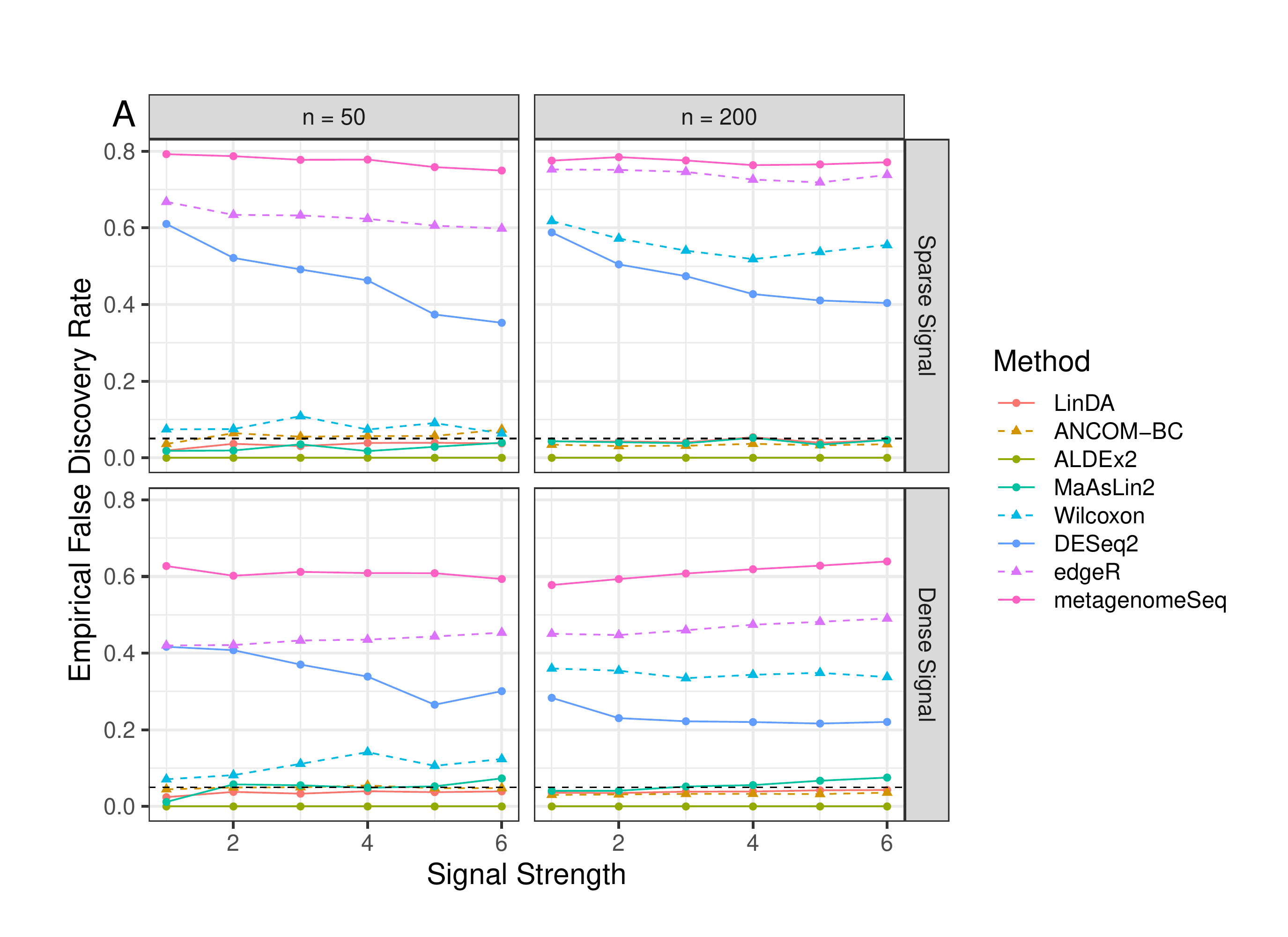}
	\end{subfigure}
	\begin{subfigure}[b]{1\textwidth}
		\centering
		\includegraphics[scale=0.5]{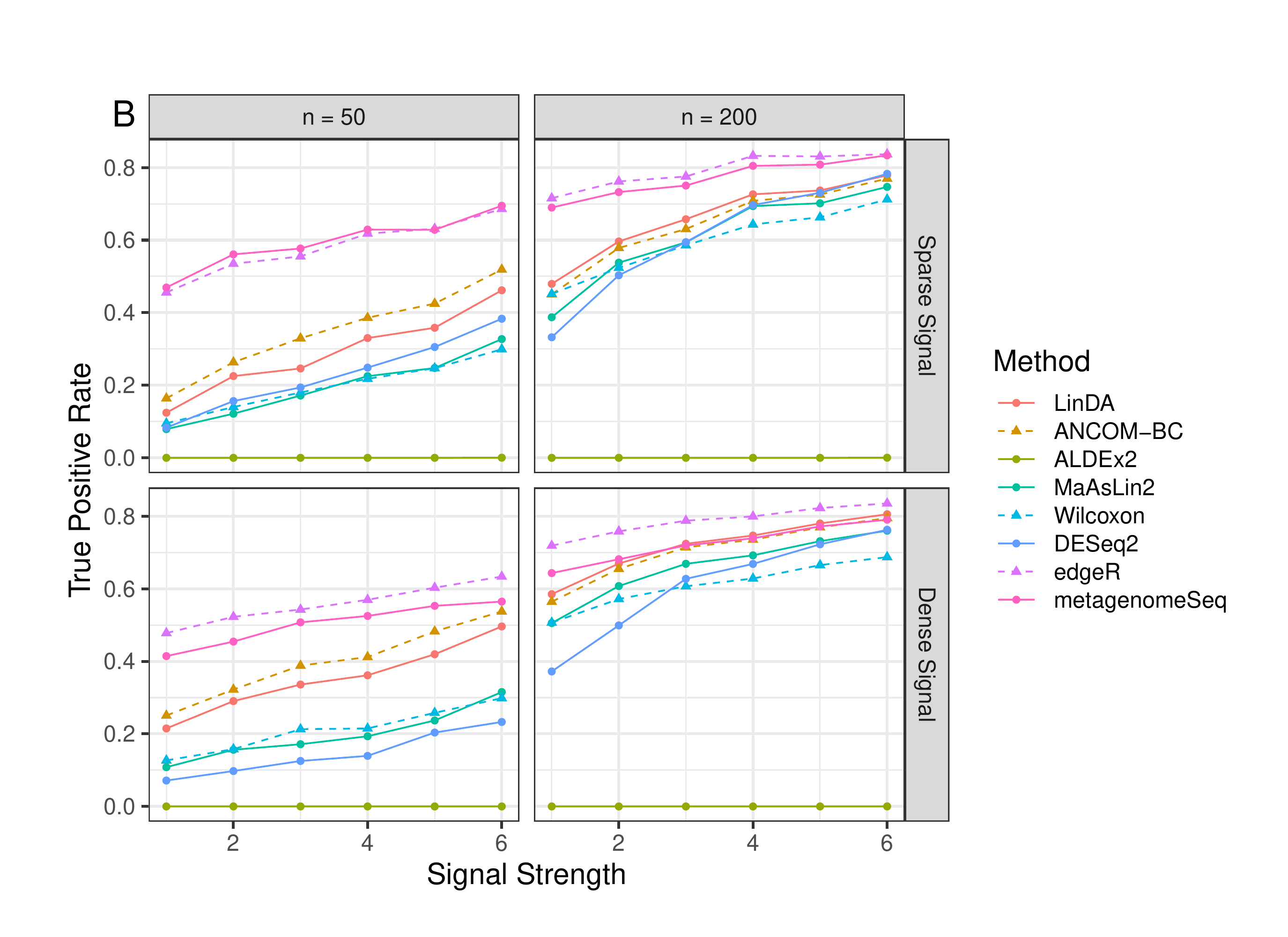}
	\end{subfigure}
	\caption{Full performance comparison (S0C2: log normal abundance distribution, a binary variable of interest and two confounders). Empirical false discovery rate (A) and true positive rates (B) were averaged over 100 simulation runs. The dashed horizontal line (A) indicates the target FDR level of 0.05.}
	\label{fig-S0C2All}
\end{figure}

\begin{figure}
	\begin{subfigure}[b]{1\textwidth}
		\centering
		\includegraphics[scale=0.5]{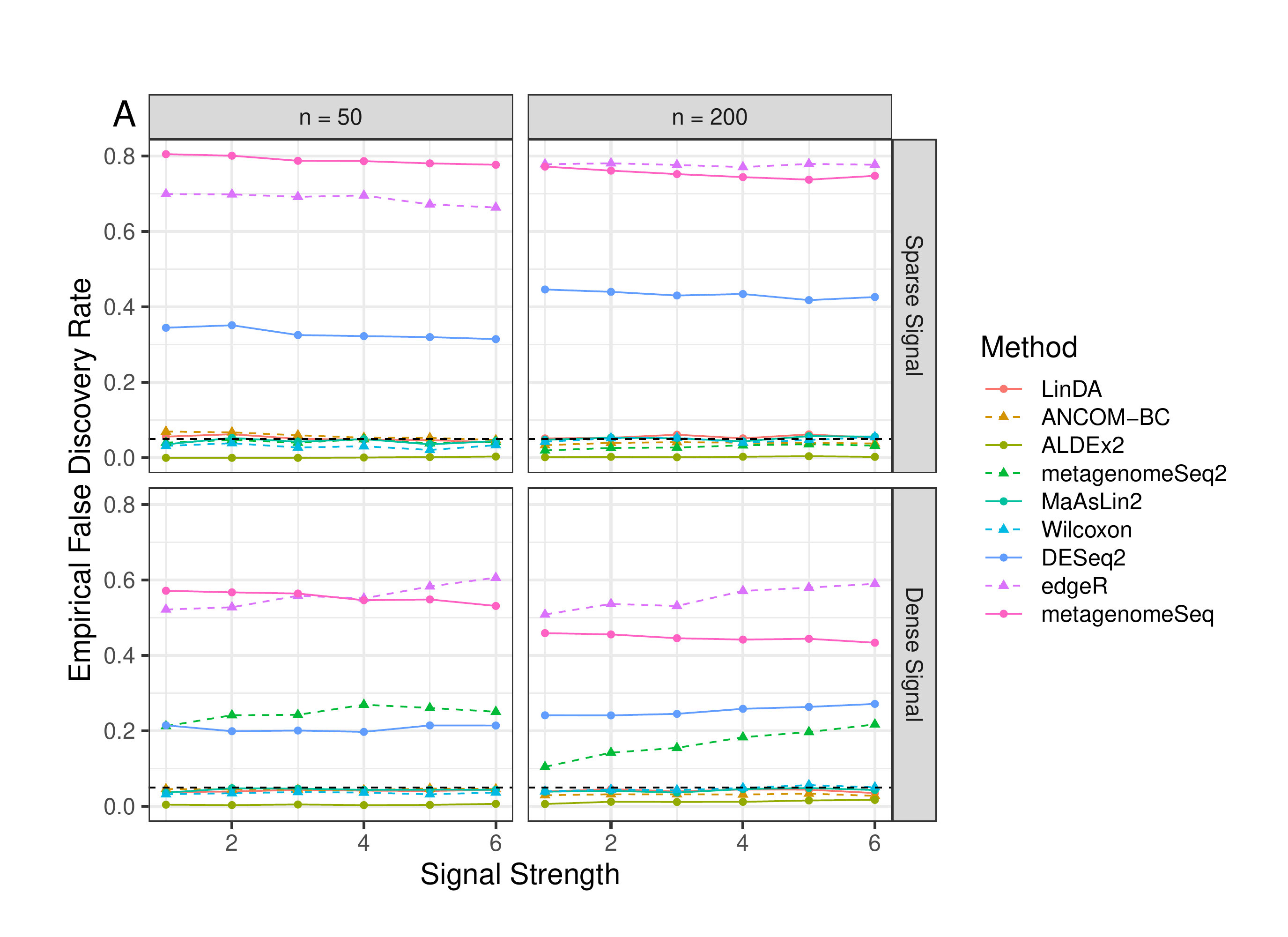}
	\end{subfigure}
	\begin{subfigure}[b]{1\textwidth}
		\centering
		\includegraphics[scale=0.5]{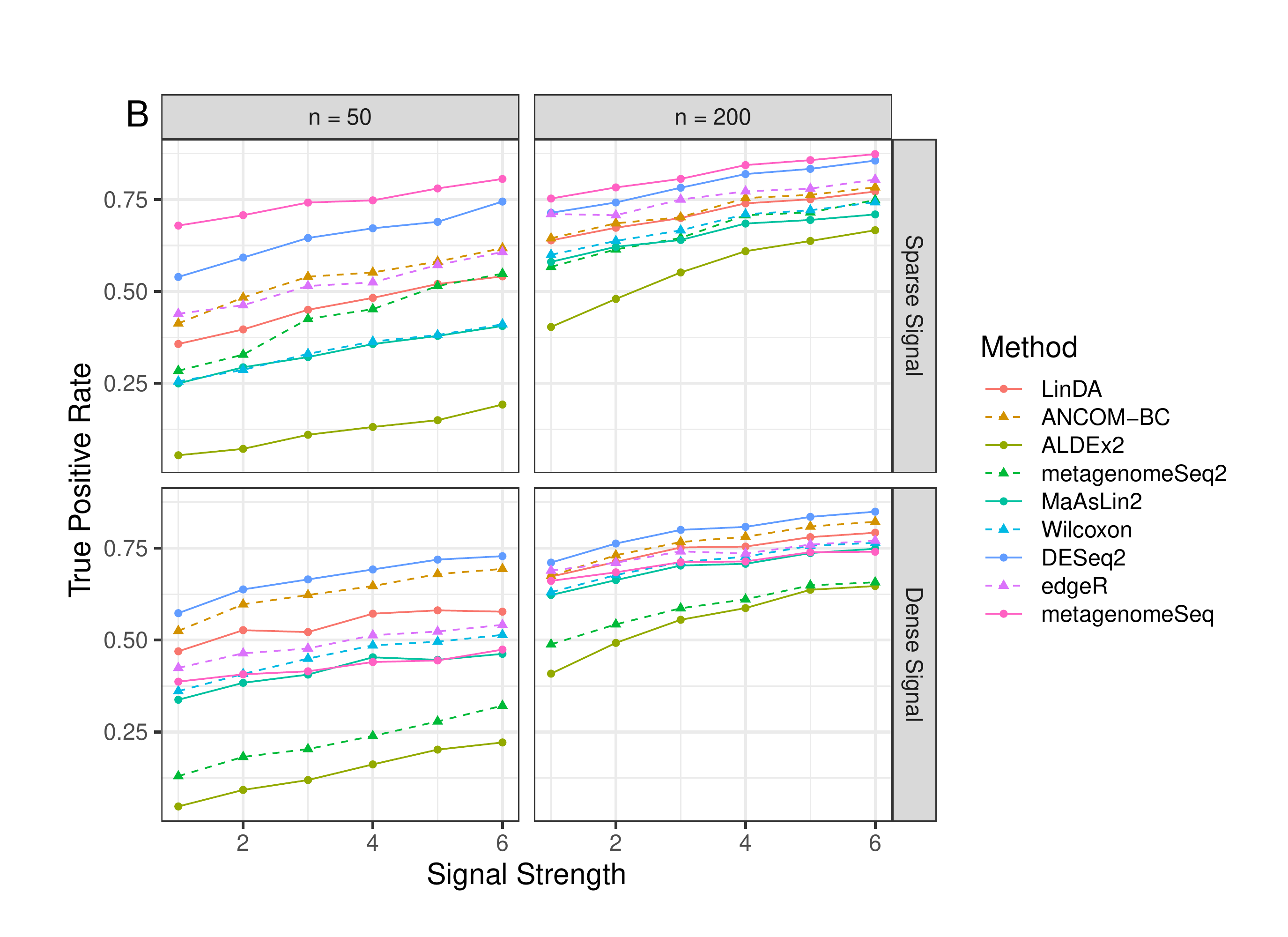}
	\end{subfigure}
	\caption{Full performance comparison (S1C0: zero inflated absolute abundances, a binary covariate). Empirical false discovery rate (A) and true positive rates (B) were averaged over 100 simulation runs. The dashed horizontal line (A) indicates the target FDR level of 0.05.}
	\label{fig-S1C0All}
\end{figure}

\begin{figure}
	\begin{subfigure}[b]{1\textwidth}
		\centering
		\includegraphics[scale=0.5]{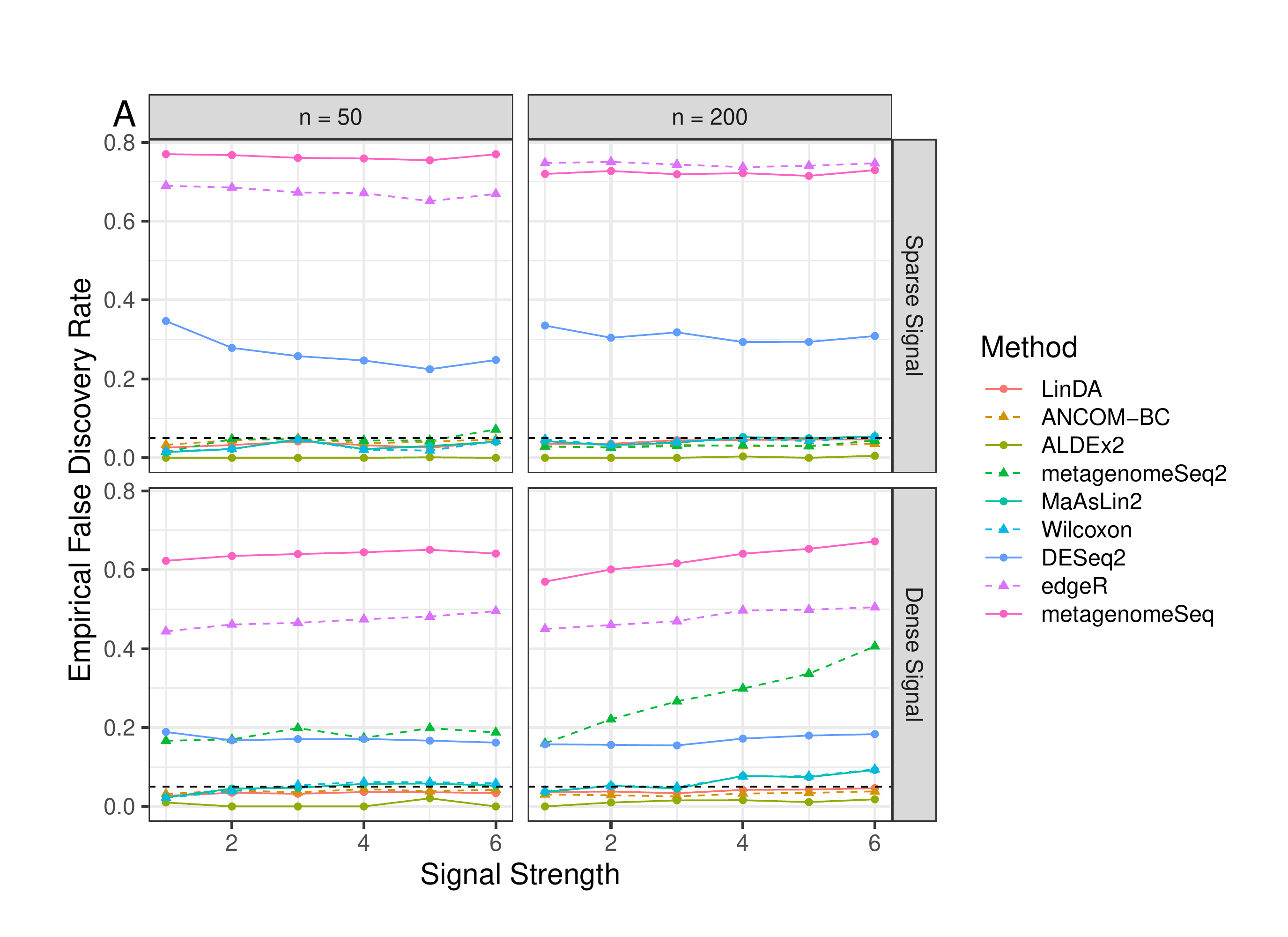}
	\end{subfigure}
	\begin{subfigure}[b]{1\textwidth}
		\centering
		\includegraphics[scale=0.5]{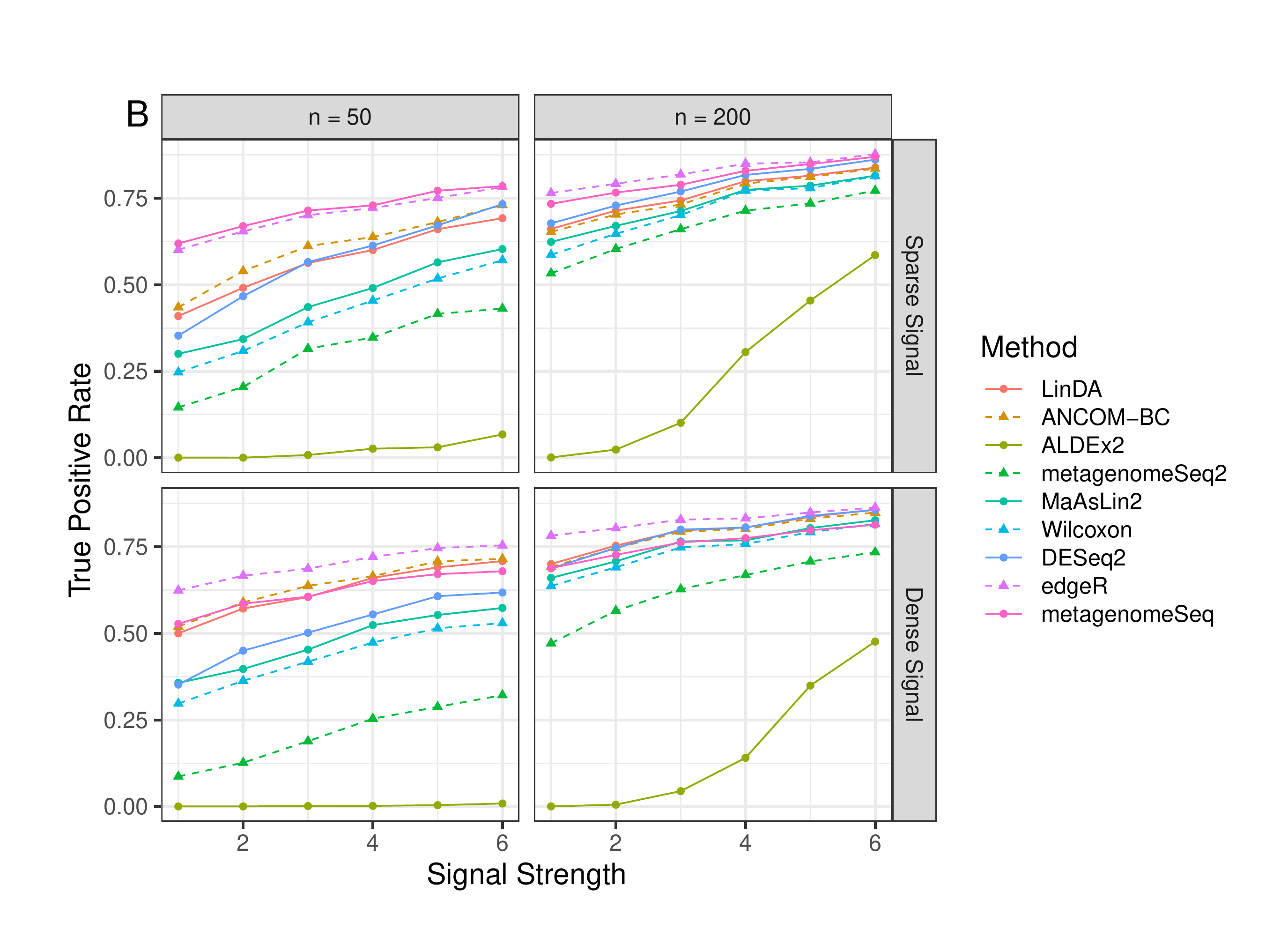}
	\end{subfigure}
	\caption{Full performance comparison (S2C0: correlated absolute abundances, a binary covariate). Empirical false discovery rate (A) and true positive rates (B) were averaged over 100 simulation runs. The dashed horizontal line (A) indicates the target FDR level of 0.05.}
	\label{fig-S2C0All}
\end{figure}

\begin{figure}
	\begin{subfigure}[b]{1\textwidth}
		\centering
		\includegraphics[scale=0.5]{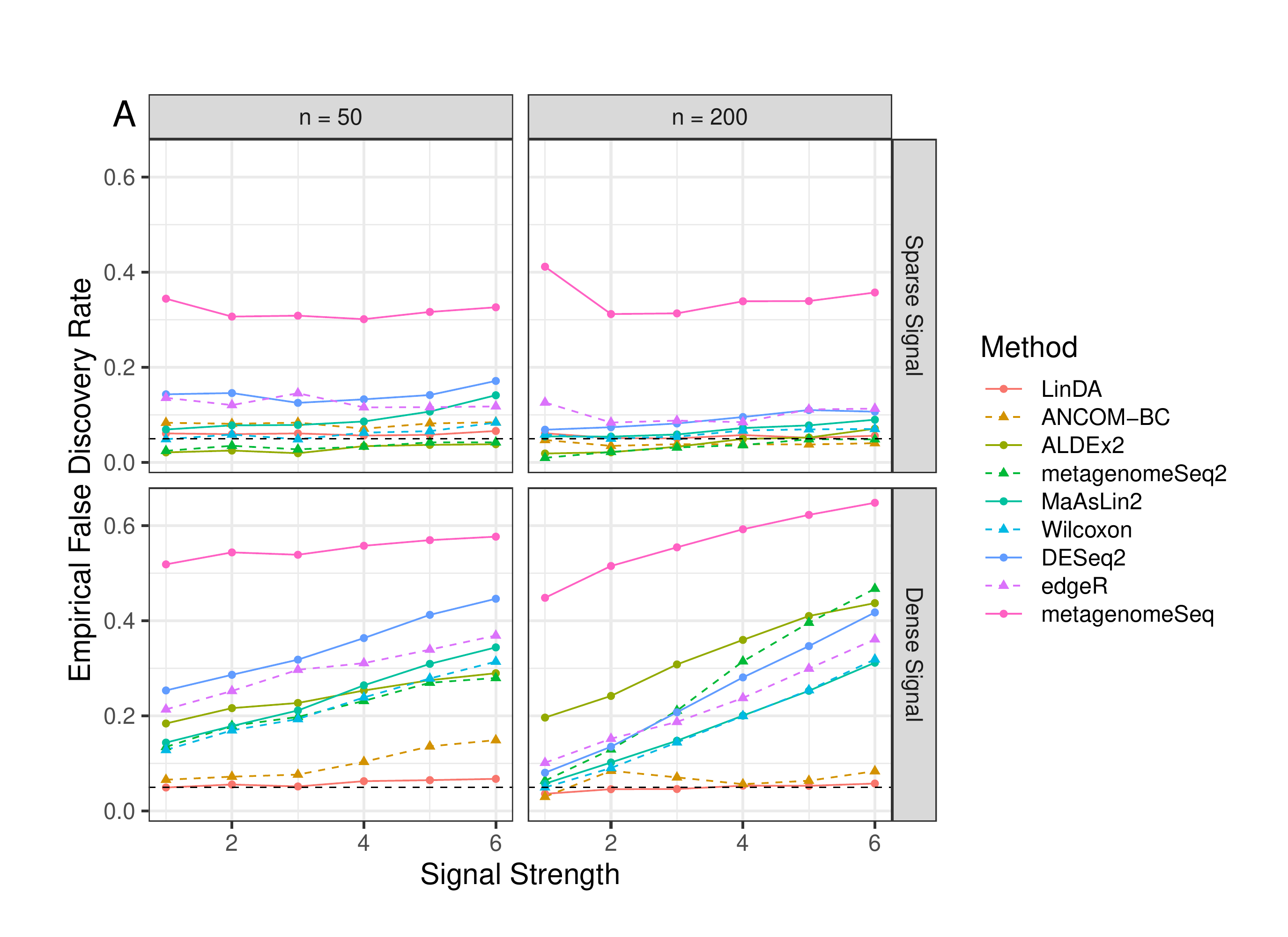}
	\end{subfigure}
	\begin{subfigure}[b]{1\textwidth}
		\centering
		\includegraphics[scale=0.5]{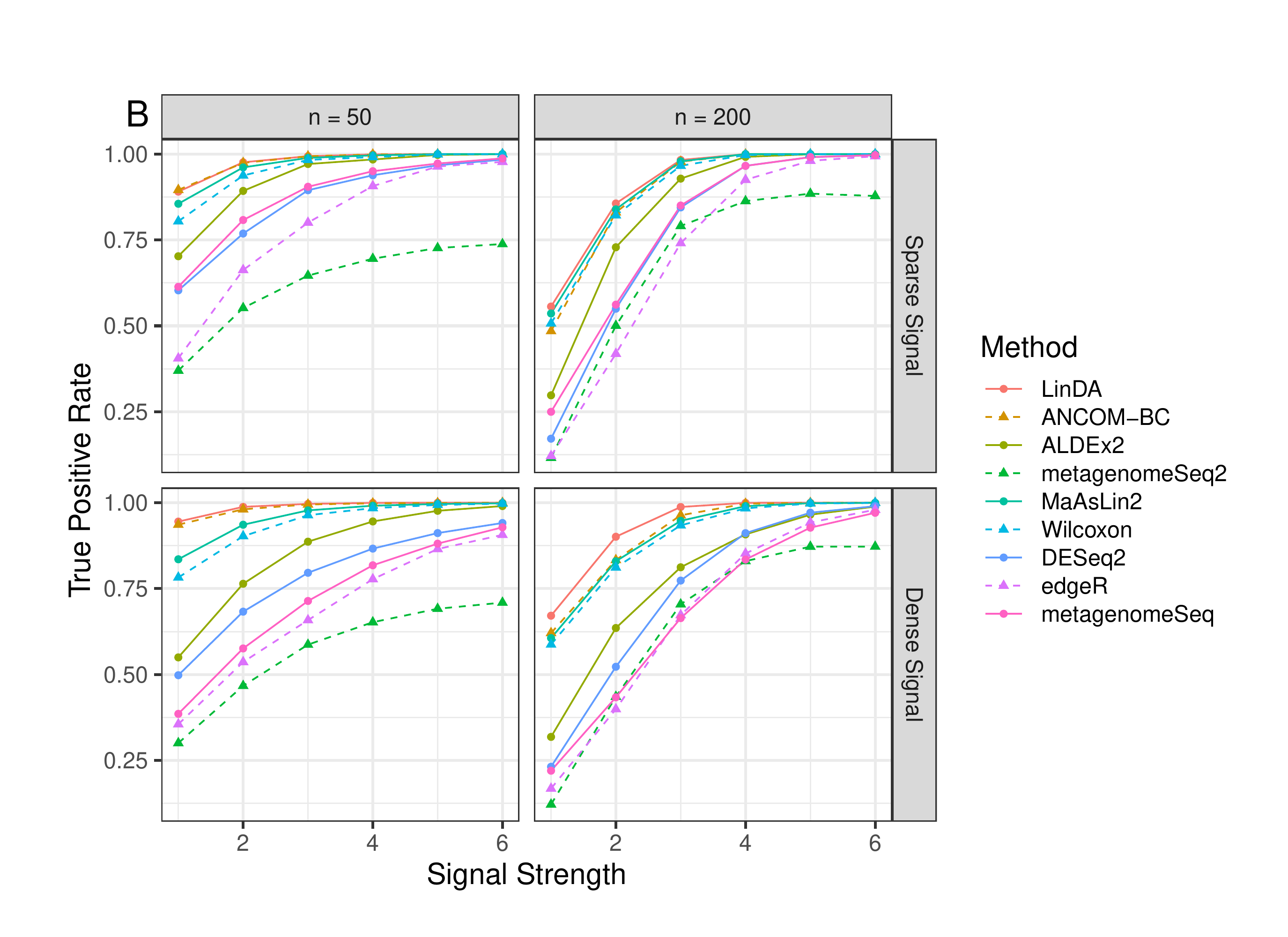}
	\end{subfigure}
	\caption{Full performance comparison (S3C0: gamma abundance distribution, a binary covariate). Empirical false discovery rate (A) and true positive rates (B) were averaged over 100 simulation runs. The dashed horizontal line (A) indicates the target FDR level of 0.05.}
	\label{fig-S3C0All}
\end{figure}

\begin{figure}
	\begin{subfigure}[b]{1\textwidth}
		\centering
		\includegraphics[scale=0.5]{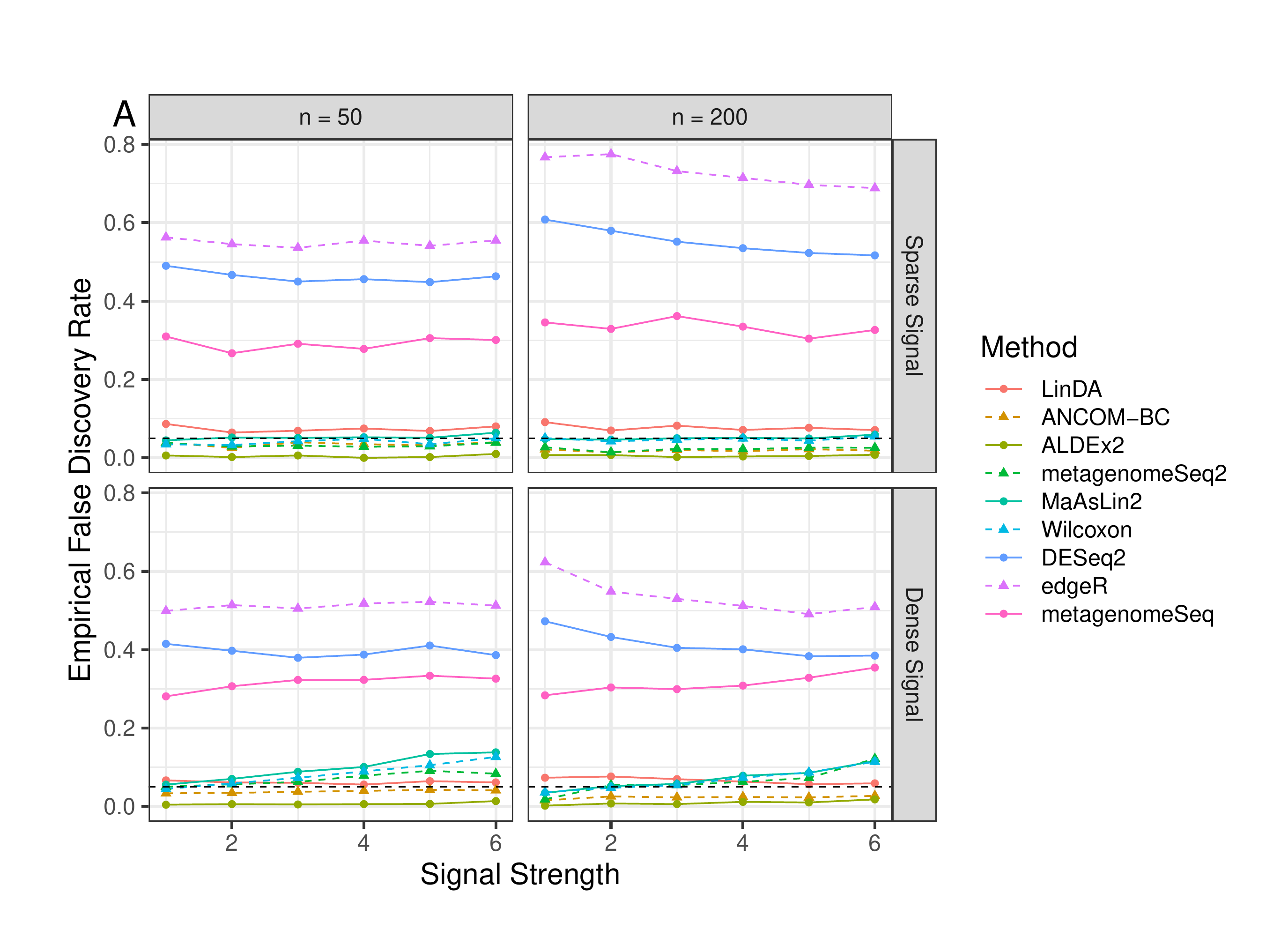}
	\end{subfigure}
	\begin{subfigure}[b]{1\textwidth}
		\centering
		\includegraphics[scale=0.5]{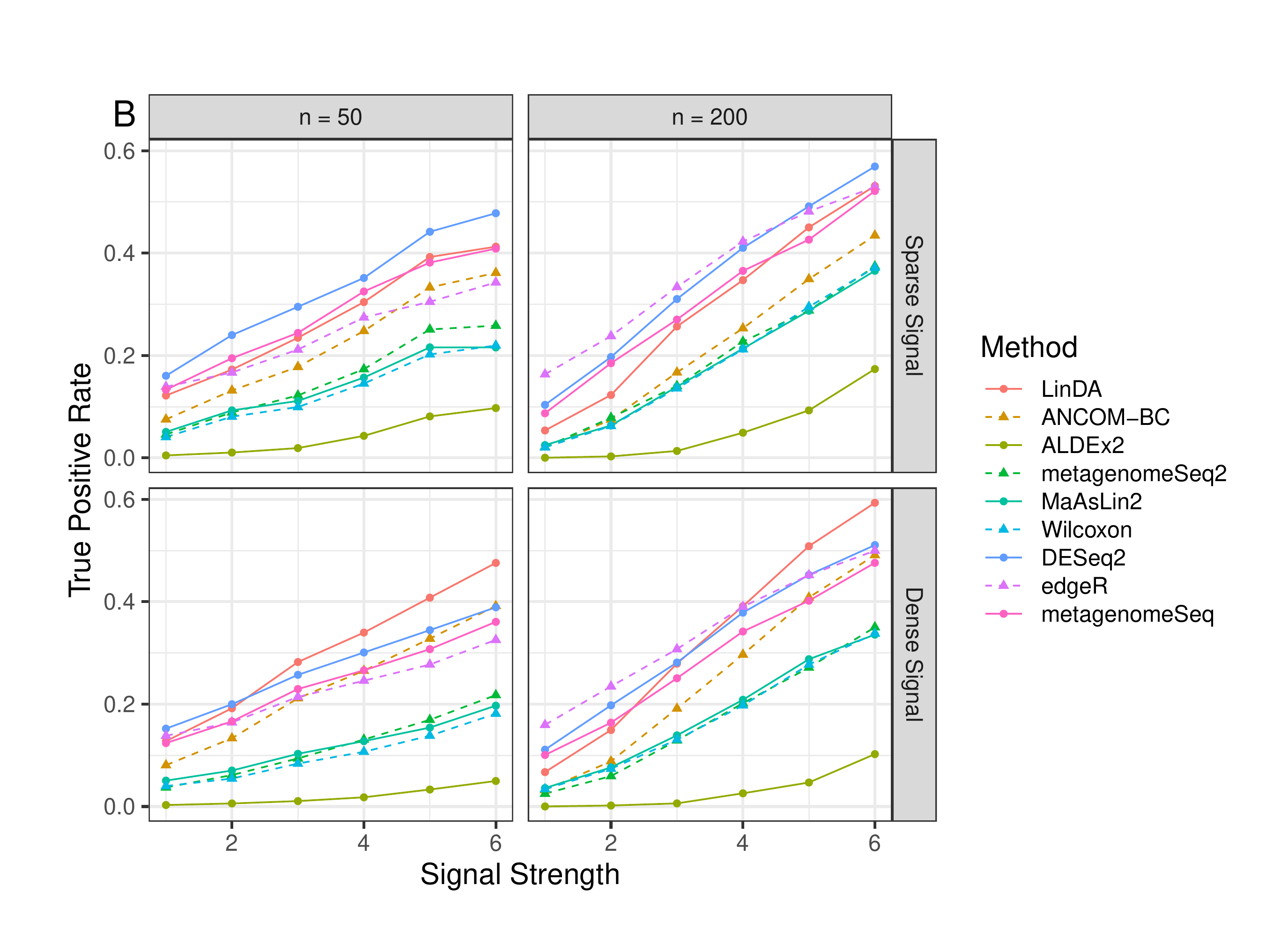}
	\end{subfigure}
	\caption{Full performance comparison (S4C0: smaller $m$, a binary covariate). Empirical false discovery rate (A) and true positive rates (B) were averaged over 1000 simulation runs. The dashed horizontal line (A) indicates the target FDR level of 0.05.}
	\label{fig-S4C0All}
\end{figure}

\begin{figure}
	\begin{subfigure}[b]{1\textwidth}
		\centering
		\includegraphics[scale=0.5]{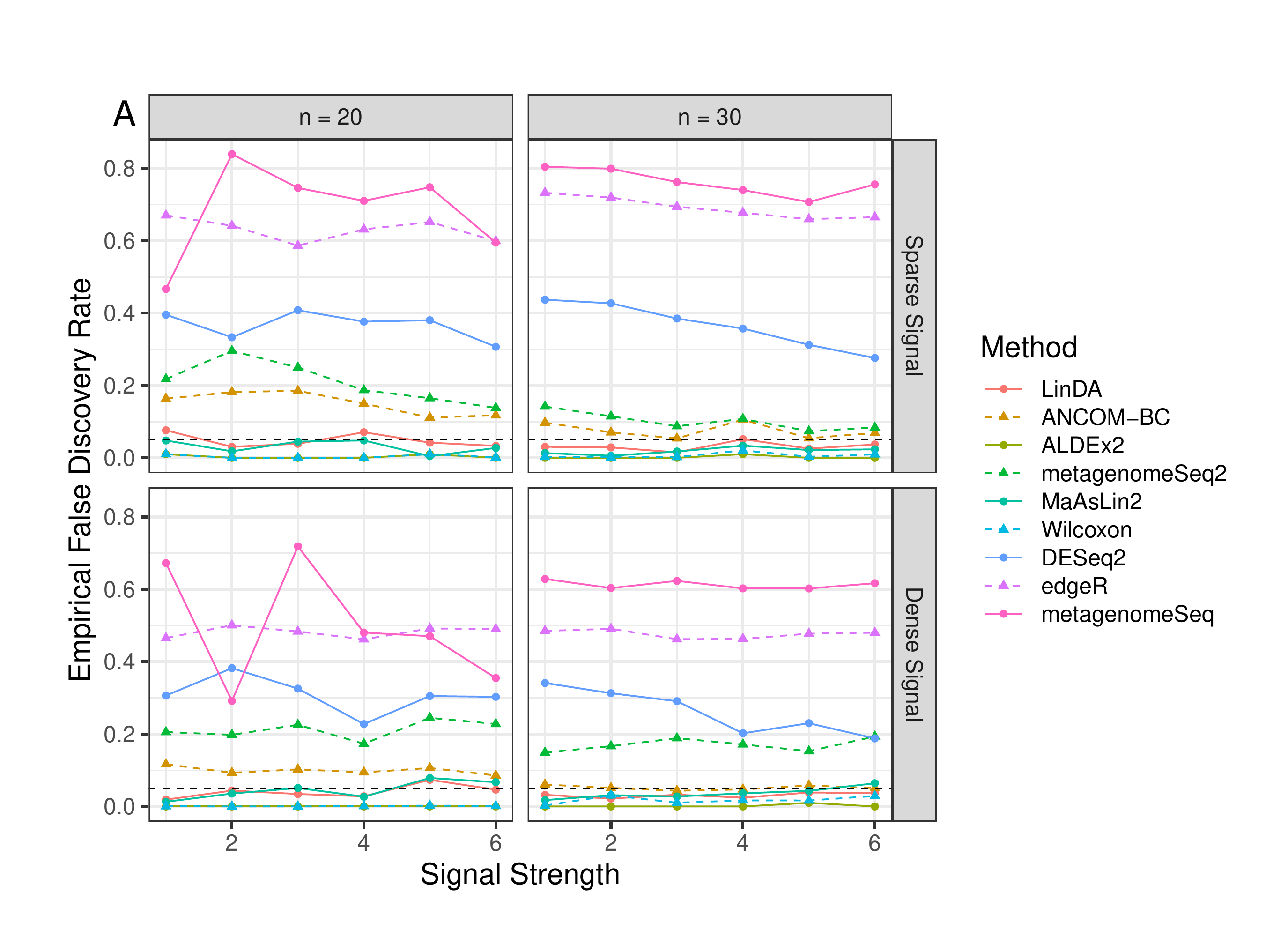}
	\end{subfigure}
	\begin{subfigure}[b]{1\textwidth}
		\centering
		\includegraphics[scale=0.5]{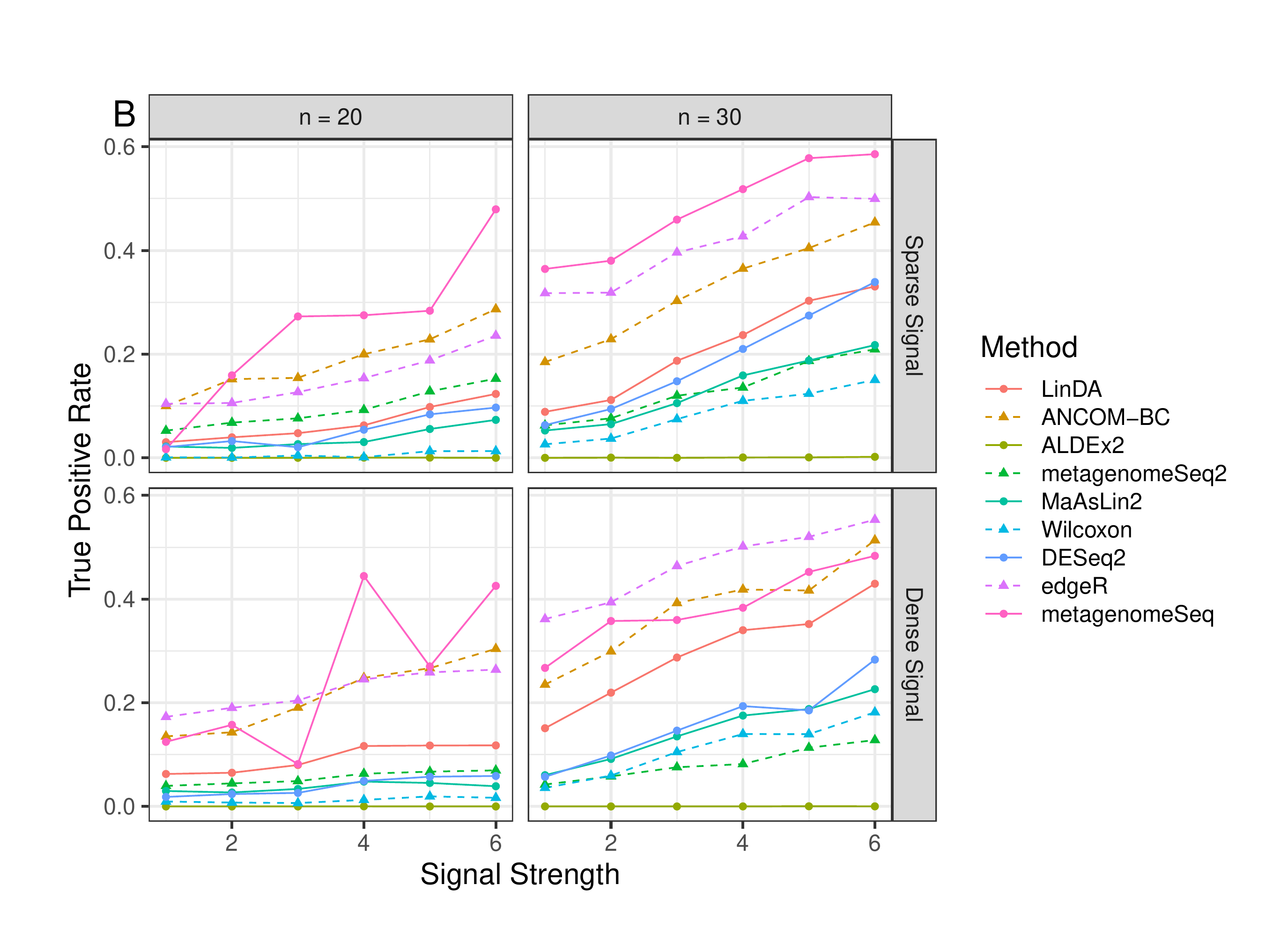}
	\end{subfigure}
	\caption{Full performance comparison (S5C0: smaller $n$, a binary covariate). Empirical false discovery rate (A) and true positive rates (B) were averaged over 100 simulation runs. The dashed horizontal line (A) indicates the target FDR level of 0.05.}
	\label{fig-S5C0All}
\end{figure}

\begin{figure}
	\begin{subfigure}[b]{1\textwidth}
		\centering
		\includegraphics[scale=0.5]{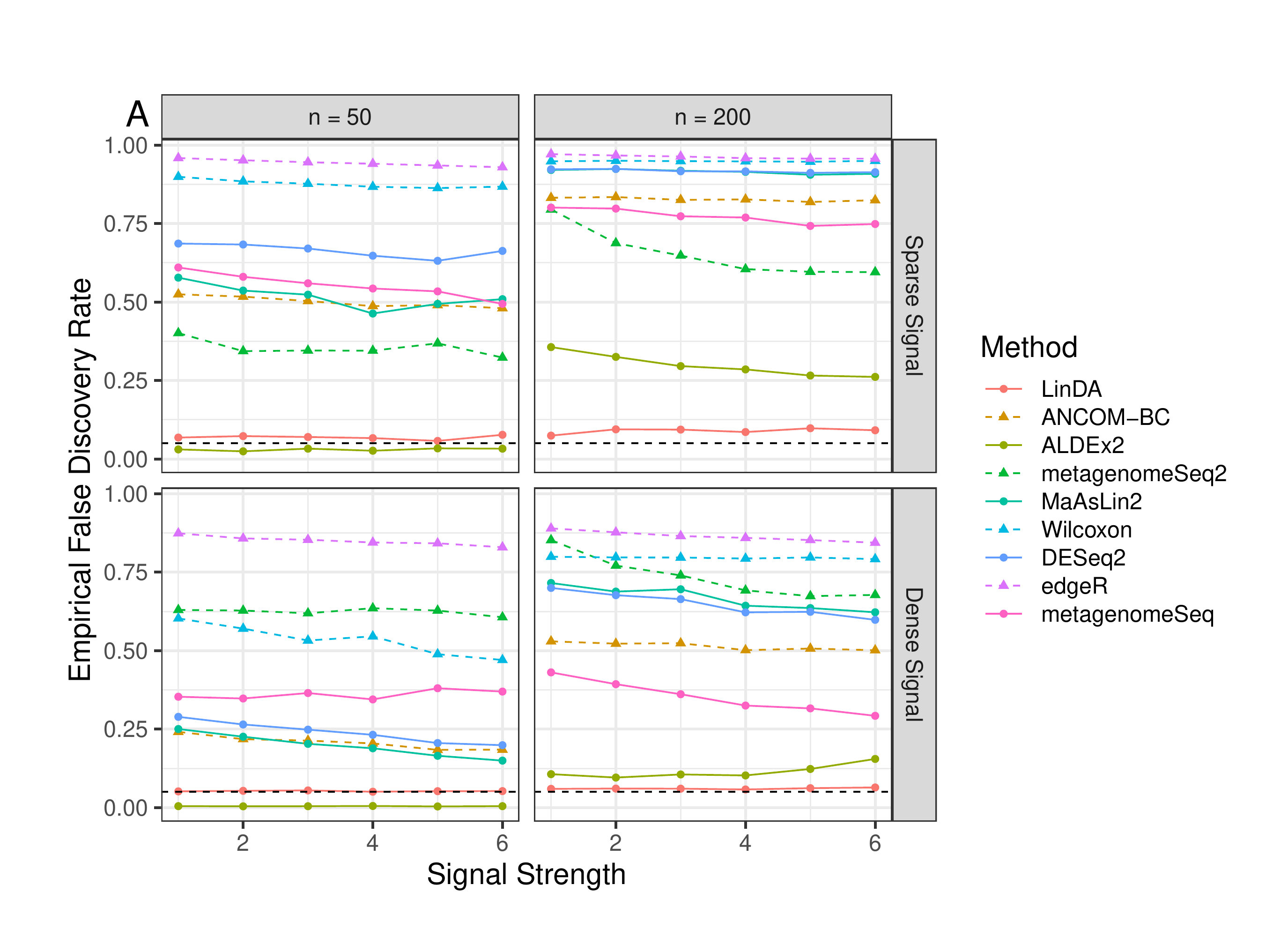}
	\end{subfigure}
	\begin{subfigure}[b]{1\textwidth}
		\centering
		\includegraphics[scale=0.5]{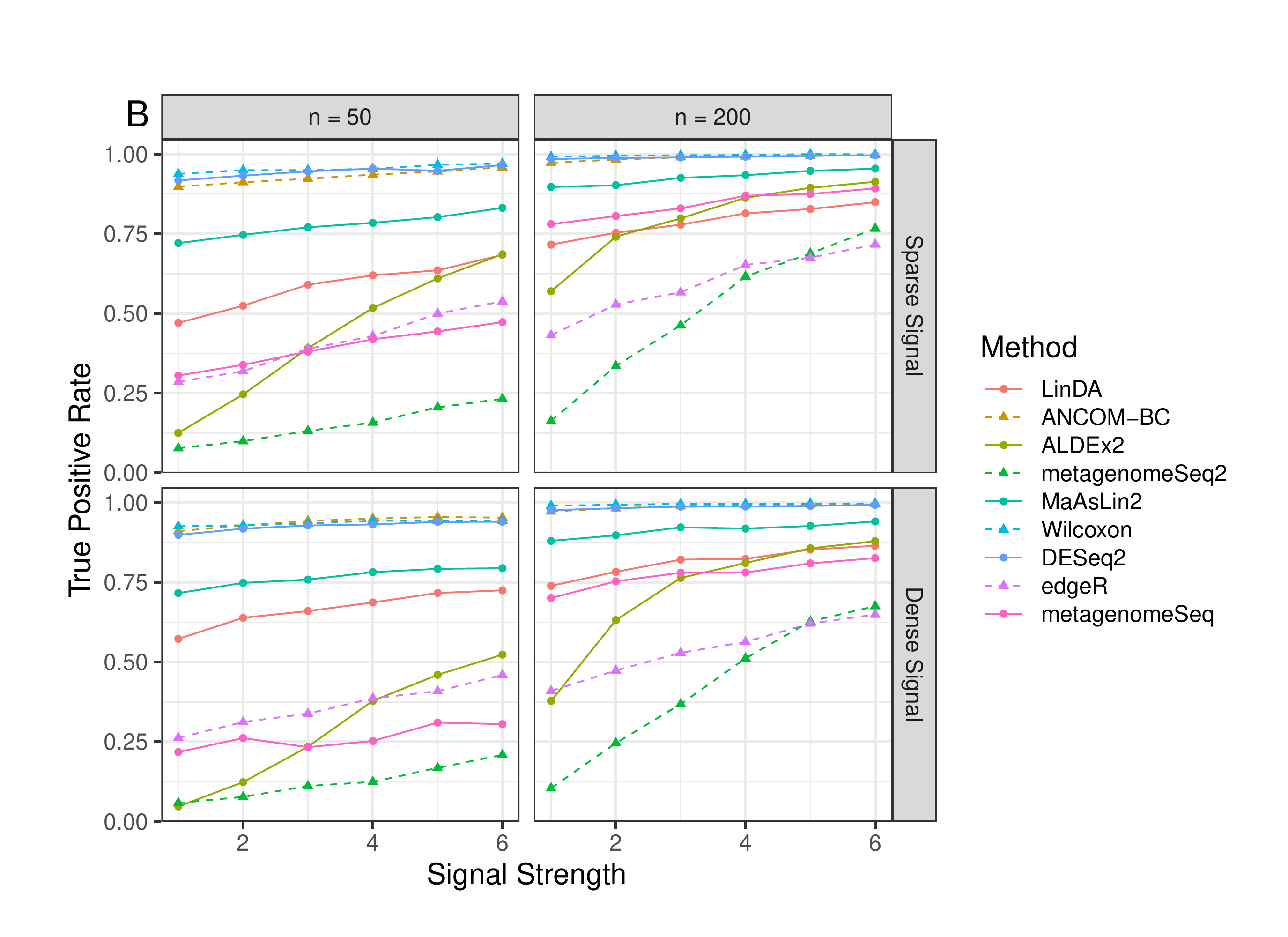}
	\end{subfigure}
	\caption{Full performance comparison (S6C0: 10-fold difference in library size, a binary covariate). Empirical false discovery rate (A) and true positive rates (B) were averaged over 100 simulation runs. The dashed horizontal line (A) indicates the target FDR level of 0.05.}
	\label{fig-S6C0All}
\end{figure}

\begin{figure}
	\begin{subfigure}[b]{1\textwidth}
		\centering
		\includegraphics[scale=0.5]{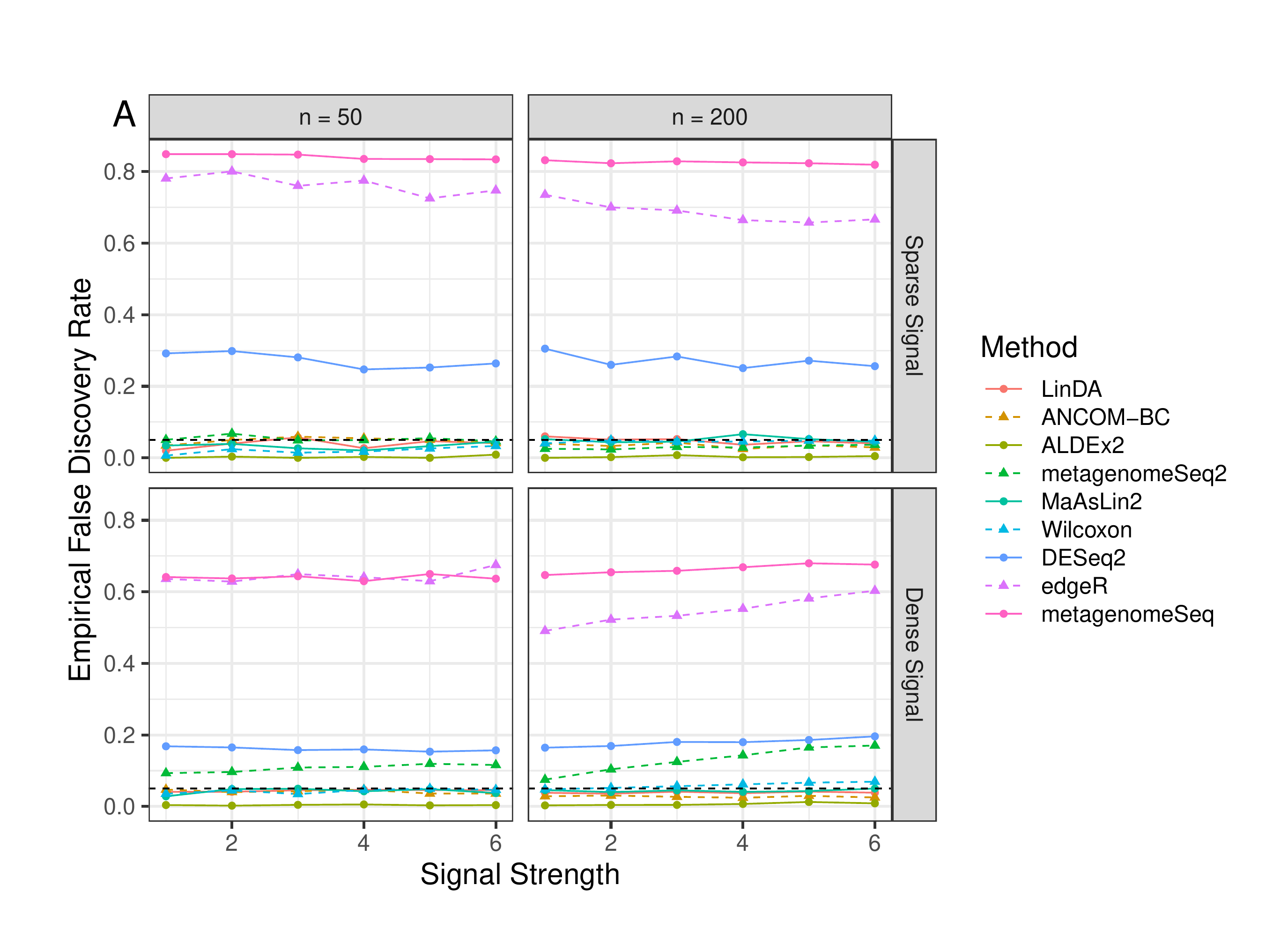}
	\end{subfigure}
	\begin{subfigure}[b]{1\textwidth}
		\centering
		\includegraphics[scale=0.5]{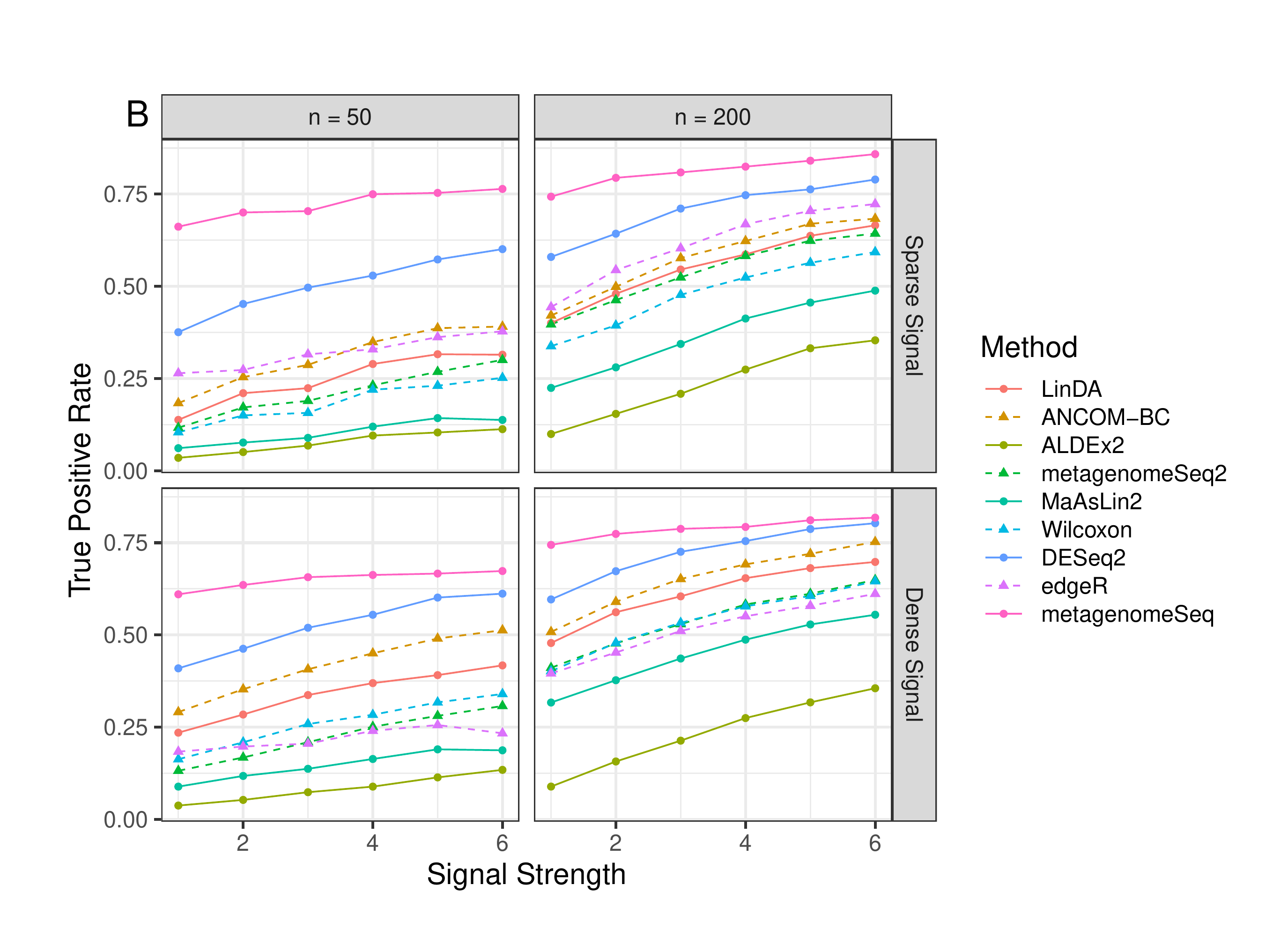}
	\end{subfigure}
	\caption{Full performance comparison (S7C0: negative binomial abundance distribution, a binary covariate). Empirical false discovery rate (A) and true positive rates (B) were averaged over 100 simulation runs. The dashed horizontal line (A) indicates the target FDR level of 0.05.}
	\label{fig-S7C0All}
\end{figure}

\begin{figure}
	\begin{subfigure}[b]{1\textwidth}
		\centering
		\includegraphics[scale=0.5]{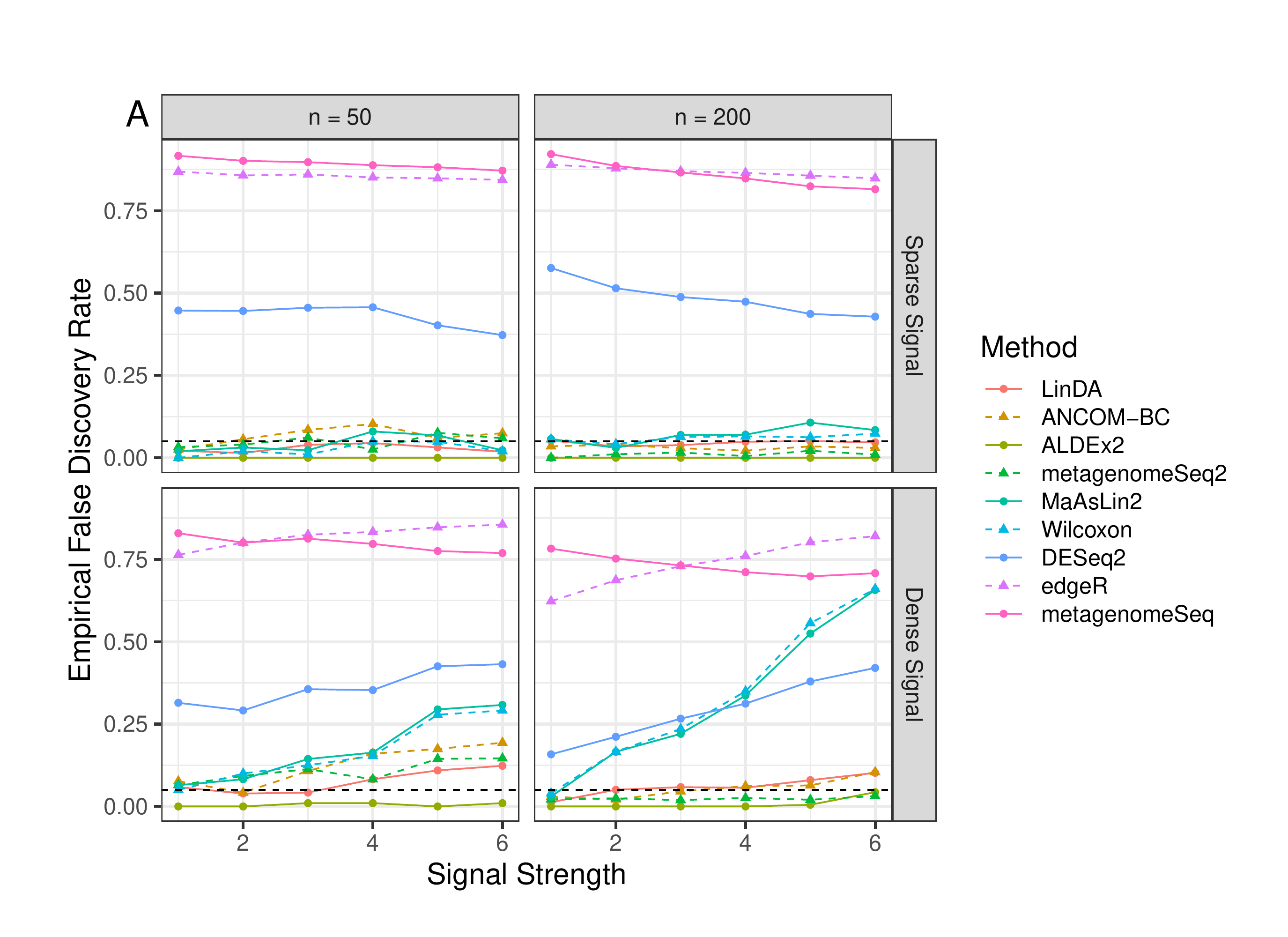}
	\end{subfigure}
	\begin{subfigure}[b]{1\textwidth}
		\centering
		\includegraphics[scale=0.5]{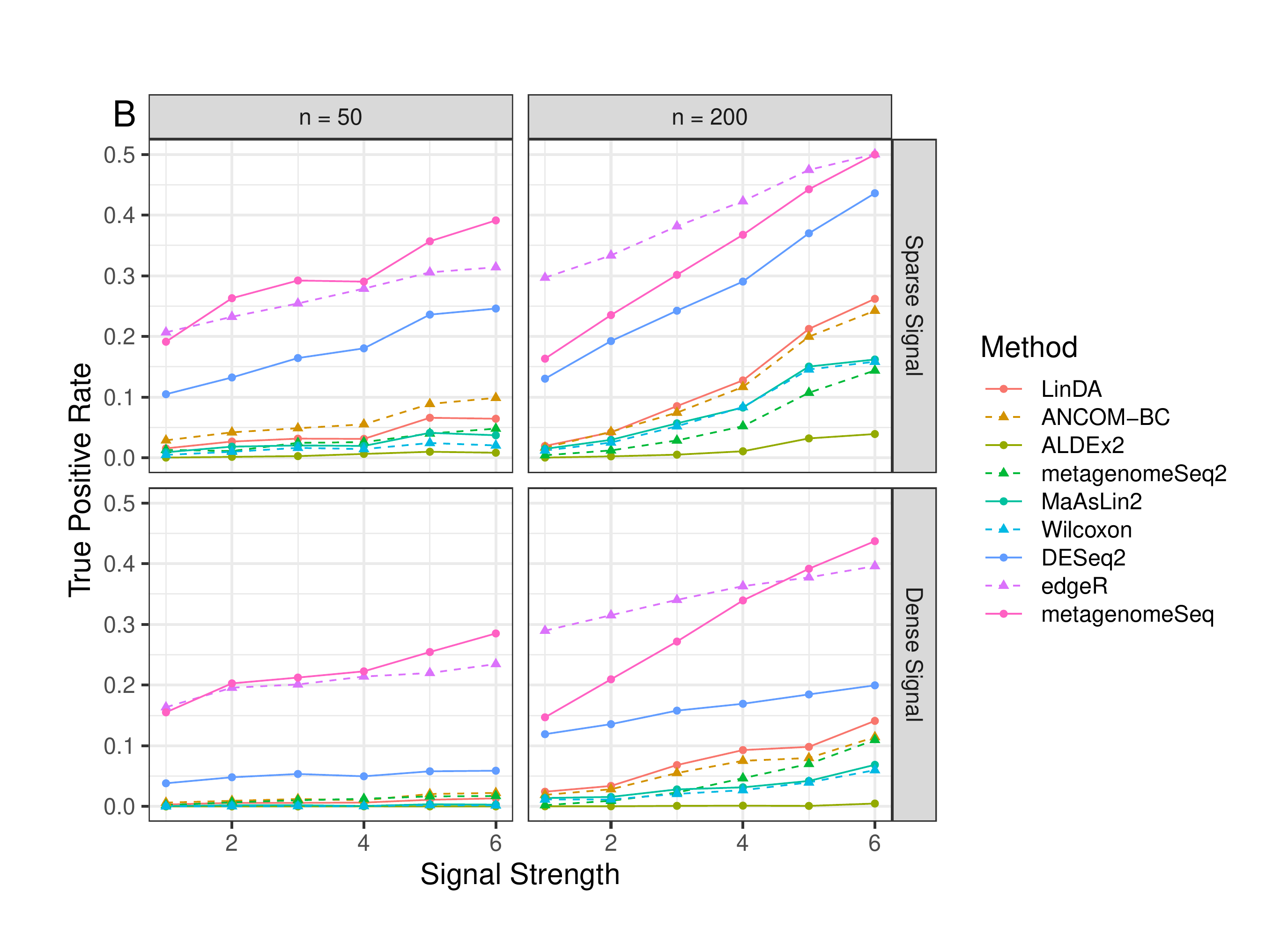}
	\end{subfigure}
	\caption{Full performance comparison (S0C0 with strong compositional effects). Empirical false discovery rate (A) and true positive rates (B) were averaged over 100 simulation runs. The dashed horizontal line (A) indicates the target FDR level of 0.05.}
	\label{fig-S0C0StrongAll}
\end{figure}


\begin{thebibliography}{7}
	
\bibitem[{Fan \& Pedersen(2021)}]{Fan:2021cf}
Fan, Y. and Pedersen, O. (2021).
Gut microbiota in human metabolic health and disease.
\textit{Nature Reviews Microbiology} \textbf{19}, 55--71.

	\bibitem[Valdes et al.(2018)]{Valdes:2018fx}
Valdes, A. M., Walter,  J., Segal, E., and Spector, T.~M. (2018).
Role of the gut microbiota in nutrition and health.
\textit{BMJ} \textbf{361}, k2179.

\bibitem[{Edgar(2013)}]{Edgar:2013}
Edgar, R. C. (2013).
UPARSE: highly accurate OTU sequences from microbial amplicon reads.
\textit{Nature Methods} \textbf{10}, 996--998.

	\bibitem[{Callahan et al.(2016)}]{Callahan:2016}
Callahan, B. J., McMurdie, P. J., Rosen, M. J., Han, A. W., Johnson, A. J. A., and Holmes, S. P. (2016).
DADA2: High-resolution sample inference from Illumina amplicon data.
\textit{Nature Methods} \textbf{13}, 581--583.

	\bibitem[{Segata et al.(2012)}]{Segata:2012}
Segata, N., Waldron, L., Ballarini, A., Narasimhan, V., Jousson, O., and Huttenhower, C. (2012).
Metagenomic microbial community profiling using unique clade-specific marker genes.
\textit{Nature Methods} \textbf{9}, 811--814.

	\bibitem[{Gloor et al.(2017)}]{Gloor:2017}
Gloor, G. B., Macklaim, J. M., Pawlowsky-Glahn, V., and Egozcue, J. J (2017).
Microbiome datasets are compositional: and this is not optional.
\textit{Frontiers in Microbiology} \textbf{8}, 2224.

	\bibitem[Tsilimigras \& Fodor(2016)]{Tsilimigras:2016}
Tsilimigras, M. C. B. and Fodor, A. A. (2016).
Compositional data analysis of the microbiome: fundamentals, tools, and challenges.
\textit{Annals of Epidemiology} \textbf{26}, 330--335.

	\bibitem[Morton et al.(2019)]{Morton:2019}
Morton, J. T., Marotz, C., Washburne, A., Silverman, J., Zaramela, L. S., Edlund, A., Zengler, K., and Knight, R. (2019).
Establishing microbial composition measurement standards with reference frames. 
\textit{Nature communications} \textbf{10}, 2719.

	\bibitem[Xiao et al.(2018:1)]{Xiao:20181}
Xiao, J., Chen, L., Yu, Y., Zhang, X., and Chen, J. (2018).
A phylogeny-regularized sparse regression model for predictive modeling of microbial community data.
\textit{Frontiers in Microbiology} \textbf{9}, 3112.

	\bibitem[Xiao et al.(2018:2)]{Xiao:20182}
Xiao, J., Chen, L., Johnson, S., Yu, Y., Zhang, X., and Chen, J. (2018).
Predictive modeling of microbiome data using a phylogeny-regularized generalized linear mixed model.
\textit{Frontiers in Microbiology} \textbf{9}, 1391.

	\bibitem[{Robinson \& Oshlack(2010)}]{Robinson:2010}
Robinson, M. D. and Oshlack, A. (2010).
A scaling normalization method for differential expression analysis of RNA-seq data.
\textit{Genome Biology} \textbf{11}, R25.

	\bibitem[{Anders \& Huber(2010)}]{Anders:2010}
Anders, S. and Huber, W. (2010).
Differential expression analysis for sequence count data.
\textit{Genome Biology} \textbf{11}, R106.

	\bibitem[Paulson et al.(2013)]{Paulson:2013}
Paulson, J. N., Stine, O. C., Bravo, H. C., and Pop, M. (2013).
Differential abundance analysis for microbial marker-gene surveys.
\textit{Nature Methods} \textbf{10}, 1200--1202.

	\bibitem[{Chen et al.(2018)}]{Chen:2018}
Chen, L., Reeve, J., Zhang, L., Huang, S., Wang, X., and Chen, J. (2018).
GMPR: A robust normalization method for zero-inflated count data with application to microbiome sequencing data.
\textit{PeerJ} \textbf{6}, e4600.

	\bibitem[Robinson et al.(2010)]{Robinson:2010-1}
Robinson, M. D., McCarthy, D. J., and Smyth, G. K. (2010).
edgeR: a Bioconductor package for differential expression analysis of digital gene expression data.
\textit{Bioinformatics} \textbf{26}, 139--140.

	\bibitem[{Love et al.(2014)}]{Love:2014}
Love, M. I., Huber, W., and Anders, S. (2014).
Moderated estimation of fold change and dispersion for RNA-seq data with DESeq2.
\textit{Genome Biology} \textbf{15}, 550.

	\bibitem[{Chen et al.(2018)}]{Chenj:2018}
Chen, J., King, E., Deek, R., Wei, Z., Yu, Y., Grill, D., Ballman, K., and Stegle, O. (2018).
An omnibus test for differential distribution analysis of microbiome sequencing data.
\textit{Bioinformatics} \textbf{34}, 643--651.

	\bibitem[{Mallick et al.(2020)}]{Mallick:2020}
Mallick H., Rahnavard A., Mclver L. J., Ma S., Zhang Y., Nguyen L. H., Tickle T. L., Weingart G., Ren B., Schwager E. H., et al. (2021).
Multivariable association discovery in population-scale meta-omics studies. \textit{PLoS computational biology} \textbf{17},  e1009442.

	\bibitem[{Sohn et al.(2015)}]{Sohn:2015}
Sohn, M. B., Du, R., and An, L. (2015).
A robust approach for identifying differentially
abundant features in metagenomic samples.
\textit{Bioinformatics} \textbf{31}, 2269--2275.

	\bibitem[{Brill et al.(2020)}]{Brill:2020}
Brill, B., Amir, A., and Heller, R. (2020).
Testing for differential abundance in compositional counts data, with application to microbiome studies.
\textit{arXiv preprint} arXiv:1904.08937.

	\bibitem[{Aitchison(1986)}]{Aitchison:1986}
Aitchison J. (1986). \textit{The statistical analysis of compositional data}. Chapman and Hall.

	\bibitem[{Fernandes et al.(2014)}]{Fernandes:2014}
Fernandes, A. D., Reid, J. N., Macklaim, J. M., McMurrough, T. A., Edgell, D. R., and Gloor, G. B. (2014).
Unifying the analysis of high-throughput
sequencing datasets: characterizing RNA-seq,
16S rRNA gene sequencing and selective
growth experiments by compositional data
analysis.
\textit{Microbiome} \textbf{2}, 15.

	\bibitem[{Mandal et al.(2015)}]{Mandal:2015}
Mandal, S., Treuren, W. V., White, R. A., Eggesbø, M., Knight, R., and Peddada, S. D. (2015).
Analysis of composition of microbiomes: a novel method for studying microbial composition.
\textit{Microbial Ecology in Health \& Disease} \textbf{26}, 27663.

	\bibitem[{Lin \& Peddada(2020)}]{Lin:2020}
Lin, H. and Peddada, S. D. (2020).
Analysis of compositions of microbiomes with bias correction.
\textit{Nature Communications} \textbf{11}, 3514.

	\bibitem[Weiss et al.(2017)]{Weiss:2017}
Weiss, S., Xu, Z. Z., Peddada, S., Amir, A., Bittinger, K., Gonzalez, A., Lozupone, C., Zaneveld, J. R., V$\acute{\text{a}}$zquez-Baeza, Y., Birmingham, A., et al. (2017).
Normalization and microbial differential abundance strategies depend upon data characteristics.
\textit{Microbiome} \textbf{5}, 27.

	\bibitem[{Hawinkel et al.(2019)}]{Hawinkel:2019}
Hawinkel, S., Mattiello, F., Bijnens, L., and Thas, O. (2019).
A broken promise: microbiome differential abundance
methods do not control the false discovery rate.
\textit{Briefings in Bioinformatics} \textbf{20}, 210--221.

	\bibitem[{Faust et al.(2015)}]{Faust:2015}
Faust, K., Lahti, L., Gonze, D., Vos, W. M. D., and Raes, J. (2015).
Metagenomics meets time series analysis: unraveling microbial community dynamics.
\textit{Current Opinion in Microbiology} \textbf{25}, 56--66.

	\bibitem[{Lewis et al.(2015)}]{Lewis:2015}
Lewis, J. D., Chen, E. Z., Baldassano, R. N., Otley, A. R., Griffiths, A. M., Lee, D., Bittinger, K., Bailey, A., Friedman, E. S., Hoffmann, C., et al. (2015).
Inflammation, antibiotics, and diet as environmental stressors
of the gut microbiome in pediatric crohn’s Disease.
\textit{Cell Host \& Microbe} \textbf{18}, 489--500.

	\bibitem[{Schubert et al.(2014)}]{Schubert:2014}
Schubert, A. M., Rogers, M.~A.~M., Ring, C., Mogle, J., Petrosino, J.~P., Young, V. B., Aronoff, D. M., and Schloss, P. D. (2014).
Microbiome data distinguish patients with clostridium difficile
infection and non-C. difficile-associated diarrhea from healthy
controls.
\textit{mBio} \textbf{5},  e01021-14.

	\bibitem[{Morgan et al.(2012)}]{Morgan:2012}
Morgan, X. C., Tickle, T. L., Sokol, H., Gevers, D., Devaney, K. L., Ward, D. V., Reyes, J. A., Shah, S. A., LeLeiko, N., Snapper, S. B., et al. (2012).
Dysfunction of the intestinal microbiome in
inflammatory bowel disease and treatment.
\textit{Genome Biology} \textbf{13}, R79.

	\bibitem[Scher et al.(2013)]{Scher:2013}
Scher, J. U., Sczesnak, A., Longman, R. S., Segata, N., Ubeda, C., Bielski, C., Rostron, T., Cerundolo, V., Pamer, E. G., Abramson, S. B., et al. (2013). 
Expansion of intestinal \textit{Prevotella copri} correlates with
enhanced susceptibility to arthritis.
\textit{eLife} \textbf{2}, e01202.

	\bibitem[{Charlson et al.(2010)}]{Charlson:2010}
Charlson, E. S., Chen, J., Custers-Allen, R., Bittinger, K., Li, H., Sinha, R., Hwang, J., Bushman, F. D., and Collman, R. G. (2010).
Disordered microbial communities in the upper respiratory tract of cigarette smokers.
\textit{PloS One} \textbf{5}, e15216.

	\bibitem[{Gonzalez(2018)}]{Gonzalez:2018}
Gonzalez, A., Navas-Molina, J. A., Kosciolek, T., McDonald, D., V$\acute{\text{a}}$zquez-Baeza, Y., Ack- ermann, G., DeReus, J., Janssen, S., Swafford, A. D., Orchanian, S. B., et al. (2018).
Qiita: rapid, web-enabled microbiome meta-analysis.
\textit{Nature Methods} \textbf{15}, 796--798.

	\bibitem[{Lex et al.(2014)}]{Lex:2014}
Lex, A., Gehlenborg, N., Strobelt, H., Vuillemot, R., and Pfister, H. (2014). UpSet: visualization of intersecting sets. \textit{IEEE Transactions on Visualization and Computer Graphics} \textbf{20}, 1983--1992.

	\bibitem[{Bates et al.(2015)}]{Bates:2015}
Bates, D., M$\ddot{\text{a}}$chler, M., Bolker, B. M., and Walker, S. C. (2015).
Fitting linear mixed-effects models using lme4.
\textit{Journal of Statistical Software} \textbf{67}.

	\bibitem[{Carpenter \& Kenward(2012)}]{Carpenter:2012}
Carpenter, J. and Kenward, M. (2012). 
\textit{Multiple imputation and its application}.
John Wiley \& Sons.

	\bibitem[Quinn et al.(2018)]{Quinn:2018}
Quinn, T. P., Erb, I., Richardson, M. F., and Crowley, T. M. (2018).
Understanding sequencing data as compositions: an outlook and review.
\textit{Bioinformatics} \textbf{34}, 2870--2878.

	\bibitem[{Chen et al.(2012)}]{Chenj:2012}
Chen, J., Bittinger, K., Charlson, E. S., Hoffmann, C., Lewis, J., Wu, G. D., Collman, R. G., Bushman, F. D., and Li, H. (2012.) Associating microbiome composition with environmental covariates using generalized UniFrac distances. 
\textit{Bioinformatics}, \textbf{28} 2106--2113.

	\bibitem[{Chen \& Zhang(2021)}]{Chenj:2021}
Chen, J. and Zhang, X. (2022). 
D-MANOVA: fast distance-based multivariate analysis of variance for large-scale microbiome association studies. 
\textit{Bioinformatics} \textbf{38},  286--288.


	\bibitem[{Thorsen et al.(2016)}]{Thorsen:2016}
Thorsen, J., Brejnrod, A., Mortensen, M., Rasmussen, M. A., Stokholm, J., Al-Soud, W. A., Sørensen, S., Bisgaard, H., and Waage, J. (2016).
Large-scale benchmarking reveals false discoveries and count transformation sensitivity in 16S rRNA gene amplicon data analysis methods used in microbiome studies.
\textit{Microbiome} \textbf{4}, 62.

	\bibitem[Zhou \& Gallins(2019)]{Zhou:2019}
Zhou, Y. H. and Gallins, P. (2019). 
A review and tutorial of machine learning methods for microbiome host trait prediction. 
\textit{Frontiers in Genetics} \textbf{10}, 579.

	\bibitem[{McDonald et al.(2018)}]{McDonald:2018}
McDonald, D., Hyde, E., Debelius, J. W., Morton, J. T., Gonzalez, A., Ackermann, G., Aksenov, A. A., Behsaz, B., Brennan, C., Chen, Y., et al. (2018).
American Gut: an Open Platform for Citizen Science Microbiome Research.
\textit{mSystems} \textbf{3}, e00031-18.

	\bibitem[Parzen(1962)]{Parzen:1962}
Parzen, E. (1962).
On estimation of a probability density function and mode.
\textit{Annals of Mathematical Statistics} \textbf{33}, 1065--1076.

	\bibitem[Storey et al.(2004)]{Storey:2004}
Storey, J. D., Taylor, J. E., and Siegmund, D. (2004).
Strong control, conservative point estimation and simultaneous conservative consistency of false discovery rates: a unified approach.
\textit{Journal of the Royal Statistical Society, Series B} \textbf{66}, 187--205.

	\bibitem[Wu et al.(2011)]{Wu:2011}
Wu, G. D., Chen, J., Hoffmann, C., Bittinger, K., Chen, Y.-Y., Keilbaugh, S. A., Bewtra, M., Knights, D., Walters, W. A., Knight, R., et al. (2011).
Linking long-term dietary patterns with gut microbial enterotypes.
\textit{Science} \textbf{334}, 105-108.

	\bibitem[{Silverman et al.(2020)}]{Silverman:2020}
Silverman, J. D., Roche, K., Mukherjee, S., and David, L. A. (2020).
Naught all zeros in sequence count data are the same.
\textit{Computational and Structural Biotechnology Journal} \textbf{18}, 2789--2798.

	\bibitem[{Kaul et al.(2017)}]{Kaul:2017}
Kaul, A., Mandal, S., Davidov, O., and Peddada, S. D. (2017).
Analysis of microbiome data in the presence of excess zeros.
\textit{Frontiers in Microbiology} \textbf{8}, 2114.

	\bibitem[{Kurtz et al.(2015)}]{Kurtz:2015}
Kurtz, Z. D., M$\ddot{\text{u}}$ller, C. L., Miraldi, E. R., Littman, D. R., Blaser, M. J., and Bonneau, R. A. (2015).
Sparse and compositionally robust inference of microbial ecological networks.
\textit{PLoS Computational Biology} \textbf{11}, e1004226.

	\bibitem[{Connolly(2014)}]{Connolly:2014}
Connolly, S. R., MacNeil, M. A., Caley, M. J., Knowlton, N., Cripps, E., Hisano, M., Thibaut, L. M., Bhattacharya, B. D., Benedetti-Cecchi, L., Brainard, R. E., et al. (2014).
Commonness and rarity in the marine biosphere.
\textit{Proceedings of the National Academy of Sciences} \textbf{111}, 8524–-8529.

	\bibitem[{Zhou et al.(2020)}]{Zhou:2020}
Zhou, H., Zhang, X., and Chen, J. (2021).
Covariate adaptive family-wise error rate control for
genome-wide association studies.
\textit{Biometrika} \textbf{108}, 915--931.

	\bibitem[{Vershynin(2018)}]{Vershynin:2018}
Vershynin, R. (2018).
\textit {High-dimensional probability: An introduction with applications in data science.} 
Cambridge University Press.

	\bibitem[{Wainwright(2019)}]{Wainwright:2019}
Wainwright, M. J. (2019).
\textit{High-dimensional statistics: A non-asymptotic viewpoint.} 
Cambridge University Press.

	\bibitem[{Cao et al.(2021)}]{Cao:2021}
Cao, H., Chen, J., and Zhang, X. (2021).
Optimal false discovery rate control for large scale multiple testing with auxiliary information.
\textit{The Annals of Statistics}, forthcoming. \texttt{https://web.stat.tamu.edu/~zhangxiany/Order-FDR.pdf}.
\end{thebibliography}
\end{document}